\documentclass[reprint,twocolumn,superscriptaddress]{revtex4-2}
\usepackage[english]{babel}

\usepackage{amsmath,amssymb}
\usepackage{thmtools}
\usepackage{graphicx, color, graphpap}% Include figure files
\usepackage[caption=false,subrefformat=parens,labelformat=parens]{subfig}
\usepackage{overpic}
\usepackage{physics}
\usepackage{enumitem}
\usepackage{comment}
\usepackage{amsthm}
\usepackage{pstricks}
\usepackage{float}
\usepackage{multirow}
\usepackage[T1]{fontenc}
\usepackage{framed} % shaded
\usepackage{bm}
\usepackage{bbm}
\usepackage{mathtools}
\usepackage{xspace}
\usepackage{nicefrac}
\usepackage{enumitem}

\usepackage[dvipsnames]{xcolor}
\usepackage[colorlinks=true,citecolor=NavyBlue,linkcolor=NavyBlue,urlcolor=NavyBlue]{hyperref}
\usepackage[english,capitalise]{cleveref}

\definecolor{shadecolor}{gray}{0.9}

\newcommand{\bsym}[1]{\boldsymbol{#1}} 

\renewcommand{\Tr}[1]{\mathrm{Tr}\left[#1\right]}
\newcommand{\pTr}[2]{\mathrm{Tr}_{#1}\left[#2\right]}

\newcommand*\widebar[1]{%
  \text{%
    \vbox{%
      \hrule height 0.12ex 
      \kern 0.4ex 
      \hbox{%
        $\mkern-1.8mu#1\mkern0mu$% 
      }%
    }%
  }%
}

\newtheorem{theorem}{Theorem}

\newtheorem{lemma}[theorem]{Lemma}
\newtheorem{proposition}[theorem]{Proposition}
\newtheorem{definition}[theorem]{Definition}
\newtheorem{remark}[theorem]{Remark}

\newcommand{\R}     {\mathbb{R}} %real numbers
\newcommand{\N}     {\mathbb{N}} %natural numbers

\DeclareMathOperator\supp{supp}
\DeclareMathOperator{\Span}{span}

\newcommand{\Renyi}{R\'{e}nyi\xspace}

\newcommand{\renyiEnt}     {\mathbb{H}}

\newcommand{\renyiSandUp}  {\widetilde{H}^\uparrow}
\newcommand{\renyiSandDown}{\widetilde{H}^\downarrow}
\newcommand{\renyiPetzUp}  {\widebar{H}^\uparrow}
\newcommand{\renyiPetzDown}{\widebar{H}^\downarrow}

\newcommand{\frenyiSandUp}  {\widetilde{H}^{\uparrow,f}}
\newcommand{\frenyiSandDown}{\widetilde{H}^{\downarrow,f}}

\newcommand{\renyiDiv}     {\mathbb{D}}
\newcommand{\renyiSandDiv} {\widetilde{D}}
\newcommand{\renyiPetzDiv} {\widebar{D}}

\newcommand{\rel}[2]{\!\left(#1 \middle \| #2 \right)}
\newcommand{\con}[2]{\!\left(#1 \middle | #2 \right)}

\newcommand{\rotcurvearrowdown}{%
	\mathrel{%
		\mathchoice
		{\rotatebox[origin=c]{-90}{$\displaystyle\curvearrowright$}}%
		{\rotatebox[origin=c]{-90}{$\textstyle\curvearrowright$}}%
		{\rotatebox[origin=c]{-90}{$\scriptstyle\curvearrowright$}}%
		{\rotatebox[origin=c]{-90}{$\scriptscriptstyle\curvearrowright$}}%
	}%
}

\newcommand{\rotcurvearrowup}{%
	\mathrel{%
		\mathchoice
		{\rotatebox[origin=c]{-90}{$\displaystyle\curvearrowleft$}}%
		{\rotatebox[origin=c]{-90}{$\textstyle\curvearrowleft$}}%
		{\rotatebox[origin=c]{-90}{$\scriptstyle\curvearrowleft$}}%
		{\rotatebox[origin=c]{-90}{$\scriptscriptstyle\curvearrowleft$}}%
	}%
}

\newcommand{\frenyiMixedUp}{\widetilde{H}^{\rotcurvearrowdown,f}}
\newcommand{\frenyiMixedDown}{\widetilde{H}^{\rotcurvearrowup,f}}

\newcommand{\kapmixdown}{\kappa^{\rotcurvearrowup}}
\newcommand{\kapmix}{\kappa^{\rotcurvearrowdown}}
\newcommand{\kapup}{\kappa^\uparrow} %Upparrow normalization constant
\newcommand{\kapdown}{\kappa^\downarrow} %Downarrow normalization constant
\newcommand{\kapupbnd}{\underline{\kappa}^\uparrow}

\newcommand{\eps}{\varepsilon}
\newcommand{\leak}{\lambda_{\text{EC}}}
\newcommand{\eEV}{\eps_\mathrm{EV}}
\newcommand{\ePA}{\eps_\mathrm{PA}}
\newcommand{\eSec}{\eps_\mathrm{sec}}
\newcommand{\eSct}{\eps_\mathrm{sct}}
\newcommand{\eCor}{\eps_\mathrm{cor}}
\newcommand{\EATchannQKD}{\EATchann^{\text{QKD}}}
\newcommand{\EATchannQKDTag}{\widetilde{\EATchann}^{\text{QKD}}}
\newcommand{\OmegaAcc}{\Omega_{\mathrm{acc}}}
\newcommand{\OmegaLen}[1]{\Omega_{\mathrm{len} = #1}}
\newcommand{\OmegaEV}{\Omega_{\mathrm{EV}}}
\newcommand{\OmegaAT}{\Omega_{\mathrm{AT}}}
\newcommand{\Sacc}{S_{\mathrm{acc}}}

\newcommand{\Ctest}{\mathcal{C}^{\mathrm{test}}_{AB}}
\newcommand{\gen}{\mathtt{gen}}
\newcommand{\test}{\mathtt{test}}
\newcommand{\Ct}{\widehat{\mathcal{C}}_{\setminus \mathrm{gen}}}
\newcommand{\freq}{\operatorname{freq}}

\newcommand{\hUpQKD} {h_{\alpha}^{\uparrow,\mathrm{QKD}}}

\DeclareMathOperator{\GMap}{\mathcal{G}}

\DeclareMathOperator{\ZMap}{\mathcal{Z}}

\newcommand{\probst}{\mathbf{\Phi}}

\newcommand{\probstTag}{\widetilde{\probst}}

\newcommand{\EATchann}{\mathcal{M}} 
\newcommand{\EATchannProj}{\widebar{\mathcal{M}}} %projected EAT channel on finite dimensions
\newcommand{\EATchannProjInf}{\widetilde{\mathcal{M}}} %projected EAT channel on infinite dimensions
\newcommand{\CP}{\widehat{C}} %\hat{C} register
\newcommand{\alphCP}{\widehat{\mathcal{C}}}
\newcommand{\cP}{\widehat{c}} % outcome on \hat{C} register
\newcommand{\ProjIn}[2]{\mathcal{P}_{\mathrm{in},#1}^{(#2)}}
\newcommand{\ProjOut}[2]{\mathcal{P}_{\mathrm{out},#1}^{(#2)}}

\newcommand{\pf}{\operatorname{\mathtt{Pur}}} %Purifying function

\newcommand{\CPTP}{\operatorname{CPTP}} % completely positive trace preserving
\DeclareMathOperator{\id}{\mathord{\rm id}} %identity map
\newcommand{\defvar}{\coloneqq}
\newcommand{\ceil}[1]{\left\lceil #1 \right\rceil} %ceil
\newcommand{\dop}[1]{\operatorname{S}_{#1}} %Set of density operators
\newcommand{\pure}[1]{\ketbra{#1}{#1}} %density matrix of pure state
\newcommand{\ran}{\operatorname{ran}} %range function

\newcommand{\cB}{\mathcal{B}}

\newcommand{\cH}{\mathcal{H}}

\newcommand{\cK}{\mathcal{K}}

\newcommand{\cM}{\mathcal{M}}
\newcommand{\cN}{\mathcal{N}}

\newcommand{\cR}{\mathcal{R}}
\newcommand{\cS}{\mathcal{S}}
\newcommand{\cT}{\mathcal{T}}

\newcommand{\mbb}{\mathbb}
\newcommand{\mbf}{\mathbf}
\newcommand{\msf}{\mathsf}
\newcommand{\wt}{\widetilde}
\renewcommand{\ol}{\overline}

\begin{document}
\title{Unconditional Security of Discrete-Modulated CV-QKD from Infinite-Dimensional MEAT}

%Add names and affiliations
\author{Lars Kamin}
\email{lars.kamin@outlook.com}
\affiliation{Institute for Quantum Computing and Department of Physics and Astronomy, University of Waterloo, Waterloo, Ontario N2L 3G1, Canada}

\author{Ian George}
\email{qit.george@gmail.com}
\affiliation{Centre for Quantum Technologies, National University of Singapore, Singapore 117543, Singapore}

\author{John Burniston}
\affiliation{Institute for Quantum Computing and Department of Physics and Astronomy, University of Waterloo, Waterloo, Ontario N2L 3G1, Canada}

\author{Florian Kanitschar}
\email{florian.kanitschar@outlook.com}
\affiliation{Atominstitut, Technische Universität Wien, Stadionallee 2, 1020 Vienna, Austria}
\affiliation{AIT  Austrian  Institute  of  Technology,  Center  for  Digital  Safety\&Security,  Giefinggasse  4,  1210  Vienna, Austria}
\affiliation{Vienna Center for Quantum Science and Technology (VCQ), Technische Universität Wien, 1020 Vienna, Austria}

\date{\today}
	
%--------------------------------------------------------
%                   Abstract
%--------------------------------------------------------
\begin{abstract}
Discrete-Modulated (DM) Continuous-Variable (CV) Quantum Key Distribution (QKD) is an experimentally attractive approach to quantum cryptography, offering high key rates over metropolitan-scale distances while relying on state-of-the-art telecom infrastructure. 
However, a fundamental gap has remained between this experimental promise and rigorous security: for more than two decades, DM CV-QKD has lacked a complete composable finite-size security proof against coherent attacks. Existing works either restrict the adversary to collective attacks, impose additional finite-dimensional assumptions, apply only to specific modulation formats, or fail to recover the known asymptotic rates. 
Here, we resolve this longstanding problem by establishing the first complete composable finite-size security proof for DM CV-QKD protocols against coherent attacks, incorporating imperfect detectors and both fixed- and variable-length protocol variants. Our proof introduces two central results of broader interest: an infinite-dimensional marginal-constrained entropy accumulation theorem (iMEAT) and a rigorous dimension-reduction technique that makes the infinite-dimensional security bounds numerically tractable. The resulting key rates recover the known asymptotic rates and outperform established finite-size bounds based on collective attacks, while yielding positive key rates beyond $70$km for experimentally relevant block sizes and parameters. Thus, our work closes a longstanding gap, places an experimentally attractive class of QKD protocols on a rigorous composable security footing, and provides a pathway towards their practical deployment.
\end{abstract}

\maketitle

%********************************************************
%                    Main text
%********************************************************

%--------------------------------------------------------
%                   Introduction
%--------------------------------------------------------
\section{Introduction \label{sec:Intro}} 
Modern communication systems derive much of their efficiency from one simple idea: information is conveyed using carefully designed finite signal alphabets. Whether realised as quadrature-amplitude modulation in optical fibers or phase-shift keying often used in wireless links, these constellations represent the interface between digital information and the physical quantity used to transmit the signal. Quantum Key Distribution (QKD) \cite{Bennett_2014, Ekert_1991} faces the same design question. Rather than transmitting classical bits alone, legitimate users - Alice and Bob - distribute and measure quantum states, whose distinguishability fundamentally limits the information available to an eavesdropper (Eve). Consequently, the choice of signal alphabet is not merely an implementation detail, but directly influences both practicality and security. Continuous-Variable (CV) QKD protocols \cite{Ralph_1999, Usenko_2026} realise this alphabet through optical states measured by heterodyne detection, making them uniquely compatible with modern optical communication hardware. Historically, these states have been sampled from Gaussian distributions \cite{Cerf_2001, Grosshans_2002, Grosshans_2003} because the strong symmetry inherent to Gaussian Modulation (GM) provides an exceptionally powerful theoretical toolbox \cite{Wolf_2006, Garcia-Patron_2006, Navascues_2006} for security analyses \cite{Pirandola_2008, Leverrier_2015}. 
From a practical perspective, however, communication hardware is designed for finite constellations rather than continuous distributions. Discrete-Modulation (DM) therefore promises a considerably more practical realisation of CV-QKD, while preserving its compatibility with conventional telecom technology. Yet, replacing continuous modulation with a finite alphabet fundamentally alters the mathematical structure of the protocol; The Gaussian optimality arguments underpinning many security proofs no longer apply, requiring new theoretical tools to establish rigorous security guarantees. 

Establishing rigorous security guarantees for DM CV-QKD has progressed along several complementary directions. Early works demonstrated the feasibility of discrete modulation under restricted attack scenarios \cite{Heid_2006, Sych_2010} or for specific protocol configurations \cite{Zhao_2009, Bradler_2018, Matsuura_2021, Yamano_2024}. In parallel, approximation-based proof techniques have exploited the fact that sufficiently large constellations can closely reproduce the statistical properties of Gaussian modulation, enabling security arguments for arbitrary modulation schemes \cite{Denys_2021, Kaur_2021}.

A major breakthrough came with a conceptually different direction: numerical security proofs \cite{Coles_2016, Winick_2018} based on semidefinite programming (SDP). Rather than approximating Gaussian modulation, these methods directly optimise over the set of quantum states compatible with the observed measurement data, making them particularly well-suited for finite constellations. While the original SDP formulations relied on a photon-number cutoff assumption \cite{Ghorai_2019, Lin_2019}, continuity-bound-based dimension-reduction techniques \cite{Upadhyaya_2021, Lupo_2022} later removed this restriction and enabled composable finite-size security proofs against collective attacks \cite{Lupo_2022, Kanitschar_2023} and Gaussian attacks \cite{Staffieri_2026}. Extending these techniques to general attacks, however, has remained considerably more difficult. Existing approaches either reintroduce photon-number cutoffs \cite{Bauml_2024, Pascual-Garcia_2025, Navarro_2026} or fail to recover asymptotic key rates in the limit of large block sizes \cite{Primaatmaja_2024}. Most recently, Ref. \cite{Navarro_2026a} attempted to address this issue for fixed-length protocols using a conic reformulation of the optimisation problem, but the resulting security proof remains incomplete due to an assumption Eve's purification is finite-dimensional as well as technical issues with applying the finite-dimensional MEAT \cite{Arqand_2025} directly to an infinite-dimensional problem (for details, we refer to Appendix~\ref{apdx:Remarks_Paper}).

The challenge, therefore, remains to construct a composable security argument for general discrete modulation CV-QKD that rigorously:
\begin{enumerate}[label=(\roman*),itemsep=0pt]
    \item accounts for Eve's infinite-dimensional side-information and 
    \item avoids a photon-number cutoff assumption
\end{enumerate}
while
\begin{enumerate}[label=(\alph*),itemsep=0pt]
    \item remaining computationally tractable,
    \item recovering the tight asymptotic key rates \cite{Lin_2019}, and
    \item being applicable to both fixed- and variable-length protocols.
\end{enumerate}

In this work, we fully close this gap, fuelled by a combination of novel developments. Our starting point is the Rényi leftover-hashing lemma \cite{Dupuis_2022}, which bounds the trace distance between Alice's and Bob's (unknown) shared state after protocol execution and a state representing an ideal, perfectly random key that is completely unknown to Eve. This upper bound, equal to the protocol's security parameter $\epsilon$, is expressed in terms of the conditional Rényi entropy of the shared state between Alice and Bob. Our goal is to calculate this entropic quantity for a general discrete-modulated CV-QKD protocol via a constrained optimisation problem subject to experimental input. To this end, we first prove a new entropy accumulation theorem for reducing the calculation of the general $n$-round conditional Rényi entropy to a single-round optimization. What is new about this theorem is that, while it utilizes that a marginal of the $n$-round state is guaranteed, it allows the accrued side-information to be infinite-dimensional as needed for CV-QKD. For this reason, we identify the theorem as an infinite-dimensional, marginal-constrained entropy accumulation theorem (iMEAT). Because the iMEAT reduces to an optimisation problem where the optimisation variable is infinite-dimensional, we establish a dimension-reduction method for conditional Rényi entropies that bounds the value of the infinite-dimensional optimisation problem by the value of a finite-dimensional optimisation problem and a correction term. This allows us to bound the key rate of the infinite-dimensional process using a numerically implementable optimisation problem without introducing an unjustified photon-number cutoff assumption.

In contrast to earlier works \cite{Kanitschar_2023, Bauml_2024, Primaatmaja_2024} that required separate testing procedures to obtain bounds for the weight of the unknown quantum state, we develop a method that determines the weight in the same step as determining the key rate bound, removing slackness, and streamlining the protocol design while relying only on the same natural set of POVM measurements used to link the optimisation problem to the experimental observations. In total, we obtain asymptotically tight lower bounds on the composable secure key rate for a general discrete-modulated CV-QKD protocol against coherent attacks without any mathematical assumptions limiting Eve's power that converge to the asymptotic rates.

This paper is structured as follows. We first introduce the general discrete-modulated continuous-variable QKD protocol we analyse in this work. Next, in Section \ref{sec:SecurityFramework}, we introduce and define the security framework used throughout. This is followed by our first major result in Section \ref{sec:InfDimMEAT}, the formulation and proof of an infinite-dimensional marginal constrained entropy accumulation theorem (Theorem \ref{thrm:MEAT_inf}). We then show how this result is used to prove security for both a fixed- and a variable-length variant of our objective protocol in Section \ref{sec:Security_proof}. While this already constitutes valid security statements, the proven bounds cannot yet be evaluated. Therefore, in Section \ref{sec:DimReduction}, we follow with our second major result, the proof of a dimension reduction argument that allows us to rigorously reduce the infinite-dimensional security statement to a finite-dimensional optimisation problem that can be solved subsequently. This concludes the security proof of our DM CV-QKD protocols. In Section \ref{sec:NumericalImplementation}, we explain details about the numerical implementation used to solve the underlying convex optimisation problems as well as the simulation model used to illustrate our findings. Plots showcasing the secure key rates achievable with our method can be found in Section \ref{sec:Results}. We conclude the paper in Section \ref{sec:Discussion}. Detailed derivations, proofs, and additional information can be found in the Appendices.

\section{General DM CV-QKD Protocol\label{sec:Protocol}} 
To establish a secret shared key, Alice and Bob follow a QKD protocol, a publicly known set of instructions. In what follows, we describe a general prepare-and-measure discrete modulated continuous-variable QKD protocol with $N_{\mathrm{St}}$ distinct signal states. Due to the source-replacement scheme \cite{Curty_2004, Ferenczi_2012}, an equivalent prepare-and-measure version of the protocol exists, and we can switch between those two equivalent descriptions. Note that Greek letters put in ket vectors represent coherent states.  

\textbf{Parameter List:}
\begin{description}[leftmargin = 2.5cm, style= sameline, align=right]
\item[$n \in \mathbb{N}$] Total number of protocol rounds.
\item[$l \in \mathbb{N}_0$] Length of the final key.
\item[$N_{\mathrm{St}}$] Number of signal states.
\item[$\ket{\alpha_k}$] Signal state with coherent state amplitude  $\alpha_k$.
\item[$\mathcal{S}$] Alphabet of the raw key register $\{0, 1, ..., N_{\mathrm{St}}, \perp\}$.
\item[$N_{\mathrm{POVMs}}$] Number of coarse-grained POVMs for Bob.
\item[$\{M_i^B\}_{i=1}^{N_{\mathrm{POVMs}}}$] POVM elements describing Bob's measurements in test rounds.
\item[$\{R_i^B\}_{i=1}^{N_{\mathrm{St}}}$] Region operators, forming a POVM for Bob in generation rounds.
\item[$p_{\mathrm{test}}$] Probability that a round is chosen to be a test round.
\item[$S_{\mathrm{acc}}$] Acceptance set of accepted frequencies.
\item[$\epsilon_{\mathrm{PA}}$] Security parameter related to the privacy amplification subroutine.
\item[$\epsilon_{\mathrm{EV}}$] Security parameter related to the error verification subroutine.
\item[$\OmegaAT$] Event of passing the acceptance test.
\item[$\OmegaEV$] Event of passing the error verification.
\item[$\OmegaAcc$] Event of the protocol not aborting ($\OmegaAcc = \OmegaAT \land \OmegaEV$).

\end{description}

\subsection{Protocol Description \label{sec:ProtocolDescription}}

\begin{itemize}
    \item[1] \textbf{\textit{State preparation---}} Alice randomly selects one out of $N_{\mathrm{St}}$ signal states according to some discrete probability distribution and prepares the corresponding coherent state $\ket{\alpha_k}$ for $\alpha_k \in \{\alpha_0, \alpha_1, ..., \alpha_{N_{\mathrm{St}}}\}$ with probability $p_k$, which is sent to Bob via the quantum channel. Alice keeps a record $x_k$ of the chosen signal state in her private register $C_i^A$.

    \item[2] \textbf{\textit{Measurement---}} Bob receives a quantum signal and performs a heterodyne measurement, corresponding to a positive operator-valued measure (POVM) $\{F_{\gamma}\}_{\gamma \in \mathbbm{C}}$. The outcome of his measurement is a complex number $y_k$, which he keeps in his private register. 

    \item[3] \textbf{\textit{Partitioning and announcements---}} After transmission is finished, Alice and/or Bob partition their rounds into randomly selected test and generation rounds according to a pre-agreed testing fraction $p_{\mathrm{test}}$. They calculate values $C_i^A$ and $C_i^B$ using their private data. In case the round is not a test round, they could simply store $\gen$. Let $I_i$ denote a register containing all public communication in round $i$, then Alice and Bob compute a value $\hat{C}_i$ that will be used for statistical testing by applying some deterministic function on $I_i$. In case $i$ is a generation round, $\hat{C}_i = \gen$.
\end{itemize}

    The first three steps are repeated $n$ times.

\begin{itemize}
    \item[4] \textbf{\textit{Key map---}} Bob performs a reverse reconciliation key map on the key rounds. Therefore, he discretises his measurement outcomes to elements of the set $\{0, 1,..., N_{\mathrm{St}}-1, \perp\}$, where the symbol $\perp$ indicates that the round has been discarded. The latter allows Bob to include postselection in his protocol. The key values are stored in a classical register $S_i$.

    \item[5] \textbf{\textit{Statistical testing or variable-length decision---}} Depending on which protocol variant is implemented, Alice and Bob proceed as follows:
    \begin{itemize} 
        \item[a)] For the fixed-length variant, Bob's measurement outcomes are coarse-grained according to his POVM selection and used to compute the frequency distribution $\mathbf{F}^{\mathrm{obs}}$. If $\mathbf{F}^{\mathrm{obs}} \in S_{\mathrm{acc}}$, the predefined acceptance set, they proceed, otherwise they abort. Passing this stage is denoted by the event $\OmegaAT$. 
        \item[b)] For the variable-length variant, Alice and Bob determine a variable-length key $\ell_{\mathrm{var}}$, calculating a function based on the observed frequencies in registers $\CP_1^n$.
        \end{itemize}

    \item[6] \textbf{\textit{Error correction and error verification---}} In the next protocol stage, Alice and Bob aim to reconcile their raw keys.
    \begin{itemize}
        \item[a)] For the fixed-length protocol, Alice and Bob publicly exchange information, disclosing $\lambda_{\mathrm{EC}}$ bits of information. 
        \item[b)] In the variable-length case, they disclose $\lambda_{\mathrm{EC}}(\hat{c}_1^n)$, again based on the observations stored in registers $\CP_1^n$. Together with her private data, Alice computes a guess $\bar{S}_1^n$ on Bob's raw key string $S_1$. 
    \end{itemize}
    Next, they exchange a 2-universal hash of length $\left\lceil\log(\nicefrac{1}{\epsilon_{\mathrm{EV}}}) \right\rceil$. They accept if the hashes agree and abort the protocol otherwise. We denote the event of successfully passing error verification by $\OmegaEV$.

    \item[6] \textbf{\textit{Privacy amplification---}} Alice and Bob randomly choose a 2-universal hash function from some family and announce their choice publicly.
    \begin{itemize}
        \item[a)] In the fixed-length case, the hash function maps $n$ bits to a final key of fixed length $\ell$.
        \item[b)] In the variable-length case, the hash function maps $n$ bits to a final key of length $\ell_{\mathrm{var}}$.
    \end{itemize}
    
    Finally, Alice and Bob apply a two-universal hash function to their bitstrings. Except with some small probability $\epsilon_{\mathrm{PA}}$, after this final protocol step, they hold a secret key.
\end{itemize}

\subsection{Announcements and Conditioning on Test and Generation rounds \label{sec:FormalisationProtocol}}
In order to rigorously represent all protocol steps in the security proof, we require some notation. We split Alice's and Bob's POVMs by their announcements. Given the distinction between test and generation rounds, we additionally split Alice's set of announcements and the corresponding measurement outcomes into test and generation subsets, i.e.
\begin{align}
	\mathcal{C}^A &= \mathcal{C}^{A,\test} \cup \mathcal{C}^{A,\gen},\\
	\mathcal{X} &= \mathcal{X}^{\test} \cup \mathcal{X}^{\gen}.
\end{align}
Then, as shown in \cite{Kamin_2025a}, we find Alice's sets of POVM elements can be written as
\begin{align}
	\{M^{A|\test}_{\alpha,x} \}_{\alpha \in \mathcal{C}^{A,\test}, x \in \mathcal{X}^{\test}_{\alpha} } &= \{M^{A| \test}_k\}_k ,\\
	\{M^{A|\gen}_{\alpha,x} \}_{\alpha \in \mathcal{C}^{A,\gen}, x \in \mathcal{X}^{\gen}_{\alpha} } &= \{M^{A| \gen}_k\}_k ,
\end{align}
where
\begin{align}
	\mathcal{X^{\test}} = \bigcup_{\alpha \in \mathcal{C}^{A,\test}} \mathcal{X}^{\test}_{\alpha}, \quad \mathcal{X^{\gen}} = \bigcup_{\alpha \in \mathcal{C}^{A,\gen}} \mathcal{X}^{\gen}_{\alpha},
\end{align}
and the conditional POVM elements were constructed using the partition operators
\begin{align}\label{eq:test and gen partition ops}
	\Pi^{\test} = \sum_{\substack{\alpha \in \mathcal{C}^{A,\test}, \\ x \in \mathcal{X}^{\test}_{\alpha}}} M_{\alpha,x}^A, \quad
	\Pi^{\gen} = \sum_{\substack{\alpha \in \mathcal{C}^{A,\gen}, \\ x \in \mathcal{X}^{\gen}_{\alpha}}} M_{\alpha,x}^A.
\end{align}
See \cite[Appendix.~H]{Kamin_2025a} for more details on the construction of the conditioned POVM elements for the case where Alice's POVM elements are not rank one projectors.

Equivalently, we also partition Bob's announcements as
\begin{align}
	\mathcal{C}^B &= \mathcal{C}^{B,\test} \cup \mathcal{C}^{B,\gen},\\
	\mathcal{Y} &= \mathcal{Y}^{\test} \cup \mathcal{Y}^{\gen},
\end{align}
where
\begin{align}
	\mathcal{Y^{\test}} = \bigcup_{\beta \in \mathcal{C}^{B,\test}} \mathcal{Y}^{\test}_{\beta}, \quad \mathcal{Y^{\gen}} = \bigcup_{\beta \in \mathcal{C}^{B,\gen}} \mathcal{Y}^{\gen}_{\beta}.
\end{align}
Additionally, we implement a coarse-graining on Bob's POVM, which is different in test and generation rounds. This coarse-graining maps the underlying POVM $\{F_i\}_{i=1}^{\infty}$ to the POVM elements $\{M^B_i\}_{i=1}^{\mathrm{POVMs}}$ in test rounds (see Appendix \ref{apdx:POVMexpressions}) and to the region operators $\{R_i^B\}_{i=1}^{\mathrm{POVMs}}$ (see Section \ref{sec:RegionOpDef}). 

Then, we also partition these POVM elements according to Bob's announcements such that
\begin{align}
	\{M^B_{\beta,y} \}_{\beta \in \mathcal{C}^{B,\test}, y \in \mathcal{Y}_{\beta}^{\test}} &= \{M^B_i\}_{i=1\dots d_B}, \\
    \{R^B_{\beta,y}\}_{\beta \in \mathcal{C}^{B,\gen}, y \in \mathcal{Y}_{\beta}^{\gen}} &=
    \{R_i^B\}_{i=1}^{\mathrm{POVMs}}.
\end{align}
We highlight that picking such different POVM elements in test and generation rounds is only possible because they originate from the same underlying POVM $\{F_i\}_{i=1}^{\infty}$.

Finally, for brevity, let us define the set
\begin{equation}
	\begin{split}
		\Ctest \defvar \{(\alpha,x,\beta,y) | 
		\alpha &\in \mathcal{C}^{A,\test}, x \in \mathcal{X}^{\test}_{\alpha} , \\
        &\beta \in \mathcal{C}^{B,\test}, y \in \mathcal{Y}^{\test}_{\beta} \},
	\end{split}
\end{equation}
as the set containing all announcements during test rounds. In addition, we assume that the alphabet \(\mathcal{I}\) of registers \(I_i\) contains \(\Ctest\).

%--------------------------------------------------------
%                   Security Framework
%--------------------------------------------------------
\section{Security Framework}\label{sec:SecurityFramework}

\subsection{Security Definition}
In this work, we use the variable-length security definitions from Ref.~\cite[Sec.~VI.A]{Portmann_2022}.

\begin{definition}[Variable-Length $\varepsilon$-security]\label{Def:Variable length eps security}
	Let \(\OmegaLen{m}\) be the event of a QKD protocol generating a final key of length \(m \in \N_0\). A variable-length QKD protocol is \emph{\(\eSec\)-secure} if for any input state \(\sigma_{A^n B^n}\) the resulting output state \(\sigma_{K_AK_BE}\) satisfies
	\begin{equation} \label{eq:security}
        \begin{split}
             \frac{1}{2} \sum_{m=0}^{\infty} \Pr[\OmegaLen{m}] \Bigg\lVert &\sigma_{K_AK_BE| \OmegaLen{m}} \\ & - \mathbb{K}_{K_AK_B}^m \otimes \sigma_{E| \OmegaLen{m}} \Bigg\rVert_1 \leq \eSec,
         \end{split}
	\end{equation}
    where \(\mathbb{K}_{K_AK_B}^m\) denotes the state of a perfect uniform shared key of length $m$:
    \begin{equation}
        \mathbb{K}_{K_AK_B}^m = \sum_{k \in \{0,1\}^m } \frac{\dyad{kk}_{K_AK_B}}{2^m}.
    \end{equation}
    Furthermore, a variable-length QKD protocol is \emph{\(\eSct\)-secret} if
	\begin{equation} \label{eq:secrecy}
        \begin{split}
             \frac{1}{2} \sum_{m=0}^{\infty} \Pr[\OmegaLen{m}] \Bigg\lVert &\sigma_{K_AE| \OmegaLen{m}} \\ & - \mathbb{U}_{K_A}^m \otimes \sigma_{E| \OmegaLen{m}} \Bigg\rVert_1 \leq \eSct,
         \end{split}
	\end{equation}
    where \(\mathbb{U}_{K_A}^m\) is a fully mixed state of dimension \(2^m\), and \emph{\(\eCor\)-correct} if
	\begin{equation}
		\Pr[K_A \neq K_B \wedge \Omega_{\text{acc}}] \leq \eCor.
	\end{equation}
    A variable-length QKD protocol that is \(\eCor\)-correct and \(\eSct\)-secret, is \(\eSec = \eSct + \eCor \) secure \cite{Tupkary_2024}. 
\end{definition}

\begin{remark}
    For P\&M protocols, one can restrict the input states to states that satisfy the marginal constraint \(\pTr{B^n}{\rho_{A^nB^n}} = \sigma_A^{\otimes n}\) for some state \(\sigma_A\) defined by Alice's choices. This is possible because we assume that Eve cannot access Alice's lab. In the present work, this marginal constraint is always imposed.
\end{remark}

\begin{remark}\label{rem:Fixed-Length security}
    Fixed-length security follows directly from variable-length security, by selecting only one key length in the sum in \cref{eq:security} and identifying this with the fixed key length \(\ell\) of the fixed-length protocol; all other events are combined to one single abort event. Furthermore, secrecy and correctness imply security in the same way.
\end{remark}

\subsection{Source Replacement Scheme}\label{sec:sourceReplacement}
Under the source replacement scheme \cite{Curty_2004, Ferenczi_2012}, Alice's state preparation process in each round can be viewed as having her first prepare a \emph{pure} state, 
\begin{equation}
	\ket{\psi}_{AA'} = \sum_{k=1}^{d_A}  \sqrt{p(k)} \ket{k}_A \ket{s_k}_{A'},
\end{equation}
where $p(k)$ is the total probability of Alice sending state $\ket{s_k}_{A'}$ to Bob, and then performing a measurement on $A$ described by the following POVM elements:
\begin{equation}
	M_k^A = \ketbra{k}{k},
\end{equation}
for all $k=1 \dots d_A$. Furthermore, let us write $\sigma_A := \pTr{A'}{\ketbra{\psi}{\psi}}$ to denote the marginal of the single-round state Alice prepares in the source-replaced picture.

\subsection{QKD Channel}
For completeness, we define the channel $\EATchannQKD$ along the lines of \cite{Kamin_2025a}, but for reverse reconciliation.
\begin{definition}[QKD Channel (adapted from \cite{Kamin_2025a}]\label{def:QKD channel}
    Let \(\hat{\GMap}: AB \rightarrow ABXY\hat{S}_Q\tilde{I}\) be the CPTNI map as described in \cite[Appendix A]{Lin_2019} of a QKD protocol defined with Alice's POVM elements conditioned on a generation round, i.e. \(\{M_{\alpha,x}^{A|\gen}\}\). We extend \(\hat{\GMap}\) to a CPTP map \(\GMap\) through the following construction. Let \(\hat{\mathcal{S}}\) be the alphabet of \(\hat{S}_Q\) and \(\mathcal{S}= \hat{\mathcal{S}}\cup\{\perp\}\) be the alphabet of the extended register \(S_Q\). We first define the map \(\hat{\GMap}\) for all \(\rho \in \dop{=}(AB)\) as
    \begin{align}
        &\hat{\GMap}: AB \rightarrow ABXYS_Q\tilde{I}, \\
        &\hat{\GMap}(\rho) \defvar \sum_{(\alpha, \beta) \in \mathcal{K} } K_{\alpha,\beta} \rho K_{\alpha,\beta}^{\dagger},
    \end{align}
    where $\mathcal{K} = \{\alpha \in \mathcal{C}^{A|\gen}, \beta \in \mathcal{C}^B | \exists x \textrm{ s.t. } g(\alpha,\beta) \neq \perp \}$ is the set of kept announcements and the Kraus operators $K_{\alpha, \beta}$ are given by
    \begin{equation}\label{eq:discard_Kraus_op}
        K_{\alpha,\beta} \defvar \begin{aligned}[t]
        \sum_{\substack{x \in  \mathcal{X}^{\gen}_{\alpha} \\ y \in \mathcal{Y}^{\gen}_{\beta} }} \sqrt{M_{\alpha,x}^{A|\gen}} \otimes \sqrt{R_{\beta,y}^{B}} \otimes \ket{x}_X \otimes \ket{y}_Y \\ \otimes \ket{g(\alpha,\beta,y)}_{S_Q} \ket{\alpha,\beta}_{\tilde{I}},
        \end{aligned}
    \end{equation}
    where $g$ is the key map for reverse reconciliation. We extend this CPNTI map to $\GMap: AB \rightarrow ABXYS_Q\tilde{I}$ by adding the Kraus operator
    \begin{equation}
        K_{\perp} \defvar \sqrt{\left(I - \sum_{\alpha, \beta \in \mathcal{K}} K_{\alpha, \beta}^{\dagger} K_{\alpha, \beta} \right)} \otimes \ket{\perp}_{XY S_Q\tilde{I}}.
    \end{equation}
    Additionally, let \(V_{\GMap}:AB \rightarrow ABXYS_Q\tilde{I}I\) be the Stinespring dilation of \(\GMap\)\footnote{Due to the properties of \(\hat{\GMap}\) its Kraus operators are indexed by the announcements contained in register \(\tilde{I}\). Therefore, the environment in a Stinespring dilation of \(\GMap\) contains a copy of \(\tilde{I}\) which is exactly Eve's copy of the announcement register \(I\).}, which has the form
    \begin{equation}
        V_{\GMap} = \sum_{i=(\alpha, \beta),\perp} K_{\alpha, \beta} \otimes \ket{i}_{I}.
    \end{equation}
    Furthermore, we define the following isometries
    \begin{align}
        V_{\test} &\defvar \sqrt{\Pi^{\test}} \otimes \ket{t}_F + \sqrt{\Pi^{\gen}} \otimes \ket{g}_F, \\
        V_{\ZMap} &\defvar \sum_{j \in \mathcal{S}} \ket{j}_{S} \otimes Z_j, \\
        V_{\mathrm{meas}} &\defvar \sum_{\substack{(\alpha, x, \beta, y) \in \Ctest}} \sqrt{M_{\alpha,x}^{A|\test}} \otimes \sqrt{M_{\beta,y}^{B}}  \\
        &\qquad \otimes \ket{\alpha,x,\beta,y}_{I} \otimes \ket{\alpha,x,\beta,y}_{\tilde{I}} \otimes \ket{\perp}_{S_QXY}, \\
        V_{\phi} &\defvar \sum_{j \in \mathcal{I}} \ketbra{j}{j}_I  \otimes \ket{\phi(j)}_{\CP}.
    \end{align}
    where the orthonormal projectors \(Z_j\) satisfy \(\sum_{j \in \mathcal{S}} Z_j = \mathbb{I}_{S_Q} \). Finally, let us define the concatenation of all isometries as
    \begin{equation}
        W \defvar \begin{aligned}[t]
            V_{\ZMap} V_{\phi} \big( &V_{\mathrm{meas}}\otimes\ketbra{t}{t}_F \\
            &\quad + V_{\GMap}\otimes\ketbra{g}{g}_F \big)V_{\test}.
        \end{aligned}
    \end{equation}
    Then, we define the channel \(\EATchannQKD\) of a QKD protocol for any \(\rho_{AB}\) as
    \begin{equation}
        \EATchannQKD[\rho_{AB}] \defvar \pTr{\mathrm{all} \setminus SI \CP}{W \rho_{AB} W^{\dagger}},
    \end{equation}
    where with \(\pTr{\mathrm{all} \setminus SI \CP}{\dots}\) we indicate tracing out all systems apart from \(SI \CP\). Finally, we also define the map creating the statistics in test rounds as
    \begin{equation}\label{eq:stat_generating_map}
        \probst[\sigma] \defvar \sum_{\cP \in \Ct} \sum_{\substack{(\alpha,x,\beta,y) \\ \in \phi^{-1} (\cP)}} \Tr{\left(M_{\alpha,x}^{A|\test} \otimes M_{\beta,y}^{B}\right) \sigma} \hat{e}_{\cP},
    \end{equation}
    where \(\hat{e}_{\cP}\) is a unit vector with a one at the \(\cP\)-th position and \(\phi^{-1}\) stands for the preimage.
\end{definition}

\begin{remark}
    We again highlight that the choice of Bob's POVM elements to differ in test and generation rounds is only possible because they are derived from one underlying POVM that is coarse grained in different ways in the classical post processing.
\end{remark}

\begin{remark}
    In principle, one can define $\EATchannQKD$ even if the POVM elements and the processing differ from round to round. The construction will still remain the same and for notational simplicity we only presented this case here. However, for full generality in the remainder, we will state our theorems with channels $\EATchannQKD_j$ and $\probst_j$ for $j=1,\dots,n$.
\end{remark}

Since we will make extensive use of this notation, we also define the states shared between Alice and Bob conditioned on a test and generation round as
\begin{align}
    \rho_{AB|\test} \defvar \frac{\sqrt{\Pi^{\test}} \rho_{AB} \sqrt{\Pi^{\test}} }{\gamma}, \\  
    \rho_{AB|\gen} \defvar \frac{\sqrt{\Pi^{\gen}} \rho_{AB} \sqrt{\Pi^{\gen}} }{1-\gamma}.
\end{align}

Finally, we define a set of channels $\{\mathcal{N}_j\}_{j=1,\dots,n}$ combining the QKD channel from \cref{def:QKD channel}, representing the QKD protocol from \cref{sec:ProtocolDescription} with Eve's attack. Such a channel is convenient to represent the QKD protocol and Eve's attack in the MEAT framework of Ref~\cite{Arqand_2025}. A schematic representation of this construction is shown in \cref{fig:Concatenated_channels}. 

Crucially, as described in \cite{Tupkary2026}, this representation allows one to make the announcements right after Bob's measurements and include Eve possibly adapting her attack based on this announcement.

\begin{definition}[MEAT Channels including Eve's Attack]\label{def:MEAT_channel_with_Eve}
    Let $\psi_{A_{1}^{n}(A^{\prime})_{1}^{n}}$ be the pure state sent by Alice, given through the source-replacement scheme and satisfying the marginal constraint
    \begin{equation}
        \pTr{(A')_{1}^{n}}{\psi_{A_{1}^{n}(A^{\prime})_{1}^{n}}} = \bigotimes_{j=1}^{n}\sigma_{A_j}^{(j)}.
    \end{equation}
    Furthermore, let $\omega_{A_1^n(A')_1^n E_0}$ be a purification of $\psi$, such that Eve initially holds the purification $E_0$. We extend the state $\omega$ trivially on the announcement register $I_0$ as
    \begin{equation}
        \omega_{A_1^n(A')_1^n E_0 I_0} = \omega_{A_1^n(A')_1^n E_0} \otimes \pure{\perp}_{I_0}.
    \end{equation}
    Additionally, we define Eve's attack channels as
    \begin{equation}
        \mathcal{E}_{j}: E_{j-1} A'_j I_0^{j-1} \longrightarrow E_j B_j I_0^j.
    \end{equation}
    Finally, we define the channels onto which we apply the MEAT as
    \begin{equation}
        \mathcal{N}_j \defvar (\EATchannQKD_j \otimes \id_{E_{j} I_0^{j-1}}) \circ (\id_{A_j} \otimes \mathcal{E}_j),
    \end{equation}
    which acts as
    \begin{align}
        \mathcal{N}_j: A_j A'_j E_{j-1} I_0^{j-1}  &\xrightarrow{\mathcal{E}_j} A_j B_j E_j I_0^{j-1} \\
        &\xrightarrow{\EATchannQKD_j} S_j \hat{C}_j E_j I_0^j.
    \end{align}
    Then, we can write the final state generated from the QKD protocol described in \cref{sec:ProtocolDescription} as
    \begin{equation}
        \rho_{S_1^n \hat{C}_1^n E_n I_1^n} = \mathcal{N}_n \circ \dots \circ \mathcal{N}_1 [\omega_{A_1^n(A')_1^n E_0 I_0}].
    \end{equation}
\end{definition}

\begin{figure*}
    \centering
    \includegraphics[width=\linewidth]{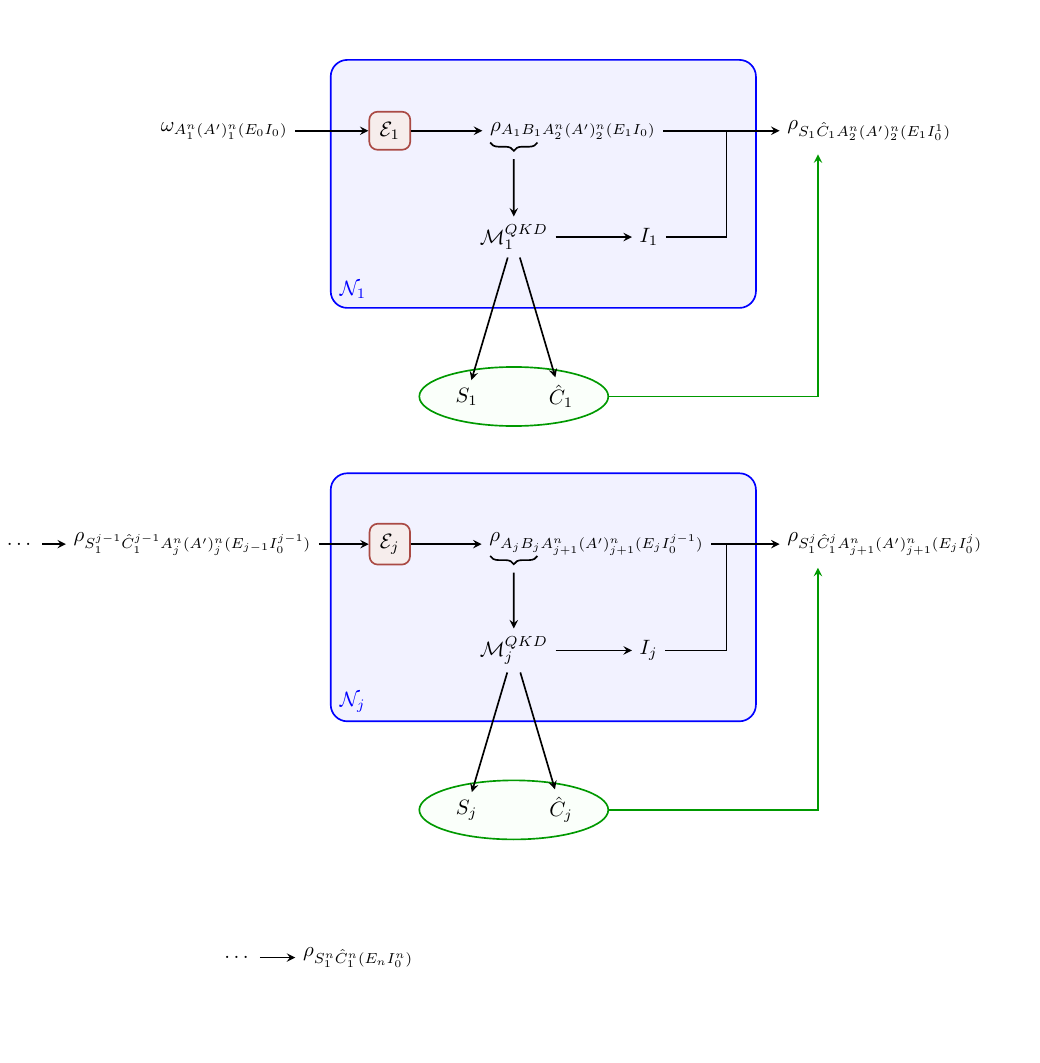}
    \caption{Schematic representation of the concatenated channel structure underlying the security proofs for fixed- and variable-length of \cref{thrm:Fixed_length_security,thrm:variable_length_security}.}
    \label{fig:Concatenated_channels}
\end{figure*}

\subsection{Leftover-hashing Lemma}
The leftover-hashing lemma (LHL) \cite{Renner_2006} relates the trace distance between the state produced by the protocol and an ideal state - a uniformly random key that is decoupled from Eve (see Eq. \ref{eq:security}) - to an entropy of the underlying quantum state and the length of the final key. In established formulations of the LHL, this entropic quantity is the (smoothed) min-entropy. However, expressing the final security statement in terms of the min-entropy requires additional conversion steps later in the proof, resulting in a loss of tightness. The LHL derived by Dupuis \cite{Dupuis_2022} makes the \Renyi entropy the quantity of interest that relates the extractable key length to the underlying quantum state.

For a CQ-state $\rho_{AE}$, a function $g:~ \mathcal{A} \rightarrow \mathcal{Z}$ drawn uniformly random from a set of two-universal hash functions with $\mathcal{Z} = \{0,1\}^\ell$, and $\alpha \in (1,2)$, we find
\begin{equation}
    \begin{aligned}
        &\frac{1}{2} \mathbb{E}\left|\left|\rho_{ZE|g} - \frac{1}{|\mathcal{Z}|} \mathbb{I}_Z \otimes \rho_E \right|\right|_1 \leq 2^{\frac{2(1-\alpha)}{\alpha}} 2^{\frac{1-\alpha}{\alpha} \left(\tilde{H}^{\uparrow}_{\alpha}(A|E)_{\rho} - \ell\right)}.
    \end{aligned}
\end{equation}

Applied to QKD, the abstract register $A$ in the leftover-hashing lemma is identified with Alice's $n$-round secret register $S_1^n$, while $Z$ is the register containing the final key obtained by applying the random hash to $S_1^n$. Thus, after sifting and key map, the considered state is of the form $\rho_{S_1^nE}$ and privacy amplification produces $K_A = g(S_1^n)$. Consequently, the trace-distance security of the final key is directly bounded in terms of the conditional \Renyi entropy $\renyiSandUp_{\alpha}(S_1^n |E)_{\rho}$.

This makes the role of the $n$-round \Renyi entropy clear: once the protocol has reduced to the state $\rho_{S_1^nE}$, the privacy amplification part of the security proof now requires us to obtain a lower bound on $\renyiSandUp_{\alpha}(S_1^n|E)_{\rho}$. The length of the final key determines how much entropy is extracted by the hash. Here, as we will see, we will not have to convert to other entropic quantities in the remaining part of the proof, which removes several sources of looseness in the security argument. 
The remaining task therefore is to bound this \Renyi entropy of the raw secret bit string generated by the protocol, considering all of Eve's classical and quantum side-information.

However, the continuous nature of the signals exchanged in the present DM CV-QKD protocol poses an additional challenge, because this makes Eve's quantum system genuinely infinite-dimensional. While techniques for bounding such \Renyi entropies have been developed recently \cite{Arqand_2025, Fawzi_2026}, they require Eve's system to be finite-dimensional. This motivated our first major contribution: the proof of a fully infinite-dimensional marginal- constrained entropy accumulation theorem (iMEAT).

\section{Infinite-Dimensional Marginal Constrained Entropy Accumulation Theorem}\label{sec:InfDimMEAT}

As discussed above, the \Renyi LHL \cite{Dupuis_2022} reduces the problem of a QKD security proof to bounding the $n$-round conditional \Renyi entropy in the presence of infinite-dimensional side information. We obtain such a bound by generalizing the marginal-constrained entropy accumulation theorem of Ref.~\cite{Arqand_2025} to infinite dimensions. The proof of this generalization is given in \cref{App:Inf_MEAT_Proof}.

Before stating the resulting theorems, we briefly discuss the scope of the generalization and the dimensional restrictions that remain necessary.

The marginal-constrained entropy accumulation theorem of Ref.~\cite{Arqand_2025} yields a lower bound on the $n$-round \Renyi entropy of the form
\begin{equation}
\renyiSandUp_\alpha(S_1^n \mid \CP_1^n E_n)_{\rho_{|\Omega}}
\geq n h^\uparrow_{\alpha} - \frac{\alpha}{\alpha-1} \log\frac{1}{p_\Omega},
\end{equation}
where $h^\uparrow_{\alpha}$ is a constant that will be defined below. In Ref.~\cite{Arqand_2025}, this bound was established only under finite-dimensionality assumptions.

We likewise extend the following bound for $f$-weighted entropies, which is needed to prove security for variable-length protocols:
\begin{equation}
H_\alpha^{\uparrow,f_{\mathrm{full}}}(S_1^n \mid \CP_1^n E_n)_\rho \geq \sum_j\min_{\cP_1^{j-1}} \kappa_{\cP_1^{j-1}}.
\end{equation}
The constants $\kappa_{\cP_1^{j-1}}$ will also be defined below; we omit their definition from this informal overview.

The proofs of both finite-dimensional results rely on a de-Finetti reduction, specifically on a version that preserves a fixed marginal \cite[Corollary~3.2]{Fawzi_2015}. This reduction is applied to the secret register $S$. Consequently, any attempt to establish an analogous bound on the \Renyi entropy for an infinite-dimensional secret register by following the same proof strategy would fail at this step; the finite-dimensional de-Finetti reduction used in the proof is no longer applicable.

Moreover, when $S$ is infinite-dimensional, the \Renyi entropy and its $f$-weighted counterpart are, in general, no longer uniformly continuous and need not be finite-valued. Both properties are essential for a separate proof step relying on a strong-duality argument. Thus, in the absence of an alternative proof technique, the present argument can only establish a generalization in which the secret register remains finite-dimensional.

The announcement register $\CP$ must also remain finite-dimensional. In our context, it is unclear how to extend the relative entropy between the probability distribution induced by the state on $\CP$ and a reference distribution while preserving the properties required by the proof. Such an extension must account for the possibility that the relative entropy takes the value $+\infty$ when the two distributions have incompatible supports. This occurs, for example, when the induced distribution assigns positive probability to a set to which the reference distribution assigns probability zero. The problematic set may therefore be negligible for the reference distribution but not for the induced distribution. The current formulation of the marginal-constrained entropy accumulation theorem does not cover such cases, and we therefore retain the assumption that $\CP$ is finite-dimensional.

These obstructions are specific to $S$ and $\CP$ and do not arise for $A$ and $E$. Our generalization therefore allows $A$ and $E$ to be infinite-dimensional, while $S$ and $\CP$ remain finite-dimensional. We state the two resulting theorems below and defer their proofs to \cref{App:Inf_MEAT_Proof}.

The first theorem is the direct generalization of the marginal constrained entropy accumulation theorem for infinite dimensions, which can be used directly for finite-size security proofs of fixed-length protocols as it gives a lower bound on the $n$-round Renyi entropy.

\begin{theorem}[Marginal-constrained entropy accumulation theorem in Infinite Dimensions]\label{thrm:MEAT_inf}
    Let $A_j$, $j=0,\dots, n-1$, and $E_j$, $j=0,\dots, n$ be registers with possibly \emph{infinite dimensional separable} Hilbert spaces. In contrast, assume that $S_j$ and $\CP_j$ for $j=1,\dots n$ are registers with underlying \emph{finite}-dimensional Hilbert spaces.
    
    For each $j\in\{1,2,\dots,n\}$, take a state $\sigma^{(j-1)}\in\dop{=}(A_{j-1})$ and a channel $\mathcal{M}_j\in\CPTP(A_{j-1}E_{j-1},S_j\CP_jE_j)$, where each $\CP_j$ is classical and all the $\CP_j$ are isomorphic to a single register $\CP$ with alphabet $\alphCP$.
    
    Let $\rho$ be a state of the form $\rho_{S_1^n\CP_1^nE_n}=\mathcal{M}_n\circ\cdots\circ\mathcal{M}_1[\omega_{A_0^{n-1}E_0}]$ for some $\omega\in\dop{=}(A_0^{n-1}E_0)$, such that $\omega_{A_0^{n-1}}=\sigma_{A_0}^{(0)}\otimes\cdots\otimes\sigma_{A_{n-1}}^{(n-1)}$. Furthermore, suppose that $\rho = p_\Omega \rho_{|\Omega} + (1-p_\Omega) \rho_{|\overline{\Omega}}$ for some $p_\Omega \in (0,1]$, normalized states $\rho_{|\Omega},\rho_{|\overline{\Omega}}$, and an event $\Omega$ defined only on the classical register $\CP_1^n$.
    
    Let $S_\Omega$ be a closed and convex set of probability distributions on the alphabet $\alphCP$, such that for all $\cP_1^n$ with nonzero probability in $\rho_{|\Omega}$, the frequency distribution $\freq_{\cP_1^n}$ lies in $S_\Omega$. 
    Then, for any $\alpha\in(1,\infty)$, we have:
    \begin{align}\label{eq:MEAT_inf}
        \renyiSandUp_\alpha(S_1^n | \CP_1^n E_n)_{\rho_{|\Omega}} \geq  n  h^\uparrow_{\alpha}
        - \frac{\alpha}{\alpha-1} \log\frac{1}{p_\Omega}, 
        \end{align}
        where (recalling $\bsym{\nu}_{\CP}$ denotes the distribution on $\CP$ induced by any state $\nu_{\CP}$)
    \begin{equation}
        h^\uparrow_{\alpha} = 
        \begin{aligned}[t]
        \inf_{\mbf{q} \in S_\Omega} &\inf_{\nu\in\Sigma_{S\CP E\widetilde{E}}} \Bigg( \frac{\alpha}{{\alpha}-1}D\left(\mbf{q} \middle\Vert \bsym{\nu}_{\CP}\right) \\
        &+\sum_{\cP\in\supp(\bsym{\nu}_{\CP})}q(\cP)\renyiSandUp_{\alpha}(S|E\widetilde{E})_{\nu_{|\cP}}  \Bigg),
        \end{aligned}
    \end{equation}
    with $\Sigma_{S\CP E \widetilde{E}}$ being the marginal-constrained convex range (see \cite[Definition~4.2]{Arqand_2025}) of the channels $\EATchann_j \otimes \id_{\widetilde{E}}$ and states $\sigma^{(j-1)}$, where $\widetilde{E}$ is a register of large enough dimension to serve as a purifying register for any of the $A_{j-1}E_{j-1}$ registers. Furthermore, $S$ and $E$ are common registers into which all $S_j$ and $E_j$, respectively, can be isometrically embedded.
\end{theorem}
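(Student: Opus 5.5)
We will follow the proof architecture of the finite-dimensional MEAT of Ref.~\cite{Arqand_2025} and isolate the few steps that genuinely use finite dimensionality of $A_j$ and $E_j$. Each such step will be replaced by an infinite-dimensional argument, while the steps touching only $S_j$ and $\CP_j$ are kept verbatim, since those registers stay finite.

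\textbf{Step 1: chain rule.} The finite-dimensional proof has three ingredients:
\begin{enumerate}[label=(\roman*),itemsep=0pt]
    \item conditioning on $\Omega$, which costs $\frac{\alpha}{\alpha-1}\log\frac{1}{p_\Omega}$ and uses only that $\Omega$ lives on the classical finite register $\CP_1^n$;
    \item a chain rule for $f$-weighted sandwiched R\'enyi entropies of the form $\renyiSandUp_\alpha(S_1^n\CP_1^n|E_n)\ge \sum_j \inf \renyiSandUp_\alpha(S_j\CP_j|E_j\widetilde{E})$-type quantities over the marginal-constrained range;
    \item a tradeoff-function and duality step converting the $f$-weighted bound into the variational expression $h^\uparrow_\alpha$ with the relative entropy term $D(\mbf{q}\Vert\bsym{\nu}_{\CP})$.
\end{enumerate}
Ingredients (i) and (iii) act only on $S\CP$ and on distributions over $\alphCP$, so they carry over unchanged provided the single-round quantities are well defined and finite. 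This holds because the entropy of the finite register $S$ is bounded by $\pm\log|S|$ even with infinite-dimensional conditioning. We will therefore first establish that sandwiched R\'enyi divergences on separable Hilbert spaces, for instance via the von Neumann algebraic or Araki--Masuda definition, satisfy data processing, the duality relation for purifications, and lower semicontinuity. We will also verify that the chain rule of \cite{Arqand_2025}, which rests on Uhlmann-type arguments and the marginal-constrained de Finetti reduction \cite[Corollary~3.2]{Fawzi_2015} applied only to $S$, does not need $\dim E<\infty$.

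\textbf{Step 2: finite-dimensional approximation.} Rather than re-derive every lemma in infinite dimensions, we will reduce to the finite-dimensional theorem. Choose increasing finite-rank projections $P_k\to\mathbb{I}$ strongly on $A_0^{n-1}$ and on each $E_j$. Define truncated channels $\EATchann_j^{(k)}$ by compressing inputs and outputs, completed to CPTP maps by sending the leftover weight to a fixed state. Also define truncated product states $\sigma^{(j-1),k}$ and a truncated state $\omega^{(k)}$, so that the marginal constraint is exactly satisfied. This last point is delicate: we will purify $\omega$ and project only the purifying side $E_0$, leaving $\omega_{A_0^{n-1}}$ untouched. Alternatively, we will absorb the $A$-truncation into the channel, so that $A$ need not be finite for the de Finetti step, which touches only $S$.

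Apply \cite{Arqand_2025} to obtain the bound with $h^{\uparrow,(k)}_\alpha$ for $\rho^{(k)}$. Two limits then remain. First, $\renyiSandUp_\alpha(S_1^n|\CP_1^nE_n)_{\rho^{(k)}_{|\Omega}}\to \renyiSandUp_\alpha(\cdots)_{\rho_{|\Omega}}$, together with $p_\Omega^{(k)}\to p_\Omega$. This follows because $\rho^{(k)}\to\rho$ in trace norm and, with $S$ finite, the conditional R\'enyi entropy is continuous in trace norm: one may use a uniform continuity bound in the spirit of Alicki--Fannes--Winter for sandwiched R\'enyi, valid for $\alpha$ close to $1$, or lower semicontinuity in the correct direction. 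Second, $\liminf_k h^{\uparrow,(k)}_\alpha\ge h^\uparrow_\alpha$. This is immediate if every state in the truncated range $\Sigma^{(k)}$ lies in the infinite-dimensional range $\Sigma_{S\CP E\widetilde{E}}$, possibly up to the embedding of $\widetilde{E}$. We will arrange this by building the truncation into Eve's side: the compressed channel is $\EATchann_j$ composed with a CPTP preprocessing on $E_{j-1}$, so its range is a subset of the original convex range.

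\textbf{Main obstacle.} The hardest step is the entropy limit in Step 2. Trace-norm continuity of $\renyiSandUp_\alpha$ with infinite-dimensional conditioning is only available in a weak form, and the bound must hold for every $\alpha\in(1,\infty)$. We will try a one-sided argument first. Write $\renyiSandUp_\alpha(S|E)=\sup_{\tau_E}-\renyiSandDiv_\alpha(\rho_{SE}\Vert\mathbb{I}_S\otimes\tau_E)$, and use lower semicontinuity of the sandwiched divergence under strong convergence, as in the Araki--Masuda or variational-formula approach. This yields $\renyiSandUp_\alpha(\cdot)_\rho\ge\limsup_k\renyiSandUp_\alpha(\cdot)_{\rho^{(k)}}$ provided the optimal $\tau_E$ can be approximated. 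The one-sided inequality is exactly the direction required, since we want a lower bound on the untruncated entropy. If this fails, we fall back on proving the chain rule of Step 1 directly in infinite dimensions.
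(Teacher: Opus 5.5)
Your plan has a genuine gap at the step where the bound is converted into $h^\uparrow_\alpha$.

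(a) In Step~1 you treat the duality step (iii) as dimension-independent because it ``acts only on $S\CP$ and on distributions over $\alphCP$''. That is not correct. Turning $\sup_{\mbf{f}}$ of the $f$-weighted bound into $h^\uparrow_\alpha$ requires exchanging $\sup_{\mbf{f}}$ with an infimum over input states $\omega$ on $A_{j-1}E_{j-1}$. The Clark--Duffin argument of \cite{Arqand_2025} relies on compactness of that set of states, and compactness fails in trace norm once $E$ is infinite-dimensional. Switching to the weak-$*$ topology does not help, because the $f$-weighted entropy is not weak-$*$ lower semicontinuous.

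(b) Step~2 does not get around this. Your inclusion argument for $\liminf_k h^{\uparrow,(k)}_\alpha\geq h^\uparrow_\alpha$ is fine for cutoffs on Eve's side. A preprocessing on $E_{j-1}$ stays inside the range, and a postprocessing on $E_j$ only increases the objective by data processing. But to invoke the finite-dimensional theorem you must also truncate $A_0^{n-1}$, and this replaces $\sigma^{(j-1)}$ by a truncated marginal. The truncated range is then not contained in $\Sigma_{S\CP E\widetilde{E}}$: a channel that measures $A_{j-1}$ into $\CP_j$ already produces a different $\bsym{\nu}_{\CP}$. Repairing the marginal by an Uhlmann approximation would need continuity of the objective of $h^\uparrow_\alpha$, and it has none. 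Both $D(\mbf{q}\Vert\bsym{\nu}_{\CP})$ and the conditioning on outcomes $\cP$ with vanishing $\bsym{\nu}_{\CP}(\cP)$ fail to be uniformly continuous. This is exactly the limit--infimum swap the paper warns against. If you instead leave $A$ untouched, as in both of your fallbacks, you are no longer applying the finite-dimensional theorem. You are re-deriving it with infinite-dimensional $A_{j-1}$ and $\widetilde{E}$, and that re-derivation contains the duality step from (a), which your plan never supplies.

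The missing ideas are to take the limit \emph{before} dualizing, and to prove the duality without compactness.
\begin{itemize}
\item The paper keeps the tradeoff function $\mbf{f}$ fixed and passes to the limit in the finite-dimensional bound obtained before the duality step. There the $f$-weighted entropy is uniformly continuous (\cref{thrm:Continuity_f_weighted}).
\item The change of marginal is then absorbed by $\widebar{\kappa}\geq\widetilde{\kappa}\geq\hat{\kappa}$ and $\hat{\kappa}\to\kappa$ (\cref{lem:kappa_inf-space_lower_bnd,lem:kappa_intermediate_limit_bnd,lem:kappa_marginal_limit_bnd}).
\item Only afterwards is $\sup_{\mbf{f}}$ converted into $h^\uparrow_\alpha$, via Fenchel--Moreau applied to the perturbation function $V(\mbf{z})$ (\cref{thrm:convduality}).
\item Lower semicontinuity of $V$ needs only compactness of $S_\Omega\subseteq\mathbb{P}_{\CP}$ and a one-sided continuity of the classical relative entropy (\cref{lem:LSC_perturbation_func}). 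Properness of $V$ uses the finite feasible point of \cref{lem:finite_feasible_point}.
\end{itemize}

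The step you single out as the main obstacle is not one. With $S$ finite there is a two-sided uniform continuity bound for $\renyiSandUp_\alpha$ with infinite-dimensional conditioning, valid for every $\alpha\in(1,\infty)$ (\cref{thrm:Continuity_Renyi}). Your one-sided sketch also points the wrong way. Approximating the optimal $\tau_E$ of $\rho$ yields $\renyiSandUp_\alpha(\cdot)_\rho\leq\liminf_k\renyiSandUp_\alpha(\cdot)_{\rho^{(k)}}$, not the lower bound on $\renyiSandUp_\alpha(\cdot)_\rho$ that you need.
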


Next, we state the generalization to infinite dimensions of the entropy accumulation theorem for $f$-weighted entropies. As we will see later in \cref{sec:Variable_length_security} this theorem yields finite-size security for variable-length protocols.

\begin{theorem}[Infinite-dimensional $f$-weighted entropy accumulation theorem]\label{thrm:fweighted_meat_simp}
    Let $A_j$, $j=0.\dots n-1$, and $E_j$, $j=0,\dots, n$ be registers with possibly \emph{infinite dimensional separable} Hilbert spaces. In contrast, assume that $S_j$ and $\CP_j$ for $j=1,\dots n$ are \emph{finite} registers.
    
	Furthermore, for each $j\in\{1,2,\cdots,n\}$, take a state $\sigma^{(j-1)}\in\dop{=}(A_{j-1})$, and a channel $\mathcal{M}_j\in\CPTP(A_{j-1}E_{j-1},S_j\CP_jE_j)$, such that $\CP_j$ are classical. Let $\rho$ be a state of the form $\rho_{S_1^n\CP_1^nE_n}=\mathcal{M}_n\circ\cdots\circ\mathcal{M}_1[\omega_{A_0^{n-1}E_0}]$ for some $\omega\in\dop{=}(A_0^{n-1}E_0)$, such that $\omega_{A_0^{n-1}}=\sigma_{A_0}^{(0)}\otimes\cdots\otimes\sigma_{A_{n-1}}^{(n-1)}$. For each $j$, suppose that for every value $\cP_1^{j-1}$, we have a tradeoff function $f_{|\cP_1^{j-1}}$ on registers $\CP_j$. Define the following tradeoff function on $\CP_1^n$:
	\begin{align}
		\label{eq:full qes_simp}
		f_\mathrm{full}(\cP_1^n) \defvar \sum_{j=1}^n f_{|\cP_1^{j-1}}(\cP_j).
	\end{align}
	Then for any $\alpha\in (1,\infty]$ we have
	\begin{align}\label{eq:qes eat_simp}
		\begin{aligned}
			&H_\alpha^{\uparrow,f_\mathrm{full}}(S_1^n|\CP_1^nE_n)_\rho \geq \sum_j\min_{\cP_1^{j-1}} \kappa_{\cP_1^{j-1}} \\
			\text{where} \\
            &\kappa_{\cP_1^{j-1}} \defvar \inf_{\nu\in\Sigma_j} H^{\uparrow, f_{|\cP_1^{j-1}}}_{\alpha}(S_j| \CP_j E_j \widetilde{E})_{\nu},
		\end{aligned}
	\end{align}
	defining 
    \begin{equation}
        \begin{aligned}[t]
        \Sigma_j \defvar \Big\{
            &\left(\EATchann_j \otimes \id_{\widetilde{E}}\right)\left[\omega_{A_{j-1}\widetilde{A}_{j-1}\widetilde{E}}\right] 
        \mid \\
        &\omega \in \dop{=}(A_{j-1}\widetilde{A}_{j-1}\widetilde{E}), \; \omega_{A_{j-1}}=\sigma_{A_{j-1}}^{(j-1)} \Big\}.
        \end{aligned}
    \end{equation} 
    with $\widetilde{E}$ being a register of large enough dimension to serve as a purifying register for any of the $A_{j-1}E_{j-1}$ registers.
	
	Consequently, if we instead define the following ``normalized'' tradeoff function on $\CP_1^n$:
	\begin{align}\label{eq:fullQESnorm_simp}
		&\hat{f}_\mathrm{full}( \cP_1^n) \defvar \sum_{j=1}^n \hat{f}_{|\cP_1^{j-1}}(\cP_j), \\ 
        \text{where} \\
        &\hat{f}_{|\cP_1^{j-1}}(\cP_j) \defvar f_{|\cP_1^{j-1}}(\cP_j) + \kappa_{\cP_1^{j-1}},
	\end{align}
	then
	\begin{align}\label{eq:chainQESnorm_simp}
		H^{\uparrow,\hat{f}_\mathrm{full}}_\alpha(S_1^n | \CP_1^n E_n)_\rho \geq 0.
	\end{align}
\end{theorem}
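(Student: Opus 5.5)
The plan is to reduce the infinite-dimensional statement to the finite-dimensional $f$-weighted chain rule of Ref.~\cite{Arqand_2025} by an approximation argument on the registers $A_j$ and $E_j$. The registers $S_j$ and $\CP_j$ are finite, so only $A_j$ and $E_j$ need to be truncated. The proof runs in three steps: (i) establish a one-step chain rule in infinite dimensions; (ii) iterate it; (iii) derive the normalized statement.

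\emph{Step (i): one-step chain rule.} I would prove
\begin{equation*}
H_\alpha^{\uparrow,f_{\mathrm{full}}}(S_1^j|\CP_1^jE_j)_{\rho^{(j)}} \geq H_\alpha^{\uparrow,f'_{\mathrm{full}}}(S_1^{j-1}|\CP_1^{j-1}E_{j-1}A_{j-1}^{n-1})_{\rho^{(j-1)}} + \min_{\cP_1^{j-1}}\kappa_{\cP_1^{j-1}} ,
\end{equation*}
where $f'_{\mathrm{full}}$ keeps only the first $j-1$ terms. Following \cite{Arqand_2025}, the finite-dimensional proof writes the $f$-weighted entropy as a sandwiched-Rényi divergence against a conditioning-dependent reference. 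It then applies a chain rule of the form $\widetilde D_\alpha(\mathcal M[\rho]\|\mathcal M[\sigma])\le \widetilde D_\alpha(\rho\|\sigma)+\sup_\omega(\cdots)$, which rests on the purified single-round infimum. The marginal constraint $\omega_{A_{j-1}}=\sigma^{(j-1)}$ enters only through the de Finetti-type reduction applied to $S$. Since $S$ is finite, that reduction should carry over unchanged.

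To transfer the result to infinite dimensions, I would take increasing finite-rank projections $P_k\to\mathbb{I}$ on $A$ and $E$. The sandwiched Rényi divergence and its $f$-weighted variants are lower semicontinuous, and the data-processing inequality holds in separable Hilbert spaces. The truncated states $P_k\rho P_k/\Tr{P_k\rho}$ converge in trace norm. Because $S$ and $\CP$ are finite, the entropy $H^{\uparrow}_\alpha(S|\CP E)$ is bounded by $\log|S|$ and can be written as a variational $\sup$ over $\sigma_{\CP E}$. This representation should give continuity in the limit, at least in the direction we need.

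\emph{Main obstacle.} The hardest part is handling the single-round infimum $\kappa$ under truncation. Restricting the state alone is not enough, because the truncated channel $\mathcal M_j\circ\mathcal P_k$ is only CPTNI. Moreover, the truncated input does not exactly satisfy $\omega_{A_{j-1}}=\sigma^{(j-1)}$.

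My first attempt would avoid truncating channels altogether. Instead I would exploit that $\Sigma_j$ already allows an arbitrary purifying register $\widetilde E$ of separable dimension. Concretely, I would prove the single-step inequality directly in infinite dimensions via the duality representation $H^{\uparrow,f}_\alpha=\sup_\sigma(\ldots)$. I would use the Stinespring dilation of $\mathcal M_j$ together with the fact that any state on $A_{j-1}E_{j-1}\times$(rest) with the correct $A_{j-1}$ marginal, after conditioning on $\cP_1^{j-1}$, lies in $\Sigma_j$ once the remainder is absorbed into $\widetilde E$. The only genuinely dimension-dependent ingredients are then the strong-duality and minimax steps. These involve only finite $S$ and $\CP$ together with compact sets of conditional distributions, so Sion's minimax theorem applies with the state variable ranging over a convex, not necessarily compact, set. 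Where compactness is needed, I would invoke lower semicontinuity in the weak-* topology.

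\emph{Steps (ii) and (iii).} Iterating step (i) from $j=n$ down to $j=1$ gives the first claim, since $H_\alpha^{\uparrow}$ of an empty register is $0$. For the normalized statement, note that adding the constant $\kappa_{\cP_1^{j-1}}$ to $f_{|\cP_1^{j-1}}$ shifts each single-round $\kappa$ to $0$. This follows from the definition of the $f$-weighted entropy, since $H^{\uparrow,f+c}=H^{\uparrow,f}-c$ when $c$ is a function of the conditioning. Applying the first claim to $\hat f$ then yields $H^{\uparrow,\hat f_{\mathrm{full}}}_\alpha\ge 0$. The case $\alpha=\infty$ follows by taking the limit $\alpha\to\infty$, using monotonicity in $\alpha$.
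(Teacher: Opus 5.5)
\textbf{The core step (i) does not work.} The only concrete justification you give for a one-step bound with the \emph{marginal-constrained} constant $\kappa_{\cP_1^{j-1}}$ is that the round-$j$ input, conditioned on $\cP_1^{j-1}$, lies in $\Sigma_j$. That is false. The hypothesis fixes only the unconditional marginal $\omega_{A_0^{n-1}}$, so $E_0$ may purify $A_{j-1}$. An earlier channel can then measure that purification, publish the result, and thereby steer $A_{j-1}$.

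Here is a concrete example with $n=2$, $A_0$ trivial, and $0<c<1$:
\begin{itemize}
\item Set $\sigma^{(1)}=\tfrac12\bigl(\begin{smallmatrix}1&c\\ c&1\end{smallmatrix}\bigr)$, and let $E_0$ hold $\ket{\psi}=(\ket{0}\ket{s_0}+\ket{1}\ket{s_1})/\sqrt2$ with $\langle s_0|s_1\rangle=c$.
\item Let $\mathcal M_1$ measure $E_0$ in $\{\ket{s_0},\ket{s_0^\perp}\}$ and output the result as $\CP_1$, with $S_1$ and $E_1$ trivial.
\item Let $\mathcal M_2$ measure $A_1$ in the $Z$ basis with probability $1/2$ ($\CP_2=\gen$), and otherwise in the $X$ basis ($\CP_2=\pm$, $S_2=\perp$).
\end{itemize}
On the branch $\CP_1=s_0^\perp$ one has $A_1=\ket{1}$. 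With $f_{|s_0^\perp}\equiv0$ that branch has $f$-weighted entropy $0$. Yet $\kappa_{s_0^\perp}>0$, since every $\nu\in\Sigma_2$ satisfies $\renyiSandUp_\alpha(S_2|\tilde E)_{\nu|\gen}\ge H_{\min}(Z|A')_\psi>0$.

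On the branch $\CP_1=s_0$, the outcome $+$ has probability $\frac{(1+c)^2}{2(1+c^2)}>\frac{1+c}{2}$. So an $f_{|s_0}$ that puts large weight on $+$ drives this branch strictly below $\kappa_{s_0}$ as well. After shifting to $\hat f$, both branches are negative, and $H^{\uparrow,\hat f_{\mathrm{full}}}_\alpha(S_1S_2|\CP_1\CP_2E_2)<0$, whereas your step (i) would give $\ge0$.

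Consequently, no round-by-round argument of the unconstrained (GREAT) type can produce the marginal-constrained single-round constant. In finite dimensions that constant comes from the Arqand--Tan machinery applied to the whole chain: channel conditional entropies, their regularization, and a fixed-marginal de Finetti reduction. You neither reproduce this machinery nor check it in infinite dimensions.

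Note also that the example is entirely finite-dimensional. It contradicts the displayed normalized claim whenever $f_{|\cP_1^{j-1}}$ genuinely depends on $\cP_1^{j-1}$. Any proof therefore needs a hypothesis excluding such steering, for example history-independent $f$, or future $A$'s uncorrelated with $E_0$ as in the prepare-and-measure application. The paper never confronts this, because it imports \cite[Theorem~4.1a]{Arqand_2025} as a black box and proves only the lifting to infinite dimensions.

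\textbf{The lifting is where your fallback truncation route stalls.} You correctly flag both obstacles: truncated channels are only trace non-increasing, and truncation breaks the marginal. But you leave both open. Your limit mechanism also points the wrong way. Writing $\renyiSandUp_\alpha=\sup_\sigma(-\widetilde D_\alpha)$ exhibits it as a supremum of upper-semicontinuous functions, which yields no semicontinuity. What you need is that the original entropy dominates the limit of the truncated ones. Weak-$*$ lower semicontinuity fails as well, and the duality and minimax issues you raise belong to the non-weighted theorem, not this one.

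The paper closes these points as follows:
\begin{itemize}
\item Trace-preserving projection channels with discard vectors inside the space make the truncated chain isometric to a genuine finite-dimensional CPTP chain.
\item Truncating the initial state by the product channel $\Pi_0^{(d)}$ yields an exact product marginal $\bigotimes_j\bar\sigma^{(j-1)}$, so the finite theorem applies verbatim.
\item Embedding and data processing under the output projection give $\bar\kappa\ge\tilde\kappa\ge\hat\kappa$.
\item $\hat\kappa\to\kappa$ follows from Uhlmann's theorem, which gives a nearby input with the exact marginal, together with \emph{uniform} continuity of the $f$-weighted entropy with infinite-dimensional side information. That continuity is the bound of \cite{Bluhm_2026}, lifted by finite approximation and the read-and-prepare trick.
\item The same uniform continuity carries the $n$-round entropy to the limit.
\end{itemize}
Your shift argument for the normalized statement and your monotonicity argument for $\alpha=\infty$ are fine.
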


At a high level, the proofs of both theorems proceed in the following way. We first introduce projection channels that project the infinite-dimensional spaces onto finite-dimensional truncation spaces, allowing us to apply the finite-dimensional version of the marginal-constrained $f$-weighted entropy accumulation theorem from above, alongside parts of the proof for the finite-dimensional MEAT. 

For the marginal-constrained $f$-weighted entropy accumulation theorem, the key step is to establish the uniform continuity of the $f$-weighted entropy from the uniform continuity of the conditional \Renyi entropy. We then take the limit to infinity along the truncations introduced by the projection maps, which suffices to prove the infinite-dimensional marginal-constrained $f$-weighted entropy accumulation theorem.

However, this approach does not extend to the standard marginal-constrained entropy accumulation theorem because the objective in $h^\uparrow_{\alpha}$ is not uniformly continuous, and hence we cannot swap the limit of projections with the infimum. Moreover, even worse, in infinite dimensions, the set of normalized states is not compact, which was a core requirement in the proof of \cite[Theorem~4.1]{Arqand_2025} for applying strong duality. Changing the topology does not help either: the set of normalized states is weak* compact by the Banach--Alaoglu theorem; however, the $f$-weighted entropy is not weak* lower semi-continuous, and hence minimax theorems (e.g., Sion's) do not apply.

Given these obstacles, we instead use the Fenchel--Moreau theorem \cite[Theorem~13.37]{Bauschke_2017} to establish the required strong duality. The advantage here is that it does not require the compactness of the underlying sets. It does, however, require introducing a perturbation function and proving its lower semicontinuity. Equipped with this strong duality result, which generalizes to infinite dimensions, we prove our infinite-dimensional marginal-constrained entropy accumulation theorem (iMEAT). For detailed proofs, we refer to \cref{App:Inf_MEAT_Proof}.

\section{Security Proof}\label{sec:Security_proof}
Equipped with an infinite-dimensional marginal constrained entropy accumulation theorem, we are now ready to provide a full security proof for discrete-modulated continuous variable QKD protocols. After two short sections with preparations, we will first prove security for the fixed-length protocol, followed by security for the variable-length protocol.

\subsection{Fixed Length}\label{sec:Fixed_length_security}
We first show the fixed-length security of the protocol described in \cref{sec:ProtocolDescription}. Thus, the goal of this section is to prove the following theorem.

\begin{theorem}[Fixed-Length Security for Infinite Dimensional Side Information]\label{thrm:Fixed_length_security}
Let $\{\mathcal{N}_j\}_{j=1}^n$ be the channels defined in \cref{def:MEAT_channel_with_Eve} that implement both the QKD channel $\EATchannQKD_j$ and Eve's corresponding attack channel $\mathcal{E}_j$ for each $j=1,\dots,n$. Let $\alpha \in (1,2)$, and $\psi_{A_1^n(A')_1^n}$ be the states sent by Alice according to the source replacement scheme. Furthermore, let $\ePA, \eEV \in (0,1]$ and let the state $\rho_{S_1^n \hat{C}_1^n E_n I_1^n}$ be generated from channels $\{\mathcal{N}_j\}$ according to:
\begin{enumerate}[label=\roman*)]
    \item $\pTr{(A')_1^n}{\psi_{A_1^n(A')_1^n}} = \bigotimes_{j=1}^n \sigma_{A_j}^{(j)}$.
    \item $\omega_{A_1^n(A')_1^n E_0}$ is a purification of $\psi_{A_1^n(A')_1^n}$.
    \item $\omega_{A_1^n(A')_1^n E_0 I_0} = \omega_{A_1^n(A')_1^n E_0} \otimes \pure{\perp}_{I_0}$,
    \item $\rho_{S_1^n \hat{C}_1^n E_n I_1^n } = \mathcal{N}_n \circ \dots \circ \mathcal{N}_1 [\omega_{A_1^n(A')_1^n E_0 I_0}].
    $
\end{enumerate}
Most importantly, all registers besides $S_1^n$ and $\CP_1^n$ can be infinite dimensional.

Then, the protocol from \cref{sec:ProtocolDescription} is $\ePA$-secret and $\eEV$-secure, hence $(\ePA+\eEV)$-secret, when the length $l$ of the final key satisfies:
\begin{equation}
    l \le n \hUpQKD - \leak - \ceil{\log \frac{1}{\eEV}} - \frac{\alpha}{\alpha-1} \log \frac{1}{\ePA} + 1,
\end{equation}
where $\leak$ is the length of the error correction string. Moreover, the single-round quantity $\hUpQKD$ is given by:
\begin{equation}
    \hUpQKD = \begin{aligned}[t]
        \inf_{\mbf{q} \in \Sacc} \inf_{\nu \in \Sigma} \Bigg(&\frac{\alpha}{\alpha-1} D\left(\mbf{q} || \bsym{\nu}_{\CP}\right)\\
        &+ \sum_{\cP} q(\cP) \renyiSandUp_{\alpha}(S|E I)_{\nu_{|\cP}} \Bigg),
    \end{aligned}
\end{equation}
where $\Sigma$ is the marginal convex range of the channels $(\EATchannQKD_j \otimes \id_E)$ and defined by:
\begin{align}
    \Sigma_j &\defvar \begin{aligned}[t] \Big\{ & \left(\EATchannQKD_j \otimes \id_E\right)\left[\tau_{A_j B_j E}\right] | \\ &\tau \in \dop{=}(A_j B_j E) \text{ pure}, \tau_{A_j} = \sigma_{A_j}^{(j)} \Big\},
    \end{aligned} \\
    \Sigma &\defvar \mathrm{conv}\left( \bigcup_{j=1}^n \Sigma_j \right).
\end{align}
\end{theorem}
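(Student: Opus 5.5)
We split the security claim into correctness from error verification and secrecy from privacy amplification, as in \cref{Def:Variable length eps security} and \cref{rem:Fixed-Length security}. \emph{Correctness} is standard. Alice's guess $\bar S_1^n$ differs from $S_1^n$ while the protocol accepts only if two distinct strings collide under a 2-universal hash with output length $\ceil{\log\frac{1}{\eEV}}$. This happens with probability at most $\eEV$, which gives $\eEV$-correctness independently of the attack and of any dimension.

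For \emph{secrecy}, we work with the state $\rho_{S_1^n\CP_1^nE_nI_1^n}$ produced by the channels $\mathcal{N}_j$ of \cref{def:MEAT_channel_with_Eve}. Eve's final side information consists of $E_nI_1^n\CP_1^n$, the error-correction string $L$ of $\leak$ bits, and the verification hash $H$ of $\ceil{\log\frac{1}{\eEV}}$ bits. We condition on $\OmegaAcc$ and apply the \Renyi leftover-hashing lemma of \cite{Dupuis_2022}. We use the $p_\Omega$-weighted form of the secrecy criterion, so the trace distance is multiplied by $\Pr[\OmegaAcc]$. The chain rule for sandwiched \Renyi entropies with classical registers removes $L$ and $H$ at a cost of at most $\leak+\ceil{\log\frac{1}{\eEV}}$. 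Since $\OmegaEV$ depends on registers not included in the MEAT output, we bound the entropy conditioned on $\OmegaAcc$ by the entropy conditioned on $\OmegaAT$. Conditioning on the subevent loses at most $\frac{\alpha}{\alpha-1}\log\frac{\Pr[\OmegaAT]}{\Pr[\OmegaAcc]}$, and this term is absorbed below together with the other $p_\Omega$ terms.

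The main step is to invoke \cref{thrm:MEAT_inf} with $\Omega=\OmegaAT$ and $S_\Omega=\Sacc$. For this we identify $A_{j-1}\to A_jA'_j$ and $E_{j-1}\to E_{j-1}I_0^{j-1}$, and take $\EATchann_j=\mathcal{N}_j$. $\OmegaAT$ is defined only on $\CP_1^n$, and $\Sacc$ is closed and convex. The marginal condition on $A_1^n$ holds by assumption i). The registers $S_j$ and $\CP_j$ are finite, while $E_j$, $A'_j$ and $B_j$ may be infinite-dimensional, exactly as the theorem allows. This yields
\begin{equation}
\renyiSandUp_\alpha(S_1^n|\CP_1^nE_nI_1^n)_{\rho_{|\OmegaAT}}\ge n h^\uparrow_\alpha-\tfrac{\alpha}{\alpha-1}\log\tfrac{1}{p_{\OmegaAT}}.
\end{equation}
It remains to show $h^\uparrow_\alpha\ge\hUpQKD$. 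Because $\mathcal{N}_j=(\EATchannQKD_j\otimes\id)\circ(\id\otimes\mathcal{E}_j)$, every element of the marginal-constrained convex range of $\mathcal{N}_j\otimes\id_{\widetilde E}$ has the form $(\EATchannQKD_j\otimes\id)[\tau]$ with $\tau_{A_j}=\sigma^{(j)}_{A_j}$. Purifying $\tau$ into the register $E$ only lowers the conditional entropy, by data processing, and leaves $\bsym\nu_{\CP}$ unchanged. Hence that range is contained in $\Sigma$, up to this entropy-decreasing step, and the infimum defining $\hUpQKD$ is smaller than $h^\uparrow_\alpha$.

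Combining these bounds gives a secrecy error of the form $\Pr[\OmegaAcc]\,2^{\frac{2(1-\alpha)}{\alpha}}2^{\frac{1-\alpha}{\alpha}(n\hUpQKD-\leak-\lceil\log 1/\eEV\rceil-\ell)}p_{\OmegaAcc}^{-1}$. The factor $p^{-1}$ comes from $-\frac{\alpha}{\alpha-1}\log\frac1p$ multiplied by $\frac{1-\alpha}{\alpha}$, and it cancels against the prefactor $\Pr[\OmegaAcc]$. Requiring the resulting expression to be at most $\ePA$ gives the stated length condition, with the $+1$ coming from the $2^{2(1-\alpha)/\alpha}$ prefactor. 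I expect the main obstacle to be the bookkeeping of conditioning on $\OmegaAcc$ versus $\OmegaAT$, so that the $p_\Omega$ factors cancel exactly. A second delicate point is justifying that the purification register in $\Sigma_j$ may be taken infinite-dimensional while still matching the hypotheses of \cref{thrm:MEAT_inf}.
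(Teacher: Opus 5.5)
Your proposal is correct and follows essentially the same route as the paper. Correctness comes from the verification hash. Secrecy uses the \Renyi leftover-hashing lemma, the $\OmegaAcc\to\OmegaAT$ conditioning loss and the classical chain rule, so that the $p_\Omega$ factors cancel; then the iMEAT with $\Omega=\OmegaAT$ is applied to the $\mathcal{N}_j$, and their marginal-constrained range is relaxed to $\Sigma$ via the $A_j$-marginal, purification, strong subadditivity and isometric invariance. For the convex hull, purifying each component into a common register and applying one fixed isometry suffices in place of the paper's explicit global isometry.

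Two small bookkeeping fixes are needed. Hypothesis i) only fixes the $A_1^n$ marginal, so $A'_j$ belongs in the iMEAT's side-information register rather than its $A$-register. The prefactor $2^{2(1-\alpha)/\alpha}$ alone would give $+2$ rather than $+1$, whereas the paper's $+1$ comes from a $\sqrt{2}$-randomizing constant; either way the stated length condition holds.
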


\begin{proof}
We prove correctness and secrecy separately, as together they directly imply security. The $\epsilon_{\mathrm{EV}}$ correctness of the objective DM CV-QKD protocol follows from 2-universal hashing,
\begin{equation}
\begin{aligned}
    &\mathrm{Pr}\left[ K_A \neq K_B \land \OmegaAcc \right] \leq \mathrm{Pr}\left[ K_A \neq K_B \land \OmegaEV\right]\\
    &\leq \mathrm{Pr}\left[ S_1^n \neq \bar{S}_1 \land \OmegaEV \right] \leq \mathrm{Pr}\left[ \OmegaEV | S_1^n \neq \bar{S}_1^n \right]\\
    & \leq 2^{- \left\lceil \log\left( \frac{1}{\epsilon_{\mathrm{EV}}} \right) \right\rceil}\\
    &\leq \epsilon_{\mathrm{EV}},
\end{aligned}
\end{equation}
where $\OmegaEV$ is the event of successfully passing the error-verification step and $\OmegaAcc$ is the event of the protocol not aborting.

The remaining task is to prove secrecy, which will use the infinite-dimensional marginal-constrained entropy accumulation theorem (iMEAT) we proved in the previous section. Starting from the security definition of \cref{eq:security}, restricting the sum to exactly one key length, we obtain
\begin{equation}
    \begin{aligned}
 &\frac{1}{2}\mathrm{Pr}\left[\OmegaAcc\right]      \Bigg\lVert \sigma_{K_BI_1^nELG| \OmegaAcc} - \mathbbm{U}_{K_B}^\ell \otimes \sigma_{I_1^nELG| \OmegaAcc} \Bigg\rVert_1  \\
 & \leq \mathrm{Pr}\left[\OmegaAcc\right] 2^{\frac{(1-\alpha)}{\alpha}} 2^{\frac{(1-\alpha)}{\alpha} \renyiSandUp_{\alpha}(S_1^n |I_1^n EL)_{\omega_{|\mathrm{acc}}} - \ell  }\\
 & = \mathrm{Pr}\left[ \OmegaEV | \OmegaAT \right] \mathrm{Pr}\left[ \OmegaAT \right] 2^{\frac{(1-\alpha)}{\alpha} \left(\renyiSandUp_{\alpha}(S_1^n |I_1^n EL)_{\omega_{|\mathrm{acc}}} - \ell +1 \right)},
    \end{aligned}
\end{equation}
where we used that 2-universal hashing is $\sqrt{2}$-randomizing, see \cite[Lemma~3.5.5]{Kamin_2026}. Recall that $\mathbbm{U}^\ell$ denotes a fully mixed state of dimension $2^\ell$ and the register $G$ stores the choice of the hash function and $L$ contains the information exchanged during error correction. Next, we simplify the entropic term: 
\begin{equation}
    \begin{aligned}
        &\renyiSandUp_{\alpha}(S_1^n|I_1^n EL)_{\omega_{|\OmegaAcc}} = \renyiSandUp_{\alpha}(S_1^n|I_1^n EL)_{\omega_{|\Omega_{\mathrm{EV} \land \mathrm{AT}}}}\\
        & \geq \renyiSandUp_{\alpha}(S_1^n|I_1^nEL)_{\omega_{|\OmegaAT}} - \frac{\alpha}{(\alpha-1)} \log\left( \frac{1}{\mathrm{Pr}\left[ \OmegaEV | \OmegaAT \right]} \right)\\
        & \geq \renyiSandUp_{\alpha}(S_1^n|I_1^nE)_{\omega_{|\OmegaAT}}- \lambda_{\mathrm{EC}} - \log\left( \frac{1}{\epsilon_{\mathrm{EC}}} \right)\\
        &~~~- \frac{\alpha}{(\alpha-1)} \log\left( \frac{1}{\mathrm{Pr}\left[ \OmegaEV | \OmegaAT \right]} \right) \ , 
    \end{aligned}
\end{equation}
where the first inequality is \cite[Lemma B.5]{Dupuis_2020}, whose proof extends to the conditioning system being infinite-dimensional and the second is the chain rule for classical side-information, which we prove for quantum states with separable Hilbert spaces in \cref{prop:cl-chain-rule}.

We now wish to bound $\renyiSandUp_{\alpha}(S_1^n|I_1^nE)_{\omega_{|\OmegaAT}}$. As we consider the source-replacement scheme (see Section \ref{sec:sourceReplacement}), we are guaranteed $\pTr{(A')_1^n}{\psi_{A_1^n(A')_1^n}} = \bigotimes_{j=1}^n \sigma_{A_j}^{(j)}$, i.e.~Alice's marginal state is constrained. Thus, we search for a lower bound of the Rényi entropy from above for all states with a given marginal. 

Our infinite-dimensional marginal-constrained entropy accumulation theorem, \cref{thrm:MEAT_inf} which we stated in Section \ref{sec:InfDimMEAT}, provides such a bound. Conditioned on passing the acceptance test, we obtain
\begin{equation}
     \renyiSandUp_{\alpha}(S_1^n | I_1^nE)_{\omega_{|\OmegaAT}} \geq n h_{\alpha}^{\uparrow, \mathrm{QKD}} - \frac{\alpha}{(\alpha-1)} \log\left( \frac{1}{\mathrm{Pr}\left[ \Omega_{\mathrm{AT }}\right]} \right).
\end{equation}

Here, $\hUpQKD$ is calculated as
\begin{equation}\label{eq:fixedLengthBound}
    \hUpQKD =\begin{aligned}[t] \inf_{\mbf{q} \in \Sacc} \inf_{\nu \in \Sigma'}  \Bigg( &\frac{\alpha}{\alpha-1} D\left(\mbf{q} || \bsym{\nu}_{\CP}\right) \\ 
    &+ \sum_{\cP} q(\cP) \renyiSandUp_{\alpha}(S|E I \tilde{E})_{\nu_{|\cP}} \Bigg),
    \end{aligned}
\end{equation}
where $\Sigma'$ is the marginal-constrained convex range, see \cite[Definition]{Arqand_2025} of the channels $\mathcal{N}_j$ defined by
\begin{align}
    \Sigma' &\defvar \mathrm{conv}\left( \bigcup_{j=1}^n \Sigma_j'\right), \\
    \Sigma'_j &\defvar \begin{aligned}[t]
         \Big\{ &\left(\mathcal{N}_j \otimes \id_{\tilde{E}}\right)\left[\omega_{A_j A'_j E_{j-1} I_0^{j-1} \tilde{E}}\right] | \\
         &\omega \in \dop{=}(A_j A'_j E_{j-1} I_0^{j-1} \tilde{E}), \; \omega_{A_j} = \sigma_{A_j}^{(j)} \Big\}.
        \end{aligned}
\end{align}

The only remaining part to prove is the claim about the marginal convex range. We will show this in three steps. First, we relate the marginal-constrained convex range $\Sigma'$ built from the channels $\mathcal{N}$ to the marginal-constrained convex range $\Sigma''$ built from the channels $\EATchannQKD_j$, effectively allowing an arbitrary input for $\EATchannQKD_j$ instead of the one generated by $\mathcal{E}_j$. Secondly, we define another set $\breve{\Sigma}$ containing purifications of the states in $\Sigma''$ and is related to $\Sigma$ through an isometry. Finally, we relate these sets together by showing that $\hUpQKD$ evaluated over each of them is a lower bound due to data processing or isometric invariance of the \Renyi entropy.

\begin{widetext}
We begin with the definition of $\Sigma''$. For notational simplicity, let us define $\bar{E}_{j-1} \defvar E_{j-1} I_0^{j-1}$. It then follows that
\begin{align}
    \Sigma'_j &= \left\{ (\EATchannQKD_j \otimes \id_{\bar{E}_j \tilde{E}}) \circ (\mathcal{E}_j \otimes \id_{A_j}) [\omega_{A_j A'_j \bar{E}_{j-1} \tilde{E}}] \mid \omega \in \dop{=}(A_j A'_j \bar{E}_{j-1} \tilde{E}), \omega_{A_j} = \sigma_{A_j}^{(j)} \right\} \\
    &= \left\{ (\EATchannQKD_j \otimes \id_{\bar{E}_j \tilde{E}})[\tau_{A_j B_j \bar{E}_j \tilde{E}}] \mid \tau = (\mathcal{E}_j \otimes \id)[\omega_{A_j A'_j \bar{E}_{j-1} \tilde{E}}], \omega \in \dop{=}(A_j A'_j \bar{E}_{j-1} \tilde{E}), \omega_{A_j} = \sigma_{A_j}^{(j)} \right\} \\
    &\subseteq \left\{ (\EATchannQKD_j \otimes \id_{\bar{E}_j \tilde{E}})[\tau_{A_j B_j \bar{E}_j \tilde{E}}] \mid \tau \in \dop{=}(A_j B_j \bar{E}_j \tilde{E}), \tau_{A_j} = \sigma_{A_j}^{(j)} \right\}\\
    &= \Sigma''_j
\end{align}
\end{widetext}
Thus, the marginal convex ranges contain each other, meaning that
\begin{equation}
    \Sigma' = \mathrm{conv}\left( \bigcup_{j=1}^n \Sigma'_j \right) \subseteq \mathrm{conv}\left( \bigcup_{j=1}^n \Sigma''_j \right) \eqqcolon \Sigma'',
\end{equation}
which completes the first step.

For step two, let $\breve{E}$ be a separable Hilbert space acting as a purifying register for $A_j B_j \bar{E}_j \tilde{E}$ such that $\tau_{A_j B_j \bar{E}_j \tilde{E} \breve{E}}$ is pure for $\tau \in \dop{=}(A_j B_j \bar{E}_j)$ and all $j$. We define
\begin{equation}
    \breve{\Sigma}_j \defvar \begin{aligned}[t]
         \Big\{ &\left(\EATchannQKD_j \otimes \id_{\bar{E}\breve{E}}\right)\left[\tau_{A_j B_j \bar{E}_{j} \tilde{E} \breve{E} } \right] | \\
         &\tau \in \dop{=}(A_j B_j \bar{E}_{j} \tilde{E} \breve{E}) \text{ pure}, \; \tau_{A_j} = \sigma_{A_j}^{(j)} \Big\},
        \end{aligned}
\end{equation}
where we note that because the input states that define $\breve{\Sigma}_{j}$ are purifications of input states that define $\Sigma_{j}'$, for a $j \in [n]$ and a state $\sigma \in \Sigma_{j}''$, there exists an extension of this state in $\breve{\Sigma}_{j}$.

Furthermore, we define the corresponding marginal-constrained convex range $\breve{\Sigma}$, again see \cite[Definition~4.2]{Arqand_2025}, by
\begin{align}
    \breve{\Sigma} \coloneq \mathrm{conv}\left( \bigcup_{j=1}^n \breve{\Sigma}_{j} \right) \ , 
\end{align}
which by construction embeds the registers $\bar{E}_{j-1} \tilde{E} \breve{E}$ in $\bar{E} \tilde{E} \breve{E}$. The register $\bar{E}$ in which we embed all registers $E_j$ exists since all $\bar{E}_j$ are separable and thus have a countable basis.

Similarly, let $\hat{E}$ be a separable Hilbert space that can act as a purifying register for $A_j B_j$ such that $\tau_{A_j B_j \hat{E}}$ is pure for all $j$, and define $\Sigma_j$ as in the theorem statement as:
\begin{equation}
    \Sigma_j \defvar \begin{aligned}[t] 
        \Big\{ & \left(\EATchannQKD_j \otimes \id_{\hat{E}} \right)\left[\tau_{A_j B_j \hat{E}}\right] | \\ &\tau \in \dop{=}(A_j B_j \hat{E}) \text{ pure}, \tau_{A_j} = \sigma_{A_j}^{(j)} \Big\}.
    \end{aligned}
\end{equation}

To conclude step two, we construct a global isometry $U$ mapping $\bar{E} \tilde{E} \breve{E}$ to $\hat{E}$. We note that the following construction can be made because all involved Hilbert spaces are separable and thus have a countable basis. 

By Uhlmann's theorem, see \cite{Hou_2012} for infinite dimensions, all purifications are unitarily equivalent, and we note that any isometry mapping these purifications onto each other acts solely on $\hat{E}$ (or $\bar{E}_{j} \tilde{E} \breve{E}$) and commutes with each $\EATchannQKD_j$. Therefore, for each $j$, let $U_j$ be the isometry, which exists by Uhlmann's theorem \cite{Hou_2012}, that maps the purification $\breve{\tau} \in \dop{=}(A_j B_j \bar{E}_{j} \tilde{E} \breve{E})$ onto the purification $\hat{\tau} \in \dop{=}(A_j B_j \hat{E})$. 

We define the global unitary $U$ on $\bar{E} \tilde{E} \breve{E}$ to act like $U_1$ on all basis vectors contained in $\bar{E}_{0}$, then extend it by the action of $U_2$ not already covered by $U_1$, and continue until we either reach the action on the full space $\bar{E} \tilde{E} \breve{E}$ or, if basis vectors are missing, extend the unitary by acting as the identity. Thus, there exists an isometry $U$ such that $\breve{\Sigma} = U \Sigma U^{\dagger}$. 

As the last step of this proof, we bound $\hUpQKD$ and find 
{\allowdisplaybreaks
\begin{align}
    &\hUpQKD \\
    &\geq \begin{aligned}[t]
        \inf_{\mbf{q} \in \Sacc} \inf_{\nu \in \Sigma''} &\Bigg( \frac{\alpha}{\alpha-1} D\left(\mbf{q} \Vert \bsym{\nu}_{\CP}\right) \\ 
        &+ \sum_{\cP} q(\cP) \renyiSandUp_{\alpha}(S|\bar{E} \tilde{E} I )_{\nu_{|\cP}} \Bigg)
    \end{aligned}  \\
    &\geq \begin{aligned}[t]
        \inf_{\mbf{q} \in \Sacc} \inf_{\nu \in \breve{\Sigma}} \Bigg( &\frac{\alpha}{\alpha-1} D\left(\mbf{q} \Vert \bsym{\nu}_{\CP}\right) \\ 
        &+ \sum_{\cP} q(\cP) \renyiSandUp_{\alpha}(S|\bar{E} \tilde{E} I )_{\nu_{|\cP}} \Bigg)
    \end{aligned}  \\
    &\geq \begin{aligned}[t]
        \inf_{\mbf{q} \in \Sacc} \inf_{\nu \in \breve{\Sigma}} \Bigg( &\frac{\alpha}{\alpha-1} D\left(\mbf{q} \Vert \bsym{\nu}_{\CP}\right) \\ 
        &+ \sum_{\cP} q(\cP) \renyiSandUp_{\alpha}(S|\bar{E} \tilde{E} \breve{E} I )_{\nu_{|\cP}} \Bigg)
    \end{aligned}  \\
    &\geq \begin{aligned}[t] 
        \inf_{\mbf{q} \in \Sacc} \inf_{\nu \in \Sigma} &\Bigg( \frac{\alpha}{\alpha-1} D\left(\mbf{q} \Vert \bsym{\nu}_{\CP}\right) \\ 
        &+ \sum_{\cP} q(\cP) \renyiSandUp_{\alpha}(S|\hat{E} I )_{\nu_{|\cP}} \Bigg),
    \end{aligned}
\end{align}
}
where the first inequality follows because $\Sigma' \subseteq \Sigma''$, and the second inequality holds since $\breve{\Sigma}$ contains purifications of states contained in $\Sigma''$. The third inequality follows by strong subadditivity of \Renyi entropies \cref{prop:Inf-strong-subadd} (or see e.g. \cite[Corollary~5.4]{Tomamichel_2016} for finite dimensions). The fourth inequality is due to $\breve{\Sigma} = U \Sigma U^{\dagger}$ and the data processing inequality, see \cref{app:inf-dim-DPI}. Finally, the theorem statement follows by renaming $\hat{E}$ to $E$.
\end{proof}

\subsection{Variable Length}\label{sec:Variable_length_security}
Next, we proceed with the variable-length security proof. Our goal is to prove the following theorem.

\begin{theorem}[Variable-Length Security for Infinite Dimensional Side Information]\label{thrm:variable_length_security}
Let $\{\mathcal{N}_j\}_{j=1}^n$ be the channels defined in \cref{def:MEAT_channel_with_Eve} that implement both the QKD channel $\EATchannQKD_j$ and Eve's corresponding attack channel $\mathcal{E}_j$ for each $j=1,\dots,n$. Let $\alpha \in (1,2)$, $\ePA, \eEV \in (0,1]$ and $\psi_{A_1^n(A')_1^n}$ be the states sent by Alice according to the source replacement scheme. Furthermore, let $\rho_{S_1^n \CP_1^n E_n I_1^n}$ be generated from channels $\{\mathcal{N}_j\}$ according to:
\begin{enumerate}[label=\roman*)]
    \item $\pTr{(A')_1^n}{\psi_{A_1^n(A')_1^n}} = \bigotimes_{j=1}^n \sigma_{A_j}^{(j)}$.
    \item $\omega_{A_1^n(A')_1^n E_0}$ is a purification of $\psi_{A_1^n(A')_1^n}$.
    \item $\omega_{A_1^n(A')_1^n E_0 I_0} = \omega_{A_1^n(A')_1^n E_0} \otimes \pure{\perp}_{I_0}$,
    \item $\rho_{S_1^n \hat{C}_1^n E_n I_1^n} = \mathcal{N}_n \circ \dots \circ \mathcal{N}_1 [\omega_{A_1^n(A')_1^n E_0 I_0}]$.
\end{enumerate}
Most importantly, all registers besides $S_1^n$ and $\CP_1^n$ can be infinite dimensional. 

Then, the protocol description is $\ePA$-secret and $\eEV$-secure, hence $(\ePA+\eEV)$-secret, when the length $l$ of the final key satisfies:
\begin{equation}\label{eq:l_var_ffull}
    l_{\text{var}} \le \begin{aligned}[t]
        &\hat{f}_{\text{full}}(\cP_1^n) - \leak(\cP_1^n) - \ceil{\log \frac{1}{\eEV}} \\
        &- \frac{\alpha}{\alpha-1} \log \frac{1}{\ePA} + 1
    \end{aligned} 
\end{equation}
where
\begin{align}
    \hat{f}_{\text{full}}(\cP_1^n) &\defvar \sum_{j=1}^n f_{|\cP_1^{j-1}}(\cP_j) + \kapup_j, \\
    \kapup_j &\defvar \inf_{\nu \in \Sigma_j} \frenyiSandUp_{\alpha}(S_j | E I_j )_{\nu}
\end{align}
and the sets $\Sigma_j$ is defined as
\begin{equation}
    \Sigma_j \defvar
    \begin{aligned}[t]
        \Big\{ &\left(\EATchannQKD_j \otimes \id_E\right)\left[\tau_{A_j B_j E}\right] |\\ &\tau \in \dop{=}(A_j B_j E) \text{ pure}, \tau_{A_j} = \sigma_{A_j}^{(j)} \Big\}.
    \end{aligned}
\end{equation}
Finally, $\leak(\cP_1^n)$ is the length of the error correction data used given the announcements $\cP_1^n$.

In particular, the same security statement holds if we define $\hat{f}_{\text{prot}}(\cP_1^n) \defvar \sum_{j=1}^n f_{|\cP_1^{j-1}}(\cP_j) + \kapupbnd_j$ where $\kapupbnd_j$ is any lower bound on $\kapup_j$ and choose $l_{\text{var}}$ as
\begin{equation}\label{eq:l_var_fprot}
    l_{\text{var}} \le \begin{aligned}[t]
        &\hat{f}_{\text{prot}}(\cP_1^n) - \leak(\cP_1^n) - \ceil{\log \frac{1}{\eEV}} \\
        &- \frac{\alpha}{\alpha-1} \log \frac{1}{\ePA} + 1.
    \end{aligned} 
\end{equation}
\end{theorem}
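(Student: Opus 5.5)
Correctness is handled exactly as in the proof of \cref{thrm:Fixed_length_security}: the error-verification hash of length $\ceil{\log(1/\eEV)}$ gives $\Pr[K_A\neq K_B\wedge\OmegaAcc]\le\eEV$. This part does not depend on the key length, so it carries over verbatim.

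For secrecy, I will start from the variable-length definition \cref{eq:secrecy}. The first step is to split the sum over $m$ into a sum over the announcement strings $\cP_1^n$. The chosen length $l_{\text{var}}(\cP_1^n)$ is a deterministic function of $\cP_1^n$, so each event $\OmegaLen{m}$ is a union of events $\{\CP_1^n=\cP_1^n\}\wedge\OmegaEV$. By the triangle inequality, the secrecy term is then at most $\sum_{\cP_1^n}\Pr[\cP_1^n\wedge\OmegaEV]\,\frac12\|\cdots\|_1$, with each summand conditioned on that single string. I will apply the \Renyi leftover-hashing lemma \cite{Dupuis_2022} to each summand, using the $\sqrt2$-randomizing improvement as in the fixed-length proof. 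This bounds each term by $\Pr[\cP_1^n\wedge\OmegaEV]\,2^{\frac{1-\alpha}{\alpha}(\renyiSandUp_\alpha(S_1^n|I_1^nEL)_{\rho_{|\cP_1^n\wedge\OmegaEV}}-l_{\text{var}}+1)}$.

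The next steps remove the error-correction and error-verification information. I will remove the error-correction leakage with the classical chain rule (\cref{prop:cl-chain-rule}), at a cost of $\leak(\cP_1^n)+\ceil{\log(1/\eEV)}$. I will remove the conditioning on $\OmegaEV$ with \cite[Lemma B.5]{Dupuis_2020}, at a cost of $\frac{\alpha}{\alpha-1}\log(1/\Pr[\OmegaEV|\cP_1^n])$. This leaves $\sum_{\cP_1^n}\Pr[\cP_1^n]^{\,\cdot}\,2^{\frac{1-\alpha}{\alpha}(\renyiSandUp_\alpha(S_1^n|I_1^nE)_{\rho_{|\cP_1^n}}-\hat f_{\text{full}}(\cP_1^n)+\frac{\alpha}{\alpha-1}\log\frac1{\ePA})}$.

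The key step is to recognize this weighted sum as the $f$-weighted \Renyi entropy. With the sign conventions of \cite{Arqand_2025}, $2^{-\frac{\alpha-1}{\alpha}\frac{\alpha-1}{\alpha}\cdot}$-type sums of the form $\sum_{\cP}\Pr[\cP]^{1/\alpha}\cdots$ combine into $2^{-\frac{\alpha-1}{\alpha}H_\alpha^{\uparrow,\hat f_{\text{full}}}(S_1^n|\CP_1^nI_1^nE)_\rho}$. I will apply H\"older's inequality, or a direct expansion of the definition, to pass from the per-string \Renyi entropies to this single quantity. The $\Pr[\cP_1^n]$ weights and the factor $\Pr[\OmegaEV|\cP_1^n]^{1/\alpha}\le1$ are absorbed in the same step. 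The result is a bound $\ePA\cdot 2^{-\frac{\alpha-1}{\alpha}H_\alpha^{\uparrow,\hat f_{\text{full}}}}$. Then \cref{thrm:fweighted_meat_simp}, \cref{eq:chainQESnorm_simp}, applied to the channels $\mathcal N_j$ gives $H_\alpha^{\uparrow,\hat f_{\text{full}}}\ge0$, so the secrecy term is at most $\ePA$.

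The constants produced this way are $\kappa_{\cP_1^{j-1}}$, defined over the convex ranges of $\mathcal N_j\otimes\id$ with side information $E_jI_0^j\widetilde E$. I will relate them to $\kapup_j$ exactly as in steps one to three of the proof of \cref{thrm:Fixed_length_security}: enlarge the range $\Sigma'_j\subseteq\Sigma''_j$, purify, apply strong subadditivity (\cref{prop:Inf-strong-subadd}), and use an isometry via Uhlmann together with data processing. Since $\kappa_{\cP_1^{j-1}}\ge\kapup_j$, the normalization can only improve. I will check monotonicity of $H^{\uparrow,f}$ under shifting $f$ by constants, which should be immediate from the definition. The final statement, with any lower bound $\kapupbnd_j$, then follows because lowering $\hat f$ pointwise lowers $l_{\text{var}}$ and only weakens the bound.

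The main obstacle I expect is the aggregation step: verifying that the per-string bounds combine into exactly the $f$-weighted entropy, with the correct exponents, while conditioning on $\OmegaEV$ inside each string and keeping everything valid for infinite-dimensional $E$. I would try first to follow the finite-dimensional argument of \cite{Arqand_2025} (or of \cite{Tupkary2026}) line by line, checking that only the trace and the Schatten-norm identities are used there.
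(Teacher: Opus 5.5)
Your plan is correct and follows the paper's route. The paper invokes \cite[Thm.~12]{Kamin_2025a} for the reduction of variable-length secrecy to $\widetilde{H}^{\uparrow,\hat f_{\mathrm{full}}}_\alpha(S_1^n|\CP_1^n I_1^n E_n)_\rho\ge 0$, which you re-derive via per-string leftover hashing, the classical chain rule and Lemma~B.5 of \cite{Dupuis_2020}; it then applies \cref{thrm:fweighted_meat_simp} to the channels $\mathcal{N}_j$ and transfers the normalization constants to $\Sigma_j$ exactly as in the fixed-length proof.

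The aggregation step you flag is not an obstacle. After Lemma~B.5 the factor $\Pr[\OmegaEV|\cP_1^n]^{-1}$ cancels $\Pr[\cP_1^n\wedge\OmegaEV]$ exactly, so the weights are $\Pr[\cP_1^n]$ to the first power, with no $1/\alpha$ powers and no H\"older step. The sum is then literally $\ePA\,2^{\frac{1-\alpha}{\alpha}\widetilde{H}^{\uparrow,\hat f_{\mathrm{full}}}_\alpha}$ by the definition of the $f$-weighted entropy.
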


\begin{proof}
We first show how \cref{eq:l_var_ffull} follows from \cref{eq:l_var_fprot}. Let $C_1^n$ be arbitrary, then:
\begin{equation}
    \begin{aligned}
    \hat{f}_{\text{prot}}(\cP_1^n) &= \sum_{j=1}^n f_{|\cP_1^{j-1}}(\cP_j) + \kapupbnd_j \\
    &\leq \sum_{j=1}^n f_{|\cP_1^{j-1}}(\cP_j) + \kapup_j = \hat{f}_{\text{full}}(\cP_1^n)
\end{aligned}
\end{equation}
Thus, $l_{\text{var}}(\hat{f}_{\text{prot}}) \leq l_{\text{var}}(\hat{f}_{\text{full}})$ and the key length defined in terms of $\hat{f}_{\text{prot}}$ is secure since any variable key with length less than or equal to $l_{\text{var}}$ as in \cref{eq:l_var_ffull} is secure.

Now, we prove the remainder of the theorem with regards to the security claim with respect to $\hat{f}_{\text{full}}$. This claim follows essentially from \cite[Thm.~12]{Kamin_2025a} as long as $\widetilde{H}_{\alpha}^{\uparrow,\hat{f}_{\text{full}}}(S_1^n | C_1^n I_1^n E_n)_{\rho} \geq 0$ is maintained with infinite dimensional spaces $E_j$, which is exactly what we will show in the following.

By \cref{thrm:fweighted_meat_simp} which we stated in \cref{sec:InfDimMEAT}, it holds $\widetilde{H}_{\alpha}^{\uparrow,\hat{f}_{\text{full}}}(S_1^n | C_1^n I_1^n E_n)_{\rho} \geq 0$ if 
\begin{equation}
    \kapup_j = \inf_{\nu \in \Sigma'_j} \frenyiSandUp_{\alpha}(S_j | C_j I_0^j E_n)_{\nu},
\end{equation} 
where
\begin{equation}
    \Sigma'_j \defvar \begin{aligned}[t]
         \Big\{ &\left(\mathcal{N}_j \otimes \id_{\tilde{E}}\right)\left[\omega_{A_j A'_j E_{j-1} I_0^{j-1} \tilde{E}}\right] | \\
         &\omega \in \dop{=}(A_j A'_j E_{j-1} I_0^{j-1} \tilde{E})  \omega_{A_j} = \sigma_{A_j}^{(j)} \Big\}.
        \end{aligned}
\end{equation}
Therefore, as in the proof of \cref{thrm:Fixed_length_security}, the only claim left to show is the change in marginal-constrained ranges. Then, following the same steps as in \cref{thrm:Fixed_length_security}, now applied to the simpler sets $\Sigma_j$ instead of $\Sigma$, the theorem follows.
\end{proof}

\subsection{Interim Résumé}
We have successfully proven security for both the fixed- and the variable-length variant of the abstract DM CV-QKD protocol. For the fixed-length security statement, it remains to calculate  $\hUpQKD$ as given in Eq. (\ref{eq:fixedLengthBound}). Most notably, this requires optimizing over $\renyiSandUp_{\alpha}(S|IE)$, involving Eve's infinite-dimensional register as side-information and the optimization variable $q$, which, for heterodyne measurements, is a continuous quantity. For the variable-length security statement, similar problems arise. In addition, it remains to find a suitable tradeoff function $f$.

While our method eventually overcomes all the aforementioned obstacles, we begin by developing a dimension-reduction argument for Rényi entropies, similar to that in Ref. \cite{Upadhyaya_2021} for von Neumann entropies, to help us rigorously address infinite-dimensional registers in the numerical optimisation problem.

%--------------------------------------------------------
%                   Dimension Reduction
%--------------------------------------------------------

\section{Dimension Reduction}\label{sec:DimReduction}
\newcommand{\Pibar}{\widebar{\Pi}}

In this section, we derive our dimension-reduction argument, which enables us to rigorously account for infinite-dimensional quantum registers. Although the dimension reduction argument and its proof represents a substantial contribution of this paper, to improve readability we will only state the main dimension-reduction theorems in the main text and defer full proofs to \cref{App:Dim_Red_Proof}.

We begin with introductory remarks and definitions required for the full theorem statement, followed by the theorem statement for the variable-length and the fixed-length problem.

We first consider the fixed-length setting, for which the general formulation simplifies under a natural condition on the marginal convex range. Specifically, we assume that the marginal convex range, see \cite[Def. 4.2]{Arqand_2025}, collapses to a single set. This is the case if all channels $\EATchannQKD_j$ and the states $\sigma_{A_{j}}^{(j)}$ are isomorphic to a single channel and state. If this is not the case, then all feasible states are a probabilistic mixture of states produced by each channel $\EATchann_j$ and the standard conversions to convex optimization problems from Refs. \cite{Kamin_2025, Chung_2025, Navarro_2026} do not apply directly. 

\begin{definition}[Tagged Channels and Statistic-Generating Maps]\label{def:Tagged_channels_maps}
    Let \(\EATchannQKD_j\) be the QKD channel defined in \cref{def:QKD channel}, constructed from Alice's and Bob's POVMs \(\{M_k^A\}_{k=1}^{d_A}\) and \(\{M_k^B\}_{k=1}^{d_B}\), respectively.

    Let $\Pi$ be a projection on Bob's system $B$, and define
    \begin{equation}
        B_{N_c} \defvar \supp(\Pi), \qquad \Pibar \defvar I_B-\Pi,
    \end{equation}
    so that $\Pi$ projects onto $B_{N_c}$.
    
    For each of Bob's POVM elements, define the corresponding tagged POVM element on \(\widetilde{B} \defvar B_{N_c}\oplus\Span\{\ket{\star}\}\) by
    \begin{equation}\label{eq:tagged-POVM-elements}
        \widetilde{M}_k^B \defvar \left.\Pi M_k^B\Pi\right|_{B_{N_c}}\oplus 0_{\star},
    \end{equation}
    where $\left.\Pi M_k^B\Pi\right|_{B_{N_c}}$ indicates the restriction of the projected POVM element to the space $B_{N_c}$. In addition, define the POVM element
    \begin{equation}
        \widetilde{M}_\star^B \defvar 0_{B_{N_c}}\oplus\pure{\star},
    \end{equation}
    which completes the POVM elements $\{\widetilde{M}_k^B \}_{k}$ to a POVM on \( \widetilde{B} = B_{N_c}\oplus\Span\{\ket{\star}\}\), since
\begin{equation}
        \begin{aligned}
        \sum_{k=1}^{d_B} \widetilde{M}_k^B + \widetilde{M}_{\star}^B
        &= \left.\Pi\left(\sum_{k=1}^{d_B}M_k^B\right)\Pi\right|_{B_{N_c}} \oplus \pure{\star} \\
        &= I_{B_{N_c}} \oplus \pure{\star} \\
        &= I_{B_{N_c} \oplus \Span\{\ket{\star}\} }.
    \end{aligned}
\end{equation}
    Next, we define the tagged map \(\widetilde{\GMap}\). For each $(\alpha,\beta) \in \mathcal{K}$, let
    \begin{equation}
        \widetilde{K}_{\alpha,\beta} \defvar \begin{aligned}[t] \sum_{\substack{x \in \mathcal{X}_{\alpha}^{\gen} \\  y \in \mathcal{Y}_\beta}} \sqrt{M_{\alpha,x}^{A|\gen}} &\otimes \sqrt{\widetilde{M}_{\beta,y}^B} \otimes \ket{x}_X \otimes \ket{y}_Y  \\ 
        &\otimes \ket{g(\alpha,\beta,y)}_{S_Q} \ket{\alpha,\beta}_{\widetilde{I}}.
        \end{aligned}
    \end{equation}
    The corresponding discard Kraus operator, which completes $\widetilde{\GMap}$ to a CPTP map, is defined by
    \begin{equation}
        \widetilde K_\perp \defvar \sqrt{I-\sum_{(\alpha,\beta)\in\mathcal{K}} \widetilde{K}_{\alpha,\beta}^\dagger \widetilde{K}_{\alpha,\beta}} \otimes \ket{\perp}_{XY S_Q\widetilde{I}},
    \end{equation}
    where the identity acts on \(A\bigl(B_{N_c}\oplus\Span\{\ket{\star}\}\bigr)\). Moreover, since the kept Kraus operators \(\widetilde K_{\alpha,\beta}\) vanish on the \(\star\)-subspace, the completion Kraus operator maps this subspace to the discard output with probability one.

    We define the tagged QKD channel \(\EATchannQKDTag_j\) as in \cref{def:QKD channel}, replacing \(\GMap\) with \(\widetilde{\GMap}\) and replacing Bob's POVM by
    \begin{equation}
        \{\widetilde M_k^B\}_{k=1}^{d_B}\cup\{\widetilde M_\star^B\}.
    \end{equation}
    For completeness, the test-round isometry is extended using this tagged POVM, with the additional outcome ``$\star$'' mapped to the discard announcement ``$\perp$''.

    Finally, we define the tagged statistic-generating map by retaining only the original, non-discard test outcomes. For every \(\sigma\in\dop{\leq}(AB_{N_c})\), let
    \begin{equation}
        \probstTag_j[\sigma] \defvar \sum_{\cP\in\Ct}\sum_{\substack{(\alpha,x,\beta,y) \\ \in \phi^{-1}(\cP)}}\Tr{\left(M_{\alpha,x}^{A|\test}\otimes\Pi M_{\beta,y}^B\Pi\right)\sigma}\hat e_{\cP}.
    \end{equation}
    In particular, the tag outcome associated with \(\widetilde M_\star^B\) is not included in \(\probstTag_j\).
\end{definition}
\begin{remark}[Identification of projected operators]
    Throughout this section, we identify every operator on \(A\otimes B\) supported on \(A\otimes B_{N_c}\), where \(B_{N_c}\defvar\supp(\Pi)\), with its restriction to \(A\otimes B_{N_c}\). We also use the same symbol for its canonical extension to \(A\otimes\bigl(B_{N_c}\oplus \Span\{\ket{\star}\}\bigr)\). In particular, for such an operator \(X\), we  write \(X_{AB_{N_c}}\) and identify \(\pTr{B_{N_c}}{X}=\pTr{B}{X}\).
\end{remark}
\begin{remark}[Action of the tagged QKD channel]\label{rem:tagged-channel-action}
    By construction, \(\EATchannQKDTag_j\) removes coherences between the
    \(B_{N_c}\)- and \(\star\)-subspaces and maps the \(\star\)-component
    deterministically to the discard output \(\perp\). Moreover, on inputs
    supported on \(AB_{N_c}\), the tagged and original QKD channels have
    the same output after the measured systems are traced out. Indeed, for
    every \(X_{AB_{N_c}R}\) and every Bob POVM element \(M_k^B\),
    \begin{align*}
        &\pTr{B}{\left(I_A\otimes\sqrt{M_k^B}\otimes I_R\right)
        X\left(I_A\otimes\sqrt{M_k^B}\otimes I_R\right)} \notag\\
        &= \pTr{B}{X\left(I_A\otimes M_k^B\otimes I_R\right)}\\
        &= \pTr{B_{N_c}}{X\left(I_A\otimes\widetilde M_k^B\otimes I_R\right)} \\
        &=\pTr{B_{N_c}}{\left(I_A\otimes\sqrt{\widetilde M_k^B}\otimes I_R\right)
        X\left(I_A\otimes\sqrt{\widetilde M_k^B}\otimes I_R\right)}.
    \end{align*}
    The same identity remains valid after Alice's measurement and the subsequent classical postprocessing.
\end{remark}

\subsection{Variable-Length}\label{sec:DM_variable_length}
This subsection relies on several crucial special requirements that typically are satisfied in a QKD protocol. First, we require that Alice decides if each round is a test or key generation round in a manner that solely depends on her choice of signal state for that round. Second, we assume Alice's method of deciding between test and generation round results in a \emph{fixed} probability of testing, $\gamma \in (0,1)$. If both requirements are satisfied, one can apply the results in this section.

\begin{lemma}[Normalized gentle measurement lemma]\label{lem:Norm_gentle_measurement_lemma}
    Let \(\sigma \in \dop{=}(A)\) be a normalized state and \(Q\) an orthogonal projection such that \(p\defvar\Tr{Q\sigma}>0\). Then
    \begin{equation}
        \frac{1}{2}\left\|\sigma-\frac{Q\sigma Q}{p}\right\|_1\leq\sqrt{1-p}.
    \end{equation}
That is, the trace distance between the initial state and the re-normalized post-projective-measurement state is bounded above as a function of the probability of the measurement outcome.
\end{lemma}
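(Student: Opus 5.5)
The plan is to reduce to a pure state and compute the trace distance via fidelity. For a pure state \(\sigma=\pure{\psi}\) with \(\ket{\psi}\in A\), write \(\ket{\psi}=Q\ket{\psi}+\bar Q\ket{\psi}\), where \(\bar Q\defvar I-Q\). Then \(\Tr{Q\sigma}=\|Q\ket{\psi}\|^2=p\), and the normalized post-measurement state is the pure state \(\ket{\phi}\defvar Q\ket{\psi}/\sqrt{p}\).

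For two pure states the trace distance equals \(\sqrt{1-|\braket{\psi}{\phi}|^2}\). This identity holds verbatim in infinite-dimensional separable Hilbert spaces, since the difference \(\pure{\psi}-\pure{\phi}\) is a rank-two operator supported on \(\Span\{\ket{\psi},\ket{\phi}\}\). Computing the overlap gives
\begin{equation}
\braket{\psi}{\phi}=\frac{\bra{\psi}Q\ket{\psi}}{\sqrt{p}}=\sqrt{p},
\end{equation}
so the distance is exactly \(\sqrt{1-p}\) in the pure case.

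For a general (possibly infinite-dimensional) mixed state \(\sigma\), I will purify it to \(\ket{\psi}_{AR}\) on \(A\otimes R\) with \(R\) separable. Applying the pure-state argument to the projection \(Q\otimes I_R\) gives success probability \(\Tr{(Q\otimes I_R)\pure{\psi}}=p\) and post-measurement state \(\ket{\phi}_{AR}\) with
\begin{equation}
\frac{1}{2}\left\|\pure{\psi}-\pure{\phi}\right\|_1=\sqrt{1-p}.
\end{equation}
The partial trace over \(R\) maps \(\pure{\psi}\) to \(\sigma\) and maps \(\pure{\phi}\) to \(Q\sigma Q/p\), because \((Q\otimes I)\) commutes with \(\pTr{R}{\cdot}\). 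Monotonicity of the trace norm under partial trace, which holds for trace-class operators on separable spaces, then yields the claim.

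An alternative that avoids purification is to bound the Fuchs–van de Graaf inequality directly, using the fidelity \(F(\sigma,Q\sigma Q/p)\geq p\). I would prefer the purification route because it requires only elementary facts.

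The only delicate step is making sure the pure-state trace-distance formula and the contractivity of the partial trace are valid in infinite dimensions. Both are standard: the first is a two-dimensional computation, and the second follows from duality with bounded operators. Beyond that I expect no real obstacle; the main care needed is in the normalization, namely that the overlap equals \(\sqrt{p}\) exactly.
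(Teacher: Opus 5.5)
Your proof is correct. It starts from the same purification and overlap \(\sqrt{p}\) as the paper's proof, but the step from overlap to trace distance uses different results.

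The paper purifies \(\sigma\) and notes that \((Q\otimes I_R)\ket{\psi}/\sqrt{p}\) purifies \(Q\sigma Q/p\) with overlap \(\sqrt{p}\). It then goes through the fidelity: Uhlmann's theorem (cited for infinite dimensions) gives \(F(\sigma,Q\sigma Q/p)\geq\sqrt{p}\), and the Fuchs--van de Graaf inequality turns this into the trace-distance bound.

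You skip fidelity altogether. You compute the trace distance of the two purifications exactly by a rank-two calculation, then push it down with contractivity of the trace norm under the partial trace. That contractivity follows from the same duality argument the paper uses in \cref{prop:PTNI-1-norm-contr}.

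Your argument is essentially the standard proof of the Fuchs--van de Graaf upper bound, written out for these particular purifications. Because your purifications already have the required overlap, you need neither the nontrivial direction of Uhlmann's theorem (existence of fidelity-achieving purifications) nor an infinite-dimensional Fuchs--van de Graaf inequality, which the paper uses without separate justification. Your route is therefore more self-contained, and it also shows the bound is attained when \(\sigma\) is pure. The paper's route is shorter given those named results and yields the fidelity bound as a by-product.

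Two small points. What you actually use is \(\pTr{R}{(Q\otimes I_R)X(Q\otimes I_R)}=Q\,\pTr{R}{X}\,Q\), not literal commutation with the partial trace. In your side remark, \(F\geq p\) is in the squared-fidelity convention; the paper's root fidelity gives \(F\geq\sqrt{p}\).
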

\begin{proof}
    %This uses square root fidelity!
    Let \(\ket{\psi}_{AR} \in \dop{=}(AR)\) be a purification of \(\sigma\). Then
    \begin{equation}
        \ket{\psi_Q}_{AR}\defvar\frac{(Q\otimes I_R)\ket{\psi}_{AR}}{\sqrt{p}}
    \end{equation}
    is a purification of \(Q\sigma Q/p\), and
    \begin{equation}
        \abs{\braket{\psi}{\psi_Q}}=\frac{\bra{\psi}(Q\otimes I_R)\ket{\psi}}{\sqrt{p}}=\sqrt{p}.
    \end{equation}
    Hence, Uhlmann's theorem (see \cite{Hou_2012} for an elementary proof of Uhlmann's theorem in infinite dimensions) gives
    \begin{equation}
        F\left(\sigma,\frac{Q\sigma Q}{p}\right)\geq\sqrt{p}.
    \end{equation}
    The claim now follows from the Fuchs--van de Graaf inequality:
    \begin{equation}
        \frac{1}{2}\left\|\sigma-\frac{Q\sigma Q}{p}\right\|_1\leq\sqrt{1-F\left(\sigma,\frac{Q\sigma Q}{p}\right)^2}\leq\sqrt{1-p}.
    \end{equation}
\end{proof}

Making use of this lemma we prove the first step towards the dimension reduction statement, namely a bound on the entropy of the infinite dimensional state in terms of the entropy of the tagged state and channel.

\begin{lemma}[Entropy bound for the tagged QKD channel]\label{lem:tagged_QKD_entropy_bound}
    Fix \(j\in[n]\) and \(\alpha\in(1,2)\). Let \(\hat{\rho}_{AB|\gen}\in\dop{=}(AB)\), let \(\hat{\rho}_{ABE|\gen}\) be a purification of \(\hat{\rho}_{AB|\gen}\), and define
    \begin{align}
        \rho_{AB_{N_c}|\gen}&\defvar\left(I_A\otimes\Pi\right)\hat{\rho}_{AB|\gen}\left(I_A\otimes\Pi\right),\\
        W_{|\gen}& \defvar  1-\Tr{\rho_{AB_{N_c}|\gen}},\\
        \widetilde{\rho}_{A\widetilde{B}|\gen} &\defvar \begin{aligned}[t]
        &\rho_{AB_{N_c}|\gen} + \Bigl(\pTr{B}{\hat{\rho}_{AB|\gen}} \\
        &-\pTr{B_{N_c}}{\rho_{AB_{N_c}|\gen}} \Bigr) \otimes \pure{\star}_{\wt{B}}.
        \end{aligned}
    \end{align}
    Let \(\widetilde{\rho}_{A\widetilde{B}E\widetilde E|\gen}\) be a purification of \(\widetilde{\rho}_{A\widetilde{B}|\gen}\), and define
    \begin{align}
        \omega_{|\gen}&\defvar\left(\EATchannQKD_j\otimes\id_E\right)\left[\hat{\rho}_{ABE|\gen}\right],\\
        \widetilde{\omega}_{|\gen}&\defvar\left(\EATchannQKDTag_j\otimes\id_{E\widetilde E}\right)\left[\widetilde{\rho}_{A\widetilde{B}E\widetilde{E}|\gen}\right].
    \end{align}
    Then
    \begin{equation}
        \renyiSandUp_{\alpha}(S|IE)_{\omega_{|\gen}} \geq \renyiSandUp_{\alpha}(S|IE\widetilde E)_{\widetilde{\omega}_{|\gen}} - \Delta_{\alpha}\!\left(\sqrt{W_{|\gen}}\right).
    \end{equation}
\end{lemma}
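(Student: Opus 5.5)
The plan is a continuity argument in three steps. First, compare the untagged and tagged channel outputs. Second, apply a uniform continuity bound for the sandwiched \Renyi entropy, which is where the correction $\Delta_\alpha$ comes from. Third, use data processing to move between the different purifying systems.

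\emph{Step 1: an intermediate state.} Let $\hat{\rho}_{ABE|\gen}$ be the given purification. Apply the channel $\mathcal{T}$ on $B$ that maps $B$ into $\widetilde{B}=B_{N_c}\oplus\Span\{\ket{\star}\}$:
\begin{equation*}
X\mapsto \Pi X\Pi + \Tr{\Pibar X}\pure{\star}.
\end{equation*}
This is trace preserving and acts only on $B$. Set $\rho' \defvar (\id_{AE}\otimes\mathcal{T})[\hat{\rho}_{ABE|\gen}]$. Its $A\widetilde{B}$ marginal is exactly $\widetilde{\rho}_{A\widetilde{B}|\gen}$, because $\pTr{B}{\Pibar\hat{\rho}}=\pTr{B}{\hat{\rho}}-\pTr{B_{N_c}}{\rho_{AB_{N_c}|\gen}}$.

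Since $\widetilde{\rho}_{A\widetilde{B}E\widetilde{E}|\gen}$ purifies the same marginal, the infinite-dimensional Uhlmann theorem \cite{Hou_2012} gives a channel on the purifying side mapping $E\widetilde{E}$ to $E$ that sends $\widetilde{\rho}_{A\widetilde{B}E\widetilde{E}|\gen}$ to $\rho'_{A\widetilde{B}E}$. Concretely, a Stinespring isometry from a minimal purification followed by a partial trace does this. This channel commutes with $\EATchannQKDTag_j$. By data processing on the conditioning system (\cref{app:inf-dim-DPI}), it therefore suffices to show
\begin{equation*}
\renyiSandUp_\alpha(S|IE)_{\omega_{|\gen}}\geq \renyiSandUp_\alpha(S|IE)_{\omega'}-\Delta_\alpha\!\left(\sqrt{W_{|\gen}}\right),
\end{equation*}
where $\omega'\defvar(\EATchannQKDTag_j\otimes\id_E)[\rho']$.

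\emph{Step 2: a trace-distance estimate.} Compare $\omega_{|\gen}$ and $\omega'$ on $SIE$. Write $\hat{\rho}$ for the pure state $\hat{\rho}_{ABE|\gen}$. Apply \cref{lem:Norm_gentle_measurement_lemma} with $Q=I_A\otimes\Pi\otimes I_E$ and $p=1-W_{|\gen}$. This gives
\begin{equation*}
\tfrac12\bigl\|\hat{\rho}-Q\hat{\rho}Q/p\bigr\|_1\leq\sqrt{W_{|\gen}}.
\end{equation*}
By \cref{rem:tagged-channel-action}, on inputs supported on $AB_{N_c}$ the tagged and original channels give the same output. Hence $\omega'$ is a mixture
\begin{equation*}
\omega' = p\,\omega_Q + (1-p)\,\omega_\star,
\end{equation*}
where $\omega_Q\defvar(\EATchannQKD_j\otimes\id)[Q\hat{\rho}Q/p]$ and $\omega_\star$ is the output of the discard branch. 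The same component $\omega_Q$ is $\sqrt{W_{|\gen}}$-close to $\omega_{|\gen}$ by monotonicity of the trace distance. Combining the two bounds with the triangle inequality gives a distance of order $\sqrt{W_{|\gen}}$. To obtain exactly $\sqrt{W_{|\gen}}$, I would build both states from a single purification. Alternatively, I would choose $\Delta_\alpha$ so that it absorbs the extra $(1-p)$ term; I expect this is how the paper defines $\Delta_\alpha$.

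\emph{Step 3: uniform continuity.} The final step applies a uniform continuity bound for $\renyiSandUp_\alpha(S|IE)$ in which $S$ is finite-dimensional and the conditioning system $IE$ is infinite-dimensional and separable. This is the main obstacle. I would try first to lift a finite-dimensional bound, for example the Marwah--Dupuis continuity of sandwiched \Renyi conditional entropies, which depends only on $|S|$. The lift would go by projecting $E$ onto increasing finite-rank projections and using lower semicontinuity together with convergence of $\renyiSandUp_\alpha$ under such truncations. This is the same mechanism used for the iMEAT in \cref{App:Inf_MEAT_Proof}. The resulting $\Delta_\alpha(\cdot)$ depends only on $\alpha$, $|S|$ and the trace distance.

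Finally, if the continuity bound requires states of matching normalization, I would handle the $\star$ branch, which deterministically outputs $S=\perp$, by noting that mixing in a classical flag only affects the distance term.
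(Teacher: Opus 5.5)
Your Step 1 reduction is sound: the $A\widetilde B$ marginal of $\rho'$ is $\widetilde\rho_{A\widetilde B|\gen}$, and data processing on the conditioning system goes in the right direction. Step 2, however, has a genuine gap. The trace distance you need is not $\sqrt{W}$. You also cannot choose $\Delta_\alpha$ to absorb the difference, because $\Delta_\alpha$ is fixed by the statement: it is the increasing continuity function of \cref{thrm:Dim_red_kappa_special}, evaluated at exactly $\sqrt{W_{|\gen}}$.

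The triangle inequality only gives $\tfrac12\lVert\omega_{|\gen}-\omega'\rVert_1\le\sqrt{W}+W$, and this is not an artifact of the estimate. Suppose a generation-round POVM element does not commute with $\Pi$; nothing in the lemma excludes this. Take an effective qubit with $\ket{0}\in B_{N_c}$ and $\ket{1}\perp B_{N_c}$, let $\hat\rho_B=\pure{\Psi}$ with $\ket{\Psi}=\sqrt{1-W}\ket{0}+\sqrt{W}\ket{1}$, and let $S$ be the outcome of a measurement in the $\ket{\pm}$ basis. Then the $S$-marginals alone satisfy $\tfrac12\lVert\omega_S-\omega'_S\rVert_1=\sqrt{W(1-W)}+W/2>\sqrt{W}$ for small $W$. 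Since $\omega'_S$ depends only on $\widetilde\rho_{A\widetilde B|\gen}$, no other extension and no ``single purification'' can lower this. Hence any route that compares $\omega_{|\gen}$ to a tagged output in trace distance and then applies continuity gives at best $\Delta_\alpha(\sqrt{W}+W)$, which is a strictly weaker statement.

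The missing idea is to remove the $\star$ branch entropically rather than through the distance. Write $\hat\rho$ for $\hat\rho_{ABE|\gen}$ and $W$ for $W_{|\gen}$, and replace your $\rho'$ by the flagged extension
\[
\rho'_{A\widetilde BEF}=Q\hat\rho Q\otimes\pure{0}_F+\pTr{B}{(I_A\otimes(I_B-\Pi)\otimes I_E)\hat\rho}\otimes\pure{\star}\otimes\pure{1}_F .
\]
This state is reachable from $\widetilde\rho_{A\widetilde BE\widetilde E|\gen}$ by a channel on $E\widetilde E$, exactly as in your Step 1. Its tagged output is
\[
(1-W)\,\omega_Q\otimes\pure{0}_F+W\,(\pure{\perp}_S\otimes\nu)\otimes\pure{1}_F,
\qquad \omega_Q=(\EATchannQKD_j\otimes\id_E)[Q\hat\rho Q/(1-W)].
\]
Three facts now combine: data processing, the classical-conditioning formula for $\renyiSandUp_\alpha$, and the observation that the flag-$1$ branch has $\renyiSandUp_\alpha(S|IE)=0$ while $\renyiSandUp_\alpha\geq 0$ for classical $S$. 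Together they give
\[
\renyiSandUp_\alpha(S|IE\widetilde E)_{\widetilde\omega_{|\gen}}\le\frac{\alpha}{1-\alpha}\log\Bigl((1-W)\,2^{\frac{1-\alpha}{\alpha}\renyiSandUp_\alpha(S|IE)_{\omega_Q}}+W\Bigr)\le\renyiSandUp_\alpha(S|IE)_{\omega_Q}.
\]
Continuity is now needed only between $\omega_{|\gen}$ and $\omega_Q$. These two states are $\sqrt{W}$-close by \cref{lem:Norm_gentle_measurement_lemma} and data processing, which yields exactly $\Delta_\alpha(\sqrt{W_{|\gen}})$; the cases $W\in\{0,1\}$ are handled separately.

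This is essentially the paper's argument. The paper builds the flag into a specific purification and then uses isometric invariance and strong subadditivity, where the version above uses a channel to an extension. Your Step 3 lift of the continuity bound to infinite-dimensional $E$ matches \cref{thrm:Continuity_Renyi}.
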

\begin{proof}
    See \cref{App:Dim_Red_Proof}.
\end{proof}

With the previous lemma at hand we can state the final dimension reduction proof.

\begin{theorem}\label{thrm:Dim_red_kappa_special}
    Fix \(1 \leq j \leq n \), \(\alpha\in(1,2)\) and assume the same conditions as in \cref{thrm:variable_length_security}. Throughout this theorem, \(\hat{\rho}\) denotes a normalized state on the possibly infinite-dimensional system \(AB\), whereas \(\rho\) denotes a projected, subnormalized state on \(AB_{N_c}\).

    Let \(\Pi\) be a finite-rank projection on Bob's system \(B\), with support \(B_{N_c}\), satisfying
    \begin{equation}
        \left[\Pi,M_k^B\right]=0
    \end{equation}
    for every \(k\) in Bob's test round POVMs. Write \(\Pibar=I_B-\Pi\) for the projection onto the complementary subspace. Suppose that Bob's test-round POVM contains an element \(V_1\) such that
    \begin{equation}
        \Pibar\leq rV_1
    \end{equation}
    for some \(r\geq1\). For each of Alice's outcomes \(\alpha,x\), let \(\nu_{\alpha,x,1}\) denote the component of \(\bsym{\nu}_{|\test}\) corresponding to the joint probability, conditioned on a test round, that Alice obtains \(\alpha,x\) and Bob obtains the outcome associated with \(V_1\), that is
    \begin{equation}
        \nu_{\alpha,x,1} = \Tr{\hat{\rho}_{AB|\test}\left(M_{\alpha,x}^{A|\test}\otimes V_1\right)},
    \end{equation}
    where \(\hat{\rho}_{AB|\test}\in\dop{=}(AB)\) is the corresponding normalized test-round state. Finally, the tagged QKD channel \(\EATchannQKDTag_j\) and the tagged statistic-generating map \(\probstTag_j\) are defined as in \cref{def:Tagged_channels_maps}.

    Then, the normalization constant \(\kapup_j\) from \cref{thrm:variable_length_security} obeys the lower bound
    \begin{widetext}
    \begin{align}
        \begin{aligned}
            \kapup_j\geq & \inf_{\substack{\rho\in\dop{\leq}(AB_{N_c}),\ \bsym{\nu}_{|\test}\in\mathbb{P}(\Ct),\\ W_{\test}^{\alpha,x},W_{\gen}\in[0,1]}}\begin{aligned}[t]
                \frac{\alpha}{1-\alpha}\log\Bigg(&\gamma\sum_{\cP\in\Ct}\bsym{\nu}_{|\test}(\cP)2^{\frac{\alpha-1}{\alpha}f(\cP)}\\
                &+(1-\gamma)2^{-\frac{\alpha-1}{\alpha}\left(\renyiSandUp_{\alpha}(S|IE\widetilde{E})_{\widetilde{\omega}_{|\gen}}-\Delta_{\alpha}\left(\sqrt{W_{\gen}}\right)-f(\gen)\right)}\Bigg)
            \end{aligned}\\
            &\quad\textrm{s.t. }\begin{aligned}[t]
                &\pTr{B_{N_c}}{\rho}\leq\sigma_A^{(j)},\\
                &\widetilde{\rho} = \rho + \left( \sigma_A^{(j)} - \pTr{B_{N_c}}{\rho} \right) \otimes \pure{\star}, \\
                &W_{\test}=\sum_{\alpha,x}p(\alpha,x|\test)W_{\test}^{\alpha,x},\\
                &W=(1-\gamma)W_{\gen}+\gamma W_{\test},\; \Tr{\rho}=1-W,\\
                &\Tr{\rho_{|\gen}}=1-W_{\gen},\; \Tr{\rho_{|\test}}=1-W_{\test},\\
                &\Tr{\rho_{|(\test\wedge\alpha,x)}}=1-W_{\test}^{\alpha,x}, \quad
                \forall \;\alpha \in \mathcal{C}^{A,\test}, \; x \in \mathcal{X}_{\alpha}^{\test} \text{ s.t. } p(\alpha,x|\test)>0\\
                &\widetilde{\omega}_{|\gen}=\EATchannQKDTag_j\left[\widetilde{\rho}_{|\gen}\right],\\
                &\probstTag_j[\rho_{|\test}]\leq\bsym{\nu}_{|\test}\leq\probstTag_j[\rho_{|\test}]+W_{\test}\bsym{\delta},\\
                &p(\alpha,x|\test)W_{\test}^{\alpha,x}\leq r\nu_{\alpha,x,1} \; \forall \;\alpha \in \mathcal{C}^{A,\test}, \; x \in \mathcal{X}_{\alpha}^{\test}.
            \end{aligned}
        \end{aligned}
    \end{align}
    \end{widetext}
    Here, \(\widetilde{\rho}_{|\test}\) and \(\widetilde{\rho}_{|\gen}\) are obtained by conditioning \(\widetilde{\rho}\) on Alice's test and generation event. Their respective projected components are denoted by
    \(\rho_{AB_{N_c}|\test},\rho_{AB_{N_c}|\gen}
    \in\dop{\leq}(AB_{N_c})\). Moreover, \(E\widetilde{E}\) purifies \(\widetilde{\rho}_{|\gen}\), and the tagged QKD channel $\EATchannQKDTag_j$ is understood to act trivially on these registers. The inequality \(\pTr{B_{N_c}}{\rho}\leq\sigma_A^{(j)}\) means that \(\sigma_A^{(j)}-\pTr{B_{N_c}}{\rho}\) is positive semidefinite, while the vector inequalities are understood componentwise. Finally, the entries of \(\bsym{\delta}\) are defined as
    \begin{equation}
        \delta_{\cP}\defvar\sum_{\substack{(\alpha,x,\beta,y)\in\phi^{-1}(\cP)}}\left\|M_{\alpha,x}^{A|\test}\otimes \Pibar M_{\beta,y}^B\Pibar\right\|_{\infty}, 
    \end{equation}
    and for \(\epsilon\in[0,1]\) the correction term $\Delta_{\alpha}$ is given by 
    \begin{widetext}
        \begin{equation}
        \Delta_{\alpha}(\epsilon)\defvar\min\left\{
        \begin{aligned} &\log(1+\epsilon)+\frac{1}{\alpha-1}\log\!\left(1+\epsilon d_S^{\alpha-1}-\frac{\epsilon^\alpha}{(1+\epsilon)^{\alpha-1}}\right),\\ &\frac{\alpha}{\alpha-1}\log\!\left(1+\epsilon d_S^{\frac{\alpha-1}{\alpha}}\right),\\ &\log(1+\epsilon)+\frac{\alpha}{\alpha-1}\log\!\left(1+\epsilon d_S^{\frac{\alpha-1}{\alpha}}-\frac{\epsilon^{2-\frac{1}{\alpha}}}{(1+\epsilon)^{\frac{\alpha-1}{\alpha}}}\right)
        \end{aligned}
        \right\},
    \end{equation}

    \end{widetext}
        where \(d_S=\dim(S)\).
\end{theorem}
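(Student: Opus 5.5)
Fix an arbitrary feasible point of the infimum defining \(\kapup_j\) in \cref{thrm:variable_length_security}: a pure state \(\tau_{A_jB_jE}\) with \(\tau_{A_j}=\sigma_A^{(j)}\), and \(\nu=(\EATchannQKD_j\otimes\id_E)[\tau]\in\Sigma_j\). The plan is to map \(\nu\) to a feasible point of the finite-dimensional problem on the right-hand side whose objective value is no larger than \(\frenyiSandUp_\alpha(S_j|EI_j)_\nu\). Taking the infimum over \(\nu\) then gives the bound.

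\textbf{Step 1: split the \(f\)-weighted entropy.} Since Alice's test/generation decision depends only on her signal choice and has fixed probability \(\gamma\), the output \(\nu\) is block diagonal in the flag \(\CP\) (test outcomes versus \(\gen\)). In test rounds \(S=\perp\) is deterministic, so those blocks contribute only through the test distribution \(\bsym{\nu}_{|\test}\). I will use the explicit formula for the \(f\)-weighted sandwiched entropy of a state classical on \(\CP\), as in \cite{Arqand_2025,Kamin_2025a}:
\[
\frenyiSandUp_\alpha=\frac{\alpha}{1-\alpha}\log\sum_{\cP}p(\cP)\,2^{\frac{\alpha-1}{\alpha}\left(f(\cP)-\renyiSandUp_\alpha(S|E I)_{\nu_{|\cP}}\right)}.
\]
This gives exactly
\[
\gamma\sum_{\cP}\bsym{\nu}_{|\test}(\cP)\,2^{\frac{\alpha-1}{\alpha}f(\cP)}+(1-\gamma)\,2^{-\frac{\alpha-1}{\alpha}\left(\renyiSandUp_\alpha(S|IE)_{\omega_{|\gen}}-f(\gen)\right)}.
\]
For \(\alpha>1\), \(t\mapsto\frac{\alpha}{1-\alpha}\log t\) is decreasing. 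So lowering \(\renyiSandUp_\alpha(S|IE)_{\omega_{|\gen}}\) lowers the objective, and it suffices to lower bound that generation-round entropy.

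\textbf{Step 2: build the finite-dimensional variables and apply the tagged bound.} Set \(\rho\defvar(I\otimes\Pi)\tau_{AB}(I\otimes\Pi)\). Then \(\pTr{B_{N_c}}{\rho}\le\sigma_A^{(j)}\), because \(\pTr{B}{(I\otimes\Pibar)\tau(I\otimes\Pibar)}\ge0\). Define \(\widetilde\rho\) as in the constraints, and define \(W,W_{\gen},W_{\test},W_{\test}^{\alpha,x}\) as the missing weights of the corresponding conditioned projected states. Because \(\Pi^{\test},\Pi^{\gen}\) act on \(A\) and \(\Pi\) acts on \(B\), the two commute, so conditioning and projection commute. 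This makes the trace and averaging identities for \(W\) hold. For the generation block, \cref{lem:tagged_QKD_entropy_bound} gives
\[
\renyiSandUp_\alpha(S|IE)_{\omega_{|\gen}}\ge\renyiSandUp_\alpha(S|IE\widetilde E)_{\widetilde\omega_{|\gen}}-\Delta_\alpha(\sqrt{W_{\gen}}),
\]
which is precisely the term appearing in the objective.

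\textbf{Step 3: verify the statistics constraints.} Since \([\Pi,M_k^B]=0\), each \(M_k^B=\Pi M_k^B\Pi+\Pibar M_k^B\Pibar\). Hence
\[
\bsym{\nu}_{|\test}(\cP)=\probstTag_j[\rho_{|\test}](\cP)+\sum\Tr{\bigl(M^{A|\test}_{\alpha,x}\otimes\Pibar M^B_{\beta,y}\Pibar\bigr)\hat\rho_{|\test}}.
\]
The remainder is nonnegative. It is also at most \(\delta_{\cP}\,\Tr{(I\otimes\Pibar)\hat\rho_{|\test}}=\delta_{\cP}W_{\test}\), using a Hölder-type bound \(\Tr{XY}\le\|X\|_\infty\Tr{Y}\) on the \(\Pibar\)-supported part. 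This yields the two-sided sandwich on \(\bsym{\nu}_{|\test}\). For the final constraint, \(p(\alpha,x|\test)W^{\alpha,x}_{\test}=\Tr{(M^{A|\test}_{\alpha,x}\otimes\Pibar)\hat\rho_{|\test}}\le r\,\Tr{(M^{A|\test}_{\alpha,x}\otimes V_1)\hat\rho_{|\test}}=r\nu_{\alpha,x,1}\), using \(\Pibar\le rV_1\).

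With all constraints feasible and the objective dominated, taking the infimum completes the argument.

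\textbf{Main obstacle.} I expect Step 1 to be the delicate part. It requires justifying the closed form of the \(f\)-weighted entropy with infinite-dimensional \(E\). It also requires checking that the purification of the conditioned generation state used in \cref{lem:tagged_QKD_entropy_bound} is compatible with the marginal of \(\tau\) restricted to the \(\gen\) block. Specifically, \(\omega_{|\gen}\) must equal the channel applied to a purification of \(\hat\rho_{AB|\gen}\), up to an isometry on \(E\) that leaves the entropy unchanged (\cref{app:inf-dim-DPI}).
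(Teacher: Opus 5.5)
Your proposal is correct and follows essentially the same route as the paper's proof. Both arguments project with $\Pi$ to obtain $\rho$ and the marginal constraint, form the tagged state, sandwich $\bsym{\nu}_{|\test}$ via $[\Pi,M_k^B]=0$ and a H\"older bound, control the generation entropy with \cref{lem:tagged_QKD_entropy_bound}, derive the dimension-reduction constraints from $\Pibar\le rV_1$, and relax the weights to free variables. Your Step~1 only makes explicit the test/generation split of $\frenyiSandUp_\alpha$ and its monotonicity in the generation entropy, which the paper leaves implicit; the closed form you worry about is the paper's definition of the $f$-weighted entropy (\cref{def:inf-dim}), so it needs no separate justification in infinite dimensions.
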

\begin{remark}
    We note that the marginal constraint $\mathrm{Tr}_{B_{N_c}}\left[\rho\right]\leq\sigma_A^{(j)}$ was suggested by Ref. \cite{Reichert_2026} already prior to this work as an improvement of the security proof framework of Ref. \cite{Kanitschar_2023}. Here, the constraint follows from different principles, directly from our dimension reduction argument.
\end{remark}

\begin{remark}
    We will call the constraints
    \begin{equation}
        p(\alpha,x|\test)W_{\test}^{\alpha,x}\leq r\nu_{\alpha,x,1},
    \end{equation}
    the dimension reduction constraints and they play the same role as the squashing constraints found in \cite[Theorem 17]{Kamin_2025a}. Any differences stem from squashing maps satisfying $\probst[\rho_{|\test}] = \bsym{\nu}_{\CP}$.
\end{remark}
\begin{proof}
    See \cref{App:Dim_Red_Proof}.
\end{proof}

\begin{remark}
    Based on the formulation of \cref{thrm:Dim_red_kappa_special}, one can apply any of the reformulations of the \Renyi entropy in terms of convex functions as found in \cite{Kamin_2025a} using the duality argument in \cref{prop:Petz-Sandwiched-duality} in place of the finite-dimensional on used in Ref.~\cite{Kamin_2025a}. We believe that one can equally extend the duality argument required for the methods in Ref.~\cite{Navarro_2026} to infinite dimensions.
    
    Additionally, one can use the simplifications to bound the mixed $f$-weighted normalization constants $\kapmix_j$ as introduced in \cite{Kamin_2026}.
\end{remark}

\subsection{Fixed-Length}
Again we only consider the case where all channels are isomorphic to a single one. We therefore suppress the round index and write \(\sigma_A\), \(\EATchannQKDTag\), and \(\probstTag\) for the corresponding common objects.

%\subsubsection{Special Case: Single $\gen$ Announcement}
\begin{theorem}\label{thrm:Dim_red_halpha_special}
    Fix \(\alpha\in(1,2)\) and assume the same conditions as in \cref{thrm:Fixed_length_security}. Let \(\Pi\) be a finite-rank projection with support \(B_{N_c}\) satisfying \([\Pi,M_k^B]=0\) for every element \(M_k^B\) of Bob's test-round POVM, and define \(\Pibar \defvar I_B-\Pi\). Suppose that Bob's test POVM contains \(V_1\) that satisfies 
    \begin{equation}
        \Pibar \leq r V_1,
    \end{equation}
    where $r\geq1 $ is a constant. For each of Alice's outcomes \(\alpha,x\), let \(\nu_{\alpha,x,1|\test}\) denote the component of \(\bsym{\nu}_{|\test}\) corresponding to the joint probability, conditioned on a test round, that Alice obtains \(\alpha,x\) and Bob obtains the outcome associated with \(V_1\), that is
    \begin{equation}
        \nu_{\alpha,x,1|\test} = \Tr{\hat{\rho}_{AB|\test}\left(M_{\alpha,x}^{A|\test}\otimes V_1\right)},
    \end{equation}
    where \(\hat{\rho}_{AB|\test}\in\dop{=}(AB)\) is the corresponding normalized test-round state. Finally, the tagged QKD channel \(\EATchannQKDTag\) and the tagged statistic-generating map \(\probstTag\) are defined as in \cref{def:Tagged_channels_maps}.
    
    Then, the single-round quantity $\hUpQKD$ from \cref{thrm:Fixed_length_security} obeys the lower bound
\begin{widetext}
\begin{equation}
    \begin{aligned}
        \hUpQKD \geq&
        \inf_{\substack{\mbf q\in\Sacc,\;\bsym{\nu}_{\CP} \in\mathbb P(\alphCP),\;\rho\in\dop{\leq}(AB_{N_c}),\\ W_{\test}^{\alpha,x},W_{\gen}\in[0,1]}}
        \Bigg(
            \frac{\alpha}{\alpha-1}D\!\left(\mbf q\middle\Vert\bsym{\nu}_{\CP}\right)
            +q(\gen)\renyiSandUp_{\alpha}(S|IE\widetilde E)_{\widetilde{\omega}_{|\gen}}
            -q(\gen)\Delta_{\alpha}\!\left(\sqrt{W_{\gen}}\right)
        \Bigg)\\
        & \quad\textrm{s.t.}\quad
        \begin{aligned}[t]
            &\pTr{B_{N_c}}{\rho}\leq\sigma_A,\\
            &\widetilde{\rho}=\rho+\left(\sigma_A - \pTr{B_{N_c}}{\rho}\right)\otimes\pure{\star},\\
            &W_{\test}=\sum_{\alpha,x}p(\alpha,x|\test)W_{\test}^{\alpha,x},\\
            &W=(1-\gamma)W_{\gen}+\gamma W_{\test},\qquad \Tr{\rho}=1-W,\\
            &\Tr{\rho_{|\gen}}=1-W_{\gen},\qquad \Tr{\rho_{|\test}}=1-W_{\test}, \\
                &\Tr{\rho_{|(\test\wedge\alpha,x)}}=1-W_{\test}^{\alpha,x} \;\forall \;\alpha \in \mathcal{C}^{A,\test}, \; x \in \mathcal{X}_{\alpha}^{\test} \text{ s.t. } p(\alpha,x|\test)>0,\\
                &\widetilde{\omega}_{|\gen}=\EATchannQKDTag\left[\widetilde{\rho}_{|\gen}\right],\\
                &\bsym{\nu}_{\CP}=\left(\gamma\bsym{\nu}_{|\test},1-\gamma\right)^T,\\
                &\probstTag[\rho_{|\test}]\leq\bsym{\nu}_{|\test}\leq\probstTag[\rho_{|\test}]+W_{\test}\bsym{\delta},\\
                &p(\alpha,x|\test)W_{\test}^{\alpha,x}\leq r\nu_{\alpha,x,1|\test}\qquad \forall \;\alpha \in \mathcal{C}^{A,\test}, \; x \in \mathcal{X}_{\alpha}^{\test},
            \end{aligned}
    \end{aligned}
\end{equation}
\end{widetext}
    where \(\Delta_{\alpha}\) and \(\bsym{\delta}\) are defined as in
    \cref{thrm:Dim_red_kappa_special}. Additionally, \(\widetilde{\rho}_{|\test}\) and \(\widetilde{\rho}_{|\gen}\) are obtained by conditioning \(\widetilde{\rho}\) on Alice's test and generation event. Their respective projected components are denoted by
    \(\rho_{AB_{N_c}|\test},\rho_{AB_{N_c}|\gen}
    \in\dop{\leq}(AB_{N_c})\). Moreover, \(E\widetilde{E}\) purifies \(\widetilde{\rho}_{|\gen}\), and the tagged QKD channel $\EATchannQKDTag$ is understood to act trivially on these registers. The inequality \(\pTr{B_{N_c}}{\rho}\leq\sigma_A\) means that \(\sigma_A - \pTr{B_{N_c}}{\rho}\) is positive semidefinite, while the vector inequalities are understood componentwise. 
\end{theorem}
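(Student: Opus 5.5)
Fix a feasible pair \((\mbf q,\nu)\) in the definition of \(\hUpQKD\) from \cref{thrm:Fixed_length_security}. We will produce a feasible point of the finite-dimensional program on the right-hand side whose objective is no larger. Since all channels are isomorphic, the marginal-constrained convex range collapses to the single set \(\Sigma\). Hence every \(\nu\) has the form \((\EATchannQKD\otimes\id_E)[\tau_{ABE}]\) with \(\tau\) pure and \(\tau_A=\sigma_A\). Write \(\hat\rho_{AB}\defvar\tau_{AB}\). Because the test/generation decision depends only on Alice's signal and has fixed probability \(\gamma\), the induced distribution is \(\bsym\nu_{\CP}=(\gamma\bsym\nu_{|\test},1-\gamma)^T\), where \(\bsym\nu_{|\test}=\probst[\hat\rho_{AB|\test}]\). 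Moreover, since \(\gen\) is the only non-test value of \(\CP\), the conditional-entropy sum reduces to
\begin{equation}
\sum_{\cP}q(\cP)\renyiSandUp_\alpha(S|EI)_{\nu_{|\cP}}=q(\gen)\renyiSandUp_\alpha(S|IE)_{\omega_{|\gen}}.
\end{equation}
This holds because in test rounds \(S=\perp\) deterministically, so those conditional entropies vanish. The divergence term \(\frac{\alpha}{\alpha-1}D(\mbf q\Vert\bsym\nu_{\CP})\) is carried over unchanged.

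Next we define the candidate \(\rho\defvar(I_A\otimes\Pi)\hat\rho_{AB}(I_A\otimes\Pi)\). We then set \(W\defvar1-\Tr{\rho}\), and define \(W_{\gen}\), \(W_{\test}\), \(W_{\test}^{\alpha,x}\) as the analogous weights of \(\bar\Pi\) in the respective conditional states.

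The first group of constraints is the bookkeeping: the marginal inequality, the trace identities, and the convex decompositions of \(W\). The marginal inequality \(\pTr{B_{N_c}}{\rho}=\pTr{B}{(I\otimes\Pi)\hat\rho}\leq\pTr{B}{\hat\rho}=\sigma_A\) holds by positivity of \(I\otimes\bar\Pi\). Projection commutes with Alice's conditioning because \(\Pi\) acts only on \(B\) while \(\Pi^{\test},\Pi^{\gen}\) act only on \(A\). The decompositions of \(W\) then follow from linearity.

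The second group concerns the statistics. Since \([\Pi,M_k^B]=0\), we have \(M^B=\Pi M^B\Pi+\bar\Pi M^B\bar\Pi\). This gives the lower bound \(\bsym\nu_{|\test}\geq\probstTag[\rho_{|\test}]\). For the upper bound, the remainder term equals \(\Tr{(M^A\otimes\bar\Pi M^B\bar\Pi)\hat\rho}\), which is at most \(\delta_{\cP}\) times the \(\bar\Pi\)-weight \(W_{\test}\). Finally, the dimension-reduction constraint follows from \(\bar\Pi\leq rV_1\): tracing against \(M^{A|\test}_{\alpha,x}\otimes\cdot\) yields \(p(\alpha,x|\test)W_{\test}^{\alpha,x}\leq r\nu_{\alpha,x,1|\test}\).

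The entropy term is handled by \cref{lem:tagged_QKD_entropy_bound}, applied with \(\hat\rho_{AB|\gen}\). It gives
\begin{equation}
\renyiSandUp_\alpha(S|IE)_{\omega_{|\gen}}\geq\renyiSandUp_\alpha(S|IE\widetilde E)_{\widetilde\omega_{|\gen}}-\Delta_\alpha(\sqrt{W_{\gen}}).
\end{equation}
Here \(\widetilde\rho_{|\gen}\) coincides with the conditioning of \(\widetilde\rho=\rho+(\sigma_A-\pTr{B_{N_c}}{\rho})\otimes\pure{\star}\) on the generation event. Multiplying by \(q(\gen)\geq0\) and adding the unchanged divergence term shows the objective at this feasible point is bounded above by the original. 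Taking infima over \((\mbf q,\nu)\) then concludes.

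The main obstacle is the consistency of the conditional tagged state. One must show that conditioning \(\widetilde\rho\) on \(\gen\) reproduces exactly the \(\widetilde\rho_{A\widetilde B|\gen}\) of the lemma, including the \(\star\)-component \(\pTr{B}{\hat\rho_{|\gen}}-\pTr{B_{N_c}}{\rho_{|\gen}}\). This requires checking that Alice's generation conditioning acts linearly on the \(A\)-marginal, so that it commutes with forming \(\sigma_A-\pTr{B_{N_c}}{\rho}\), and that the normalization by \(1-\gamma\) matches on both sides. We will verify this by writing \(\sqrt{\Pi^{\gen}}\) acting on \(A\) only, and using that \(\Tr{\Pi^{\gen}\sigma_A}=1-\gamma\) is fixed.
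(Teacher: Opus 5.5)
Your proposal is correct and follows essentially the same route as the paper, which proves this theorem ``analogously'' to Theorem~\ref{thrm:Dim_red_kappa_special}: project Bob's system with $\Pi$, track the out-of-cutoff weights, use $[\Pi,M_k^B]=0$ and H\"older's inequality for the two-sided bound on the statistics, use $\overline{\Pi}\leq rV_1$ for the dimension-reduction constraint, use Lemma~\ref{lem:tagged_QKD_entropy_bound} for the generation-round entropy, and finally relax the weights to free variables (your explicit reduction of the objective to the single $q(\gen)$ term, using that $S=\perp$ in test rounds, is a step the paper leaves implicit). The one imprecision is your claim that every $\nu\in\Sigma$ comes from a \emph{pure} $\tau_{ABE}$: $\Sigma$ is a convex hull, so a mixture only comes from a mixed input with marginal $\sigma_A$, and you should purify that input onto $EE'$ and apply strong subadditivity (Proposition~\ref{prop:Inf-strong-subadd}) before invoking the lemma, which requires a purification.
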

\begin{proof}
    The proof follows analogously to \cref{thrm:Dim_red_kappa_special} by using \cref{lem:tagged_QKD_entropy_bound}.
\end{proof}

\begin{remark}
    As before, based on the formulation of \cref{thrm:Dim_red_halpha_special}, one can apply any of the reformulations of the \Renyi entropy in terms of convex functions as found in \cite{Kamin_2025a} using the duality argument in \cref{prop:Petz-Sandwiched-duality} in place of the finite-dimensional on used in Ref.~\cite{Kamin_2025a}. We believe that one can equally extend the duality argument required for the methods in Ref.~\cite{Navarro_2026} to infinite dimensions.
\end{remark}

%--------------------------------------------------------
%                   Numerical Implementation
%--------------------------------------------------------
\section{Numerical Implementation}\label{sec:NumericalImplementation}
We will use the framework developed in Ref.~\cite{Kamin_2025a} to evaluate our key rates. Most importantly, the simplifications in Ref.~\cite{Kamin_2025a} based on the duality of the \Renyi entropies carry forward to infinite dimensions as shown in \cref{prop:Petz-Sandwiched-duality}. The corresponding duality argument one needs to use in the proof of \cite[Theorem~8]{Kamin_2025a} is that for any pure state $\ket{\rho}_{ABC}$ where $\vert A \vert < +\infty$ it holds
\begin{equation}
    \renyiPetzUp_{\alpha}(A \vert B)_{\rho} + \renyiSandDown_{\beta}(A \vert C)_{\rho} = 0,
\end{equation}
when $\alpha \beta = 1$ and $\alpha, \beta \in [0,\infty]$. The rest of the arguments in Ref.~\cite{Kamin_2025a} carry over without any issues.

Finally, we note that for the variable length key rates, we will use the mixed normalization constant $\kapmix_j$ instead of $\kapup_j$ because it is the optimal normalization constant, see \cite[Corollary 10.2.4.]{Kamin_2026}, for a choice of the tradeoff function based on the methods in Ref.~\cite{Kamin_2025a} which lower bound $\renyiSandUp_{\alpha}$ with $\renyiSandDown_{\alpha}$.

\subsection{Basis Choice}
Quantum optical considerations \cite{Upadhyaya_2021} show that, for a fixed photon number cutoff, i.e., for a fixed system dimension, the out-of-subspace weight $W$ is minimised using a displaced photon-number basis. Consequently, we use a displaced photon number basis $\ket{n, \beta_i} \defvar \hat{D}(\beta_i)\ket{n}$ for numerical simulations and key rate calculations, where $\hat{D}(\beta) \defvar \exp\left( \beta \hat{a}^{\dagger} - \beta^* \hat{a} \right)$ is the displacement operator and $\ket{n}$ for $n\in\N_0$ denotes the $n$-th Fock state. For each symbol $i$, we estimate the displacement $\beta_i$ from the raw measurement data collected during the test rounds in which symbol $i$ was sent. Importantly, this estimate does not need to coincide exactly with the actual displacement of the received quantum states. If the estimate is imperfect, the resulting basis is simply suboptimal and may lead to a slightly larger out-of-subspace weight $W$; this does not compromise security. Furthermore, since expectation values $\Tr{\rho X}$ of arbitrary observables $X$ are invariant under the change of basis, the dimension reduction argument remains valid.

For detailed derivations of relevant quantities and the representation of the displaced partial trace and its adjoint in this special basis, we refer to Ref.~\cite{Upadhyaya_Thesis_2021}.

We want to emphasise that the displaced photon-number basis represents solely a physically motivated implementation choice; the security argument is general and does not depend on the particular choice of basis. 

\subsection{Measurements}
In the present DM CV-QKD protocol, Bob performs heterodyne detection, i.e. his measurement outcome is a complex number $\gamma \in\mathbbm{C}$ which, in the case of ideal detectors, can be related to a scaled projection of the incident quantum state onto the coherent state $\ket{\gamma}$, which leads to the POVM $\{E_{\gamma}\}_{\gamma \in \mathbbm{C}} = \{\frac{1}{\pi} \ketbra{\gamma}{\gamma}\}_{\gamma \in \mathbbm{C}}$. In the case of non-ideal, trusted detectors with efficiency $\eta_d$ and electronic noise $\nu_{\mathrm{el}}$, the heterodyne measurement can be described \cite{Lin_2020} as a scaled projection on displaced thermal states with mean photon number $\bar{n} = \frac{1-\eta_d+\nu_{\mathrm{el}}}{\eta_d}$, leading to POVM elements $\{G_{\zeta}\}_{\zeta \in \mathbbm{C}} = \left\{ \frac{1}{\eta_d \pi} \hat{D}\left( \frac{\zeta}{\sqrt{\eta_d}}\right) \rho_{\mathrm{th}}(\bar{n}) \hat{D}^{\dagger}\left( \frac{\zeta}{\sqrt{\eta_d}}\right) \right\}_{\zeta \in \mathbbm{C}}$.

Independent of the chosen detector model, Bob's heterodyne measurement outcomes are complex numbers corresponding to points in phase space, posing two challenges. First, the ultimate goal of a QKD protocol is to generate a shared digital secret key, which, by definition, consists of discrete bits. Therefore, although the measurement outcomes are continuous, they must ultimately be discretized for classical postprocessing. Second, the employed security-proof framework cannot directly handle registers that hold continuous quantities. Both challenges can be addressed by coarse-graining Bob's measurement results. 

\begin{figure}
\centering
\includegraphics[width=0.95\columnwidth]{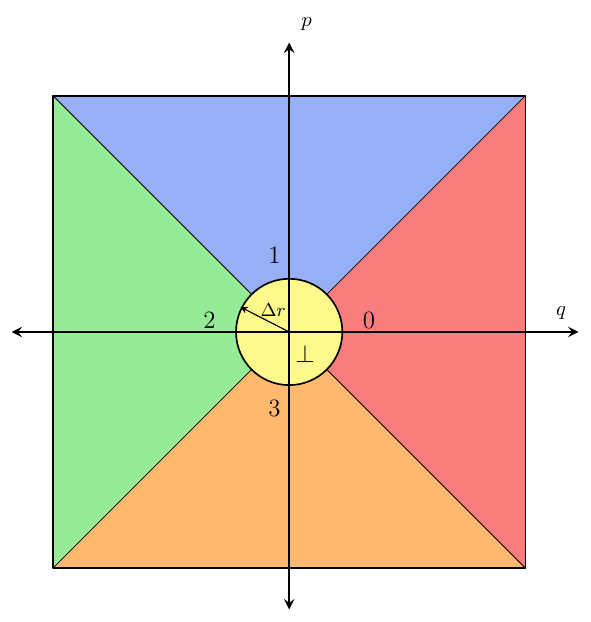}
\caption{QPSK key map with radial postselection. Measurement outcomes in the shaded areas are assigned the corresponding logical value. Outcomes in the central postselection region are discarded (assigned the value $\perp$). \label{fig:KeyMap}}
\end{figure}

\subsubsection{Key rounds}\label{sec:RegionOpDef}
For key rounds, Bob performs a reverse-reconciliation key map which assigns his measurement outcomes to bit values or assigns the discard symbol $\perp$. Therefore, he divides the phase space into $N_{\mathrm{St}}+2$ regions $A_z$, $z \in \{0,1,..., N_{\mathrm{St}}, \perp \}$ and associates measurement outcomes lying in those regions with the corresponding key symbol $z$. Formally, this corresponds to coarse-graining the POVM into \textit{region operators}
\begin{equation}
    R_z = \int_{A_z} F_{\zeta}~d^2\zeta,
\end{equation}
where $F_{\zeta}$ stands for the fine-grained POVM of the heterodyne measurement, which could be equal to $E_{\zeta}$, $G_{\zeta}$, or any other POVM corresponding to a model of heterodyne detection. While, in principle, the choice of the regions $A_z$ is arbitrary, the chosen constellation, along with quantum-optical considerations \cite{Kanitschar_2022}, usually already hints at their optimal shape, while their optimal size can usually be obtained by optimising only a few parameters. In the present work, for illustration purposes, we chose QPSK modulation, i.e. $N_{\mathrm{St}} = 4$, and a simplified postselection scheme that only removes measurement outcomes within a radius of $\Delta_r$ around the origin. Thus we obtain the region operators
\begin{equation}
    R_B^z := \frac{1}{\pi} \int_{\Delta_r}^{\infty} \int_{\frac{2z-1}{N_{\text{St}}} \pi}^{\frac{2z+1}{N_{\text{St}}} \pi} r |r e^{i \phi}\rangle \langle r e^{i \phi}|~d\phi ~dr.
\end{equation}
for $z=1\dots N_{\mathrm{St}}$.

\subsubsection{Test rounds}
Earlier works \cite{Lin_2019, Upadhyaya_2021, Kanitschar_2022, Kanitschar_2023} coarse-grained key round measurements into subsets of the phase space, while test rounds were coarse-grained into observables such as $\hat{q}, \hat{p}, \hat{n}$ or their displaced counterparts. Those observables, however, are unbounded, which introduces additional challenges into the security proof and are not directly compatible with the framework used in this paper. Working directly with the heterodyne measurements would lead to an uncountably infinite number of observations $\bsym{\nu}$, which cannot be handled in a numerical security argument. Thus, in this work, for test rounds, we follow a different avenue.

For test rounds, our security proof framework cannot directly handle continuous, fine-grained statistics; hence, we need to coarse-grain the test data. However, in contrast to the key rounds, the particular choice for test rounds is less obvious. Previous works \cite{Lin_2019, Upadhyaya_2021, Kanitschar_2022, Kanitschar_2023} have considered observables such as $\hat{q}, \hat{p}, \hat{n}$ or their displaced counterparts, which leads to problems of boundedness in the finite-size regime, coarse-grained the measurement outcomes into small squares \cite{Lupo_2022}, or coarse-grained the outcomes into sectors and ring-sectors centered at the origin \cite{Bauml_2024, Pascual-Garcia_2025, Primaatmaja_2024}.

In this work, we propose using POVMs corresponding to displaced rings in phase space, seamlessly integrating with the displaced Fock basis approach, motivated by high weights within a cutoff space with a fixed maximal photon number. Let us denote the number of rings in each displaced Fock basis by $N_R$ and the associated radii $0 = r_0 \leq r_1 \leq ... \leq r_{N_R}$ such that $B_0$ denotes the circle with outer radius $r_1$, $B_1$ denotes the ring with inner radius $r_1$ and outer radius $r_2$, up to $B_{N_R}$ denoting the area ranging from radius $r_{N_R}$ to infinity. Furthermore, let us denote the displacement associated with signal state $i$ by $\beta_i$. 

As we derive in Appendix \ref{apdx:POVMexpressions}, for the ideal detector scenario, this approach leads to
\begin{equation}
    \begin{aligned}
    &M_{\beta_i, k}^{\textrm{ideal}} \\
    &= \ket{i}\!\!\bra{i}_A \otimes \sum_{n=0}^{\infty} \ket{n,\beta_i}\!\!\bra{n, \beta_i}_B  \frac{\Gamma(n+1, r_{k}^2) - \Gamma(n+1, r_{k+1}^2)}{n!},
\end{aligned}
\end{equation}
while for the non-ideal trusted detector scenario, we obtain
\begin{equation}
\begin{aligned}
  &M_{\beta_i, k}^{\textrm{non-ideal}}\\
  &= \ket{i}\!\!\bra{i}_A \otimes \sum_{m=0}^{\infty} \ket{m, \beta_i}\!\!\bra{m, \beta_i}_B C_{m,m} \\
  & ~~~\times\sum_{j=0}^{m} \begin{pmatrix}
    m \\ m-j
\end{pmatrix} \frac{\left(\Gamma(j+1,a r_{k}^2) - \Gamma(j+1, a r_{k+1}^2)\right)}{a^{j+1}b^jj!},    
\end{aligned}
\end{equation}
where $C_{m,n}$ depends on the detection efficiency and the electronic noise. 

\subsection{Simulation model}
To illustrate how the analyzed DM CV-QKD protocol behaves under our security argument, we simulate the quantum channel as a realistic physical channel in the absence of Eve. In the context of optical fiber communication, such a channel can be modeled by a phase-invariant Gaussian channel with transmittance $\eta$ and excess-noise $\xi$. However, we note that this is not a restriction of our method but solely chosen for illustration purposes. Our security argument does not rely on any channel assumption.

In the analyzed protocol, Alice prepares a coherent state $\ket{\alpha_i}$ and sends it to Bob through the quantum channel. In our model, the channel transforms the coherent state into a displaced thermal state centered at $\beta_i := \sqrt{\eta}\alpha_i$ with mean photon number $\bar{n} = \frac{\eta \xi}{2}$.

This allows us to simulate the observations for the different POVM elements. For the ideal detector scenario, we obtain
  \begin{align}
    m_{\beta_i, k}^{\textrm{ideal}} &= \sum_{n=0}^{\infty}  \frac{\Gamma(n+1, r_{k}^2) - \Gamma(n+1, r_{k+1}^2)}{n!} \frac{\bar{n}^n}{(1+\bar{n})^{n+1}},
\end{align}  
where, again, we replace the upper bound of the last sum by our finite cutoff $N_c$.

Similarly, we obtain for the non-ideal detector scenario,
\begin{equation}
    \begin{aligned}
    m_{\beta_i, k}^{\textrm{non-ideal}} &= \sum_{ m=0}^{\infty} \begin{pmatrix}
    m \\ m-j
\end{pmatrix}  \frac{\bar{n}^m C_{m,m}}{(1+\bar{n})^{m+1}}   \\
& ~~~\times \sum_{j=0}^{m} \frac{ \left(\Gamma(j+1,a r_{k}^2) - \Gamma(j+1, a r_{k+1}^2)\right)}{a^{j+1}b^jj!}
\end{aligned}  
\end{equation}

For detailed derivations, we refer interested readers to Appendix \ref{apdx:POVMexpressions}.

\section{Results}\label{sec:Results}
We illustrate our general security proof framework for the special case of a quadrature phase-shift keying (QPSK) protocol. Alice prepares one of $N_{\mathrm{St}} = 4$ coherent states with fixed amplitude, $\{\alpha_{\mathrm{coh}}, 1i \alpha_{\mathrm{coh}}, -\alpha_{\mathrm{coh}}, -1i \alpha_{\mathrm{coh}}\}$, $\alpha_{\mathrm{coh}}\in \mathbbm{R}$ with equal probability, while Bob performs heterodyne detection. Although our security proof framework applies independently of which party performs error-correction, we consider reverse reconciliation throughout, as it is known to outperform direct reconciliation for CV-QKD for losses exceeding $3$dB. Bob subsequently performs the key mapping by partitioning phase-space into four regions corresponding to the key symbols $\{0,1,2,3\}$, and a fifth region corresponding to the symbol $\perp$, which denotes discarded symbols resulting from radial postselection.

\subsection{Implementation Details}
We simulate Bob's observations using a noisy and lossy Gaussian channel with transmittance $\eta$ and excess noise $\xi$. We model the excess noise as preparation noise introduced on Alice's side of the channel. When reporting results as a function of transmission distance, we relate channel transmittance to the distance $L$ according to $\eta = 10^{-0.02 L}$, corresponding to an attenuation of  $-0.2\mathrm{dB/km}$, representative of standard optical fibre at telecom wavelengths.

The numerical implementation involves several parameters that could, in principle, be optimized further to improve the key rates. These include the number of coarse-grained POVM rings $N_{\mathrm{POVMs}}$, the outer radius of the final POVM ring $r_{\mathrm{POVM, max}}$, and the testing ratio $\gamma$ which specifies the fraction of signals disclosed for statistical testing. To keep the numerical analysis tractable and the presentation consistent across all simulations, we fix these parameters to $N_{\mathrm{POVMs}} = 30$, $r_{\mathrm{POVM, max}} = 4.5$, and $\gamma = 10\%$. 

We truncate Bob's Hilbert space at $N_c = 25$, thereby describing his system in a $N_c+1 = 26$-dimensional Hilbert space. This value provides a compromise between the truncation-induced weight correction, numerical stability, and computational cost. As discussed below, the treatment of the weight correction is an important distinction between our security analysis and previous approaches.

For error correction, we assume an efficiency of $\beta_{\mathrm{EC}} = 95\%$, which is a commonly adopted value in the QKD literature and allows for a direct comparison with previous results. We note, however, that maintaining a constant error-correction efficiency of $95\%$ over the wide range of signal-to-noise ratios considered here may not be achievable in practice \cite{Leverrier_2023}. 

The total security parameter is fixed to $\epsilon_{\mathrm{sec}} = 10^{-80}$, split equally between correctness and secrecy. We deliberately choose this unusually stringent value to demonstrate a particular advantage of our security framework: reducing the security parameter to extremely small values incurs essentially no additional key rate penalty. In the parameter regimes considered, we find numerically that key rates obtained for $\epsilon_{\mathrm{sec}} = 10^{-80}$ in plots are virtually indistinguishable from $\epsilon_{\mathrm{sec}} = 10^{-20}$. Consequently, our $\epsilon_{\mathrm{sec}} = 10^{-80}$ key rate curves can be directly compared to those obtained with results using more conventional security parameters, e.g., $\epsilon_{\mathrm{sec}} = 10^{-20}$ 

\begin{remark}
The negligible dependence of our key rates on the security parameter is a consequence of specific features of our security proof framework. First, our proof avoids smoothing, and therefore does not introduce the additional $\epsilon_{\mathrm{smooth}}$-dependent correction term that becomes increasingly costly as the security parameter is reduced. Second, we do not introduce a separate security parameter $\epsilon_{\mathrm{AT}}$ to account for failure of the acceptance testing procedure via concentration inequalities. Instead, the acceptance test is incorporated directly into the security analysis, and no separate portion of the overall security budget needs to be allocated. 

This is particularly advantageous when demanding very small security parameters. In conventional finite-size analyses, obtaining very small acceptance-test failure probabilities leads to increasingly conservative statistical bounds, resulting in substantial penalties in key rate. In our framework, in contrast, reducing the overall security parameter over several orders of magnitude has a negligible effect on the key rate. This demonstrates that our framework can provide exceptionally strong composable security without a practically relevant reduction in key rate, a feature that is particularly relevant in the context of QKD standardisations and stringent security requirements.
\end{remark}

This demonstrates that the security guarantees of our framework can be strengthened to values relevant for QKD standardisation and practical security requirements without incurring a significant key rate penalty.

\subsection{Simulation Results}
In this section, we numerically evaluate the performance implied by our security argument for a QPSK protocol. We first present unique-acceptance fixed-length key rates, which are the standard way of reporting key rates in the literature and therefore allow for an efficient comparison with previous works. However, such key rates are impractical in the sense that they require the experimental observations to match the expected channel behaviour exactly. Ref. \cite{Kanitschar_2022} addressed this issue and suggested non-unique acceptance key rates, which enlarge the set of accepted observations and therefore increase the probability of a successful protocol execution. However, this comes at the cost of reduced one-shot key rates.

In this work, we overcome this trade-off by proving full variable-length security, which allows Alice and Bob to adapt the length of the produced key to the quality of their observations. According to the security argument presented in Sections \ref{sec:Security_proof} and \ref{sec:DimReduction}, the fixed- and variable-length key rates coincide for the expected observations, up to small deviations due to numerical imprecision. Our numerical analysis indicates that the variable-length problem is slightly less stable than its fixed-length counterpart, especially in the region where the key rate starts dropping quickly. Nevertheless, for most parameter regimes, the resulting differences are small on the scale of the plots presented below. For clarity, and to allow for comparison with previous works, we therefore plot the conventional unique-acceptance fixed-length key rates throughout this section and illustrate the advantages and practical implications of our variable-length security argument separately.

We first investigate the dependence of the key rate on the number of protocol rounds. We fix the transmission distance to $L=10$km, corresponding to a loss of $2$dB, the coherent state amplitude to $\alpha_{\mathrm{coh}} = 0.90$ and the radial postselection parameter $\Delta_r = 0.45$. These parameter values are close to optimal for the considered setting. In Figure \ref{fig:KRvsN}, we compare our finite-size key rates against coherent attacks and security parameter $\epsilon_{\mathrm{sec}} = 10^{-80}$ with the finite-size key rates against collective attacks obtained from the method in Ref. \cite{Kanitschar_2023}, using a security parameter of $\epsilon = 10^{-10}$, as well as the asymptotic key rates obtained from the method of Ref. \cite{Upadhyaya_2021}. Since $10\%$ of all protocol rounds are used for statistical testing in our protocol, we plot $90\%$ of the asymptotic key rate, which assumes that no rounds are sacrificed for testing, to ensure a fair comparison.

\begin{figure}
\centering
\includegraphics[width=0.95\columnwidth]{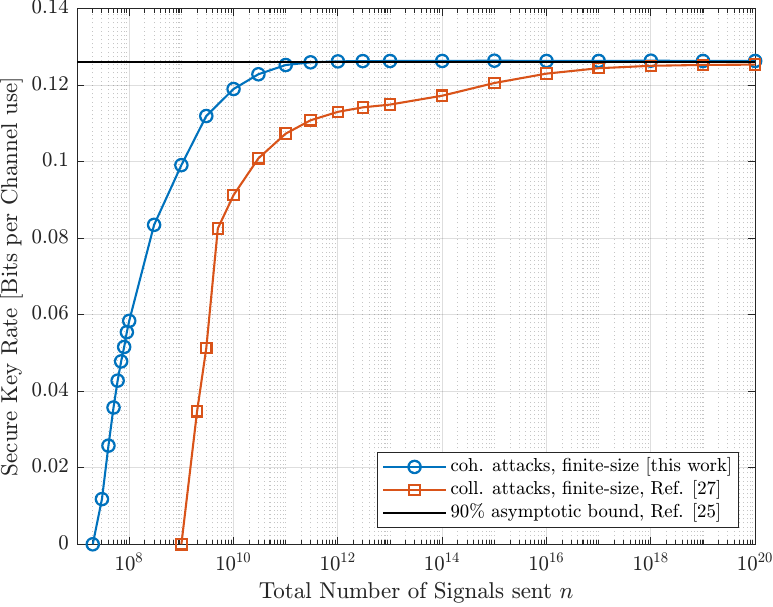}
\caption{Secure key rates over the total number of signals sent ($n$). We fix the transmission distance to $L=10$km, corresponding to $2$dB of loss, the coherent state amplitude $\alpha_{\mathrm{coh}} = 0.90$, and the radial postselection parameter $\Delta_r = 0.45$ and compare our coherent attack finite-size key rates with the asymptotic rates of Ref. \cite{Upadhyaya_2021} and the collective attack finite-size key rates of Ref. \cite{Kanitschar_2023}. \label{fig:KRvsN}}
\end{figure}

Our coherent-attack finite-size key rates converge rapidly to the asymptotic rates, reaching an almost asymptotic regime already for block sizes of $N=3 \times 10^{11}$, which is accessible with state-of-the-art technology. Moreover, our coherent attack key rates substantially outperform the collective attack finite-size rates of Ref. \cite{Kanitschar_2023} throughout the parameter regimes considered, with the largest improvement occurring in the practically relevant regime of small block sizes.

This substantial improvement is achieved despite the stronger security guarantee and the considerably smaller security parameter $\epsilon = 10^{-80}$ compared with $\epsilon = 10^{-10}$. It results from a sequence of improvements in the security proof technique (see Section \ref{sec:Security_proof}) that remove several sources of looseness and reduces the finite-size corrections entering the security bound. Of particular practical relevance is the persistence of non-zero key rate down to block-sizes of $3\times 10^{7}$. Thus, our results extend the regime in which finite-size keys can be generated to block-sizes that are readily achievable with state-of-the-art CV-QKD systems.

Our coherent attack finite-size key rates converge rapidly towards the asymptotic rates, reaching almost asymptotic behaviour already for block sizes of $3\times 10^{11}$. While such block sizes are at the upper end of what is currently achievable with state-of-the-art technology, our results remain substantially advantageous at much smaller block sizes. In particular, the coherent attack rates obtained in this work significantly outperform the collective attack finite-size rates of Ref. \cite{Kanitschar_2023} throughout the parameter regime considered, with the largest relative improvement occurring in the practically relevant regime of small block sizes. 
This improvement is particularly notable because it is achieved under a stronger security requirement, and with a security parameter of $10^{-80}$ compared to $10^{-10}$ in Ref. \cite{Kanitschar_2023}. The improvement results from a fundamentally different security proof approach leading to several advances, which eliminate sources of looseness present in previous analyses and reduce the resulting finite-size corrections. Most importantly from a practical perspective, our analysis yields a non-zero key rate for block sizes as small as $3\times 10^{7}$. This extends the regime of non-zero finite-size key rates to block sizes that are readily accessible with state-of-the-art continuous-variable systems.

We next investigate the dependence of our coherent attack finite-size key rates on the loss (transmission distance). As before, we include the asymptotic key rate bound of Ref. \cite{Upadhyaya_2021} as a reference. The asymptotic key rates and their finite-size counterparts are, however, not directly equivalent: as mentioned before, the asymptotic analysis does not require rounds to be sacrificed for statistical testing, whereas finite-size security requires a fraction of the transmitted rounds to be disclosed in order to obtain estimators for Bob's observations. The resulting testing overhead accounts for the small gap between the asymptotic curve and the finite-size curve with the largest block size.

We consider three different block sizes, $n \in \{10^8, 10^{10}, 10^{12} \}$, representing relatively small, intermediate, and large block sizes accessible with state-of-the art CV-QKD systems. For each data point, we independently optimise the \Renyi parameter $\alpha$, the coherent state amplitude $\alpha_{\mathrm{coh}}$, and the radial postselection parameter $\Delta_r$ using a coarse-grained parameter search. The resulting key rate curves are shown in Figure \ref{fig:KRvsLoss}. Positive key rates are obtained up to transmission distances of $18\mathrm{km}$, $54\mathrm{km}$, and $78\mathrm{km}$ for $n= 10^8$, $n=10^{10}$, and $n=10^{12}$ respectively. Thus, even for comparatively small block sizes, our security analysis permits key generation over practically relevant transmission distances, while increasing the block size substantially extends the achievable range.

\begin{figure}
\centering
\includegraphics[width=0.95\columnwidth]{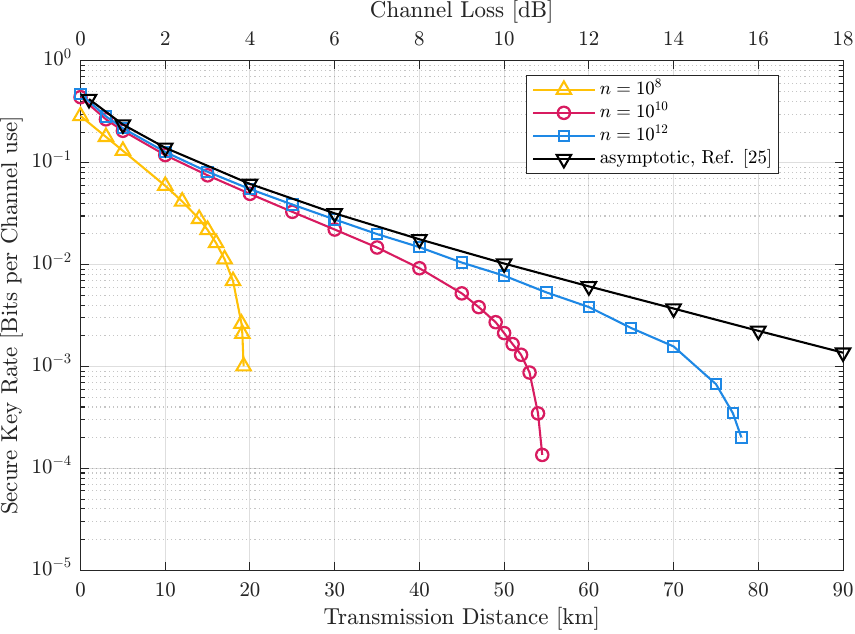}
\caption{Secure key rates over transmission distance $L$ in km (channel loss in dB) compared to the asymptotic key rates from Ref. \cite{Upadhyaya_2021}. Here, we assumed optimal detectors. Each key rate point was optimised over the \Renyi parameter $\alpha$, the coherent state amplitude $\alpha_{\mathrm{coh}}$, and the radial postselection parameter $\Delta_r$. \label{fig:KRvsLoss}}
\end{figure}

The results so far assume ideal detectors. We now extend our analysis to trusted, non-ideal detectors following the modelling approach of Ref. \cite{Lin_2020}. As a representative example, we consider a detection efficiency of $\eta_{\mathrm{det}} = 0.72$ and an electronic noise of $\nu_{\mathrm{el}} = 0.04$ (nu). We note that although the excess noise of $\xi = 0.01$ (nu) remains the same as for previous plots, the total noise is increased by the additional electronic noise. Furthermore, the non-unit detection efficiency introduces additional loss, such that the total system loss is larger than the (untrusted) channel loss shown on the horizontal axis. We continue to parametrise the latter by the corresponding transmission distance. 

For each data point, we optimise the \Renyi parameter $\alpha$, the coherent state amplitude $\alpha_{\mathrm{coh}}$, and the radial postselection parameter $\Delta_r$  using the same coarse-grained search as in the previous simulations. Figure \ref{fig:KRvsLossNonideal} shows the resulting key rates for block sizes $n \in \{10^8, 10^{10}, 10^{12} \}$. We obtain non-zero key rates up to transmission distances of $14 \mathrm{km}$, $47 \mathrm{km}$ , and $72\mathrm{km}$ for $n=10^8$, $n=10^{10}$, and $n=10^{12}$, respectively. 
These results demonstrate that composable security against coherent attacks remains achievable in the presence of realistic detector imperfections and over practically relevant transmission distances. Although detector imperfections and electronic noise reduce the achievable distances compared with the ideal detector case, the protocol retains practically relevant finite-size key rates for all three considered block sizes.

\begin{figure}
\centering
\includegraphics[width=0.95\columnwidth]{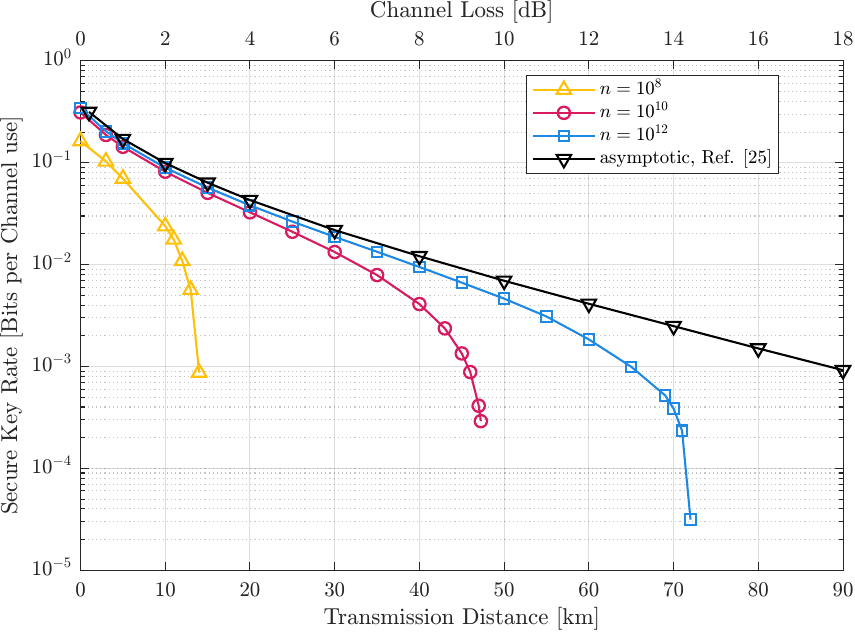}
\caption{Secure key rates over transmission distance $L$ in km (channel loss in dB) for trusted, non-ideal parameters with $\eta_{\mathrm{det}} = 0.72$ and $\nu_{\mathrm{el}} = 0.04$ compared to the asymptotic key rates from Ref. \cite{Upadhyaya_2021}. Each key rate point was optimised over the \Renyi parameter $\alpha$, the coherent state amplitude $\alpha_{\mathrm{coh}}$, and the radial postselection parameter $\Delta_r$.  \label{fig:KRvsLossNonideal}}
\end{figure}

\subsubsection{Variable-length security}
We finally illustrate the practical advantage of the full variable-length security statement proved in Section \ref{sec:Variable_length_security}. As discussed above, fixed-length security is the de facto standard in theoretical DM CV-QKD literature, but its silent dependence on a successful protocol execution is poorly aligned with experimental fluctuations. In a fixed-length analysis, the communicating parties must first specify an acceptance set before protocol execution. The key length is then determined by the worst-case state compatible with outcomes in this set. Consequently, the protocol produces a fixed key length $\ell_{\textrm{fixed}}$ whenever the observations fall within the acceptance set, and a key of length $0$ otherwise.

Unique acceptance refers to the limiting case in which the acceptance region contains only the accepted statistics, which makes successful protocol executions virtually impossible. Ref. \cite{Kanitschar_2023} addressed this issue by introducing non-unique acceptance, in which the acceptance region is enlarged to include a range of possible observations. This increases the probability of accepting a protocol run, but comes at a cost: enlarging the acceptance region also enlarges the set of possible states that must be considered in the security analysis, lowering the key rate obtained upon acceptance. Thus upon acceptance $\ell_{\textrm{fixed}}^{\mathrm{non-u.a.}} < \ell_{\textrm{fixed}}^{\mathrm{u.a.}}$. 

Our variable-length security statement takes a different approach (see Ref. \cite{Tupkary_2024} for an introduction to variable-length security). Rather than assigning a single key length to all observations within a set, the key length is determined directly from the observed statistics. For observations coinciding with the target statistics, the length of the resulting variable-length key agrees, up to numerical deviations, with the corresponding unique acceptance fixed-length key length. Away from the target statistics, however, the variable-length analysis adapts the key length to the quality of the observed data, thereby avoiding the binary acceptance decision of fixed-length security.

To illustrate this difference, we consider a trusted non-ideal detector with efficiency $\eta_{\mathrm{det}} = 0.72$ and electronic noise of $\nu_{\mathrm{el}} = 4\times 10^{-2}$, and use a block size of $n=10^{10}$. We assume a Gaussian channel which is characterised by its loss $\eta$ and its excess noise $\xi$. This assumption is made solely to construct a simple illustrative model of fluctuating channel conditions; our security proof itself does not rely on any assumption on the channel. We model the evolution of the channel parameters as random walk, initialised at the target values $\mu_{\mathrm{loss}} = 3 \mathrm{dB}$, and $\mu_{\mathrm{noise}} = 10^{-2}$, corresponding to a $15\mathrm{km}$ fibre channel. After each protocol execution, loss and noise are independently updated according to $\Delta_{\mathrm{loss}} \sim \mathcal{N}(0, \sigma_{\mathrm{loss}}^2)$ with $\sigma_{\mathrm{loss}} = 2.5\times10^{-4}$ and $\Delta_{\mathrm{noise}} \sim \mathcal{N}(0, \sigma_{\mathrm{noise}}^2)$ with $\sigma_{\mathrm{noise}} = 2.5\times10^{-6}$, with each update applied relative to the parameters of the previous execution. The resulting simulated evolution of, in total, $100$ channel parameter sets is shown in Figure \ref{fig:RandomWalk} in the loss-noise parameter space. Under the Gaussian channel assumption, it is sufficient to specify this region in terms of $\eta$ and $\xi$, rather than separately specifying acceptance regions for every coarse-grained POVM outcome. 

The fixed-length key is determined by the worst-case state explaining observations within $\mathcal{S}_{\textrm{acc}}$, which in our example corresponds to the largest loss and excess noise allowed by the acceptance region, which we set to $[2.9\mathrm{dB}, 3.1\mathrm{dB}] \times [7.5\times 10^{-3}, 12.5 \times 10^{-3}]$ in the $(\eta,\xi)$ space. We obtain $R_{\textrm{fixed}}^{\textrm{non-u.a.}} = 2.95 \times 10^{-2}$.

For the variable-length analysis, we take the target statistics to be the centre of the acceptance region, $\mu_{\textrm{loss}} = 3.0\mathrm{dB}$, and $\mu_{\textrm{noise}} = 10^{-2}$. The variable-length security bound of \cref{eq:l_var_fprot} then determines the key length from the observed data. Specifically, the optimisation provides the coefficients $\vec{f}$ of a linear function (the tradeoff function) and an affine offset $\kappa$, which, together with Bob's observed coarse-grained statistics $\vec{v}_{\textrm{obs}}$, the error-correction leakage and second-order correction terms, yield a lower bound on the length of the variable-length key,
\begin{equation}
    \ell_{\textrm{var}} = \kappa + \vec{f}\cdot \vec{v}_{\textrm{obs}} - \lambda_{\mathrm{EC}}(\vec{v}_{\textrm{obs}}) - \textrm{second-order~corrections}.
\end{equation}

Here, we emphasise the high degree of practicality offered by our variable-length security proof approach. All information required by Alice and Bob to evaluate the key rate is contained in $\vec{f}$ and $\kappa$. The only computation required to determine the length of the key is an inner product between two vectors whose dimension is given by the number of observables, followed by the addition of a constant. Consequently, both the memory requirements and the computational overhead related to the key length calculation are minimal, effectively addressing a recurring challenge in the practical implementation of QKD.

We simulate $100$ protocol executions, with the first execution initialised at the target statistics and each subsequent run generated by one step of the random walk described above. Figure \ref{fig:Fixed_vs_variable_length} shows the resulting fixed- and variable-length key rates as a function of the execution index. The fixed-length protocol shows the expected binary behavior: it produces the key rate $R_{\textrm{fixed}}^{\textrm{non-u.a.}}$ whenever the observed statistics lie within $\mathcal{S}_{\textrm{acc}}$ and aborts otherwise. In contrast, the variable-length key rate changes with the channel parameters, reflecting the quality of the data obtained in every execution.

\begin{figure}
\subfloat[Fixed-length vs. variable-length key rates. \label{fig:Fixed_vs_variable_length}]{
    \includegraphics[width=0.48\textwidth]{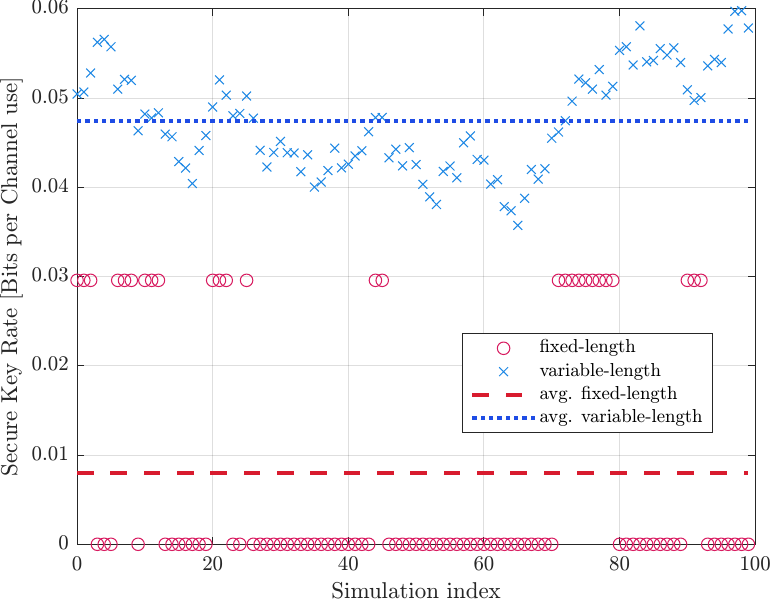}}\\
\subfloat[Evolution of simulated channel parameters. \label{fig:RandomWalk}]{
    \includegraphics[width=0.48\textwidth]{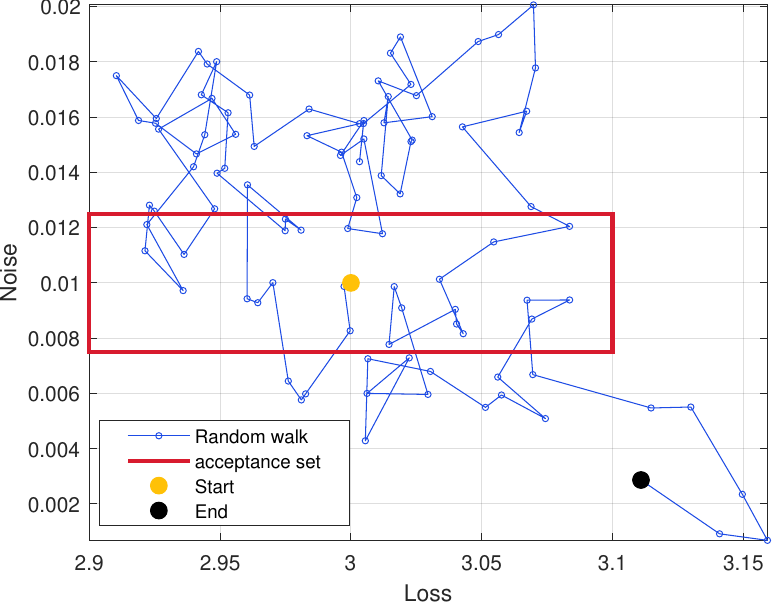}}
\caption{Illustration of fixed-length and variable-length key rates for $100$ simulations of a random-walk Gaussian channel. (a) Comparison of fixed-length (red circles) and variable-length (blue crosses) key rates for $100$ protocol runs. Horizontal lines represent the respective average key rate obtained over $100$ runs. (b) Illustration of the evolution of the channel parameters from run 1 (yellow dot) to run 100 (black dot). The red box marks the acceptance set.}
\end{figure}

Two features are particularly noteworthy. 
\begin{itemize}
\item[(i)] The variable-length protocol continues to produce a non-zero key throughout the simulated channel fluctuations, whereas the fixed-length protocol aborts frequently. Whenever the random walk leaves this region, the fixed-length protocol produces zero key, even though the observed statistics may still be compatible with secure key generation In contrast, the variable-length protocol continuously adapts the key length to the observed statistics and therefore continues to generate a nonzero key throughout the simulated trajectory.

\item[(ii)] Even for the initial execution, for which the parameters coincide with the target statistics, the variable-length rate exceeds the non-unique acceptance fixed-length rate. This is because the latter is determined by the worst-case point in the entire acceptance region, whereas the variable-length analysis exploits the recorded statistics. In particular, at the target statistics, the variable-length rate essentially recovers the unique acceptance rate
 \begin{equation*}
    R_{\textrm{var}}(\textrm{target}) \simeq R_{\textrm{fixed}}^{\textrm{u.a.}}(\textrm{target}) > R_{\textrm{fixed}}^{\textrm{non-u.a.}}(\mathcal{S}_{\textrm{acc}})
\end{equation*}
where the last inequality reflects the penalty associated  with enlarging the acceptance region. 
\end{itemize}

Variable-length security avoids both sources of conservativism. For our simulation, this leads to an average variable-length key rate of $47.4 \times 10^{-3}$, corresponding to an improvement by a factor of $6$ compared to the average fixed-length key rate of $7.98 \times 10^{-3}$. The exact improvement depends on the model for the channel-fluctuations, its parameters, the particular choice of the acceptance region, and the particular random walk that manifested. In particular the size of the acceptance region determines the trade-off between the probability of accepting a protocol run and the key rate obtained upon acceptance, as discussed in detail in Ref. \cite{Kanitschar_2023}. The present simulation is therefore intended as an illustration rather than a quantitative prediction for a particular experimental implementation.

\section{Discussion}\label{sec:Discussion}
A full composable security proof for general discrete-modulated continuous-variable QKD protocols against coherent attacks has remained an open problem for more than two decades. Previous results established security for restricted attack models, specific modulation formats, or under additional assumptions. In this work, we close this gap by providing a general composable finite-size security proof for DM CV-QKD against coherent attacks, covering both fixed- and variable-length variants of the protocol. In particular, our security argument does not rely on restricting the adversary or on assuming finite-dimensional descriptions of the underlying quantum systems.

The proof rests on two fundamental advances that are of interest beyond the specific QKD application here. First, at the information-theoretic level, we introduce and prove an infinite-dimensional marginal-constrained entropy accumulation theorem (iMEAT), which provides a lower bound on the $n$-round \Renyi entropy in the presence of infinite-dimensional side-information. We further extend this result to $f$-weighted entropies, which is essential for variable-length security.
Second, we develop a dimension-reduction argument that rigorously relates the original infinite-dimensional optimisation problems to finite-dimensional problems that can be solved numerically. A particular advantage of our approach is that the weight of the obtained finite-dimensional subspace is estimated directly within the key rate optimisation. This differs fundamentally from previous approaches, which determine the weight either through separate optimisation problems or through additional statistical testing. The resulting formulation provides a unified way of accounting for the contribution of the truncated subspace without introducing a separate security test.

A further conceptual ingredient of our approach is handling the experimental input directly in terms of (coarse-grained) POVMs. Previous numerical security analyses of DM CV-QKD \cite{Ghorai_2019, Lin_2019, Upadhyaya_2021, Kanitschar_2023} have typically introduced a set of observables, such as quadrature or photon-number operators and their higher moments, whose expectation values are then used to constrain the adversarial state. While this formulation is mathematically valid, it introduces additional complications in the infinite-dimensional setting, particularly when the considered observables are unbounded. In our approach, we instead formulated the experimental input in terms of probabilities. This provides a more direct connection between the experimentally obtained data and the heterodyne measurement underlying the protocol and avoids the need to introduce unbounded observables as intermediate quantities. In addition, this change is not merely conceptual. It plays a key role in overcoming two technical obstacles: it allows us to formulate the finite-size problem in terms of bounded measurement outcomes and allows us to incorporate experimental input as probabilities into the optimisation problem that follows from our iMEAT and dimension reduction theorems.

Our numerical results demonstrate that these developments lead to substantial practical improvements. For the QPSK protocol considered in this work, we obtain non-negative key rates for block sizes below $10^8$ for low to medium transmission distances and reach transmission distances beyond $70\mathrm{km}$ for block sizes of $10^{12}$ and experimentally viable detector parameters. Moreover, our coherent-attack finite-size key rates substantially outperform previously reported finite-size rates based on weaker security guarantees. Importantly, these results are obtained with the stringent security parameter $\epsilon_{\mathrm{sec}} = 10^{-80}$. Reducing the security parameter to this level causes only a minimal additional key rate penalty, illustrating an important practical advantage of our framework that addresses a recurring demand in practical system developments and ongoing QKD certification and standardisation efforts.

Our variable-length security result provides a further road towards making composable coherent attack security practical. Rather than fixing the key length based on a predefined acceptance region, the variable-length protocol can adapt the key length based on the quality of the observed statistics. This flexibility avoids the impracticality of requiring observations to coincide or closely match the target statistics. More importantly, it avoids two sources of unnecessary key rate loss inherent to fixed-length approaches. First, observations that fall outside the predefined acceptance region need not be discarded solely because they differ from the target statistics. Second, among accepted observations, one need not assign every data set the key length corresponding to the least favourable state compatible with the entire acceptance region. Instead, the key length can be tailored to the actual observed statistics, allowing favourable data sets to yield longer keys.

Crucially, this improved statistical efficiency does not come at the expense of practical implementability. The information required to determine the key length is contained in a vector $\vec{f}$ and a constant $\kappa$. The key length calculation requires only an inner product of a vector containing the experimental observations with $\vec{f}$, followed by the addition of the constant $\kappa$. Thus, the additional processing required to determine the key length is negligible, while the memory requirements remain extremely low. Our method therefore combines the adaptivity of variable-length security with a computationally lightweight practical implementation, addressing two important obstacles in the practical deployment of DM CV-QKD systems simultaneously. Our simulations demonstrate that our variable-length security argument yields substantial improvements in the expected key rate, without sacrificing practical implementability.

Beyond the theoretical advances, our results remove a fundamental obstacle towards the practical deployment of DM CV-QKD. Experimental platforms for CV-QKD have made rapid progress \cite{Pietri_2024, Aldama_2025, Hajomer_2025, Jaksch_2026, Ng_2026, Zhan_2026, Wu_2026}, demonstrating increasing experimental maturity. However, a full composable finite-size proof against coherent attacks has so far been lacking, leaving a gap between the capabilities of experimental implementations and the security guarantees required for practical cryptographic applications. By closing this gap, our work establishes a security foundation for DM CV-QKD that can be directly connected to these experimental developments. Together with the practically relevant key rates and transmission distances demonstrated in our simulations, this makes an experimental demonstration of composable secure key against coherent attacks a realistic next step.

Several further steps remain toward fully practical systems. 
While our security proof applies to arbitrary discrete modulation schemes, our numerical analysis focused on QPSK, both because of its practical relevance and to keep the computational demands low. Extending the numerical analysis to higher modulation formats, which are known to yield key rates closer to the Gaussian modulation limit, is a natural direction for future work. In parallel, the development and adaptations of more efficient numerical formulations \cite{Pascual-Garcia_2025, Navarro_2026}, as well as alternative optimisation techniques with improved scaling \cite{Temesi_2026}, could allow the exploration of a broader range of modulation formats and parameter regimes.

Incorporating additional device imperfections and potential side channels into the security proof would further narrow the gap between theoretical description and implementation security. Finally, extending the framework to multi-user and networked settings \cite{Bian_2023, Hajomer_2024, Kanitschar_2026} would open continuous-variable QKD to more general quantum communication architectures.

\begin{acknowledgements}
F.K. acknowledges support from the European Research Council Consolidator Grant, Grant Agreement Number 101043705 (Cocoquest), from the Digital Europe Program, Grant Agreement Number 101305042 (TrustQ), and the Dieberger-Skalicky Foundation. This project is supported by the Ministry of Education, Singapore, through grant T2EP20124-0005. This project is supported by the National Research Foundation, Singapore under the NRF Postdoctoral award.
L.K. and J.B. conducted their research at the Institute for Quantum Computing, at the University of Waterloo, which is supported by Innovation, Science, and Economic Development Canada and the NSERC Alliance. NSERC provided support under the Discovery Grants Program, Grant No. 341495, Alliance Grant QUINT, and Alliance Grant ReFQ.

The authors thank Mil\'{a}n Mosonyi for referring them to \cite{Mosonyi_2023}. The authors additionally thank Norbert Lütkenhaus for discussions during early stages of this project, and Amir Arqand for insight on generalizations of the marginal-constrained entropy accumulation theorem.

The authors used AI tools for purposes of spellchecking, grammatical corrections, and stylistic polishing of the final manuscript. Beyond this, the proof of Proposition \ref{prop:non-var-form-of-petz-up} is a simplified version of a proof generated by ChatGPT 5.6. The proof of Proposition \ref{prop:cl-chain-rule} uses \cite[Theorem 2]{Rotfeld_1969}, which the authors became aware of because ChatGPT 5.6 had referenced it in an earlier attempt at \cref{prop:non-var-form-of-petz-up}.
\end{acknowledgements}

\bibliography{references.bib}

\onecolumngrid
\appendix
 
%********************************************************
%                    Appendix
%********************************************************
\section{Mathematical Definitions}\label{apdx:MathDef}
In this appendix, we provide definitions of central mathematical objects used in this work.
\subsection{\Renyi entropies}
Our analysis relies on various \Renyi divergences and entropies defined in Ref.~\cite{Tomamichel_2016}. However, we at times require the definitions generalized to density matrices that act on separable Hilbert spaces. This is so that we model infinite-dimensional quantum systems. Conveniently, the technical details of said generalization of \Renyi divergences has been worked out in \cite{Mosonyi_2023}. For this reason, here we present the finite-dimensional definitions for clarity and refer the reader to Appendix \ref{app:inf-dim-stuff} for omitted technical details.

\begin{definition}\label{def:Renyi Divergence}
    Let $\alpha \in (0,1) \cup (1,\infty)$, and $\rho,\sigma \in \dop{\leq}(A)$ with \(\Tr{\rho} \neq 0\), the \emph{minimal quantum \Renyi divergence} (or sandwiched quantum \Renyi divergence) is defined as
    \begin{equation}
        \renyiSandDiv_\alpha(\rho||\sigma) \defvar \frac{1}{\alpha-1} \log \frac{\Tr{ \left( \sigma^{\frac{1-\alpha}{2\alpha}} \rho \sigma^{\frac{1-\alpha}{2\alpha}} \right)^{\alpha} }}{\Tr{\rho}},
    \end{equation}
    for \((\alpha < 1 \wedge \rho \not\perp \sigma ) \vee \supp(\rho) \subseteq \supp{\sigma}\), and \(\infty\) otherwise. 
    
    The \emph{Petz quantum \Renyi divergence} is defined as
    \begin{equation}
        \renyiPetzDiv_\alpha\rel{\rho}{\sigma}\defvar \frac{1}{\alpha-1} \log \frac{\Tr{\rho^\alpha \sigma^{1-\alpha}}}{\Tr{\rho}},
    \end{equation}
    for \((\alpha < 1 \wedge \rho \not\perp \sigma ) \vee \supp(\rho) \subseteq \supp{\sigma}\),
    and \(\infty\) otherwise. In any statement that applies to both divergences, we will write $\renyiDiv_\alpha$ to denote either divergence.
\end{definition}

\begin{definition}\label{def:Conditional Renyi entropy}
For $\alpha \geq 0$ and $\rho_{AB} \in \dop{=}(AB)$ the \emph{quantum conditional \Renyi entropies} are defined as:
\begin{align}
    \renyiPetzDown_\alpha \con{A}{B}_\rho &\defvar -\renyiPetzDiv_\alpha \rel{\rho_{AB}}{I_A \otimes \rho_B} \\
    \renyiPetzUp_\alpha \con{A}{B}_\rho &\defvar \sup_{\sigma_B\in\dop{=}(B)} -\renyiPetzDiv_\alpha \rel{\rho_{AB}}{I_A \otimes \sigma_B} \\
    \renyiSandDown_\alpha \con{A}{B}_\rho &\defvar -\renyiSandDiv_\alpha \rel{\rho_{AB}}{I_A \otimes \rho_B} \\
    \renyiSandUp_\alpha \con{A}{B}_\rho &\defvar \sup_{\sigma_B\in\dop{=}(B)} -\renyiSandDiv_\alpha \rel{\rho_{AB}}{I_A \otimes \sigma_B}
\end{align}
Again, in any statement that applies to all conditional \Renyi entropies, we use $\renyiEnt_\alpha$.
\end{definition}

\subsection{$f$-weighted \Renyi entropies}
\begin{definition}[$f$-weighted \Renyi entropies (partially restated from \cite{Arqand_2025, vanHimbeeck_2024})]\label{def:f-weighted entropies}
	Let $\rho \in \dop{=}(\CP Q Q')$ be a state where $\CP$ is classical with alphabet $\alphCP$. A \emph{tradeoff function\footnote{Ref.~\cite{Arqand_2025} instead referred to this as a ``quantum estimation score-system'' (QES).} on $\CP$} is a function $f:\alphCP \to \R$; equivalently, we may denote it as a real-valued tuple $\mbf{f} \in \R^{|\alphCP|}$ where each term in the tuple specifies the value $f(\cP)$. Given a tradeoff function $f$ and a value $\alpha\in(0,1)\cup (1,\infty)$, we define two versions of an \emph{$f$-weighted \Renyi entropy}  of order $\alpha$ for $\rho$, as follows:
	\begin{equation}\label{eq:upfweighted entropy}
		\begin{split}
			&\frenyiSandUp_\alpha(Q|\CP Q')_{\rho} \defvar \frac{\alpha}{1-\alpha} \log \left( \sum_{\cP} \rho(\cP) \, 2^{\frac{1-\alpha}{\alpha} \left(-f(\cP) + \renyiSandUp_{\alpha}(Q|Q')_{\rho_{|\cP}} \right) } \right) ,
		\end{split}
	\end{equation}
	\begin{equation}\label{eq:downfweighted entropy}
		\begin{split}
			&\frenyiSandDown_\alpha(Q|\CP Q')_{\rho} \defvar \frac{1}{1-\alpha} \log \left( \sum_{\cP} \rho(\cP) \, 2^{(1-\alpha) \left(-f(\cP) + \renyiSandDown_{\alpha}(Q|Q')_{\rho_{|\cP}} \right) } \right),
		\end{split}
	\end{equation}
	where the sums run over all $\cP$ values such that $\rho(\cP)>0$. We extend both definitions to $\alpha=\infty$ by taking the $\alpha\to\infty$ limit.
\end{definition}

\begin{definition}[Mixed $f$-weighted \Renyi entropies]\label{def:mixed f-weighted entropies}
	Let $\rho \in \dop{=}(\CP Q Q')$ be a state where $\CP$ is classical with alphabet $\alphCP$. A \emph{tradeoff function on $\CP$} is a function $f:\alphCP \to \R$, equivalently denoted by a real-valued tuple $\mbf{f} \in \R^{|\alphCP|}$ where each term in the tuple specifies the value $f(\cP)$. Given a tradeoff function $f$ and a value $\alpha\in(0,1)\cup (1,\infty)$, we define two versions of a \emph{mixed $f$-weighted \Renyi entropy} of order $\alpha$ for $\rho$, as follows:
	\begin{equation}\label{eq:mixedupfweighted entropy}
		\begin{split}
			&\frenyiMixedUp_\alpha(Q|\CP Q')_{\rho} \defvar \frac{\alpha}{1-\alpha} \log \left( \sum_{\cP} \rho(\cP) \, 2^{\frac{1-\alpha}{\alpha} \left(-f(\cP) + \renyiSandDown_{\alpha}(Q|Q')_{\rho_{|\cP}} \right) } \right) ,
		\end{split}
	\end{equation}
	\begin{equation}\label{eq:mixeddownfweighted entropy}
		\begin{split}
			&\frenyiMixedDown_\alpha(Q|\CP Q')_{\rho} \defvar \frac{1}{1-\alpha} \log \left( \sum_{\cP} \rho(\cP) \, 2^{(1-\alpha) \left(-f(\cP) + \renyiSandUp_{\alpha}(Q|Q')_{\rho_{|\cP}} \right) } \right),
		\end{split}
	\end{equation}
	where the sums run over all $\cP$ values such that $\rho(\cP)>0$. We extend both definitions to $\alpha=\infty$ by taking the $\alpha\to\infty$ limit.
\end{definition}
The choice of $\rotcurvearrowdown$ ($\rotcurvearrowup$) is meant to indicate that $\frenyiMixedUp$ ($\frenyiMixedDown$) is created from $\frenyiSandUp$ ($\frenyiSandDown$) upon replacing the conditional \Renyi entropy in the original definition \cref{def:f-weighted entropies} with its lower bound $\renyiSandDown$ (upper bound $\renyiSandUp$). 

\begin{definition}[Normalized tradeoff functions]\label{def:Full f-weighting}
	Let \(f\) be a tradeoff function on a register \(\CP\) as in \cref{def:f-weighted entropies}. Given any set of states $\mathcal{S} \subseteq \dop{=}(\CP Q Q')$, we define the \emph{$\frenyiSandUp_\alpha$-normalization constant} and \emph{$\frenyiSandDown_\alpha$-normalization constant} for that set to be, respectively,
	\begin{gather} 
		\kapup \defvar \inf_{\rho\in\mathcal{S}} \frenyiSandUp_\alpha(Q|\CP Q')_{\rho} ,\\
		\kapdown \defvar \inf_{\rho\in\mathcal{S}} \frenyiSandDown_\alpha(Q|\CP Q')_{\rho} .
	\end{gather}
    Similarly, we define the \emph{$\frenyiMixedUp_{\alpha}$-normalization constant} and  by
    \begin{gather}
        \kapmix \defvar \inf_{\rho \in \mathcal{S}} \frenyiMixedUp_\alpha(Q|\CP Q')_{\rho}, \\
        \kapmixdown \defvar \inf_{\rho \in \mathcal{S}} \frenyiMixedDown_\alpha(Q|\CP Q')_{\rho}.
    \end{gather}
	Given either of the above values, we then define corresponding \emph{$\frenyiSandUp_\alpha$-normalized}, \emph{$\frenyiSandDown_\alpha$-normalized}, \emph{$\frenyiMixedUp_\alpha$-normalized} or \emph{$\frenyiMixedDown_\alpha$-normalized} tradeoff functions \(\hat{f}\) respectively, via
	\begin{gather}
		\hat{f}(\cP) \defvar f(\cP) + \kapup, \quad
		\hat{f}(\cP) \defvar f(\cP) + \kapdown, \\
        \hat{f}(\cP) \defvar f(\cP) + \kapmix, \quad
		\hat{f}(\cP) \defvar f(\cP) + \kapmixdown.
	\end{gather}
\end{definition}

\begin{remark}
    By \cite[Lemma 10.2.5]{Kamin_2026}, it holds
    \begin{equation}
        \kapup \geq \kapmix \geq \kapdown, \quad
        \kapup \geq \kapmixdown \geq \kapdown,
    \end{equation}
    and, the ordering between $\kapmix$ and $\kapmixdown$ depends on the specific state. However, the mixed normalization constant $\kapmix$ can be immediately calculated from the methods in \cite{Kamin_2025a}. Furthermore, by \cite[ Corollary 10.2.4]{Kamin_2026}, the $f$-weightings obtained from \cite[Eq.~(118)]{Kamin_2025a}, are the optimal choices for $\kapmix$.
\end{remark}

\section{Derivation of the POVM ring expressions}\label{apdx:POVMexpressions}
First, we consider the ideal detector scenario. Since we chose to use the displaced photon-number basis, we need to find expressions for the displaced, coarse-grained POVMs
\begin{align}
    M_{\beta_i, k}^{\textrm{ideal}} &:= \ket{i}\!\!\bra{i}_A \otimes \hat{D}(\beta_i) \int_{B_k} \frac{1}{\pi} \ket{\zeta}\!\!\bra{\zeta}~d^2\zeta \hat{D}^{\dagger}(\beta_i),
\end{align}
where $B_k$ is a ring in phase-space with inner radius $r_k$ and outer radius $r_{k+1}$.
We represent them in the displaced Fock basis $\ket{n, \beta_i} := \hat{D}(\beta_i)\ket{n}, ~ n \in \mathbbm{N}$,
\begin{equation}
    \begin{aligned}
    M_{\beta_i, k}^{\textrm{ideal}}  &:= \ket{i}\!\!\bra{i}_A \otimes \sum_{m,n=0}^{\infty} \hat{D}(\beta_i) \int_{B_k} \frac{1}{\pi} \ket{m}\!\!\bra{m} \ket{\zeta}\!\!\bra{\zeta} \ket{n}\!\!\bra{n}_B~d^2\zeta \hat{D}^{\dagger}(\beta_i)\\
    &= \ket{i}\!\!\bra{i}_A \otimes \sum_{m,n=0}^{\infty} \ket{m,\beta_i}\!\!\bra{n, \beta_i}_B \int_{B_k} \frac{1}{\pi} e^{-\abs{\zeta}^2} \frac{\zeta^{m+n+1}}{\sqrt{m!} \sqrt{n!}}~d^2\zeta \\
    &= \ket{i}\!\!\bra{i}_A \otimes \sum_{m,n=0}^{\infty} \ket{m,\beta_i}\!\!\bra{n, \beta_i}_B \int_{r_{k}}^{r_{k+1}} e^{-r^2} \frac{r^{m+n+1}}{\pi \sqrt{m!} \sqrt{n!}}~dr \int_{0}^{2\pi} e^{i (m-n)\theta}~d\theta\\
    &= \ket{i}\!\!\bra{i}_A \otimes \sum_{m,n=0}^{\infty} \ket{m,\beta_i}\!\!\bra{n, \beta_i}_B \int_{r_{k}}^{r_{k+1}} e^{-r^2} \frac{r^{m+n+1}}{\pi\sqrt{m!} \sqrt{n!}}~dr 2\pi \delta_{m,n}\\
    &= \ket{i}\!\!\bra{i}_A \otimes \sum_{n=0}^{\infty} \frac{2}{n!} \ket{n,\beta_i}\!\!\bra{n, \beta_i}_B \int_{r_{k}}^{r_{k+1}}  e^{-r^2} r^{2n+1}~dr \\
    &= \ket{i}\!\!\bra{i}_A \otimes \sum_{n=0}^{\infty} \ket{n,\beta_i}\!\!\bra{n, \beta_i}_B  \frac{\Gamma(n+1, r_{k}^2) - \Gamma(n+1, r_{k+1}^2)}{n!},
\end{aligned}
\end{equation}
where, for the third equality, we switched to polar coordinates, and for the last equality, we used the definition of the upper incomplete Gamma function. For numerical reasons, we require a photon-number cutoff, which we rigorously account for via a dimension-reduction argument. Therefore, we replace the upper bound of the sum by $N_c$, the photon cutoff number.

Second, we turn to trusted non-ideal detectors, where the displaced coarse-grained POVMs read,
\begin{equation}
    \begin{aligned}
    M_{\beta_i, k}^{\textrm{non-ideal}} &:= \ket{i}\!\!\bra{i}_A \otimes \hat{D}(\beta_i) \int_{B_k} G_{\zeta}~ d^2\zeta \hat{D}^{\dagger}(\beta_i),\\
    &= \ket{i}\!\!\bra{i}_A \otimes \sum_{m,n=0}^{\infty} \hat{D}(\beta_i) \ket{m}\!\!\bra{m}_B \int_{B_k} G_{\zeta}~ d^2\zeta \ket{n}\!\!\bra{n}\hat{D}^{\dagger}(\beta_i),\\
    &= \ket{i}\!\!\bra{i}_A \otimes \sum_{m,n=0}^{\infty} \ket{m, \beta_i}\!\!\bra{n, \beta_i}_B  \int_{B_k} \bra{m} G_{\zeta} \ket{n}~ d^2\zeta.
\end{aligned}
\end{equation}
Using eqs. (6.14) and (6.15) from Ref. \cite{Mollow_1967}, we obtain for $n\geq m$
\begin{align}
    \bra{m}G_{\zeta}\ket{n} = \frac{C_{m,n}}{\pi} e^{-a |\zeta|^2} (\zeta^*)^{n-m} L_n^{(n-m)}\left(- \frac{|\zeta|^2}{b}\right),
\end{align}
where $a:= \frac{1}{\eta_d(1+\bar{n}_d)}$, $b:= \eta_d \bar{n}_d(1-\bar{n}_d)$, $C_{m,n} := \frac{1}{\eta_d^{\frac{n-m+2}{2}}} \frac{\bar{n}_d^m}{(1+\bar{n}_d)^{n+1}} \sqrt{\frac{m!}{n!}}$, $\bar{n}_d := \frac{1-\eta_d+\nu_{\mathrm{el}}}{\eta_d}$, and $L_k^{(\alpha)}(x):= \sum_{j=0}^{k} (-1)^j \begin{pmatrix}
    k+\alpha \\ k-j
\end{pmatrix} \frac{x^j}{j!}$ is the generalized Laguerre polynomial of degree $k$ with parameter $\alpha$. This leads to
\begin{equation}
    \begin{aligned}
    M_{\beta_i, k}^{\textrm{non-ideal}} &= \ket{i}\!\!\bra{i}_A \otimes \sum_{m,n=0}^{\infty} \ket{m, \beta_i}\!\!\bra{n, \beta_i}_B \frac{C_{m,n}}{\pi} \sum_{j=0}^{n}\begin{pmatrix}
    m \\ n-j
\end{pmatrix} \frac{1}{b^j j!}      \int_{B_k} e^{-a|\zeta|^2} (\zeta^*)^{m-n} |\zeta|^{2j} d^2\zeta \\
&= \ket{i}\!\!\bra{i}_A \otimes \sum_{m,n=0}^{\infty} \ket{m, \beta_i}\!\!\bra{n, \beta_i}_B \frac{C_{m,n}}{\pi} \sum_{j=0}^{n} \begin{pmatrix}
    m \\ n-j
\end{pmatrix} \frac{(-1)^j}{b^j j!}      \int_{r_{k}}^{r_{k+1}} e^{- a r^2} r^{m-n+1} r^{2j} dr \int_{0}^{2\pi} e^{-i (m-n)\theta} ~d\theta\\
&= \ket{i}\!\!\bra{i}_A \otimes \sum_{m,n=0}^{\infty} \ket{m, \beta_i}\!\!\bra{n, \beta_i}_B \frac{C_{m,n}}{\pi} \sum_{j=0}^{n} \begin{pmatrix}
    m \\ n-j
\end{pmatrix}   \frac{1}{b^j j!}    \int_{r_{k}}^{r_{k+1}} e^{- a r^2} r^{m-n+1} r^{2j} dr 2\pi \delta_{m,n}\\
&= \ket{i}\!\!\bra{i}_A \otimes \sum_{m=0}^{\infty} \ket{m, \beta_i}\!\!\bra{m, \beta_i}_B 2C_{m,m} \sum_{j=0}^{m} \begin{pmatrix}
    m \\ m-j
\end{pmatrix} \frac{1}{b^j j!}      \int_{r_{k}}^{r_{k+1}} e^{- a r^2} r^{2j+1} dr\\
&= \ket{i}\!\!\bra{i}_A \otimes \sum_{m=0}^{\infty} \ket{m, \beta_i}\!\!\bra{m, \beta_i}_B 2C_{m,m} \sum_{j=0}^{m} \begin{pmatrix}
    m \\ m-j
\end{pmatrix}  \frac{1}{b^j j!}     \int_{r_{k}}^{r_{k+1}} e^{- a r^2} r^{2j+1} dr \\
&= \ket{i}\!\!\bra{i}_A \otimes \sum_{m=0}^{\infty} \ket{m, \beta_i}\!\!\bra{m, \beta_i}_B C_{m,m} \sum_{j=0}^{m} \begin{pmatrix}
    m \\ m-j
\end{pmatrix} \frac{\left(\Gamma(j+1,a r_{k}^2) - \Gamma(j+1, a r_{k+1}^2)\right)}{a^{j+1}b^jj!}   
\end{aligned}
\end{equation}
As in the ideal case, for numerical reasons we replace the upper bound of the sum by the photon cutoff $N_c$. 

\subsection{Derivation of the simulated expectations}
We simulate the quantum channel as phase-invariant Gaussian channel with transmittance $\eta$ and excess noise $\xi$. 
Then, we obtain
\begin{equation}
    \begin{aligned}
    m_{\beta_i, k}^{\textrm{ideal}} &= \Tr{\hat{D}(\beta_i)\rho_{\mathrm{th}}(\bar{n}) \hat{D}^{\dagger}(\beta_i) M_{\beta_i, k}^{\textrm{ideal}}}\\
    &= \Tr{\hat{D}(\beta_i) \sum_{\ell=0}^{\infty} \frac{\bar{n}^\ell}{(1+\bar{n})^{\ell+1}} \ket{\ell}\!\!\bra{\ell} \hat{D}^{\dagger}(\beta_i) \sum_{n=0}^{\infty} \ket{n,\beta_i}\!\!\bra{n, \beta_i}  \frac{\Gamma(n+1, r_{k}^2) - \Gamma(n+1, r_{k+1}^2)}{n!}} \\
    &=  \Tr{\sum_{\ell, n=0}^{\infty} \ket{\ell, \beta_i}\!\!\bra{\ell, \beta_i}  \ket{n,\beta_i}\!\!\bra{n, \beta_i}  \frac{\Gamma(n+1, r_{k}^2) - \Gamma(n+1, r_{k+1}^2)}{n!} \frac{\bar{n}^\ell}{(1+\bar{n})^{\ell+1}}} \\
     &= \sum_{n=0}^{\infty}  \frac{\Gamma(n+1, r_{k}^2) - \Gamma(n+1, r_{k+1}^2)}{n!} \frac{\bar{n}^n}{(1+\bar{n})^{n+1}},
\end{aligned} 
\end{equation} 
where, again, we replace the upper bound of the last sum by our finite cutoff $N_c$.

Similarly, we obtain for the non-ideal detector scenario,
\begin{equation}
    \begin{aligned}
    m_{\beta_i, k}^{\textrm{non-ideal}} &= \Tr{\hat{D}(\beta_i)\rho_{\mathrm{th}}(\bar{n}) \hat{D}^{\dagger}(\beta_i) M_{\beta_i, k}^{\textrm{non-ideal}}}\\
    &= \Tr{ \sum_{\ell, m=0}^{\infty} \frac{\bar{n}^\ell}{(1+\bar{n})^{\ell+1}} \ket{\ell, \beta_i}\!\!\bra{\ell, \beta_i} \ket{m, \beta_i}\!\!\bra{m, \beta_i} C_{m,m} \sum_{j=0}^{m} \begin{pmatrix}
    m \\ m-j
\end{pmatrix} \frac{ \left(\Gamma(j+1,a r_{k}^2) - \Gamma(j+1, a r_{k+1}^2)\right)}{a^{j+1}b^jj!}}  \\
&= \Tr{ \sum_{ m=0}^{\infty} \ket{m, \beta_i}\!\!\bra{m, \beta_i}  \begin{pmatrix}
    m \\ m-j
\end{pmatrix}  \frac{\bar{n}^m C_{m,m}}{(1+\bar{n})^{m+1}}   \sum_{j=0}^{m} \frac{ \left(\Gamma(j+1,a r_{k}^2) - \Gamma(j+1, a r_{k+1}^2)\right)}{a^{j+1}b^jj!}}\\
&=\sum_{ m=0}^{\infty} \begin{pmatrix}
    m \\ m-j
\end{pmatrix}  \frac{\bar{n}^m C_{m,m}}{(1+\bar{n})^{m+1}}  \sum_{j=0}^{m} \frac{ \left(\Gamma(j+1,a r_{k}^2) - \Gamma(j+1, a r_{k+1}^2)\right)}{a^{j+1}b^jj!}
\end{aligned}  
\end{equation}

\twocolumngrid
\section{Proof of the Dimension Reduction Statement}\label{App:Dim_Red_Proof}
\subsection{Variable-length Dimension Reduction}
Here, we give the proof for the variable-length dimension reduction statement (Theorem~\ref{thrm:Dim_red_kappa_special}) and the preceding lemma (\cref{lem:tagged_QKD_entropy_bound}).

\begin{proof}{Proof of \cref{lem:tagged_QKD_entropy_bound}}
    We begin with the case $0 < W_{|\gen} < 1$ and explain the remaining cases briefly at the end of the proof as they are straightforward. For the case $0 < W_{| \gen } < 1$, the proof splits into 3 pieces. First, we show without loss of generality we can work with a purification of $\wt{\rho}_{A\wt{B} \vert \gen}$ of a specific form. Second, we use this specific form to bound the entropy of $\wt{\omega}$ by the re-normalized, projected purification $\hat{\rho}_{ABE \vert \gen}$ acted on by the true (i.e.~not tagged) QKD map. Finally, we bound the difference between the true state and this re-normalized state under the QKD map and use uniform continuity of the sandwiched R\'{e}nyi entropy to establish the claimed bound.

Let us start by showing we can consider a purification of $\wt{\rho}_{A\wt{B}\vert \gen}$ with special structure without loss of generality. For notational simplicity, we define $Q = I_{A} \otimes \Pi \otimes I_{E}$.  One purification of \(\widetilde{\rho}_{A\widetilde{B}|\gen}\) is the state 
    \begin{equation}
    \begin{aligned}
        \ket{\psi}_{A\widetilde{B}RF} = &\sqrt{1-W_{|\gen}} \ket{\varphi}_{A\widetilde{B}R} \ket{0}_F \\
        &+ \sqrt{W_{|\gen}} \ket{\chi}_{A\widetilde{B}R} \ket{1}_F,
         \end{aligned}
    \end{equation}
    where $\ket{\varphi}$ is a purification of the re-normalized post-measurement state
    \begin{equation}
        \frac{Q\hat{\rho}_{ABE|\gen}Q}{1-W_{|\gen}} \ ,
    \end{equation}
    with the $E$ system embedded in $R$, and $\ket{\chi}$ is a purification of ,
    \begin{equation}
        \frac{1}{W_{|\gen}}\left(\pTr{B}{\hat{\rho}_{AB|\gen}}-\pTr{B_{N_c}}{\rho_{AB_{N_c}|\gen}}\right)\otimes\pure{\star} \ . 
    \end{equation}
    As all purifications of the state are equivalent up to an isometry on the purifying space, by choosing $RF \cong E\wt{E}$, there exists a unitary such that $\dyad{\psi} = (I_{A\wt{B}} \otimes U_{E\wt{E} \to FR})\wt{\rho}_{A\wt{B}E\wt{E}}(I_{A\wt{B}} \otimes U_{E\wt{E} \to FR})^{\dagger}$. As the R\'{e}nyi entropy is invariant under isometries and $\EATchannQKDTag_j$ commutes with this isometry,
    \begin{align}
        \renyiSandUp_{\alpha}(S|IE\widetilde E)_{\widetilde{\omega}_{|\gen}} = \renyiSandUp_{\alpha}(S|IRF)_{U\widetilde{\omega}_{|\gen}U^{\dagger}} \ . 
    \end{align}
    Thus, we may assume $\widetilde{\omega}_{|\gen}$ is in fact produced by the purification $\dyad{\psi}_{A\wt{B}RF}$, which has this convenient flag structure.

    Now, by definition of the tagged QKD channel $\EATchannQKDTag_j$, $\EATchannQKDTag(\dyad{\psi})$ will be a quantum-classical state where the $F$ register acts as the classical register. To see this, define
    \begin{align}
        \widetilde{\omega}^Q &\defvar \left(\EATchannQKDTag_j \otimes \id_{R} \right)\left[ \pure{\varphi}_{A\widetilde{B}R} \right] \\
        \widetilde{\omega}^{\star} &\defvar \left(\EATchannQKDTag_j \otimes \id_{R} \right)\left[ \pure{\chi}_{A\widetilde{B}R} \right]\\
       \notag &= \pure{\perp}_S \otimes \nu_{IR} \ ,
    \end{align}
    where the final equality uses that the tagged QKD channel always maps the $\star$-component to the discard output in register $S$ and we leave the state $\nu_{IR}$ undetermined. Using these definitions,
        \begin{equation}
        \begin{aligned}[t]
        \widetilde{\omega}_{SIRF} &= \left(\EATchannQKDTag_j \otimes \id_{RF} \right) \left[\pure{\psi}_{A\widetilde{B}RF} \right] \\
        &=\left(1-W_{|\gen}\right) \widetilde{\omega}^Q \otimes \pure{0}_F \\
        &+ W_{|\gen} \widetilde{\omega}^{\star} \otimes \pure{1}_F.
        \end{aligned}
    \end{equation}
    Now, conditioning on a classical register such as $F$ lets us write the conditional \Renyi entropy as
    \begin{equation}
        \begin{aligned}
        &\renyiSandUp_{\alpha}(S|IE\widetilde E)_{\widetilde{\omega}_{|\gen}} = \renyiSandUp_{\alpha}(S|IFR)_{\widetilde{\omega}_{|\gen}}\\
        = &\frac{\alpha}{1-\alpha}\log\Bigg(
        (1-W_{|\gen}) 2^{\frac{1-\alpha}{\alpha}
        \renyiSandUp_{\alpha}(S|IR)_{\widetilde{\omega}^Q}} + W_{|\gen} \Bigg) \\
        \leq &\renyiSandUp_{\alpha}(S|IR)_{\widetilde{\omega}^Q}
    \end{aligned}
    \end{equation}
    where we used that $\renyiSandUp_{\alpha}(S|IR)_{\widetilde{\omega}^{\star}} = 0$ and the inequality follows because the conditional sandwiched \Renyi entropy is nonnegative for classical $S$. 
    Finally, since \(F\) is fixed to \(\ket{0}\) on the projected state, adding it to the conditioning system does not change the entropy. After identifying \(RF\) isometrically with \(E\widetilde E\), we use the same symbol \(\widetilde{\omega}^{Q}\) for the resulting state \(\widetilde{\omega}^{Q}_{SIR}\otimes\pure{0}_F\) on \(SIE\widetilde E\). We therefore obtain
\begin{equation}
        \begin{aligned}
        \renyiSandUp_{\alpha}(S|IR)_{\widetilde{\omega}^{Q}}
        &=
        \renyiSandUp_{\alpha}(S|IRF)_{\widetilde{\omega}^{Q}\otimes\pure{0}_F}\\
        &=
        \renyiSandUp_{\alpha}(S|IE\widetilde E)_{\widetilde{\omega}^{Q}}\\
        &\leq
        \renyiSandUp_{\alpha}(S|IE)_{\widetilde{\omega}^{Q}_{SIE}}\\
        &=
        \renyiSandUp_{\alpha}(S|IE)_{\left(\EATchannQKD_j\otimes\id_E\right)\left[\frac{Q\hat{\rho}_{ABE|\gen}Q}{1-W_{|\gen}}\right]},
    \end{aligned}
\end{equation}
    where the inequality follows from strong subadditivity of the conditional \Renyi entropy \cref{prop:Inf-strong-subadd}, while the final equality follows from the choice of \(\ket{\varphi}\) and the agreement of the original and tagged QKD channels on inputs supported on \(B_{N_c}\). This yields the following inequality for the output state of the tagged state and channel
    \begin{equation}\label{eq:proof_entr_tagged_ntr_bnd}
        \renyiSandUp_{\alpha}(S|IE\widetilde E)_{\widetilde{\omega}_{|\gen}} \leq \renyiSandUp_{\alpha}(S|IE)_{\left(\EATchannQKD_j\otimes\id_E\right)\left[\frac{Q\hat{\rho}_{ABE|\gen}Q}{1-W_{|\gen}}\right]}.
    \end{equation}
    
    Next, we aim to relate this conditional \Renyi entropy of the renormalized projected state to the conditional \Renyi entropy of the original state $\hat{\rho}_{ABE|\gen}$. We will do so by using the continuity bound for the sandwiched conditional \Renyi entropy from \cite[Corollary~5.1]{Bluhm_2026}, which requires a bound on the trace distance. 
    
    For this purpose, we apply the normalized gentle measurement lemma \cref{lem:Norm_gentle_measurement_lemma} to \(\hat{\rho}_{ABE|\gen}\) with the projector \(Q = I_A\otimes\Pi\otimes I_E\), which gives
    \begin{equation}\label{eq:gentle-projected-generation-state}
        \frac{1}{2}\norm{
        \hat{\rho}_{ABE|\gen} - \frac{Q \hat{\rho}_{ABE|\gen} Q}{1-W_{|\gen}} }_1 \leq \sqrt{W_{|\gen}}.
    \end{equation}
    Hence, by the data-processing inequality for the trace distance, we find
    \begin{equation}\label{eq:gentle-projected-generation-output}
        \frac{1}{2}\norm{\omega_{|\gen} - \left(\EATchannQKD_j\otimes\id_E\right) \left[ 
        \frac{Q \hat{\rho}_{ABE|\gen} Q}{1-W_{|\gen}} \right]
        }_1 \leq \sqrt{W_{|\gen}}.
    \end{equation}
    Then, due to the continuity bound for the sandwiched conditional \Renyi entropy from \cite[Corollary~5.1]{Bluhm_2026}, with \(d_S^2\) replaced by \(d_S\) since \(S\) is classical, we find
    \begin{align}
        &\renyiSandUp_{\alpha}(S|IE)_{\omega_{|\gen}} \\
        &\geq
        \renyiSandUp_{\alpha}(S|IE)_{\left(\EATchannQKD_j\otimes\id_E\right)\left[
        \frac{Q\hat{\rho}_{ABE|\gen}Q}{1-W_{|\gen}} \right]}
        - \Delta_{\alpha}\!\left(\sqrt{W_{|\gen}}\right) 
    \end{align}
    Now, we can insert \cref{eq:proof_entr_tagged_ntr_bnd} to find
    \begin{align}    
        \renyiSandUp_{\alpha}(S|IE)_{\omega_{|\gen}} \geq \renyiSandUp_{\alpha}(S|IE\widetilde E)_{\widetilde{\omega}_{|\gen}}
        - \Delta_{\alpha}\!\left(\sqrt{W_{|\gen}}\right),
    \end{align}
    which concludes the claim for $0 < W_{|\gen} <1$.

    If \(W_{|\gen}=1\), then \(\widetilde{\rho}_{A\widetilde{B}|\gen}\) is supported entirely on the \(\star\)-subspace and \(\EATchannQKDTag_j\) maps it to the deterministic discard output. Thus,
    \begin{equation}
        \renyiSandUp_{\alpha}(S|IE\widetilde E)_{\widetilde{\omega}_{|\gen}}=0.
    \end{equation}
    Since \(S\) is classical,
    \begin{equation}
        \renyiSandUp_{\alpha}(S|IE)_{\omega_{|\gen}}\geq0,
    \end{equation}
    and using $\Delta_{\alpha}(1) \geq 0$, the claim also follows for \(W_{|\gen}=1\).

    If \(W_{|\gen}=0\), then \(\hat{\rho}_{AB|\gen}\) is supported on \(AB_{N_c}\), and \(\widetilde{\rho}_{A\widetilde{B}|\gen}\) agrees with it under the canonical embedding into the tagged space. Since the original and tagged QKD channels agree on this subspace, their outputs are isometrically equivalent. Their conditional \Renyi entropies are therefore equal, and the claim follows from \(\Delta_\alpha(0)=0\).
\end{proof}

\begin{proof}{Proof of \cref{thrm:Dim_red_kappa_special}}
    The proof consists of four main parts: rewriting the marginal constraint, bounding the test-round constraints using the statistic-generating map \(\probst_j\), bounding the conditional entropy, and incorporating the dimension-reduction constraints.

    We first introduce weights that keep track of the probability mass outside the cutoff subspace in the different protocol branches, i.e. test and generation rounds. These weights will play a crucial role in any of the bounds we derive in this proof. For any state \(\hat{\rho}_{AB}\in\dop{=}(AB)\) satisfying the marginal constraint \(\pTr{B}{\hat{\rho}_{AB}}=\sigma_A^{(j)}\), define
    \begin{align}
        W_{\hat{\rho}} &\defvar \Tr{\left(I_A\otimes\Pibar\right)\hat{\rho}_{AB}},\\
        W_{\hat{\rho},\test} &\defvar \Tr{\left(I_A\otimes\Pibar\right)\hat{\rho}_{AB|\test}}\\
        \notag&=\frac{1}{\gamma}\Tr{\left(\Pi^{\test}\otimes\Pibar\right)\hat{\rho}_{AB}},\\
        W_{\hat{\rho},\gen} &\defvar \Tr{\left(I_A\otimes\Pibar\right)\hat{\rho}_{AB|\gen}}\\
        \notag&=\frac{1}{1-\gamma}\Tr{\left(\Pi^{\gen}\otimes\Pibar\right)\hat{\rho}_{AB}},\\
        W_{\hat{\rho},\test}^{\alpha,x} &\defvar \Tr{\Pibar\hat{\rho}_{B|(\test\wedge\alpha,x)}}.
    \end{align}
    Here, \(\hat{\rho}_{AB|\test}\) and \(\hat{\rho}_{AB|\gen}\) denote the normalized states conditioned on a test round and a generation round, respectively. The second equality in each of the definitions of \(W_{\hat{\rho},\test}\) and \(W_{\hat{\rho},\gen}\) follows directly from the definition of the corresponding conditional state and the cyclicity of the trace. For outcomes satisfying \(p(\alpha,x|\test)=0\), we set \(W_{\hat{\rho},\test}^{\alpha,x}=0\).
    
    Because Alice chooses a test or generation round with probabilities \(\gamma\) and \(1-\gamma\), respectively, and \(\Pi^{\test}+\Pi^{\gen}=I_A\), the definitions above imply
\begin{equation}
        \begin{aligned}
        W_{\hat{\rho}}
        &=\Tr{\left(\left(\Pi^{\test}+\Pi^{\gen}\right)\otimes\Pibar\right)\hat{\rho}_{AB}}\\
        &=\Tr{\left(\Pi^{\test}\otimes\Pibar\right)\hat{\rho}_{AB}}+\Tr{\left(\Pi^{\gen}\otimes\Pibar\right)\hat{\rho}_{AB}}\\
        &=\gamma W_{\hat{\rho},\test}+(1-\gamma)W_{\hat{\rho},\gen}.
    \end{aligned}
\end{equation}
    Furthermore, the POVM elements conditioned on test rounds satisfy
    \begin{equation}
        \sum_{\alpha,x}M_{\alpha,x}^{A|\test} = \Lambda_{\supp\left(\Pi^{\test}\right)},
    \end{equation}
    where $\Lambda_{\supp\left(\Pi^{\test}\right)}$ is the projector onto the support of $\Pi^{\test}$, see \cite[App.~H]{Kamin_2025a}. Therefore,
\begin{equation}
        \begin{aligned}
        W_{\hat{\rho},\test}
        &= \Tr{\left(I_A \otimes\Pibar\right)\hat{\rho}_{AB|\test}}\\
        &= \Tr{\left(\Lambda_{\supp\left(\Pi^{\test}\right)} \otimes\Pibar\right)\hat{\rho}_{AB|\test}}\\
        &= \sum_{\alpha,x}\Tr{\left(M_{\alpha,x}^{A|\test}\otimes\Pibar\right)\hat{\rho}_{AB|\test}}\\
        &= \sum_{\alpha,x}p(\alpha,x|\test)\Tr{\Pibar\hat{\rho}_{B|(\test\wedge\alpha,x)}}\\
        &= \sum_{\alpha,x}p(\alpha,x|\test)W_{\hat{\rho},\test}^{\alpha,x}.
    \end{aligned}
\end{equation}
    Here, the fourth equality follows directly from the definition of the normalized conditional state \(\hat{\rho}_{B|(\test\wedge\alpha,x)}\), according to which
    \begin{equation}
        \begin{split}
            &\Tr{\left(M_{\alpha,x}^{A|\test}\otimes\Pibar\right)\hat{\rho}_{AB|\test}} \\ 
            =& p(\alpha,x|\test)\Tr{\Pibar\hat{\rho}_{B|(\test\wedge\alpha,x)}},
        \end{split}
    \end{equation}
    while the final equality follows from the definition of \(W_{\hat{\rho},\test}^{\alpha,x}\).
    Next, we project Bob's system \(B\) onto the cutoff subspace \(B_{N_c}\) and define the resulting subnormalized state \(\rho\in\dop{\leq}(AB_{N_c})\) as
    \begin{equation}
        \rho_{AB_{N_c}} \defvar \left(I_A\otimes\Pi\right)\hat{\rho}_{AB}\left(I_A\otimes\Pi\right).
    \end{equation}    
    By the definition of \(W_{\hat{\rho}}\), the subnormalized projected state $\rho_{AB_{N_c}}$ satisfies
    \begin{equation}
        \Tr{\rho_{AB_{N_c}}}=\Tr{\left(I_A\otimes\Pi\right)\hat{\rho}_{AB}}=1-W_{\hat{\rho}}.
    \end{equation}
    Moreover, using \(\Pi\Pibar=0\) and the fact that the off-diagonal terms vanish under the partial trace over \(B\), we obtain
    \begin{align}
        \pTr{B}{\left(I_A\otimes\Pibar\right)\hat{\rho}_{AB}\left(I_A\otimes\Pibar\right)} = \sigma_A^{(j)}-\pTr{B_{N_c}}{\rho_{AB_{N_c}}}.
    \end{align}
    Since the partial trace of a positive semidefinite operator is itself positive semidefinite, it follows that
    \begin{equation}
        \sigma_A^{(j)}-\pTr{B_{N_c}}{\rho}\geq0,
    \end{equation}
    or, equivalently,
    \begin{equation}
        \pTr{B_{N_c}}{\rho}\leq\sigma_A^{(j)}.
    \end{equation}    

    We can therefore define the normalized tagged state $\widetilde{\rho} \in \dop{=}(A \widetilde{B})$ by
    \begin{equation}
        \widetilde{\rho}_{A\widetilde{B}} \defvar \rho_{AB_{N_c}} + \left( \sigma_A^{(j)} - \pTr{B_{N_c}}{\rho_{AB_{N_c}}} \right) \otimes \pure{\star},
    \end{equation}
    where we again defined $\widetilde{B} = B_{N_c} \oplus \Span\{ \ket{\star} \}$. This state is positive semidefinite and normalized, since
    \begin{equation}
        \Tr{\widetilde{\rho}_{A\widetilde{B}}}
        =\Tr{\rho_{AB_{N_c}}}+\Tr{\left(\sigma_A^{(j)}-\pTr{B_{N_c}}{\rho_{AB_{N_c}}}\right)}
        =1.
    \end{equation}
    Moreover, its marginal on Alice's system is
\begin{equation}
        \begin{aligned}
        \pTr{\widetilde{B}}{\widetilde{\rho}_{A\widetilde{B}}} &= \pTr{B_{N_c}}{\rho_{AB_{N_c}}} +  \sigma_A^{(j)} - \pTr{B_{N_c}}{\rho_{AB_{N_c}}} \\
        &= \sigma_A^{(j)}.
    \end{aligned}
\end{equation}
    Since the test and generation events are determined entirely by Alice, conditioning on either event commutes with the projection-and-tagging operation on Bob's system. In particular,
    \begin{equation}
        \rho_{AB_{N_c}|\test} = \left(I_A \otimes \Pi \right) \hat{\rho}_{AB|\test} \left(I_A \otimes \Pi \right)
    \end{equation}
    and
    \begin{equation}
        \widetilde{\rho}_{A\widetilde{B}|\test}  = \begin{aligned}[t]
        &\rho_{AB_{N_c}|\test} + \bigl( \pTr{B}{ \hat{\rho}_{AB|\test}} \\
        &- \pTr{B_{N_c}}{\rho_{AB_{N_c}|\test}} \bigr) \otimes \pure{\star},
        \end{aligned}
    \end{equation}
    where
    \begin{equation}
        \pTr{B}{\hat{\rho}_{AB|\test}} = \frac{\sqrt{\Pi^{\test}}\sigma_A^{(j)}\sqrt{\Pi^{\test}}}{\gamma}.
    \end{equation}
    The analogous identities hold for generation rounds, with \(\Pi^{\test}\) and \(\gamma\) replaced by \(\Pi^{\gen}\) and \(1-\gamma\), respectively.

    By the definitions of the conditional weights, the projected conditional states satisfy
    \begin{align}
        \Tr{\rho_{AB_{N_c}|\test}}&=\Tr{\left(I_A\otimes\Pi\right)\hat{\rho}_{AB|\test}}=1-W_{\hat{\rho},\test},\\
        \Tr{\rho_{AB_{N_c}|\gen}}&=\Tr{\left(I_A\otimes\Pi\right)\hat{\rho}_{AB|\gen}}=1-W_{\hat{\rho},\gen}.
    \end{align}
    Likewise, for every \((\alpha,x)\) such that
    \(p(\alpha,x|\test)>0\), defining
    \begin{equation}
        \rho_{AB_{N_c}|(\test\wedge\alpha,x)}\defvar\left(I_A\otimes\Pi\right)\hat{\rho}_{AB|(\test\wedge\alpha,x)}\left(I_A\otimes\Pi\right),
    \end{equation}
    we obtain
    \begin{equation}
        \Tr{\rho_{AB_{N_c}|(\test\wedge\alpha,x)}}=1-W_{\hat{\rho},\test}^{\alpha,x}.
    \end{equation}

    Next, we derive upper and lower bounds on the test-round statistics following \cite[Theorem~4]{Upadhyaya_2021}. Since \([\Pi,M_{\beta,y}^B]=0\), Bob's POVM elements decompose as
    \begin{equation}\label{eq:M_B_decompose_Pi}
        M_{\beta,y}^B=\Pi M_{\beta,y}^B\Pi+\Pibar M_{\beta,y}^B\Pibar.
    \end{equation}
    \begin{widetext}
    Thus, for the lower bound on \(\bsym{\nu}_{|\test}\), we find
   \begin{equation}
        \begin{aligned}
        \bsym{\nu}_{|\test}
        &=\probst_j[\hat{\rho}_{AB|\test}]\\
        &=\sum_{\cP\in\Ct}\sum_{\substack{(\alpha,x,\beta,y)\\\in\phi^{-1}(\cP)}}\Tr{\left(M_{\alpha,x}^{A|\test}\otimes M_{\beta,y}^B\right)\hat{\rho}_{AB|\test}}\hat{e}_{\cP}\\
        &\geq\sum_{\cP\in\Ct}\sum_{\substack{(\alpha,x,\beta,y)\\\in\phi^{-1}(\cP)}}\Tr{\left(M_{\alpha,x}^{A|\test}\otimes M_{\beta,y}^B\right)\left(I_A\otimes\Pi\right)\hat{\rho}_{AB|\test}\left(I_A\otimes\Pi\right)}\hat{e}_{\cP}\\
        &=\begin{aligned}[t]
            \sum_{\cP\in\Ct}\sum_{\substack{(\alpha,x,\beta,y)\\\in\phi^{-1}(\cP)}}\Tr{\left(M_{\alpha,x}^{A|\test} \otimes \widetilde{M}_{\beta,y}^B \right)
            \Bigg(\rho_{AB_{N_c}|\test}
            +\left(\pTr{B}{\hat{\rho}_{AB|\test}}-\pTr{B_{N_c}}{\rho_{AB_{N_c}|\test}}\right)\otimes\pure{\star}\Bigg)}
            \hat{e}_{\cP}
        \end{aligned}\\
        &=\sum_{\cP\in\Ct}\sum_{\substack{(\alpha,x,\beta,y)\\\in\phi^{-1}(\cP)}}\Tr{\left(M_{\alpha,x}^{A|\test}\otimes\widetilde{M}_{\beta,y}^B\right)\widetilde{\rho}_{A\widetilde{B}|\test}}\hat{e}_{\cP}\\
        &=\probstTag_j[\rho_{AB_{N_c}|\test}].
    \end{aligned}
   \end{equation}

    Here, the inequality in step three follows from \cref{eq:M_B_decompose_Pi}, since the contribution from the complementary subspace is nonnegative due to
    \begin{equation}
        M_{\alpha,x}^{A|\test} \otimes \Pibar M_{\beta,y}^B \Pibar \geq 0.
    \end{equation}
    Step four follows because \(\widetilde{M}_{\beta,y}^B\) agrees with \(\Pi M_{\beta,y}^B\Pi\) on \(B_{N_c}\) and vanishes on \(\Span\{\ket{\star}\}\). Steps five and six use the definitions of the normalized tagged state and the tagged statistic-generating map, respectively.

    Similarly, for the upper bound, we find
  \begin{equation}
        \begin{aligned}
        \bsym{\nu}_{|\test}
        &\leq\sum_{\cP\in\Ct}\sum_{\substack{(\alpha,x,\beta,y)\\\in\phi^{-1}(\cP)}}\Tr{\left(M_{\alpha,x}^{A|\test}\otimes\widetilde{M}_{\beta,y}^{B}\right)\widetilde{\rho}_{A\widetilde{B}|\test}}\hat{e}_{\cP}
        + W_{\hat{\rho},\test} \sum_{\cP\in\Ct}\sum_{\substack{(\alpha,x,\beta,y)\\\in\phi^{-1}(\cP)}} \left\|M_{\alpha,x}^{A|\test} \otimes \Pibar M_{\beta,y}^B \Pibar \right\|_{\infty} \hat{e}_{\cP}\\
        &=\sum_{\cP\in\Ct} \sum_{\substack{(\alpha,x,\beta,y)\\\in\phi^{-1}(\cP)}} \Tr{\left(M_{\alpha,x}^{A|\test}\otimes\widetilde{M}_{\beta,y}^{B}\right) \widetilde{\rho}_{|\test}} \hat{e}_{\cP}
        +W_{\hat{\rho},\test} \bsym{\delta}\\
        &=\probstTag_j[\rho_{AB_{N_c}|\test}]+ W_{\hat{\rho},\test} \bsym{\delta}.
    \end{aligned}
  \end{equation}
    \end{widetext}
    Here, the inequality follows from Hölder's inequality and the definition of $W_{\hat{\rho},\test}$.

    As the third part of the proof, we bound the entropy for the state conditioned on a key generation round. For this purpose, define the following states
    \begin{equation}
    \begin{gathered}
        \widetilde{\omega}_{SI\CP E \widetilde{E}|\gen} \defvar \EATchannQKDTag_j \left[\widetilde{\rho}_{A\widetilde{B} E \widetilde{E}|\gen} \right],\\
        \omega_{SI\CP E|\gen} \defvar \EATchannQKD_j \left[\hat{\rho}_{ABE|\gen}\right],
    \end{gathered}
    \end{equation}
    where $\widetilde{\rho}_{A\widetilde{B} E \widetilde{E}|\gen}$ and $\hat{\rho}_{ABE|\gen}$ are purifications of $\widetilde{\rho}_{A\widetilde{B}|\gen}$ and $\hat{\rho}_{AB|\gen}$, respectively.
    Then, by \cref{lem:tagged_QKD_entropy_bound}, it holds
    \begin{equation}
        \renyiSandUp_{\alpha}(S|IE)_{\omega_{|\gen}} \geq \renyiSandUp_{\alpha}(S|IE\widetilde E)_{\widetilde{\omega}_{|\gen}} - \Delta_{\alpha}\!\left(\sqrt{W_{\hat{\rho},\gen}}\right).
    \end{equation}    

    As the fourth step, we incorporate the dimension-reduction constraints. Applying the operator inequality
    \begin{equation}
        \Pibar\leq rV_1
    \end{equation}
    to the normalized conditional state \(\hat{\rho}_{B|(\test\wedge\alpha,x)}\) gives
\begin{equation}
        \begin{aligned}
        W_{\hat{\rho},\test}^{\alpha,x}
        &=\Tr{\Pibar\hat{\rho}_{B|(\test\wedge\alpha,x)}}\\
        &\leq r\Tr{V_1\hat{\rho}_{B|(\test\wedge\alpha,x)}}.
    \end{aligned}
\end{equation}
    Multiplying both sides by \(p(\alpha,x|\test)>0\), we obtain
    \begin{equation}
        p(\alpha,x|\test)W_{\hat{\rho},\test}^{\alpha,x}
        \leq r\,p(\alpha,x|\test)\Tr{V_1\hat{\rho}_{B|(\test\wedge\alpha,x)}}.
    \end{equation}
    By the definition of the normalized conditional state,
\begin{equation}
        \begin{aligned}
        &p(\alpha,x|\test)\Tr{V_1\hat{\rho}_{B|(\test\wedge\alpha,x)}}\\
        &=\Tr{\left(M_{\alpha,x}^{A|\test}\otimes V_1\right)\hat{\rho}_{AB|\test}}\\
        &=\nu_{\alpha,x,1}.
    \end{aligned}
\end{equation}
    Therefore, we find
    \begin{equation}
        p(\alpha,x|\test)W_{\hat{\rho},\test}^{\alpha,x}
        \leq r \nu_{\alpha,x,1}.
    \end{equation}
    If $p(\alpha,x|\test)=0$, this constraint holds trivially because then also $\nu_{\alpha,x,1}=0$ and the constraint amounts to $0\leq 0$.
    
    The weights associated with the projected, subnormalized state $\rho$ agree with those defined from the original state $\hat{\rho}$. Indeed, the trace identities derived above imply
    \begin{align}
        W_{\rho} &\defvar 1-\Tr{\rho} = W_{\hat{\rho}},\\
        W_{\rho,\test} &\defvar 1-\Tr{\rho_{|\test}} =W_{\hat{\rho},\test},\\
        W_{\rho,\gen} &\defvar 1-\Tr{\rho_{|\gen}} =W_{\hat{\rho},\gen},\\
        W_{\rho,\test}^{\alpha,x} &\defvar 1-\Tr{\rho_{|(\test\wedge\alpha,x)}} = W_{\hat{\rho},\test}^{\alpha,x}.
    \end{align}
    We may therefore drop the state labels and denote these weights by \(W\), \(W_{\test}\), \(W_{\gen}\), and \(W_{\test}^{\alpha,x}\), respectively.

    The last step of the proof concerns the feasible sets. So far the weights depend on the particular state. We will regard them as optimization variables themselves due to the following argument.
    
    Every feasible infinite-dimensional state \(\hat{\rho}_{AB}\) induces a projected state \(\rho_{AB_{N_c}}\), a vector \(\bsym{\nu}_{|\test}\), and weights \(W\), \(W_{\test}\), \(W_{\gen}\), and \(W_{\test}^{\alpha,x}\) satisfying all the constraints in the theorem statement. Treating \(W_{\test}^{\alpha,x}\) and \(W_{\gen}\) as optimization variables subject only to the stated constraints can only enlarge the feasible set and therefore cannot increase the infimum. Combining this relaxation with the bounds derived above proves the claimed lower bound on \(\kapup_j\).
\end{proof}

\section{Auxiliary Infinite Dimensional Results}\label{app:inf-dim-stuff}
Here we prove various results in operator theory and entropies on separable Hilbert spaces that we need for the other proofs. In Section \ref{app:inf-dim-prelim}, we provide what we believe to be sufficient background to follow this section. In Section \ref{app:inf-dim-variational-form}, we prove the variational form of Schatten norms for separable Hilbert spaces as we could not find a reference. In Section \ref{app:inf-dim-norm-contracts}, we extend some known Schatten norm contractivity results to separable Hilbert spaces. In Section \ref{app:inf-dim-LHL-against-inf-side-info}, we extend Dupuis' leftover hashing lemma for R\'{e}nyi entropies \cite{Dupuis_2022} to applying to infinite-dimensional side information. In Section \ref{app:inf-dim-duality}, we establish duality between Petz and Sandwiched \Renyi entropies with infinite-dimensional side-information so long as the unconditioned system is finite-dimensional. In Section \ref{app:inf-dim-DPI}, we explain how the DPI extends to separable Hilbert spaces. Finally, in Section \ref{app:inf-dim-classical-side-info}, we establish the chain rule for classical side-information when other systems are separable Hilbert spaces. To the best of our knowledge, this proof method is novel in that it is direct rather than using duality as in \cite{Tomamichel_2016}.

\subsection{Preliminaries}\label{app:inf-dim-prelim}
\paragraph{Operator Theory} We let $A$ and $B$ denote separable Hilbert spaces. We let $\cB(A,B)$  (resp.~$\cK(A,B$)~) denote the bounded (resp.~compact) operators from $A$ to $B$. For notational simplicity, we let $\cB(A) \coloneq \cB(A,A)$ and similarly for $\cK(A)$. We recall the Schatten $p$-norms for $p \in (1,\infty)$:
    \begin{align}\label{eq:Schatten-norm}
     \Vert T \Vert_{p} \coloneq \left[\Tr{\vert T \vert^{p}} \right]^{1/p} \ ,
    \end{align}
where $\vert T \vert = \sqrt{T^{\ast}T}$ and $\Tr{\cdot}$ is the trace. An operator $T :A \to B$ is Schatten $p$-class if and only if $\Vert T \Vert_{p} < +\infty$ and denote the set of Schatten $p$-class operators from $A$ to $B$ by $\cS_{p}(A,B)$. We note that it is known that the set of Schatten $1$-class operators are the trace class operators, $\cT(A,B)$. These definitions are critical for two reasons. First, these norms are used to define the sandwiched R\'{e}nyi divergences. Second, as finite-dimensional matrices are Schatten $p$-class for all $p$, but not all operators are, our basic concern is somehow this leads to the break down of some step in Dupuis' proof.

Following \cite{Mosonyi_2023}, we denote bounded self-adjoint (i.e. $X = X^{\ast}$) operators on $A$ by $\cB(A)_{sa}$ and remind the reader that being `Hermitian' is a strictly weaker condition in infinite dimensions than being self-adjoint. We denote the set of bounded non-zero positive semidefinite operators on $A$ by $\cB(A)_{\gneq 0}$ and $\cB(A)_{\geq 0} = \cB(A)_{\gneq 0} \cup \{0_{A}\}$. We denote the set of density matrices on $A$ by $\cS(A)$. We denote the set of normal operators on $A$ by $\cN(A)$. 

With regards to linear maps, we will be interested in ones that take bounded operators to bounded operators, i.e. linear maps $\msf{N}:\cB(A) \to \cB(B)$. If a linear map satisfies $\msf{N}(X^{\ast}) = \msf{N}(X)^{\ast}$ we call it self-adjoint as it will take self-adjoint operators to self-adjoint operators. For a linear map $\msf{N}$, the $\cS_{p} \to \cS_{q}$ norm is 
\begin{align}
    \Vert N \Vert_{p \to q} \coloneq \sup_{X \neq 0} \frac{\Vert N(X) \Vert_{q}}{\Vert X \Vert_{p}} \ . 
\end{align}

We recall (see e.g.~\cite{}) that compact operators are basically operators where all the standard finite dimensional stuff still works. In particular, they have singular value decompositions, i.e. for $T \in \cK(A,B)$, 
\begin{align}
    T = \sum_{n=1}^{\infty} \sqrt{\lambda_{n}}\ket{\phi_{n}}\bra{\psi_{n}} \ , 
\end{align}
where $(\lambda_{n})$ are the eigenvalues of $T^{\ast}T$, $(\ket{\psi_{n}})_{n}$ is an orthonormal basis and $(\ket{\phi_{n}} \coloneq W\ket{\psi_{n}})_{n}$ where $W$ is a partial isometry. This is particularly useful as then defining $\sigma_{n}(T) \coloneq \sqrt{\lambda_{n}}$, we have
\begin{align}
    \Vert T \Vert_{p} = \left[\Tr{\vert T \vert^{p}} \right]^{1/p} = \left[\sum_{n=1}^{\infty} \sigma_{n}(T)^{p}\right]^{1/p} = \Vert \sigma(T) \Vert_{p} \ .
\end{align}
Moreover, for compact, self-adjoint operators the functional calculus works the same way as in finite dimensions (except note that it may change whether it is of a certain Schatten $p$-class or even bounded). 

% {\color{blue}*** I wound up not needing topology, but I left it in case it helps with Lars's stuff ****}
% \paragraph{Topology} We next state some topological definitions. We start with the standard ones.
% \begin{definition}
%     Let $T\in \cB(A)$ and $(T_{n})_{n} \subset \cB(A)$.
%     \begin{enumerate}
%         \item $(T_{n})_{n}$ converges strongly to $T$ if $\lim_{n \to \infty} \Vert Tx - T_{n}x\Vert_{A} = 0 $ for all $x \in A$.
%         \item $(T_{n})_{n}$ converges weakly to $T$ if $\lim_{n \to \infty} \langle y, T_{n}x \rangle = \langle y , Tx \rangle_{A}$ for all $x,y \in \cH$. 
%     \end{enumerate}
% \end{definition}
% We then define a map generalization of strong convergence.
% \begin{definition}
%     \cite{Shirokov_2007} Let $\Phi \in \cL(\cT(A),\cT(B))$ and $(\Phi_{n})_{n} \subset \cL(\cT(A),\cT(B))$. Then we say it converges strongly if 
%     \begin{align}
%         \lim_{n \to \infty} \Vert (\Phi - \Phi_{n})(X) \Vert_{1} = 0 \quad \forall X \in \cT(A) \ . 
%     \end{align}
% \end{definition}
% {\color{blue} **** }

\paragraph{Entropies} We now define the relevant sandwiched R\'{e}nyi quantities for separable Hilbert spaces.
\begin{definition}\label{def:inf-dim-renyis}
    Let $P,Q \in \cB(A)_{\gneq 0}$. Let $\alpha > 1$.
    We define the Sandwiched and Petz mean functionals by: %This uses Petz is alpha,z where z = 1 and Sandwiched is when z = alpha
    \begin{align}
        \wt{Q}_{\alpha}(P \Vert Q) &\coloneq \begin{cases}\label{eq:inf-dim-sandwiched-Q-function}
            \Vert Q^{\frac{1-\alpha}{2\alpha}}P^{\frac{1}{2}} \Vert_{2\alpha}^{2\alpha} & \ran(P^{\frac{\alpha}{2\alpha}}) \subseteq \ran(Q^{\frac{\alpha-1}{2\alpha}}) \\ 
            +\infty & \text{otherwise} 
        \end{cases} \\
        \ol{Q}_{\alpha}(P \Vert Q) &\coloneq \begin{cases}
            \Vert Q^{\frac{1-\alpha}{2}} P^{\frac{\alpha}{2}} \Vert_{2}^{2} & \ran(P^{\frac{\alpha}{2}}) \subseteq \ran(Q^{\frac{\alpha-1}{2}}) \\ 
            +\infty & \text{otherwise} \ . 
        \end{cases}
    \end{align}
    
    We then define the sandwiched \Renyi and Petz divergences respectively as
    \begin{align}
        &\wt{D}_{\alpha}(P \Vert Q) \coloneq \frac{1}{\alpha-1}\log \wt{Q}_{\alpha}(P \Vert Q) - \frac{1}{\alpha-1}\log(\Tr{P}), \\
        &\wt{D}_{\alpha}(P \Vert Q) \coloneq \frac{1}{\alpha-1}\log \ol{Q}_{\alpha}(P \Vert Q)  - \frac{1}{\alpha-1}\log(\Tr{P}) \ .
    \end{align}
\end{definition}
\begin{remark} For simplicity, the above are stated as a definition. However, that these definitions are well-defined come from the following points from \cite{Mosonyi_2023}: Lemma III.1, Definition III.2, Definition III.9, the unnumbered equation beneath Eqn.~(3.14), and that the $(\alpha,z)$-divergence is the sandwiched \Renyi divergence when $\alpha = z$.
\end{remark}

We may then define the relevant \Renyi entropies.
\begin{definition}
    Let $\rho \in \cS_{=}(A \otimes B),\sigma \in \cS(B)_{=}$. The sandwiched and Petz \Renyi entropy of $\rho$ conditioned on $\sigma$ are
    \begin{align}
        \mbb{H}_{\alpha}(A \vert B)_{\rho \vert \sigma} \coloneq -\mbb{D}_{\alpha}(\rho_{AB} \Vert I_{A} \otimes \sigma_{B}) \ , 
    \end{align}
    where $\mbb{D} \in \{\wt{D},\ol{D}\}$. Then we have
    \begin{align}
        \mbb{H}^{\uparrow}_{\alpha}(A \vert B)_{\rho} &\coloneq \sup_{\sigma_{B} \in S_{=}(B)} \mbb{H}_{\alpha}(A \vert B)_{\rho \vert \sigma} \\
        \mbb{H}_{\alpha}^{\downarrow}(A \vert B)_{\rho} &\coloneq \mbb{H}_{\alpha}(A \vert B)_{\rho \vert \rho_{B}} \ . 
    \end{align}
\end{definition}

\begin{remark}\label{rem:non-trivial-cases}
    We remark the non-trivial cases of the mean functionals arise for both up and down arrow entropies. To see this, it suffices to see it for the down arrow case. By the same argument as in finite dimensions (c.f.~\cite[Lemma B.4.1]{Renner_2006}), for $0 \leq P_{AB} \in \cB(A \otimes B)$, $\ran(P_{AB}) \subset \ran(P_{A}) \otimes \ran(P_{B})$. Then we have for $\alpha >1, z > 1$
    \begin{align*}
        \ran((I_{A} \otimes \rho_{B})^{\frac{\alpha-1}{2z}}) &= A \otimes \ran(\rho_{B}^{\frac{\alpha-1}{2z}}) \supseteq A \otimes \ran(\rho_{B}^{\frac{\alpha}{2z}}) \\&\supseteq \ran(\rho_{A}^{\frac{\alpha}{2z}}) \otimes \ran(\rho_{B}^{\frac{\alpha}{2z}}) \supset \ran(\rho_{AB}^{\frac{\alpha}{2z}}) \ , 
    \end{align*}
    where we used $\ran(P^{\alpha}) \supseteq \ran(P^{\beta})$ for $P \geq 0$ and $\beta > \alpha$. This shows that we have the conditions for the norm definition.
\end{remark}

Finally, as the $f$-weighted \Renyi entropies are simply defined by evaluating the sandwiched \Renyi entropies conditioned on a classical register, they immediately extend.
\begin{definition}\label{def:inf-dim}
	Let $Q,Q'$ be separable Hilbert spaces and $\alphCP$ be a finite alphabet. Consider classical-quantum state $\rho_{\CP Q Q'} = \sum_{c \in \CP} p_{C}(c)\dyad{c}_{C} \otimes \rho_{\vert \cP}$ where $\{\rho_{\vert \cP}\}_{\cP \in \alphCP} \subset S_{=}(QQ')$. Let $f:\alphCP \to \R$ and $\alpha\in(0,1)\cup (1,\infty)$, we define the up-arrow and down-arrow \emph{$f$-weighted \Renyi entropy} of order $\alpha$ for $\rho$, as follows:
	\begin{equation}\label{eq:upfweighted entropy}
		\begin{split}
			&\frenyiSandUp_\alpha(Q|\CP Q')_{\rho} \defvar \frac{\alpha}{1-\alpha} \log \left( \sum_{\cP} \rho(\cP) \, 2^{\frac{1-\alpha}{\alpha} \left(-f(\cP) + \renyiSandUp_{\alpha}(Q|Q')_{\rho_{|\cP}} \right) } \right) ,
		\end{split}
	\end{equation}
	\begin{equation}\label{eq:downfweighted entropy2}
		\begin{split}
			&\frenyiSandDown_\alpha(Q|\CP Q')_{\rho} \defvar \frac{1}{1-\alpha} \log \left( \sum_{\cP} \rho(\cP) \, 2^{(1-\alpha) \left(-f(\cP) + \renyiSandDown_{\alpha}(Q|Q')_{\rho_{|\cP}} \right) } \right),
		\end{split}
	\end{equation}
	where the sums run over all $\cP$ values such that $\rho(\cP)>0$. We extend both definitions to $\alpha=\infty$ by taking the $\alpha\to\infty$ limit.
\end{definition}

\paragraph{$\lambda$-Expected Contractive Maps}
Following \cite{Colomer_2024}, we use the following definition.
\begin{definition}
    Let $\omega \in \cS(C)$. Consider the replacer channel $\cM^{\omega}_{A\to C}(\cdot) \coloneq \Tr{\cdot }\omega_{C}$. We say a family of channels $(\cR^{h}_{A \to C})_{h}$ equipped with distribution $q$ is $\lambda$-expected-contractive towards $\omega$ if for all $\rho \in \cB(A \otimes E)$,
    \begin{align}
        \mbb{E}_{h \sim q} \Vert (\cR^{h} - \cM^{\omega})(\rho_{AE}) \Vert_{2} \leq \lambda \Vert \rho_{AE} \Vert_{2} \ .
    \end{align}
\end{definition}
\noindent We use this definition for two reasons. First, it's more general. Second, we are going to prove a result that in principle holds when the $C$ system in the above definition is infinite-dimensional. In infinite dimensions, the maximally mixed state is not well-defined, so the result wouldn't even make sense.

\subsection{Variational Form of Schatten Norms}\label{app:inf-dim-variational-form}
Here we prove the variational form of Schatten $p$-norms in infinite dimensions, which we could not find explicitly in the literature, but clearly should be true so long as the operator is appropriately Schatten class.

The following proposition explains when the trace works the same way in infinite dimensions.
\begin{proposition}\label{prop:trace-properties} ~
    \begin{enumerate}
        \item  \cite[Chap. 3, Lemma 8.1]{Gohberg_1978} $\Tr{X}$ is independent of the choice of orthonormal basis. It follows $X \in \cS_{1}(A)$ if and only if $\Tr{X} < \infty$. 
        \item \cite[Chap. 3, Theorem 8.2]{Gohberg_1978} Let $X \in \cS_{\infty}(A,B)$, $Y \in \cB(B,A)$ such that $XY \in \cS_{1}(A)$ and $YX \in \cS_{1}(B)$, then $\Tr{XY} = \Tr{YX}$.
        \item \cite[Chap. 3, Theorem 8.4]{Gohberg_1978} If $X \in \cS_{1}(A)$, $\Tr{X} = \sum_{i=1}^{\nu(X)} \lambda_{j}(X)$ where $\nu(X)$ is the sum of the algebraic multiplicities of all the non-zero eigenvalues $X$.
        \item \cite[Chap. 3, Theorem 8.5]{Gohberg_1978} If $X \in \cS_{1}(A)$, $\vert \Tr{X} \vert \leq \Vert X \Vert_{1}$.
    \end{enumerate}
\end{proposition}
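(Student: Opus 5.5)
Each of the four items is a classical statement from the theory of trace-class operators on separable Hilbert spaces, and the paper already attributes each one to \cite{Gohberg_1978}. My plan is therefore to reproduce short self-contained arguments for items 1, 2 and 4, which only need the singular value decomposition of compact operators recalled above, and to invoke the reference for item 3 without reproving it.

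\emph{Item 1.} Take $X\geq 0$ and two orthonormal bases $(e_i)$, $(f_k)$. Write $\langle e_i, X e_i\rangle = \|X^{1/2}e_i\|^2 = \sum_k |\langle f_k, X^{1/2} e_i\rangle|^2$ by Parseval. All terms in the resulting double series are nonnegative, so Tonelli's theorem allows the two sums to be swapped. Applying Parseval again gives $\sum_k \|X^{1/2} f_k\|^2$, which shows basis independence in $[0,\infty]$.
\begin{itemize}
\item Applying this to $|X|$ shows that $\Tr{|X|}$ is well defined. Hence $X\in\cS_1(A)$ exactly when this quantity is finite, which is the stated equivalence.
\item For general trace-class $X$, I decompose $X$ into real and imaginary parts and each of those into positive and negative parts. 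Each piece is trace class, so basis independence follows by linearity from the positive case.
\end{itemize}

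\emph{Item 4.} Write $X=\sum_n \sigma_n\ket{\phi_n}\bra{\psi_n}$ using the singular value decomposition. Computing the trace in any basis gives $\Tr{X}=\sum_n \sigma_n \langle \psi_n,\phi_n\rangle$, and this series converges absolutely because $\sum_n\sigma_n<\infty$. Cauchy--Schwarz gives $|\langle\psi_n,\phi_n\rangle|\leq 1$, hence $|\Tr{X}|\leq \sum_n \sigma_n=\|X\|_1$.

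\emph{Item 2.} The idea is to approximate by finite rank and pass to the limit, with item 4 supplying the continuity.
\begin{itemize}
\item When $X$ has finite rank, both $XY$ and $YX$ have finite rank. Writing $X=\sum_{n\le N}\sigma_n\ket{\phi_n}\bra{\psi_n}$, a direct computation gives $\Tr{XY}=\sum_n \sigma_n\langle \psi_n, Y\phi_n\rangle=\Tr{YX}$.
\item For general $X$, truncate the singular value decomposition to obtain finite-rank $X_N$, and let $P_N$, $Q_N$ be the orthogonal projections onto $\Span\{\phi_n\}_{n\le N}$ and $\Span\{\psi_n\}_{n\le N}$. Then $X_N=P_N X = XQ_N$.
\item This gives $X_NY = P_N(XY)$ and $YX_N=(YX)Q_N$.
\item Since $XY$ and $YX$ are trace class, $\|P_N XY - P XY\|_1\to 0$ and $\|YXQ_N - YX Q\|_1\to 0$, where $P$ and $Q$ are the limiting projections onto the closures of the ranges. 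This uses the standard fact that strongly convergent bounded sequences, applied on one side of a fixed trace-class operator, converge in trace norm.
\item One checks $PXY=XY$ and $YXQ=YX$, because $Q$ acts as the identity on $(\ker X)^\perp$. Item 4 then transfers the finite-rank identity to the limit.
\end{itemize}

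\emph{Item 3} is Lidskii's theorem and is the main obstacle. The easy case is normal trace-class $X$, where diagonalising in an orthonormal eigenbasis and using item 1 gives the sum of the eigenvalues at once. The general non-normal case is genuinely hard: the usual route uses Schur/Volterra triangularisation together with the fact that quasinilpotent trace-class operators have zero trace, or alternatively a Fredholm-determinant argument. Reproducing either would be lengthy and unrelated to the rest of the paper. I would therefore simply cite \cite[Chap.~3, Theorem~8.4]{Gohberg_1978}, noting that the later uses in this paper only involve self-adjoint operators, where the elementary normal case suffices.
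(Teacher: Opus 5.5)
Your proposal is correct. It differs from the paper in self-containment rather than in strategy. The paper proves nothing here and imports all four items from Gohberg--Krein. You instead derive items 1, 2 and 4 from the Schmidt expansion $X=\sum_n\sigma_n\ket{\phi_n}\bra{\psi_n}$, using only Parseval and Fubini, and import just Lidskii's theorem.

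Your route shows that the facts the appendix actually relies on are elementary:
\begin{itemize}
\item basis independence is what makes $\Vert\cdot\Vert_p$ well defined;
\item item 4 is the step used in the variational formula for Schatten norms and in the boundedness estimate for the interpolation function;
\item item 2, in the compact-$X$ form stated, needs no spectral theory;
\item item 3 is only applied to positive operators such as $|X|^p$, where, as you observe, the spectral theorem suffices.
\end{itemize}
The citation, for its part, supplies the hard non-normal case of Lidskii at no cost. Note that your item-2 argument uses compactness of $X$ essentially. That is what $\mathfrak{S}_\infty$ means in the cited source. If $\cS_\infty$ were read as all bounded operators, the identity would still hold, but the standard proof goes through Lidskii, via the coincidence of the nonzero eigenvalues of $XY$ and $YX$ with algebraic multiplicities. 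Your truncation argument would not cover that reading.

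Two steps can be tightened.
\begin{itemize}
\item In item 1, splitting $X$ into real, imaginary, positive and negative parts silently uses that $\cS_1$ is a vector space closed under adjoints, i.e.\ the trace-norm triangle inequality. This is not free in infinite dimensions. Your item-4 computation already gives basis independence for every trace-class $X$: the bound $\sum_n\sigma_n\sum_i|\langle e_i,\phi_n\rangle\langle\psi_n,e_i\rangle|\leq\sum_n\sigma_n$ justifies exchanging the sums in $\sum_i\langle e_i,Xe_i\rangle$. Relatedly, the second sentence of item 1 is literally true only for $X\geq 0$, since a non-trace-class compact operator can have vanishing diagonal in some basis. Reading it through $\Tr{|X|}$, as you do, is the right interpretation.
\item In item 2 the limit can be avoided. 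Evaluate $\Tr{XY}$ in an orthonormal basis extending $(\phi_n)$, and $\Tr{YX}$ in one extending $(\psi_n)$. The added vectors are orthogonal to $\operatorname{ran}X$, respectively lie in $\ker X$, so they contribute nothing. The remaining terms agree one by one: $\langle\phi_n,XY\phi_n\rangle=\sigma_n\langle\psi_n,Y\phi_n\rangle=\langle\psi_n,YX\psi_n\rangle$. If you keep the limit, note that ``strong convergence on one side gives trace-norm convergence'' holds for multiplication on the right only when the adjoints also converge strongly. This is true for your projections $Q_N$, but not in general.
\end{itemize}
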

The following states H\"{o}lder's inequality holds in infinite dimensions.
\begin{proposition}\label{prop:Holder's-inequality}
    \cite[Special Case of Eq.~7.5]{Gohberg_1978} Let $0 < p_{i} \leq \infty$, $X_{i} \in \cS_{p_{i}}(A_{i},A_{i+1})$, and $p^{-1} = \sum_{i=1}^{n} p_{i}^{-1} \leq 1$. For $X \coloneq X_{1}X_{2}\cdots X_{n}$,
        \begin{align}
            \Vert X \Vert_{p} \leq \Pi_{i=1}^{n} \Vert X_{i} \Vert_{p_{i}} \ . 
        \end{align}
        In particular, for $1 \leq p,q \leq \infty$ such that $\frac{1}{p} + \frac{1}{q} = 1$ and $X \in \cS_{p}(A,B)$, $Y \in \cS_{q}(B,A)$, then
        \begin{align}
            \Vert XY \Vert_{1} \leq \Vert X \Vert_{p} \Vert Y \Vert_{q} \quad \Vert YX \Vert_{1} \leq \Vert X \Vert_{p} \Vert Y \Vert_{q} \ , 
        \end{align}
        which shows $XY, YX \in \cS_{1}$.
\end{proposition}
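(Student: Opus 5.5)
This is a citation to Gohberg--Krein's Eq.~7.5, so the plan is to reduce to the standard two-factor H\"{o}lder inequality for Schatten classes on separable Hilbert spaces and then induct on $n$. The base case is $n=2$ with $p^{-1}=p_1^{-1}+p_2^{-1}$. For it I would use the singular-value form of the Schatten norm recalled in the preliminaries. Since $p\le 1$ is not required in the general statement but $p^{-1}\le 1$ is, every $X_i$ with $p_i<\infty$ is compact, and so is the product. The key ingredient is Horn's inequality for compact operators,
\begin{equation}
\prod_{k=1}^{m}\sigma_k(X_1X_2)\leq\prod_{k=1}^{m}\sigma_k(X_1)\sigma_k(X_2)\quad\text{for all } m .
\end{equation}
This follows from the Courant--Fischer/min-max characterisation of singular values applied to antisymmetric tensor powers, exactly as in finite dimensions, and it survives because compact operators admit the singular value decomposition recalled above.

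From this log-majorization, together with the convexity of $t\mapsto e^{pt}$, Weyl's majorant theorem gives
\begin{equation}
\sum_k\sigma_k(X_1X_2)^p\leq\sum_k\sigma_k(X_1)^p\sigma_k(X_2)^p .
\end{equation}
Applying the classical H\"{o}lder inequality for sequences with exponents $p_1/p$ and $p_2/p$ (conjugate, since $p/p_1+p/p_2=1$) then yields $\Vert X_1X_2\Vert_p\le\Vert X_1\Vert_{p_1}\Vert X_2\Vert_{p_2}$. The case where some $p_i=\infty$ is handled separately: it only needs $\sigma_k(X_1X_2)\le\Vert X_1\Vert_\infty\sigma_k(X_2)$, which is immediate from min-max even when $X_1$ is merely bounded.

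For the induction, set $Y=X_1\cdots X_{n-1}$ with $q^{-1}=\sum_{i<n}p_i^{-1}$. Then $q^{-1}\le p^{-1}\le 1$, so the inductive hypothesis gives $\Vert Y\Vert_q\le\prod_{i<n}\Vert X_i\Vert_{p_i}$, and the base case applied to $Y X_n$ finishes the argument. The particular case $p=1$, $1/p+1/q=1$ follows directly. Applying it to $YX$ works the same way with the factors swapped, and finiteness of the bound gives $XY,YX\in\cS_1$.

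The main obstacle I anticipate is justifying Horn's inequality and Weyl's majorant theorem for infinite sequences of singular values. I would handle it by truncating: apply the inequalities to the first $m$ singular values, where they are finite-dimensional statements about compressions $P_mX_iP'_m$ whose singular values dominate or converge to the originals. Monotone convergence as $m\to\infty$ then passes the inequality to the full sums.
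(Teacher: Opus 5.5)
Your proposal is correct and is essentially the standard Gohberg--Krein argument that the paper cites in place of a proof. That argument is Horn's log-majorization of $\sigma_k(X_1X_2)$, Weyl's majorant lemma with $t\mapsto e^{pt}$, scalar H\"older with exponents $p_1/p,p_2/p$, and induction on $n$. Nothing in it uses $p^{-1}\le 1$, so it also covers merely bounded factors with $p_i=\infty$ and the quasi-norm range $p<1$. The paper later invokes the proposition in exactly that range, with isometries $V_i$ and $p=q=1/\alpha<1$ in the proof of the non-variational formula for $\overline{H}^{\uparrow}_{\alpha}$. For the obstacle you flag, note that compressions have singular values dominated by the originals (not dominating them), and a product of compressions is not a compression of $X_1X_2$. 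Monotonicity alone therefore cannot give the left-hand side of Horn's inequality, so either use $\prod_{k\le m}\sigma_k(T)=\Vert\Lambda^m T\Vert_\infty$ directly for compact $T$, or pass to the limit using the norm convergence $P_mX_iP'_m\to X_i$ (valid for compact $X_i$) together with $|\sigma_k(S)-\sigma_k(T)|\le\Vert S-T\Vert_\infty$.
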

Now we prove the variational form.
\begin{proposition}\label{prop:variational-form-in-Hilbert}
    Let $1 \leq p,q \leq \infty$ such that $\frac{1}{p} + \frac{1}{q} = 1$. If $X \in \cS_{p}(A,B)$, then 
    \begin{align}\label{eq:variational-form-of-Schatten-norm}
        \Vert X \Vert_{p} = \max\{ \vert \langle Y , X \rangle \vert : Y \in \cB(A,B) \, , \, \Vert Y \Vert_{q} \leq 1 \} \ . 
    \end{align}
    Moreover, if $X \in \cB(A)_{sa} \cap \cS_{p}(A)$, then
\begin{equation}
        \begin{aligned}
        \Vert X \Vert_{p} &= \max\{ \vert \langle Y , X \rangle \vert : Y \in \cB(\cH)_{sa} \, , \, \Vert Y \Vert_{q} \leq 1 \} \\
        &= \max\{ \vert \langle Y , X \rangle \vert : Y \in \cB_{\geq 0}(\cH) \, , \, \Vert Y \Vert_{q} \leq 1 \}
    \end{aligned}
\end{equation}
    In both cases one may require the $q$-norm of $Y$ to be an equality.
\end{proposition}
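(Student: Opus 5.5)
The plan is to reduce to the sequence (singular-value) picture via the singular value decomposition of $X$. First we handle trivial cases. If $X=0$, both sides vanish and any $Y$ with $\Vert Y\Vert_q=1$ attains the value. For $p=\infty$, $q=1$, the claim is the standard duality $\Vert X\Vert_\infty=\sup_{\Vert Y\Vert_1\le1}|\Tr{Y^\ast X}|$. The supremum is attained whenever $X$ is compact, by taking $Y$ a rank-one operator built from the top singular vectors. If "$\cS_\infty$" is read as the bounded operators, we would restrict to compact $X$ or note that attainment uses compactness. I expect this to be the one subtle point and would state the $p=\infty$ case for $X\in\cK(A,B)$.

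For $1\le p<\infty$, $X\in\cS_p$ is compact, so we write $X=\sum_n \sigma_n\ket{\phi_n}\bra{\psi_n}$.

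\emph{Upper bound.} For any $Y$ with $\Vert Y\Vert_q\le1$, the product $Y^\ast X$ is trace class by \cref{prop:Holder's-inequality}. Then \cref{prop:trace-properties}(4) together with H\"older gives
\begin{equation*}
|\langle Y,X\rangle|=|\Tr{Y^\ast X}|\le\Vert Y^\ast X\Vert_1\le\Vert Y\Vert_q\Vert X\Vert_p\le\Vert X\Vert_p .
\end{equation*}
This requires $Y\in\cS_q$. If $\Vert Y\Vert_q<\infty$ is read as part of the constraint, this is automatic.

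\emph{Attainment.} We set $Y\coloneq\Vert X\Vert_p^{1-p}\sum_n\sigma_n^{p-1}\ket{\phi_n}\bra{\psi_n}$. Its singular values are $\sigma_n^{p-1}/\Vert X\Vert_p^{p-1}$, so
\begin{equation*}
\Vert Y\Vert_q^q=\sum_n\sigma_n^{(p-1)q}\big/\Vert X\Vert_p^{(p-1)q}=\sum_n\sigma_n^p\big/\Vert X\Vert_p^p=1 .
\end{equation*}
This shows $Y\in\cS_q\subset\cB$ with $\Vert Y\Vert_q=1$. Computing the trace in the basis $\{\ket{\psi_n}\}$, extended to an orthonormal basis and legitimate by \cref{prop:trace-properties}(1), gives $\Tr{Y^\ast X}=\sum_n\sigma_n^p/\Vert X\Vert_p^{p-1}=\Vert X\Vert_p$. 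For $p=1$ we instead take $Y$ to be the partial isometry $W$ from the polar decomposition, which has $\Vert Y\Vert_\infty=1$.

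\emph{Self-adjoint case.} For $X\in\cB(A)_{sa}\cap\cS_p$, the spectral theorem for compact self-adjoint operators gives $X=\sum_n\lambda_n\ket{e_n}\bra{e_n}$. We choose $Y=\Vert X\Vert_p^{1-p}\sum_n\mathrm{sgn}(\lambda_n)|\lambda_n|^{p-1}\ket{e_n}\bra{e_n}$, which is self-adjoint and yields the maximum as before. The upper bound is inherited, since we are restricting the feasible set.

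For the positive semidefinite version, attainment as stated requires $X\ge0$: in that case the $Y$ above is already positive. For a general self-adjoint $X$, the positive-$Y$ maximum equals $\Vert X_+\Vert_p$, not $\Vert X\Vert_p$. I would therefore state and prove the third equality only under $X\ge0$, which is the case used later for sandwiched quantities.

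Finally, "$\max$" and equality of the $q$-norm both follow from the explicit maximizers, all of which satisfy $\Vert Y\Vert_q=1$.
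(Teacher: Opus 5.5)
Your proof is correct, and for the general case it is the paper's argument. The upper bound is the same H\"older chain. Your maximizer $\Vert X\Vert_p^{1-p}\sum_n\sigma_n^{p-1}\ket{\phi_n}\bra{\psi_n}$ is exactly the paper's $U|X|^{p-1}/\Vert X\Vert_p^{p/q}$ (note $p-1=p/q$), written via the singular value decomposition rather than the polar decomposition.

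Where you depart from the paper, you are more careful than it is, and your corrections are right:

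\emph{The case $p=\infty$.} The paper's formula is undefined there. You are right that attainment needs $X$ compact: for the non-compact $X=\mathrm{diag}(1-1/n)_{n\geq1}$ the supremum $1$ is approached but never attained.

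\emph{The self-adjoint case.} The paper takes $Y=|X|^{p-1}/\Vert X\Vert_p^{p/q}$, which fails for indefinite $X$. With $X=\mathrm{diag}(1,-1)$ and $p=q=2$ it gives $\langle Y,X\rangle=0\neq\sqrt{2}=\Vert X\Vert_2$. Your choice $\mathrm{sgn}(X)|X|^{p-1}$ is the correct one; it is just the general $U|X|^{p-1}$, since $U=\mathrm{sgn}(X)$ on the support. This self-adjoint equality is what Proposition~\ref{prop:extended-technical-lemma} actually invokes, applied to operators $\msf{N}^h(\cdot)$ that are self-adjoint but not positive.

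\emph{The positive-$Y$ equality.} You are right that it is false as stated. One small correction: because of the absolute value, the positive-$Y$ maximum is $\max\{\Vert X_+\Vert_p,\Vert X_-\Vert_p\}$, not $\Vert X_+\Vert_p$. In the example above this equals $1<\sqrt{2}$. It agrees with $\Vert X\Vert_p$ only for semidefinite $X$ or for $p=\infty$. Restricting this equality to $X\geq0$ is the right fix, and nothing downstream breaks. The paper only uses the general version (in Proposition~\ref{prop:PTNI-1-norm-contr}) and the self-adjoint version, so the positive case is never actually needed.
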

\begin{proof}
    In general,
    \begin{align*}
        \langle Y, X \rangle &= \Tr{Y^{\ast}X} \leq \vert \Tr{Y^{\ast}X} \vert \leq \Vert Y^{\ast}X \Vert_{1}\\
        &\leq \Vert X \Vert_{p} \Vert Y^{\ast} \Vert_{q} = \Vert X \Vert_{p} \Vert Y \Vert_{q} \ , 
    \end{align*}
    where we used previously mentioned properties. Thus, the RHS of \eqref{eq:variational-form-of-Schatten-norm} (with a supremum) can only be upper bounded by $\Vert X \Vert_{p}$. It thus suffices to show we can achieve $\Vert X \Vert_{p}$. If $X = 0$, both sides of \eqref{eq:variational-form-of-Schatten-norm} are zero trivially, so we focus on the case $X \neq 0$. Our construction is the construction Nelson provides for proving H\"{o}lder's inequality for von Neumann algebras \cite[Section 3]{Nelson_1974}. By the polar decomposition, $X = U\vert X \vert$ where $U: \text{Im}(X^{\ast}) \to \text{Im}(X)$. Define $Y = \frac{U\vert X \vert^{p-1}}{\Vert X \Vert^{p/q}_{p}}$. Then
    \begin{align}
        \Vert Y \Vert_{q} = \frac{\Vert \vert X \vert^{p-1} \Vert_{q}}{\Vert X \Vert^{p/q}_{p}} = \frac{\Vert \vert X \vert^{p/q} \Vert_{q}}{\Vert X \Vert^{p/q}_{p}} = 1
    \end{align}
    where the first equality uses $\Vert U\vert X \vert^{p-1} \Vert_{q} = \left\Vert \sqrt{\vert X \vert^{2(p-1)}} \right \Vert_{q} = \Vert \vert X \vert^{p-1} \Vert_{q}$, the second uses $p-1=\frac{p}{q}$, and the last is a direct calculation. Moreover,
    {\allowdisplaybreaks
    \begin{align}
        \langle Y, X \rangle &= \Tr{Y^{\ast}X} \\
        \notag&= \left( \Vert X \Vert^{p}_{p} \right)^{-1/q} \Tr{\vert X \vert^{p-1}U^{\ast}U \vert X \vert} \\
        \notag&= \left( \Vert X \Vert^{p}_{p} \right)^{-1/q} \Tr{\vert X \vert^{p}} \\
        \notag&= \left( \Vert X \Vert^{p}_{p} \right)^{-1/q} \sum_{i=1}^{\infty} \sqrt{\lambda_{i}(X)^{p}} \\
        \notag&= \left( \Vert X \Vert^{p}_{p} \right)^{-1/q} \sum_{i=1}^{\infty} \sigma_{i}(X)^{p} \\
        \notag&=  \left( \Vert X \Vert^{p}_{p} \right)^{-1/q} \Vert X \Vert^{p}_{p} \\
        \notag&= \left( \Vert X \Vert^{p}_{p} \right)^{1-1/q} \\
        \notag&= \Vert X \vert_{p}
    \end{align}
    }
    where the third equality uses $U^{\ast}U$ is the projector onto $\text{Im}(X)$ and the final equality uses $\frac{1}{p} = 1 - 1/q$. Thus, the RHS of \eqref{eq:variational-form-of-Schatten-norm} achieves the LHS of \eqref{eq:variational-form-of-Schatten-norm}. 

    Finally, in the case $X$ is self-adjoint, then by the spectral theorem $X = \sum_{i=1}^{\infty} \lambda_{i} \dyad{\nu_{i}}$, $\vert X \vert = \sum_{i=1}^{\infty} \vert \lambda_{i} \vert \dyad{\nu_{i}}$ in which case one can achieve $\Vert X \Vert_{p}$ using $Y = \frac{\vert X \vert^{p-1}}{\Vert X \Vert^{p/q}_{p}}$, which is manifestly self-adjoint and positive semidefinite.
\end{proof}

\begin{proposition}[Reverse H\"{o}lder's Inequality]\label{prop:reverse-Holders}
    Let $p \in (0,1)$. Let $N \in \cS_{1/p}(A) \cap \cB_{\geq 0}(A)$, $M \in \cS_{\frac{1}{1-p}}(A) \cap \cB_{\geq 0}(A)$, and $\supp(N) \subset \supp(M)$. Then,
    \begin{align}
        \Vert M \Vert_{p} \cdot \Vert N^{-1} \Vert_{-q}^{-1} \leq \Tr{MN} \ . 
    \end{align}
\end{proposition}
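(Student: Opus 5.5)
The plan is to reduce the reverse H\"older inequality to the ordinary (generalized) H\"older inequality of \cref{prop:Holder's-inequality}, applied to a factorization of $M^{1/2}$. Throughout, $q$ is the conjugate exponent $q = p/(p-1) < 0$, so that $-q = p/(1-p) > 0$, and $N^{-1}$ denotes the inverse of $N$ on its support. With these conventions,
\begin{equation*}
\Vert N^{-1} \Vert_{-q} = \left(\Tr{N^{-p/(1-p)}}\right)^{(1-p)/p}.
\end{equation*}
If this quantity is $+\infty$, then $\Vert N^{-1}\Vert_{-q}^{-1} = 0$ and the claim is trivial, since $\Tr{MN} = \Tr{M^{1/2}NM^{1/2}} \geq 0$. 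We may therefore assume $N^{-1/2} \in \cS_{r}(A)$ with $r \defvar 2p/(1-p)$, because $\Vert N^{-1/2}\Vert_r^2 = \Vert N^{-1}\Vert_{-q}$.

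The key step is the factorization. Let $P$ be the projector onto the relevant support, so that $M^{1/2} = M^{1/2} P = (M^{1/2} N^{1/2})\, N^{-1/2}$. This needs $M^{1/2}$ to live on $\supp(N)$, i.e.\ $\supp(M) \subseteq \supp(N)$, which is what the inequality requires; the statement's inclusion appears to be written in the reverse order. I would either flip the inclusion to $\supp(M) \subseteq \supp(N)$, or else restrict $M$ to $\supp(N)$ and check separately that the proof survives. Next I apply the generalized H\"older inequality with exponents $2$ and $r$, which satisfy
\begin{equation*}
\frac{1}{2p} = \frac{1}{2} + \frac{1}{r}.
\end{equation*}
This gives
\begin{equation*}
\Vert M^{1/2} \Vert_{2p} \leq \Vert M^{1/2} N^{1/2} \Vert_{2}\, \Vert N^{-1/2} \Vert_{r}.
\end{equation*}
Squaring both sides yields the three identities $\Vert M^{1/2}\Vert_{2p}^2 = \Vert M \Vert_p$, $\Vert M^{1/2}N^{1/2}\Vert_2^2 = \Tr{N^{1/2} M N^{1/2}} = \Tr{MN}$, and $\Vert N^{-1/2}\Vert_r^2 = \Vert N^{-1}\Vert_{-q}$. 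Together they give $\Vert M\Vert_p \leq \Tr{MN}\,\Vert N^{-1}\Vert_{-q}$, which rearranges to the claim.

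Several routine verifications remain. The middle identity uses cyclicity of the trace from \cref{prop:trace-properties}, which requires $M^{1/2}N^{1/2} \in \cS_2$. This holds by H\"older, since $M^{1/2} \in \cS_{2/(1-p)}$ and $N^{1/2} \in \cS_{2/p}$ by the Schatten-class hypotheses on $M$ and $N$. The functional-calculus identities for $M^{1/2}$ and $N^{-1/2}$ are also needed. They hold because the hypotheses make both operators compact and positive, so they admit spectral decompositions as recalled in the preliminaries.

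I expect the main obstacle to be the H\"older step itself. The target exponent $2p$ may be less than $1$, so $\Vert\cdot\Vert_{2p}$ is only a quasi-norm, and \cref{prop:Holder's-inequality} as stated assumes $p^{-1} \leq 1$. I would invoke the general form of \cite[Eq.~7.5]{Gohberg_1978}, which holds for all exponents $0 < p_i \leq \infty$. If that citation is insufficient, I would prove the needed case directly: use the singular-value majorization $\prod_{k\leq m}\sigma_k(XY) \leq \prod_{k\leq m}\sigma_k(X)\sigma_k(Y)$ for compact operators, then apply the scalar H\"older inequality to the singular-value sequences.
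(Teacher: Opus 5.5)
Your argument is correct once the support hypothesis is corrected, and it takes a different route from the paper. The paper gives no new argument. It asserts that the finite-dimensional proof of \cite[Lemma~3.3]{Tomamichel_2016} carries over verbatim, because the Schatten-class hypotheses make Proposition~\ref{prop:Holder's-inequality} applicable and the operator Jensen inequality holds for bounded self-adjoint operators on arbitrary Hilbert spaces \cite{Hansen_2003}. You bypass operator Jensen entirely: the factorization $M^{1/2}=(M^{1/2}N^{1/2})N^{-1/2}$ and one generalized H\"older step with exponents $2$ and $r=2p/(1-p)$ give the bound directly. Your three norm identities and the trace-class checks for cyclicity are right. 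The cost is that for $p\le 1/2$ the target exponent $2p$ is at most $1$, outside the range $\sum_i p_i^{-1}\le 1$ of Proposition~\ref{prop:Holder's-inequality} as the paper states it. You flag this correctly, and your fallback is sound: Horn's inequality $\prod_{k\le m}\sigma_k(XY)\le\prod_{k\le m}\sigma_k(X)\sigma_k(Y)$ for compact operators, then weak majorization of $t$-th powers and scalar H\"older. The paper's route stays within the normed range of H\"older but needs operator Jensen in infinite dimensions. Yours is more self-contained and needs only H\"older for quasi-norms.

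Your remark on the support condition should be stated as a correction, not a hedge. Take $N^{-1}$ to be the inverse on $\supp(N)$, as you do. Under the stated hypothesis $\supp(N)\subset\supp(M)$ the inequality is false. Let $M$ be a rank-two projector and $N$ the rank-one projector onto a unit vector in $\ran(M)$. Then $\Vert M\Vert_p\,\Vert N^{-1}\Vert_{-q}^{-1}=2^{1/p}>1=\Tr{MN}$. So your second option, restricting $M$ to $\supp(N)$ and checking that the proof survives, cannot work. The hypothesis must be $\supp(M)\subseteq\supp(N)$, which is exactly what your factorization uses.

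Two minor points. First, the trivial case needs the convention $\infty\cdot 0=0$, since $M\in\cS_{1/(1-p)}$ does not force $\Vert M\Vert_p<\infty$; eigenvalues $k^{-1/p}$ give a counterexample. Second, in the nontrivial case $\Vert N^{-1}\Vert_{-q}<\infty$ already forces $N$ to have finite rank, because the nonzero eigenvalues of a compact operator can only accumulate at $0$. With $\supp(M)\subseteq\supp(N)$ this gives $M=PMP$ for the finite-rank projector $P$ onto $\supp(N)$. The statement therefore reduces to the finite-dimensional lemma on $\ran(P)$, and neither infinite-dimensional H\"older for quasi-norms nor operator Jensen is actually needed.
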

\begin{proof}
    The proof is identical to the finite-dimensional case \cite[Lemma 3.3]{Tomamichel_2016} by noting the assumptions on the Schatten classes of the operators mean we can apply Proposition \ref{prop:Holder's-inequality} and that operator Jensen's inequality holds for bounded, self-adjoint operators on an arbitrary Hilbert space \cite{Hansen_2003}.
\end{proof}

\subsection{Positive, Trace Non-Increasing Maps are Schatten $1$-Norm Contractions}\label{app:inf-dim-norm-contracts}
Dupuis shows in finite dimensions that $\Vert \msf{N}(X) \Vert_{1} \leq \Vert X \Vert_{1}$ if $X$ is normal and $\msf{N}$ is a completely positive, trace non-increasing map. This is unecessarily restrictive in the dimension, the class of input operators, and the positivity of the map. First, we recall the Russo-Dye theorem.
\begin{proposition}[Russo-Dye Theorem]
    (See e.g. \cite[Theorem 1.3.3]{Stormer_1974}). Let $\msf{N}:\cB(A) \to \cB(B)$ be a positive, self-adjoint linear map. Then $\Vert \msf{N} \Vert_{\infty \to \infty} = \Vert \msf{N}(I) \Vert_{\infty}$.
\end{proposition}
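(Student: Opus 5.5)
The Russo--Dye theorem states that for a positive, self-adjoint linear map $\msf{N}:\cB(A)\to\cB(B)$ one has $\Vert \msf{N}\Vert_{\infty\to\infty}=\Vert \msf{N}(I)\Vert_\infty$. My plan follows the standard C$^*$-algebraic route, which needs no finite-dimensionality.

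The inequality $\Vert\msf{N}\Vert_{\infty\to\infty}\geq\Vert\msf{N}(I)\Vert_\infty$ is immediate: evaluate at $X=I$, which has $\Vert I\Vert_\infty=1$. Everything else concerns the upper bound.

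For the upper bound I first treat unitaries. Let $U\in\cB(A)$ be unitary. By the spectral theorem for bounded normal operators, $U=\int_{\mathbb{T}} z\,dE(z)$ for a projection-valued measure $E$ on the unit circle. Fix $\epsilon>0$ and partition $\mathbb{T}$ into finitely many Borel arcs $\Delta_k$ of diameter less than $\epsilon$, choosing points $z_k\in\Delta_k$. Set $P_k=E(\Delta_k)$. These are orthogonal projections with $\sum_k P_k=I$, and $\Vert U-\sum_k z_kP_k\Vert_\infty\leq\epsilon$.

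By positivity, $\msf{N}(P_k)\geq0$ and $\sum_k\msf{N}(P_k)=\msf{N}(I)$. For unit vectors $x,y$, the Cauchy--Schwarz inequality applied to the positive operators $\msf{N}(P_k)$ gives
\begin{equation}
\Big|\sum_k z_k\langle y,\msf{N}(P_k)x\rangle\Big|\leq\sum_k\langle y,\msf{N}(P_k)y\rangle^{1/2}\langle x,\msf{N}(P_k)x\rangle^{1/2}\leq\langle y,\msf{N}(I)y\rangle^{1/2}\langle x,\msf{N}(I)x\rangle^{1/2}.
\end{equation}
Hence $\Vert\msf{N}(\sum_k z_kP_k)\Vert_\infty\leq\Vert\msf{N}(I)\Vert_\infty$.

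To pass to $\msf{N}(U)$ itself I need $\msf{N}$ to be bounded. A positive map is bounded: every self-adjoint $X$ satisfies $-\Vert X\Vert I\leq X\leq\Vert X\Vert I$, so positivity gives $-\Vert X\Vert\,\msf{N}(I)\leq\msf{N}(X)\leq\Vert X\Vert\,\msf{N}(I)$. For a self-adjoint operator $Z$ with $-c\,T\leq Z\leq c\,T$ and $T\geq0$, we have $\Vert Z\Vert\leq c\Vert T\Vert$, so $\Vert\msf{N}(X)\Vert\leq\Vert\msf{N}(I)\Vert\,\Vert X\Vert$. Splitting a general operator into its real and imaginary parts, which uses that $\msf{N}$ is self-adjoint, yields $\Vert\msf{N}\Vert\leq2\Vert\msf{N}(I)\Vert$. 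With boundedness in hand, letting $\epsilon\to0$ gives $\Vert\msf{N}(U)\Vert_\infty\leq\Vert\msf{N}(I)\Vert_\infty$ for every unitary $U$.

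Finally I pass from unitaries to the unit ball. Here I invoke the Russo--Dye/Kadison fact, valid in any unital C$^*$-algebra and in particular in $\cB(A)$, that the closed unit ball is the norm-closed convex hull of the unitaries. Concretely, every $X$ with $\Vert X\Vert<1$ is a finite average of unitaries. Since the norm is convex and $\msf{N}$ is continuous, $\Vert\msf{N}(X)\Vert\leq\Vert\msf{N}(I)\Vert$ whenever $\Vert X\Vert\leq1$, which is the claimed upper bound. I expect this approximation step, from unitaries to the whole unit ball, to be the main obstacle. In $\cB(A)$ for infinite-dimensional $A$, contractions need not admit unitary dilations inside $\cB(A)$ itself, so I rely on the abstract C$^*$ result rather than on the finite-dimensional singular value decomposition.
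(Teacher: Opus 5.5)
Your proof is correct. The paper gives no argument of its own here and simply cites St{\o}rmer's Theorem~1.3.3, and what you wrote is essentially the standard proof behind that citation: a norm bound on unitaries via the spectral projections of $U$ and Cauchy--Schwarz, followed by the Russo--Dye convex-hull theorem. Your care in first treating $\Vert X\Vert<1$ and only then passing to the closed ball by continuity is needed, because in infinite dimensions a contraction such as the unilateral shift is an extreme point of the unit ball and so is not a finite convex combination of unitaries.
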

This implies a positive sub-unital map can only decrease the infinity norm of a bounded operator as we now show.
\begin{proposition}\label{prop:PSU-inf-contraction}
    Let $X \in \cB(A)$ and $\msf{N}$ be a positive, sub-unital map. Then $\Vert \msf{N}(X) \Vert_{\infty} \leq \Vert X \Vert_{\infty}$.
\end{proposition}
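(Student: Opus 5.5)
We prove \cref{prop:PSU-inf-contraction} by reducing to the Russo--Dye theorem stated immediately above. The plan has three steps: check that $\msf{N}$ satisfies the theorem's hypotheses, use sub-unitality to bound $\Vert \msf{N}(I)\Vert_\infty$, and read off the operator-norm inequality.

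\textbf{Step 1: hypotheses.} The Russo--Dye theorem requires a positive, self-adjoint linear map. Positivity is assumed. Self-adjointness, $\msf{N}(X^\ast)=\msf{N}(X)^\ast$, follows from positivity as follows.
\begin{itemize}
    \item For self-adjoint $H\in\cB(A)_{sa}$, split $H=H_+-H_-$ with $H_\pm\geq0$ bounded, via the continuous functional calculus for bounded self-adjoint operators, which is valid on an arbitrary Hilbert space.
    \item Linearity then gives $\msf{N}(H)=\msf{N}(H_+)-\msf{N}(H_-)$, a difference of positive (hence self-adjoint) operators, so $\msf{N}(H)$ is self-adjoint.
    \item For general $X$, write $X=H_1+iH_2$ with $H_1=(X+X^\ast)/2$ and $H_2=(X-X^\ast)/(2i)$. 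Linearity gives $\msf{N}(X^\ast)=\msf{N}(H_1)-i\msf{N}(H_2)=\msf{N}(X)^\ast$.
\end{itemize}

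\textbf{Step 2: bounding $\Vert\msf{N}(I)\Vert_\infty$.} Interpret sub-unitality as $\msf{N}(I_A)\leq I_B$. Positivity gives $0\leq\msf{N}(I_A)$. Combining the two, $0\leq\msf{N}(I_A)\leq I_B$. For a positive operator $P$, $\Vert P\Vert_\infty=\sup_{\Vert\psi\Vert=1}\langle\psi,P\psi\rangle$, so
\begin{align}
    \Vert\msf{N}(I_A)\Vert_\infty=\sup_{\Vert\psi\Vert=1}\langle\psi,\msf{N}(I_A)\psi\rangle\leq 1 .
\end{align}

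\textbf{Step 3: conclusion.} Russo--Dye gives $\Vert\msf{N}\Vert_{\infty\to\infty}=\Vert\msf{N}(I)\Vert_\infty\leq 1$. The definition of the $\infty\to\infty$ norm then yields $\Vert\msf{N}(X)\Vert_\infty\leq\Vert\msf{N}\Vert_{\infty\to\infty}\Vert X\Vert_\infty\leq\Vert X\Vert_\infty$ for all $X\in\cB(A)$. The case $X=0$ is trivial.

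The main obstacle is Step 1. Hermitian-preservation must be justified without finite-dimensional eigen-decompositions. This is why I would use the continuous functional calculus rather than spectral sums: the spectral theorem for bounded self-adjoint operators does not require compactness. I would also check that the cited version of Russo--Dye applies to maps on $\cB(A)$ for separable $A$, which is a unital C$^\ast$-algebra setting, so it does.
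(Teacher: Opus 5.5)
Your proposal is correct and follows the paper's proof: Russo--Dye gives $\Vert\msf{N}\Vert_{\infty\to\infty}=\Vert\msf{N}(I)\Vert_\infty\leq 1$ via $0\leq\msf{N}(I)\leq I$, and the bound on $\Vert\msf{N}(X)\Vert_\infty$ then follows from the definition of the $\infty\to\infty$ norm. Your Step 1, which derives $\msf{N}(X^\ast)=\msf{N}(X)^\ast$ from positivity, correctly verifies the self-adjointness hypothesis of the paper's stated Russo--Dye theorem; the paper leaves this check implicit.
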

\begin{proof}
    First, $\Vert \msf{N} \Vert_{\infty \to \infty} = \Vert \msf{N}(I) \Vert_{\infty} \leq \Vert I \Vert_{\infty} = 1$ where the first equality is the Russo-Dye theorem and the inequality is the assumption the map is positive and sub-unital, so $0 \leq \msf{N}(I) \leq I$. Now let $X \in \cB(A)$. Then $\Vert \msf{N}(X) \Vert_{\infty} \leq \Vert \msf{N} \Vert_{\infty \to \infty}\Vert X \Vert_{\infty} \leq \Vert X \Vert_{\infty}$.
\end{proof}
We now convert this into a monotonicity property for the Schatten $1$-norm. This has been done in finite dimensions in \cite{Watrous_2018}.
\begin{proposition}\label{prop:PTNI-1-norm-contr}
    Let $\cN: \cB(A) \to \cB(B)$ be a positive, trace non-increasing map. Let $X \in \cB(A)$ and $\cN(X) \in \cS_{1}(B)$. Then $\Vert \cN(X) \Vert_{1} \leq \Vert X \Vert_{1}$.
\end{proposition}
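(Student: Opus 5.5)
Our plan is to prove this by duality between the Schatten $1$-norm and the operator norm, transporting the problem to the adjoint map, where \cref{prop:PSU-inf-contraction} applies. We work with the Heisenberg-picture adjoint $\cN^{\ast}:\cB(B)\to\cB(A)$, defined by $\Tr{Y^{\ast}\cN(X)} = \Tr{\cN^{\ast}(Y)^{\ast}X}$ for trace-class $X$ and bounded $Y$. Two properties of $\cN^{\ast}$ are needed. Positivity is inherited from $\cN$: $\Tr{\cN^{\ast}(Y)\,\rho} = \Tr{Y\cN(\rho)} \geq 0$ for all $Y\geq 0$ and all density operators $\rho$. Sub-unitality follows from trace non-increase: $\Tr{\cN^{\ast}(I)\rho} = \Tr{\cN(\rho)} \leq \Tr{\rho}$ for every $\rho \geq 0$, hence $0 \leq \cN^{\ast}(I)\leq I$. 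Self-adjointness, $\cN^{\ast}(Y^{\ast})=\cN^{\ast}(Y)^{\ast}$, comes from positivity of $\cN$ by the usual decomposition into positive parts. Therefore \cref{prop:PSU-inf-contraction} gives $\Vert \cN^{\ast}(Y)\Vert_\infty \leq \Vert Y\Vert_\infty$.

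Next we apply the variational form \cref{prop:variational-form-in-Hilbert} with $p=1$, $q=\infty$. Since $\cN(X)\in\cS_1(B)$, there is a $Y\in\cB(B)$ with $\Vert Y\Vert_\infty\leq 1$ such that $\Vert\cN(X)\Vert_1 = \vert\Tr{Y^{\ast}\cN(X)}\vert$; concretely, $Y$ is the partial isometry from the polar decomposition. Assume $X$ is trace class; if $\Vert X\Vert_1=\infty$ the claim is trivial. Then
\begin{equation}
\Vert \cN(X)\Vert_1 = \vert \Tr{\cN^{\ast}(Y)^{\ast} X}\vert \leq \Vert \cN^{\ast}(Y)\Vert_\infty \Vert X\Vert_1 \leq \Vert X\Vert_1,
\end{equation}
where the middle step uses H\"older's inequality, \cref{prop:Holder's-inequality}, together with $\vert\Tr{Z}\vert\leq\Vert Z\Vert_1$ from \cref{prop:trace-properties}.

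The main obstacle is making the adjoint rigorous in infinite dimensions. $\cN$ is given on $\cB(A)$, but the duality relation needs $\cN$ to map trace-class operators to trace-class operators and to be bounded in $\Vert\cdot\Vert_1$, so that $\cN^{\ast}$ exists as a bounded map on $\cB(B)$. We would first establish this on positive trace-class inputs, where $\Vert\cN(\rho)\Vert_1 = \Tr{\cN(\rho)}\leq\Tr{\rho}$. We would then extend to general trace-class $X$ by decomposing $X$ into real and imaginary parts and those into positive and negative parts. This decomposition yields boundedness with a constant (at most $4$), which is enough to define $\cN^{\ast}$; the sharp constant $1$ then comes out of the duality argument above. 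If the hypothesis is only that this particular $\cN(X)$ is trace class, we would restrict to $X\in\cS_1(A)$, which is the only case with finite right-hand side.
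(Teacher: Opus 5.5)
Your proposal is correct and follows essentially the same route as the paper: the variational (duality) characterization of the trace norm, transfer to the adjoint $\cN^{\ast}$, and the Russo--Dye consequence (Proposition~\ref{prop:PSU-inf-contraction}) that a positive sub-unital map contracts the operator norm. Your additional step, showing that $\cN$ restricts to a trace-norm-bounded map on $\cS_{1}(A)$ so that $\cN^{\ast}$ exists on $\cB(B)$, fills in a detail the paper leaves implicit.
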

\begin{proof}
    Starting and ending using the variational formula (Proposition \ref{prop:variational-form-in-Hilbert}),
\begin{equation}
        \begin{aligned}
        \Vert \msf{N}(X) \Vert_{1} &= \max\{ \vert \langle Y, \msf{N}(X) \rangle \vert : Y \in \cB(B) \, , \, \Vert Y \Vert_{\infty} \leq 1 \} \\ 
        &= \max\{ \vert \langle \msf{N}^{\ast}(Y), X \rangle \vert :  Y \in \cB(B) \, , \, \Vert Y \Vert_{\infty} \leq 1 \}  \\
        &\leq \max\{ \vert \langle Y', X \rangle \vert :  Y' \in \cB(A) \, , \, \Vert Y' \Vert_{\infty} \leq 1 \} \\
        &= \Vert X \Vert_{1}
    \end{aligned}
\end{equation}
    where the second equality uses the definition of the adjoint map and the inequality uses the adjoint of a positive, trace non-increasing map is a positive, unital map and so by Proposition \ref{prop:PSU-inf-contraction} we have only increased the feasible set. This completes the proof.
\end{proof}

\subsection{\Renyi Leftover Hashing against Infinite-Dimensional Side-Information}\label{app:inf-dim-LHL-against-inf-side-info}
We now extend Dupuis's result. The following will make use of the following complex interpolation result.
\begin{proposition}[Hadamard's Three-Line Theorem] \label{prop:Hadamard-3-line} Let $S = \{z \in \mbb{C} \vert 0 \leq \text{Re}[z] \leq 1\}$. Let $f:S \to \mbb{C}$ be a bounded function such that it is holomorphic on the interior of $S$ and continuous on the border. Let $1 \leq p_{0} \leq p_{1}$ and $0 < \theta < 1$ and define $p_{0} \leq p_{\theta} \leq p_{1}$ via 
\begin{align}
    \frac{1}{p_{\theta}} = \frac{1-\theta}{p_{0}} + \frac{1}{p_{1}} \ . 
\end{align}
Furthermore, for $k \in (0,1)$, let $M_{k} \coloneq \sup_{t \in \mbb{R}} \vert f(k+it) \vert$. Then for any $0 \leq \theta \leq 1$,
\begin{align}
    \vert f(\theta) \vert \leq M_{0}^{1-\theta}M_{1}^{\theta} \ . 
\end{align}
\end{proposition}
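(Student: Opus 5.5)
We prove the statement by the classical maximum-modulus argument with a Gaussian damping factor. The exponents $p_0,p_\theta,p_1$ in the statement play no role in the conclusion; they only fix notation for later interpolation arguments. The bound concerns only $f$, and $M_0,M_1$ are understood as the suprema of $|f|$ over the two boundary lines $\mathrm{Re}[z]=0$ and $\mathrm{Re}[z]=1$. The cases $\theta\in\{0,1\}$ are immediate from the definition of $M_0,M_1$, so we fix $0<\theta<1$.

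\textbf{Normalisation.} First we reduce to strictly positive boundary bounds. For $\delta>0$, put $M_i^\delta\defvar M_i+\delta>0$ for $i=0,1$. These are finite because $f$ is bounded on $S$. Define
\begin{equation}
  g(z)\defvar f(z)\,(M_0^\delta)^{z-1}(M_1^\delta)^{-z},
\end{equation}
where $a^{z}\defvar e^{z\ln a}$ for $a>0$. Then $g$ is holomorphic on the interior of $S$ and continuous on $S$. It is also bounded on $S$, since $|a^{z}|=a^{\mathrm{Re}[z]}$ and $\mathrm{Re}[z]\in[0,1]$. On the line $\mathrm{Re}[z]=0$ we have $|g(z)|=|f(z)|/M_0^\delta\le 1$, and on the line $\mathrm{Re}[z]=1$ we have $|g(z)|=|f(z)|/M_1^\delta\le 1$. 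At the real point $\theta$, $|g(\theta)|=|f(\theta)|\,(M_0^\delta)^{\theta-1}(M_1^\delta)^{-\theta}$. It therefore suffices to show $|g(\theta)|\le 1$: this gives $|f(\theta)|\le (M_0^\delta)^{1-\theta}(M_1^\delta)^{\theta}$, and letting $\delta\to0$ yields the claim, including the degenerate cases $M_0=0$ or $M_1=0$.

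\textbf{Damping and maximum modulus.} Since $S$ is unbounded, the maximum modulus principle cannot be applied to $g$ directly. For $\varepsilon>0$ we therefore consider $h_\varepsilon(z)\defvar g(z)\,e^{\varepsilon(z^2-1)}$. Writing $z=x+it$ gives
\begin{equation}
  |e^{\varepsilon(z^2-1)}|=e^{\varepsilon(x^2-t^2-1)}\le e^{-\varepsilon t^2}
\end{equation}
for $x\in[0,1]$. Hence $|h_\varepsilon|\le1$ on both boundary lines, and $|h_\varepsilon(x+it)|\le \|g\|_\infty e^{-\varepsilon t^2}$ on all of $S$.

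Choose $T$ so large that $\|g\|_\infty e^{-\varepsilon T^2}\le 1$ and $T>0$. On the closed rectangle $[0,1]\times[-T,T]$, the maximum modulus principle applies, because $h_\varepsilon$ is holomorphic inside and continuous up to the boundary. On every edge of the rectangle $|h_\varepsilon|\le1$: the vertical edges lie on the boundary lines of $S$, and the horizontal edges satisfy $|t|=T$. Consequently
\begin{equation}
  |g(\theta)|\,e^{\varepsilon(\theta^2-1)}=|h_\varepsilon(\theta)|\le 1 .
\end{equation}
Letting $\varepsilon\to0$ gives $|g(\theta)|\le1$, which completes the argument.

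\textbf{Main obstacle.} The only delicate point is the unboundedness of the strip, which this damping step handles. The boundedness hypothesis on $f$ is essential here: it is what makes $h_\varepsilon$ small on the horizontal edges of the rectangle for large $T$.
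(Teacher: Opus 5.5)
Your proof is correct. It is the standard argument: normalise by $(M_0+\delta)^{z-1}(M_1+\delta)^{-z}$, damp with $e^{\varepsilon(z^2-1)}$ so the maximum modulus principle applies on a finite rectangle, then let $\varepsilon,\delta\to0$. The paper does not prove this proposition; it quotes it as a classical fact for the interpolation step in Proposition~\ref{prop:extended-technical-lemma}, so there is no competing argument to compare against.

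Your readings of the garbled statement are the intended ones. The exponents $p_0,p_\theta,p_1$ play no role. $M_0,M_1$ are the suprema on the lines $\mathrm{Re}[z]\in\{0,1\}$, even though the stated range is $k\in(0,1)$.

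One reading you use without saying so: you treat ``continuous on the border'' as continuity on the whole closed strip $S$. Say this explicitly, because your maximum-modulus step needs it and the literal reading makes the claim false. For example, take $f\equiv1$ on the open strip and $f\equiv0$ on $\partial S$: then $M_0=M_1=0$ but $|f(\tfrac12)|=1$.
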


\begin{proposition}(Extension of \cite[Lemma 7]{Dupuis_2022}) \label{prop:extended-technical-lemma} Let $h$ be a random variable taking values in a set $\msf{H}$. For each $h \in \msf{H}$, let $\msf{N}^{h}:\cB(A) \to \cB(B)$ be self-adjoint. Let 
\begin{equation}\label{eq:map-norm-boundedness}
\sup_{h \in \msf{H}} \sup_{p,q \in (1,2]} \Vert \msf{N}^{h} \Vert_{p \to q} < +\infty \ . 
\end{equation}
Let $\alpha \in (1,2]$. For any $\rho_{AE} \in \cS(A \otimes B)$ and $\sigma_{E} \in \cS(\cH_{E})$ such that $\Vert \sigma_{E}^{\frac{1-\alpha}{2\alpha}}\rho_{AE}\sigma_{E}^{\frac{1-\alpha}{2\alpha}} \Vert_{\alpha} = 1$. Then,
\begin{equation}
\begin{aligned}
    &\mbb{E}_{h} \left \Vert \msf{N}^{h}\left( \sigma_{E}^{\frac{1-\alpha}{2\alpha}}\rho_{AE}\sigma_{E}^{\frac{1-\alpha}{2\alpha}} \right) \right \Vert_{\alpha} \\
    &\leq \left ( \sup_{\eta \in \cN(A \otimes E): \Vert \eta \Vert_{1} = 1} \mbb{E}_{h} \Vert \msf{N}^{h}(\eta) \Vert_{1}  \ \right)^{\frac{2}{\alpha}-1} \\ 
    &\hspace{5mm} \cdot \left ( \sup_{\eta \in \cN(A \otimes E): \Vert \eta \Vert_{2} = 1} \mbb{E}_{h} \Vert \msf{N}^{h}(\eta) \Vert_{2}  \ \right)^{2\left(\frac{\alpha -1}{\alpha} \right)} \ . 
\end{aligned}
\end{equation}
\end{proposition}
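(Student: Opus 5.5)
We follow Dupuis' proof of \cite[Lemma 7]{Dupuis_2022}, which is complex interpolation via Hadamard's three-line theorem (Proposition~\ref{prop:Hadamard-3-line}). The only new work is checking that every object is trace class or Schatten class when $A$ and $E$ are separable but possibly infinite-dimensional.

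Write $X \coloneq \sigma_E^{\frac{1-\alpha}{2\alpha}}\rho_{AE}\sigma_E^{\frac{1-\alpha}{2\alpha}}$. By assumption $\Vert X\Vert_\alpha = 1$, so $X \in \cS_\alpha$, and $X$ is positive, hence normal. For $z$ in the strip $S$ we define the analytic family
\begin{align*}
X(z) \coloneq X^{\alpha\left(1-\frac{z}{2}\right)} ,
\end{align*}
so that $X(0)=X^\alpha$ has $1$-norm equal to $1$, $X(1)=X^{\alpha/2}$ has $2$-norm equal to $1$, and $X(\theta)=X$ at $\theta = \frac{2(\alpha-1)}{\alpha}\in(0,1]$. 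This is the value for which $\frac1\alpha = \frac{1-\theta}{1}+\frac{\theta}{2}$.

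For each $h$ we use the variational form (Proposition~\ref{prop:variational-form-in-Hilbert}) to choose $Y_h$ with $\Vert Y_h\Vert_{\alpha'} = 1$, where $\frac1\alpha + \frac1{\alpha'} = 1$, such that
\begin{align*}
\langle Y_h,\msf{N}^h(X)\rangle = \Vert \msf{N}^h(X)\Vert_\alpha .
\end{align*}
Writing $Y_h = U_h|Y_h|$, we embed it in the matching analytic family $Y_h(z) \coloneq U_h|Y_h|^{\alpha'(1-z)/1 + \alpha' z/2}$ with the standard exponents, chosen so that $\Vert Y_h(it)\Vert_\infty\le1$ and $\Vert Y_h(1+it)\Vert_2\le 1$. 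We then set
\begin{align*}
f(z) \coloneq \mbb{E}_h \langle Y_h(\bar z),\msf{N}^h(X(z))\rangle .
\end{align*}
Because the statement carries a supremum over normal $\eta$, the expectation is handled exactly as in Dupuis. If $\msf{H}$ is infinite, measurability of $h\mapsto Y_h$ has to be checked, or, more simply, we apply the argument to finite-support approximations of the distribution of $h$.

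On the boundary line $\mathrm{Re}\,z = 0$ we use $|\langle Y,\msf{N}(\eta)\rangle|\le \Vert Y\Vert_\infty\Vert \msf{N}(\eta)\Vert_1$ from Proposition~\ref{prop:trace-properties} together with Hölder's inequality (Proposition~\ref{prop:Holder's-inequality}). Since $X(it) = X^\alpha X^{-i\alpha t/2}$ is normal with $1$-norm equal to $1$, this gives $M_0\le\sup_\eta\mbb{E}_h\Vert \msf{N}^h(\eta)\Vert_1$. On the line $\mathrm{Re}\,z = 1$ the same argument with $2$-norms gives $M_1\le \sup_\eta \mbb{E}_h\Vert \msf{N}^h(\eta)\Vert_2$. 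The three-line theorem then yields $|f(\theta)|\le M_0^{1-\theta}M_1^\theta$, which is exactly the claimed bound, because $1-\theta = \frac2\alpha-1$ and $\theta = 2\frac{\alpha-1}{\alpha}$.

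The main obstacle is verifying the hypotheses of the three-line theorem in infinite dimensions: $f$ must be bounded on $S$, holomorphic in its interior, and continuous on its boundary. Imaginary powers $X^{-i s}$ are only partial isometries on $\supp X$, and for $0<\mathrm{Re}\,z<1$ the operator $X(z)$ lies in $\cS_{p_{\mathrm{Re} z}}$ with $p_{\mathrm{Re} z}\in[1,2]$. Here the uniform norm-boundedness assumption \eqref{eq:map-norm-boundedness} on the maps $\msf{N}^h$ as $\cS_p\to\cS_q$ maps is what guarantees that $\msf{N}^h(X(z))$ lies in a Schatten class dual to that of $Y_h(\bar z)$. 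Combined with Hölder, this bounds $|f|$ uniformly on $S$.

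For holomorphy, we first truncate $X$ to finite-rank spectral projections $P_N X P_N$ with $P_N$ projecting onto eigenvalues at least $1/N$. On this finite-dimensional range $f_N$ is an entire function of the ordinary matrix form. We then pass to the limit $N\to\infty$: the Schatten-norm convergence $P_NXP_N\to X$ together with the bounded maps gives uniform convergence $f_N\to f$ on $S$, and holomorphy and continuity survive uniform limits. If necessary, the normalization of $X$ at level $N$ is fixed afterwards by rescaling.
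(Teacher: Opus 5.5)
Your route is the paper's. Both use Dupuis's three-line interpolation with the same family $X(z)=X^{\alpha(1-z/2)}$, norming operators $Y_h$ from Proposition~\ref{prop:variational-form-in-Hilbert}, and boundedness of $f$ on the strip from the uniform $\cS_p\to\cS_q$ assumption plus H\"older. Your finite-rank truncation for holomorphy and your measurability remark are extra care the paper does not spell out; it only checks boundedness and defers the rest to Dupuis.

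One formula needs fixing. The dual family must be $Y_h(z)=U_h|Y_h|^{\alpha' z/2}$, which is the paper's $Y_h^{z\alpha/(2(\alpha-1))}$. The reason is that on $\mathrm{Re}\,z=0$ the dual Schatten index is $\infty$, not $1$, so the $(1-z)$ term must carry coefficient $\alpha'/\infty=0$. With your exponent $\alpha'(1-z)+\alpha' z/2$, both boundary estimates still hold. But at the interpolation point you get $Y_h(\theta)=U_h|Y_h|^{1/(\alpha-1)}\neq Y_h$ for $\alpha<2$. Then $f(\theta)$ is no longer $\mathbb{E}_h\Vert\mathsf{N}^h(X)\Vert_\alpha$, and the three-line bound controls the wrong quantity.

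A smaller point: on $\mathrm{Re}\,z=0$ you need $\mathsf{N}^h(X(it))\in\cS_1$, but the assumption is stated only for $p,q\in(1,2]$. This endpoint follows by letting $q\downarrow 1$ and applying Fatou to the singular values. The paper glosses over it as well.
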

\begin{remark}
    In the case the $A \cong B \cong \mbb{C}^{d}$ for some $d \in \mbb{N}$ and $\msf{H}$ is a finite set, the condition in \eqref{eq:map-norm-boundedness}
    is trivially satisfied as $\Vert \msf{N} \Vert_{p \to q} < +\infty$ for all $p,q \in [1,\infty]$ and linear map $\msf{N}$ by the equivalence of norms, which is why it is not made explicit in \cite{Dupuis_2022}.
\end{remark}
\begin{proof}
        Note that $\sigma_{E}^{\frac{1-\alpha}{2\alpha}}\rho_{AE}\sigma_{E}^{\frac{1-\alpha}{2\alpha}}$ is always self-adjoint. By assumption, for each $h$ $\msf{N}^{h}$ preserves this self-adjointedness. Thus, by Proposition \ref{prop:variational-form-in-Hilbert}, there exists $Y_{h} \in \cB(B \otimes E)_{sa}$ such that $\Vert Y_{h} \Vert_{\frac{\alpha}{\alpha-1}} = 1$ and 
        \begin{align}
            &\left \Vert \msf{N}^{h}\left( \sigma_{E}^{\frac{1-\alpha}{2\alpha}}\rho_{AE}\sigma_{E}^{\frac{1-\alpha}{2\alpha}} \right) \right \Vert_{\alpha}  = \Tr{Y_{h}\msf{N}^{h}\left( \sigma_{E}^{\frac{1-\alpha}{2\alpha}}\rho_{AE}\sigma_{E}^{\frac{1-\alpha}{2\alpha}} \right)} \ .
        \end{align}
        We now define $f:S \to \mbb{C}$ as
        \begin{align}\label{eq:interpolation-function}
            f(z) \coloneq \mbb{E}_{h} \Tr{Y_{h}^{z\left(\frac{\alpha}{2(\alpha-1)}\right)}\msf{N}^{h}\left( \left( \sigma_{E}^{\frac{1-\alpha}{2\alpha}}\rho_{AE}\sigma_{E}^{\frac{1-\alpha}{2\alpha}}\right)^{(1-z)\alpha+z(\alpha/2)} \right)} \ . 
        \end{align}
        As in the finite dimensional case, we wish to use this as the function in Proposition \ref{prop:Hadamard-3-line}. The only concern is that this function may no longer be bounded in infinite dimensions. That it is bounded is an application of H\"{o}lder's inequality with an appropriate choice of parameter combined with our assumption on the $\cS_{p} \to \cS_{q}$-norms of $\msf{N}_{h}$. For clarity, we provide the argument: Let $x = \Re{z} \in [0,1]$ and define $p \coloneq 2/x \in (2,\infty)$. Now,
        {\allowdisplaybreaks
        \begin{align}
            & \left \vert \Tr{Y_{h}^{z\left(\frac{\alpha}{2(\alpha-1)}\right)}\msf{N}^{h}\left( \left( \sigma_{E}^{\frac{1-\alpha}{2\alpha}}\rho_{AE}\sigma_{E}^{\frac{1-\alpha}{2\alpha}}\right)^{(1-z)\alpha+z(\alpha/2)} \right)} \right \vert \label{eq:bounded-trace} \\
            \notag&\leq \left \Vert Y_{h}^{z\left(\frac{\alpha}{2(\alpha-1)}\right)}\msf{N}^{h}\left( \left( \sigma_{E}^{\frac{1-\alpha}{2\alpha}}\rho_{AE}\sigma_{E}^{\frac{1-\alpha}{2\alpha}}\right)^{(1-z)\alpha+z(\alpha/2)} \right) \right \Vert_{1} \\
            \notag&\leq \left \Vert Y_{h}^{z\left(\frac{\alpha}{2(\alpha-1)}\right)} \right \Vert_{p}  \\
            \notag& \hspace{5mm} \cdot \left \Vert \msf{N}^{h}\left( \left( \sigma_{E}^{\frac{1-\alpha}{2\alpha}}\rho_{AE}\sigma_{E}^{\frac{1-\alpha}{2\alpha}}\right)^{(1-z)\alpha+z(\alpha/2)} \right) \right \Vert_{\frac{p}{p-1}} \\ 
            \notag&\leq \left \Vert Y_{h}^{z\left(\frac{\alpha}{2(\alpha-1)}\right)} \right \Vert_{p} \Vert \msf{N}^{h} \Vert_{\alpha \to \frac{p}{p-1}} \\
            \notag& \hspace{5mm} \cdot \left \Vert  \left( \sigma_{E}^{\frac{1-\alpha}{2\alpha}}\rho_{AE}\sigma_{E}^{\frac{1-\alpha}{2\alpha}}\right)^{(1-z)\alpha+z(\alpha/2)} \right \Vert_{\frac{p}{p-1}}  \\
            \notag&= \left \Vert Y_{h}^{\frac{x}{2}\left(\frac{\alpha}{\alpha-1}\right)} \right \Vert_{p}  \Vert \msf{N}^{h} \Vert_{\alpha \to \frac{p}{p-1}}  \\ & \hspace{5mm} \cdot \left \Vert  \left( \sigma_{E}^{\frac{1-\alpha}{2\alpha}}\rho_{AE}\sigma_{E}^{\frac{1-\alpha}{2\alpha}}\right)^{\alpha(1-x/2)} \right \Vert_{\frac{p}{p-1}}\\ 
             \notag&= \left \Vert Y_{h} \right \Vert^{k_{1}}_{k_{1}p} \left \Vert \sigma_{E}^{\frac{1-\alpha}{2\alpha}}\rho_{AE}\sigma_{E}^{\frac{1-\alpha}{2\alpha}} \right \Vert^{k_{2}}_{k_{2}\frac{p}{p-1}} \Vert \msf{N}^{h} \Vert_{\alpha \to \frac{p}{p-1}} \ , 
        \end{align}
        }
        where the first inequality is by Proposition \ref{prop:trace-properties}, the second is Proposition \ref{prop:Holder's-inequality}, the third is the definition of the $\cS_{p} \to \cS_{q}$-norm, the first equality is $\Vert X^{a+ib} \Vert_{p} = \Vert X^{a} \Vert_{p}$ for all Schatten $p$-norm, the second equality is for Hermitian $X$, $\Vert X^{k} \Vert_{\alpha} = \Vert X \Vert_{k\alpha}^{k}$ and the definitions $k_{1} \coloneq \frac{x}{2}\left(\frac{\alpha}{\alpha-1}\right)$, $k_{2} \coloneq \alpha(1-x/2)$. Now note that for our definition of $p \coloneq 2/x$, it is the case $k_{1}p = \frac{\alpha}{\alpha-1}$ and $k_{2}\frac{p}{p-1} = \alpha$. Thus, we have 
        \begin{align}
            \left \Vert Y_{h} \right \Vert^{k_{1}}_{k_{1}p} = \left \Vert Y_{h} \right \Vert^{k_{1}}_{\frac{\alpha}{\alpha-1}}=1
        \end{align}
        and
        \begin{align}
            \left \Vert \sigma_{E}^{\frac{1-\alpha}{2\alpha}}\rho_{AE}\sigma_{E}^{\frac{1-\alpha}{2\alpha}} \right \Vert^{k_{2}}_{k_{2}\frac{p}{p-1}} = \left \Vert \sigma_{E}^{\frac{1-\alpha}{2\alpha}}\rho_{AE}\sigma_{E}^{\frac{1-\alpha}{2\alpha}} \right \Vert^{k_{2}}_{\alpha} \Vert  = 1 \ ,
        \end{align} 
        where we use the assumption of the proposition and how we selected each $Y_{h}$. It follows that so long as  $\Vert \msf{N}^{h} \Vert_{\alpha \to \frac{p}{p-1}} < +\infty$ for all $\alpha \in (1,2],\frac{p}{p-1} \in (1,2]$, then \eqref{eq:bounded-trace} is bounded for all $z \in S$. By our assumption in \eqref{eq:map-norm-boundedness} that this boundedness holds even when taking the supremum over choice of $\msf{N}^{h}$, we can conclude the function $f$ in \eqref{eq:interpolation-function} is bounded. Thus, we satisfy the conditions of Proposition \ref{prop:Hadamard-3-line}. The rest of the proof is identical to how it is given in \cite[Lemma 7]{Dupuis_2022} except that as we can no longer guarantee the set of normal operators that have unit Schatten $p$-norm for $p \in \{1,2\}$ is compact, so we have supremums rather than maximizations in the proposition statement.
\end{proof}

\begin{proposition}(Extension of \cite[Theorem 8]{Dupuis_2022})
    Let $(\cR^{h}_{A \to C}:\cT(A) \to \cT(C))_{h \in \msf{H}}$ equipped with $q_{h}$ be $\lambda$-expected contractive towards $\omega_{C} \in \cS(C)$ and let $\rho_{AE} \in \cS(A \otimes E)$, $\sigma_{E}$ be quantum states.
    Then,
    \begin{equation}
    \begin{aligned}
        & \mbb{E}_{h} \left \Vert (\cR^{h} - \cM^{\omega})(\rho_{AE}) \right \Vert_{1} \\
        & \leq 2^{\frac{2}{\alpha}-1} \cdot 2^{\frac{\alpha-1}{\alpha}(\log\vert C \vert - H_{\alpha}(A \vert E)_{\rho \vert \sigma}+2\log(\lambda))} \ . 
    \end{aligned}
    \end{equation}
    Furthermore, if $\rho_{AE}$ is CQ and $\{\cR^{h}_{A \to C} : h \in \msf{H}\}$ is a $\lambda$-randomizing family of hash functions, the same bound holds.
\end{proposition}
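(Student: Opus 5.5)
The plan follows Dupuis' finite-dimensional proof of \cite[Theorem 8]{Dupuis_2022}, with every step that implicitly used finite dimensionality replaced by the separable-Hilbert-space tools established above. Fix $\alpha\in(1,2]$. If $H_{\alpha}(A\vert E)_{\rho\vert\sigma}=-\infty$, i.e.\ the support or range condition in Definition~\ref{def:inf-dim-renyis} fails, the bound is trivial. Otherwise set
\begin{equation}
X \coloneq \sigma_{E}^{\frac{1-\alpha}{2\alpha}}\rho_{AE}\sigma_{E}^{\frac{1-\alpha}{2\alpha}} ,
\end{equation}
which is self-adjoint and lies in $\cS_{\alpha}$, with $\Vert X\Vert_{\alpha}^{\alpha}=\wt{Q}_{\alpha}(\rho_{AE}\Vert I_A\otimes\sigma_E)$. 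Define the self-adjoint maps $\msf{N}^{h}\coloneq \cR^{h}-\cM^{\omega}$, which are differences of channels and hence preserve self-adjointness. By the covariance of $\cR^h$ and $\cM^\omega$ with operators on $E$, we have $(\cR^{h}-\cM^{\omega})(\rho_{AE}) = \sigma_E^{\frac{\alpha-1}{2\alpha}}\msf{N}^{h}(X)\sigma_E^{\frac{\alpha-1}{2\alpha}}$.

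First, I will apply H\"older's inequality (Proposition~\ref{prop:Holder's-inequality}) with exponents $\frac{2\alpha}{\alpha-1},\alpha,\frac{2\alpha}{\alpha-1}$:
\begin{equation}
\Vert(\cR^h-\cM^\omega)(\rho_{AE})\Vert_1\le \Vert \sigma_E^{\frac{\alpha-1}{2\alpha}}\Vert_{\frac{2\alpha}{\alpha-1}}^2\,\Vert\msf{N}^h(X)\Vert_\alpha=\Vert\msf{N}^h(X)\Vert_\alpha .
\end{equation}
Second, I will normalize $X$ to unit $\alpha$-norm and invoke Proposition~\ref{prop:extended-technical-lemma}. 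This bounds $\mbb{E}_h\Vert\msf{N}^h(X)\Vert_\alpha$ by $\Vert X\Vert_\alpha$ times the interpolated quantity
\begin{equation}
\Big(\sup_{\Vert\eta\Vert_1=1}\mbb{E}_h\Vert\msf{N}^h(\eta)\Vert_1\Big)^{\frac2\alpha-1}\Big(\sup_{\Vert\eta\Vert_2=1}\mbb{E}_h\Vert\msf{N}^h(\eta)\Vert_2\Big)^{\frac{2(\alpha-1)}{\alpha}} .
\end{equation}
For the $1$-norm factor, Proposition~\ref{prop:PTNI-1-norm-contr} applied to $\cR^h$ and $\cM^\omega$ separately (writing normal $\eta$ via its spectral decomposition) gives a bound of $2$. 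This produces the prefactor $2^{\frac2\alpha-1}$. The $2$-norm factor is at most $\lambda$ by $\lambda$-expected contractivity, which contributes $\lambda^{2(\alpha-1)/\alpha}=2^{\frac{\alpha-1}{\alpha}2\log\lambda}$. Finally, $\Vert X\Vert_\alpha=2^{-\frac{\alpha-1}{\alpha}H_\alpha(A\vert E)_{\rho\vert\sigma}}$. The $\log|C|$ term enters exactly as in Dupuis, through the normalization of the $2$-norm contraction towards $\omega$ (for the maximally mixed target, $\lambda$ is scaled with $|C|^{-1/2}$). Here $C$ is finite, so this bookkeeping is unchanged.

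The main obstacle is verifying the hypothesis \eqref{eq:map-norm-boundedness} of Proposition~\ref{prop:extended-technical-lemma}, namely a uniform bound on $\Vert\msf{N}^h\Vert_{p\to q}$ for $p,q\in(1,2]$. My approach is to note that the output system $C$ is finite-dimensional, so all $q$-norms on the output are equivalent up to dimension factors in $|C|$ (but not in $E$). Each $\msf{N}^h$ is a difference of channels acting on $A$ only. I would first bound $\Vert\msf{N}^h\Vert_{p\to p}\le 2$ by interpolating between the $1\to1$ bound (Proposition~\ref{prop:PTNI-1-norm-contr}) and the $\infty\to\infty$ bound (Proposition~\ref{prop:PSU-inf-contraction}), the latter applied to the adjoint side. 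I would then handle $p\neq q$ by the monotonicity $\Vert\cdot\Vert_q\le\Vert\cdot\Vert_p$ for $q\ge p$. The case $q<p$ would be handled by the observation that in the proof of Proposition~\ref{prop:extended-technical-lemma} only the pairs actually arising need to be controlled. If this step fails in general, I would restrict to $A$ finite-dimensional, or to $\cR^h$ with finite-dimensional $A$, which suffices for QKD since $S_1^n$ is finite. For the CQ, $\lambda$-randomizing statement, I would reduce to the same argument by applying the hash after a pinching on $A$; pinching does not change a CQ state, and the contractivity bound only needs to hold on CQ inputs.
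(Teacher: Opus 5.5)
Your route is the paper's. Its proof runs Dupuis' argument with \cite[Lemma 7]{Dupuis_2022} and \cite[Lemma 10]{Dupuis_2022} replaced by Proposition~\ref{prop:extended-technical-lemma} and Proposition~\ref{prop:PTNI-1-norm-contr}. Your $1$-norm factor, your $2$-norm factor, the identity $\Vert X\Vert_\alpha=2^{-\frac{\alpha-1}{\alpha}H_\alpha(A|E)_{\rho|\sigma}}$ and the pinching reduction for the CQ case are all fine. Two steps are wrong as written, though.

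\emph{The H\"older step.} On the output $C\otimes E$ the weight is $I_C\otimes\sigma_E^{\frac{\alpha-1}{2\alpha}}$. Its norm is $\Vert I_C\otimes\sigma_E^{\frac{\alpha-1}{2\alpha}}\Vert_{\frac{2\alpha}{\alpha-1}}^2=(\mathrm{Tr}[I_C\otimes\sigma_E])^{\frac{\alpha-1}{\alpha}}=|C|^{\frac{\alpha-1}{\alpha}}$, not $1$, and this factor is exactly the $\log|C|$ term of the statement. It cannot come from $\lambda$ as you suggest. Here $\lambda$ is a $|C|$-independent constant: the paper uses that $2$-universal hashing is $\sqrt{2}$-randomizing, and the $\ell$-dependence of its leftover hashing bound comes from $\log|C|=\ell$. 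Moreover, a factor $|C|^{-1/2}$ in $\lambda$ would enter through $2\log\lambda$ with the wrong sign. As written, your chain proves the $|C|$-free bound, which is false. Take $E$ trivial, $\rho_A$ uniform on $2^k$ symbols, and the family of all functions into $|C|=2^{k+1}$ symbols, which is $1$-expected-contractive towards $\pi_C$. Each $\cR^h(\rho_A)$ misses at least half of $C$, so the left-hand side is at least $1$. The $|C|$-free right-hand side is $2^{\frac2\alpha-1}2^{-\frac{\alpha-1}{\alpha}k}\to0$. Keeping the factor $|C|^{\frac{\alpha-1}{\alpha}}$ repairs this and gives the stated bound.

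\emph{The hypothesis \eqref{eq:map-norm-boundedness}.} You are right to flag it; the paper's one-line proof does not check it. However, your bound $\Vert\mathsf N^h\Vert_{p\to p}\le 2$ is false. By Russo--Dye a channel has $\infty\to\infty$ norm $\Vert\cR(I)\Vert_\infty$, which exceeds $1$ unless $\cR$ is sub-unital. Applying Proposition~\ref{prop:PSU-inf-contraction} to the unital adjoint only reproduces the $1\to1$ bound by duality, so there is nothing to interpolate against.

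Concretely, let $F=h^{-1}(c)$ be a fibre of a hash function. The state $\pi_F\otimes\tau_E$ has $p$-norm $|F|^{\frac1p-1}\Vert\tau_E\Vert_p$, but $\cR^h-\cM^{\pi_C}$ maps it to $(\dyad{c}-\pi_C)\otimes\tau_E$. So $\Vert\mathsf N^h\Vert_{p\to p}$ grows like $|F|^{1-\frac1p}$. For infinite-dimensional $A$ and finite $C$ some fibre is infinite, and taking $N$ of its points shows the hypothesis fails outright. Hence your fallback to finite $A$ is not optional: it is what makes this route work, and it covers the application $A=S_1^n$.

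Your remark about $p\neq q$ is correct. The proof of Proposition~\ref{prop:extended-technical-lemma} only uses the diagonal norms $\Vert\mathsf N^h\Vert_{r\to r}$ with $r=\frac{2}{2-\mathrm{Re}\,z}\in[1,2]$. Indeed, with infinite $E$ one has $\Vert\mathsf N^h\Vert_{p\to q}=\infty$ for $q<p$ unless $\mathsf N^h=0$, so the hypothesis must be read this way. For finite $A$ these diagonal norms are uniformly bounded. Write any channel $\cR$ on $A$ with Kraus operators $K_i$; then H\"older gives $\Vert(\cR\otimes\id_E)(Y)\Vert_r\le\sum_i\Vert K_i\Vert_\infty^2\Vert Y\Vert_r\le\sum_i\mathrm{Tr}[K_i^\dagger K_i]\,\Vert Y\Vert_r=|A|\,\Vert Y\Vert_r$. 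Hence $\sup_h\Vert\mathsf N^h\Vert_{r\to r}\le 2|A|$, which is all the interpolation argument needs.
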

\begin{remark}
    We don't specify support conditions because when they are not met the bound holds trivially.
\end{remark}
\begin{proof}
    The proof is identical to the one given in \cite{Dupuis_2022} except we appeal to our Proposition \ref{prop:extended-technical-lemma} instead of \cite[Lemma 7]{Dupuis_2022} and Proposition \ref{prop:PTNI-1-norm-contr} instead of \cite[Lemma 10]{Dupuis_2022}.
\end{proof}

\subsection{Duality between Petz and Sandwiched \Renyi Entropies}\label{app:inf-dim-duality}

Here we establish an extension of the duality between Petz and Sandwiched \Renyi entropy when the system $A$ that is the first argument of the conditional entropy remains finite.
\begin{proposition}\label{prop:Petz-Sandwiched-duality}
    For any pure state $\ket{\rho}_{ABC}$ where $\vert A \vert < +\infty$, 
    \begin{align}
        \overline{H}^{\uparrow}_{\alpha}(A \vert B)_{\rho} + \wt{H}^{\downarrow}_{\beta}(A \vert C)_{\rho} = 0  \quad \text{when} \quad \alpha \beta = 1 \quad \alpha, \beta \in [0,\infty] \ .
    \end{align}
\end{proposition}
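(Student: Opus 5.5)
The plan is to reduce both sides to explicit Schatten-norm expressions and then identify these expressions through a Schmidt decomposition of $\ket{\rho}_{ABC}$, following the finite-dimensional argument of \cite{Tomamichel_2016} while checking at each step that the operators involved have the required Schatten class.

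First, I will establish the closed form of the Petz up-arrow entropy for separable $B$ and finite $A$:
\begin{equation*}
\renyiPetzUp_\alpha(A|B)_\rho=\frac{\alpha}{1-\alpha}\log\Tr{\left(\pTr{A}{\rho_{AB}^\alpha}\right)^{1/\alpha}} .
\end{equation*}
This is presumably the content of the non-variational form of the Petz entropy referred to in the acknowledgements (\cref{prop:non-var-form-of-petz-up}), which I will take as available. If it has to be proved here, the route is as follows.
\begin{itemize}
\item Write $\ol{Q}_\alpha(\rho_{AB}\Vert I_A\otimes\sigma_B)=\Tr{\pTr{A}{\rho_{AB}^\alpha}\,\sigma_B^{1-\alpha}}$, using \cref{prop:trace-properties} for cyclicity and the partial trace.
\item For $\alpha>1$, minimise over $\sigma_B$ using Hölder's inequality (\cref{prop:Holder's-inequality}) together with the variational form \cref{prop:variational-form-in-Hilbert}.
\item For $\alpha<1$, maximise over $\sigma_B$ using the reverse Hölder inequality \cref{prop:reverse-Holders}.
\item In both cases the optimiser is $\sigma_B\propto(\pTr{A}{\rho_{AB}^\alpha})^{1/\alpha}$. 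This candidate is a density operator exactly when its trace is finite; otherwise both sides equal $\pm\infty$, which has to be tracked.
\end{itemize}

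Second, for the sandwiched term I will write
\begin{equation*}
\renyiSandDown_\beta(A|C)_\rho=-\frac{1}{\beta-1}\log\left\Vert \left(I_A\otimes\rho_C^{\frac{1-\beta}{2\beta}}\right)\rho_{AC}^{1/2}\right\Vert_{2\beta}^{2\beta},
\end{equation*}
as in \cref{def:inf-dim-renyis}; \cref{rem:non-trivial-cases} guarantees that the range condition holds.

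Third, I will use the purity of $\ket{\rho}_{ABC}$ to connect the two expressions.
\begin{itemize}
\item Take the Schmidt decomposition across the cut $AB|C$. This is valid for separable spaces because $\rho_{AB}$ is trace class and hence compact.
\item This lets me identify $\rho_C$ with $\rho_{AB}$ transposed, on isometric supports.
\item Define $Z\defvar(I_A\otimes\rho_C^{\frac{1-\beta}{2\beta}})\rho_{AC}^{1/2}$, and more generally the analogous map from $B$ to $AC$ obtained by viewing $\ket{\rho}$ as an operator.
\item Use $\Vert Z\Vert_p=\Vert Z^\ast\Vert_p$ and the fact that $ZZ^\ast$ and $Z^\ast Z$ share their nonzero spectrum (via the singular value decomposition of compact operators) to rewrite the sandwiched trace.
\item The rewritten trace should become $\Tr{\left(\pTr{A}{\rho_{AB}^{\alpha}}\right)^{1/\alpha}}$ raised to a power, with $\alpha=1/\beta$, exactly as in the finite-dimensional computation.
\end{itemize}
Finiteness of $A$ is used here. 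It allows $\pTr{A}{\cdot}$ to be written as a finite sum $\sum_a(\bra{a}\otimes I)\,\cdot\,(\ket{a}\otimes I)$, so that trace-class and Schatten-class membership pass through the partial trace. Substituting the exponents and checking the prefactors $\frac{\alpha}{1-\alpha}$ and $-\frac{1}{\beta-1}$ then gives the sum $0$.

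Finally, I will treat the boundary values separately.
\begin{itemize}
\item For $\alpha=\beta=1$, the statement is the von Neumann identity $H(A|B)=-H(A|C)$ for pure states. Its conditional entropies are finite because $A$ is finite, which can be shown by a purification and Schmidt-decomposition argument.
\item For $\alpha\in\{0,\infty\}$, take limits of both sides. This requires monotonicity in the order on both sides, which I expect to extend from the variational forms.
\end{itemize}

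The main obstacle I expect is the Schatten-class bookkeeping in the range $\alpha<1$, equivalently $\beta>1$. There $\rho_{AB}^\alpha$ need not be trace class, so $\pTr{A}{\rho_{AB}^\alpha}$ may be unbounded-trace, and reverse Hölder requires $M\in\cS_{1/(1-p)}$. I will first try to show that when $\Tr{(\pTr{A}{\rho_{AB}^\alpha})^{1/\alpha}}=+\infty$, the sandwiched norm on the other side is also $+\infty$, through the spectral identification above. If that fails, I will approximate $\rho$ by finite-rank truncations of the Schmidt decomposition, which gives pure states on finite subspaces. I would then pass to the limit using monotone convergence of the eigenvalue sums on both sides, since the identity already holds for each truncation.
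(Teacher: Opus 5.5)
Your overall route is the paper's: use the closed form of $\overline{H}^{\uparrow}_{\alpha}$ from \cref{prop:non-var-form-of-petz-up}, then lift the Schmidt-decomposition computation of \cite[Proposition~5.8]{Tomamichel_2016}. Your operator-level handling of Step~3 is, if anything, more careful than the paper, which simply asserts that the Schmidt argument carries over. For $\beta>1$ the would-be vector $(I\otimes\rho_C^{\frac{1-\beta}{2\beta}})\ket{\rho}$ has squared norm $\sum_k\lambda_k^{\alpha}$, with $\lambda_k$ the eigenvalues of $\rho_C$, and this can diverge, so working with compact operators is genuinely needed. The same goes for your separate treatment of $\alpha\in\{0,1,\infty\}$, since the paper proves the closed form only for $\alpha\in(0,1)\cup(1,\infty)$.

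There is, however, a concrete error in your sketch of the closed form. With $M\defvar\pTr{A}{\rho_{AB}^{\alpha}}$, the objective is $\Tr{M\sigma_B^{1-\alpha}}$. For $\alpha>1$ it is minimised, so you need a lower bound. Reverse H\"older with $p=1/\alpha$ (\cref{prop:reverse-Holders}) gives $\Tr{M\sigma_B^{1-\alpha}}\geq\Vert M\Vert_{1/\alpha}\Vert\sigma_B^{\alpha-1}\Vert_{1/(\alpha-1)}^{-1}=\Vert M\Vert_{1/\alpha}$. For $\alpha<1$ it is maximised, and ordinary H\"older with exponents $1/\alpha$ and $1/(1-\alpha)$ gives $\Tr{M\sigma_B^{1-\alpha}}\leq\Vert M\Vert_{1/\alpha}$. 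You have the two swapped, so both steps as written bound in the useless direction. Moreover, \cref{prop:variational-form-in-Hilbert} characterises $\Vert\cdot\Vert_p$ only for $p\geq1$, whereas the optimal value for $\alpha>1$ is $\Vert M\Vert_{1/\alpha}$ with $1/\alpha<1$.

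This swap has also put your ``main obstacle'' in the wrong place. For $\alpha<1$ there is nothing to track. Although $\rho_{AB}^{\alpha}$ need not be trace class, $\Vert\rho_{AB}^{\alpha}\Vert_{1/\alpha}^{1/\alpha}=\Tr{\rho_{AB}}=1$. Hence each $X_a\defvar(\bra{a}\otimes I_B)\rho_{AB}^{\alpha}(\ket{a}\otimes I_B)$ has $\Vert X_a\Vert_{1/\alpha}\leq1$, and the triangle inequality gives $\Tr{M^{1/\alpha}}\leq\vert A\vert^{1/\alpha}$. The case $\Tr{M^{1/\alpha}}=+\infty$ therefore never occurs, and you need neither the ``both sides infinite'' argument nor the truncation fallback.

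The delicate case is $\alpha>1$. There $q=1/\alpha<1$, so $\Vert\cdot\Vert_q$ is only a quasi-norm. Normalisability of the optimiser $\sigma_B^{\star}\propto M^{1/\alpha}$ then needs McCarthy's inequality $\Tr{(\sum_a X_a)^{q}}\leq\sum_a\Tr{X_a^{q}}\leq\vert A\vert$, which is what the paper uses. This is exactly where $\vert A\vert<+\infty$ is indispensable: for infinite $A$ and pure $\rho_{AB}$ one has $M=\rho_B$, and Schmidt coefficients with $\sum_n p_n^{1/\alpha}=+\infty$ leave $\sigma_B^{\star}$ undefined.

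Since you take \cref{prop:non-var-form-of-petz-up} as given and it is proved in the paper, your proof of the duality survives. Still, Step~1 as sketched would fail if executed, and your finiteness bookkeeping belongs on the $\alpha>1$ side.
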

To prove this duality, we will lift the finite-dimensional proof (\cite[Proposition 5.8]{Tomamichel_2016},\cite{tomamichel_2014a}). To this end, we need an extension of the non-variational form of the optimized Petz \Renyi divergence. 

\begin{proposition}\label{prop:non-var-form-of-petz-up}
    Let $\alpha \in (0,1) \cup (1,\infty)$ and $\rho_{AB} \in S_{=}(AB)$ where $\vert A \vert < +\infty$. Then, 
    \begin{align}
        \overline{H}^{\uparrow}_{\alpha}(A \vert B)_{\rho} = \frac{\alpha}{1-\alpha}\log \Tr{\left( \operatorname{Tr}_{A}\left[\rho_{AB}^{\alpha}\right] \right)^{\frac{1}{\alpha}}} \ .
    \end{align}
\end{proposition}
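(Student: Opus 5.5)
We follow the finite-dimensional argument: write the Petz quantity as a function of $\sigma_B$, reduce the supremum over $\sigma_B$ to a Hölder-type variational problem on the infinite-dimensional system $B$, and solve it with the reverse Hölder inequality (Proposition~\ref{prop:reverse-Holders}) for $\alpha<1$ and the ordinary Hölder inequality (Proposition~\ref{prop:Holder's-inequality}) for $\alpha>1$. Set $X_B \defvar \pTr{A}{\rho_{AB}^{\alpha}}$. Since $|A|<\infty$, $\rho_{AB}^{\alpha}$ is a positive operator on $A\otimes B$ whose $A$-part is a finite matrix, so $X_B$ is a well-defined positive operator on $B$. It is trace class when $\alpha\geq 1$ and possibly only bounded when $\alpha<1$. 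For any $\sigma_B$ satisfying the support condition of Definition~\ref{def:inf-dim-renyis},
\begin{equation}
\ol{Q}_{\alpha}(\rho_{AB}\Vert I_A\otimes\sigma_B) = \Tr{\rho_{AB}^{\alpha}(I_A\otimes\sigma_B^{1-\alpha})} = \Tr{X_B\,\sigma_B^{1-\alpha}} .
\end{equation}
The first step is to justify this identity in separable Hilbert spaces. Take the square-root form $\Vert (I\otimes\sigma_B)^{\frac{1-\alpha}{2}}\rho^{\frac{\alpha}{2}}\Vert_2^2$ of the mean functional and apply cyclicity of the trace (Proposition~\ref{prop:trace-properties}, item 2) together with a partial trace over the finite $A$, which is legitimate because $A$ is finite-dimensional.

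The second step handles $\alpha>1$, where we need $\inf_{\sigma}\Tr{X\sigma^{1-\alpha}}$. Since $X$ is trace class, $\Vert X\Vert_{1/\alpha}$ is finite by monotonicity of Schatten quasi-norms. For the lower bound, use Hölder with exponents $1/\alpha$ and $1/(1-\alpha)$ in the form $\Tr{\sigma}=\Tr{(X^{\frac{1}{2}}\sigma^{\frac{1-\alpha}{2}})^{\ast}\cdots}$, which is the reverse-Hölder pattern; this gives $\Tr{X\sigma^{1-\alpha}}\geq \Vert X\Vert_{1/\alpha}$. For attainment, take $\sigma^\star = X^{1/\alpha}/\Tr{X^{1/\alpha}}$. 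Substituting $\sigma^\star$ gives $\Tr{X\sigma^{\star\,1-\alpha}} = (\Tr{X^{1/\alpha}})^{\alpha}$. Inserting this into $\frac{1}{1-\alpha}\log$ yields $\frac{\alpha}{1-\alpha}\log\Tr{X^{1/\alpha}}$, which is the claimed formula.

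The third step handles $\alpha<1$, where we need $\sup_\sigma\Tr{X\sigma^{1-\alpha}}$. Hölder gives $\Tr{X\sigma^{1-\alpha}}\leq\Vert X\Vert_{1/\alpha}\Vert\sigma^{1-\alpha}\Vert_{1/(1-\alpha)}=\Vert X\Vert_{1/\alpha}$, and the same $\sigma^\star$ attains the bound. Two infinite-dimensional cases need checking. If $\Tr{X^{1/\alpha}}=\infty$, which is possible here, we should show that the supremum is also $+\infty$, so that both sides of the claimed identity are $+\infty$. We will do this by truncating $X$ to spectral projections and using $\sigma$ proportional to the truncated $X^{1/\alpha}$, then letting the truncation grow. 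We also need the support/range conditions in Definition~\ref{def:inf-dim-renyis} to hold at $\sigma^\star$. Since $\ran(\rho^{\alpha/2})$ lies in $A\otimes\overline{\ran(X)}$, this reduces to an argument like the one in Remark~\ref{rem:non-trivial-cases}.

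The main obstacle will be the range conditions for $\alpha>1$. The operator $(\sigma^{\star})^{\frac{1-\alpha}{2}}$ is unbounded, so the norm expression in the definition is only meaningful if $\ran(\rho^{\alpha/2})\subseteq\ran((\sigma^\star)^{\frac{\alpha-1}{2}})$. The identity in the first step, including exact cancellation at the optimizer, must then be proved with unbounded operators rather than by formal algebra. We intend to handle this by restricting to the finite-rank spectral cutoffs $P_k$ of $X$. On each cutoff everything is finite-dimensional, so the computation is exact there. We then pass to the limit $k\to\infty$ using monotone convergence of $\Tr{P_kX^{1/\alpha}}$ and lower semicontinuity of the mean functional, as established in \cite{Mosonyi_2023}.
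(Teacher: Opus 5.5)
Your overall architecture (candidate optimizer $\sigma^\star_B\propto X_B^{1/\alpha}$, reverse H\"older for the infimum when $\alpha>1$, H\"older for the supremum when $\alpha<1$) matches the paper's. However, the one step that is genuinely new in infinite dimensions fails.

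In your second step you claim that, because $X_B=\pTr{A}{\rho_{AB}^{\alpha}}$ is trace class, $\Vert X_B\Vert_{1/\alpha}$ is finite \emph{by monotonicity of Schatten quasi-norms}. Monotonicity goes the other way: $\Vert T\Vert_{p'}\le\Vert T\Vert_{p}$ for $0<p\le p'$, i.e.\ $\mathcal{S}_p\subseteq\mathcal{S}_{p'}$. So for $\alpha>1$, where $1/\alpha<1$, membership in $\mathcal{S}_1$ gives no control of $\Vert X_B\Vert_{1/\alpha}$. For instance, eigenvalues $n^{-2}$ with $\alpha=3$ give $\sum_n n^{-2/3}=\infty$. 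The failure even occurs for operators of exactly your form once $A$ is allowed to be infinite. Take a pure $\rho_{AB}$ with Schmidt coefficients $p_n$ satisfying $\sum_n p_n^{1/\alpha}=\infty$. Then $X_B=\rho_B$ is trace class but $\Tr{X_B^{1/\alpha}}=\infty$, so $\sigma^\star_B$ cannot be normalised. Your argument never uses $|A|<\infty$ beyond making the partial trace look harmless. That is the signal of the missing estimate: finiteness of $\Tr{X_B^{1/\alpha}}$ is precisely where this hypothesis enters.

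The missing idea is as follows. First, $\Vert\rho_{AB}^{\alpha}\Vert_{1/\alpha}^{1/\alpha}=\Tr{\rho_{AB}}=1$. Second, $X_B=\sum_{i=1}^{|A|}V_i^{\ast}\rho_{AB}^{\alpha}V_i$ with the isometries $V_i\ket{\psi}=\ket{i}_A\ket{\psi}_B$. Each summand then satisfies $\Vert V_i^{\ast}\rho_{AB}^{\alpha}V_i\Vert_{1/\alpha}\le1$, since compression by a contraction does not increase singular values.

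For $\alpha>1$, McCarthy's inequality $\Vert\sum_iY_i\Vert_q^q\le\sum_i\Vert Y_i\Vert_q^q$ (positive $Y_i$, $0<q<1$) gives $\Tr{X_B^{1/\alpha}}\le|A|$. For $\alpha<1$, the triangle inequality in $\mathcal{S}_{1/\alpha}$ gives $\Vert X_B\Vert_{1/\alpha}\le|A|$. This second bound also shows that the case $\Tr{X_B^{1/\alpha}}=\infty$, which you plan to treat by truncation for $\alpha<1$, never occurs. It is also what makes $\sigma^\star_B$ exist and the H\"older bound finite and attained; knowing only that $X_B$ is bounded is not enough.

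Your extra care about the unbounded operator $(\sigma^\star_B)^{\frac{1-\alpha}{2}}$ and the range condition is reasonable, and goes beyond what the paper writes out. It only becomes relevant once $\Tr{X_B^{1/\alpha}}<\infty$ is established. A minor point: your opening paragraph assigns reverse H\"older to $\alpha<1$ and H\"older to $\alpha>1$, the opposite of what your steps correctly do.
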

\begin{proof}
    The proof of the finite-dimensional result (\cite[Lemma 5.3]{Tomamichel_2016}) relies on two technical steps. The first step is to introduce a candidate optimizer for $\overline{H}^{\uparrow}_{\alpha}(A \vert B)_{\rho} = \sup_{\sigma_{B} \in S_{=}(B)} \overline{D}_{\alpha}(\rho_{AB} \Vert I_{A} \otimes \sigma_{B})$, 
    which is 
    \begin{align}
        \sigma_{B}^{\star} \coloneq \frac{\left( \operatorname{Tr}_{A}\left[\rho_{AB}^{\alpha}\right] \right)^{\frac{1}{\alpha}}}{\Tr{\left( \operatorname{Tr}_{A}\left[\rho_{AB}^{\alpha}\right] \right)^{\frac{1}{\alpha}}}} \ . 
    \end{align}
    This step is fine so long as $\sigma^{\star}_{B}$ is well-defined.
    The second step is to prove this candidate is optimal using the H\"{o}lder's or reverse H\"{o}lder's inequality depending on if $\alpha < 1$ or $\alpha > 1$ respectively. As we have already shown H\"{o}lder's and reverse H\"{o}lder's inequality hold in our setting (Props.~\ref{prop:Holder's-inequality}~and~\ref{prop:reverse-Holders}), the second step works so long as $\operatorname{Tr}_{A}[\rho_{AB}^{\alpha}] \in \cS_{1/\alpha}(B)$ when $\alpha < 1$ and in $\cS_{\alpha}(B)$ when $\alpha > 1$. 

    First, define $q \coloneq 1/\alpha$. It follows
        \begin{align}
            \Vert \rho^{\alpha}_{AB} \Vert_{q}^{q} = \Tr{\left(\rho^{\alpha}_{AB}\right)^{q}} = \Tr{\rho} = 1 \ ,
        \end{align}
    so $\rho^{\alpha}_{AB} \in \cS_{1/\alpha}(AB)$ which is what is needed when $\alpha < 1$ and is sufficient to guarantee $\rho^{\alpha}_{AB} \in \cS_{\alpha}(AB)$ when $\alpha > 1$. Thus, we now focus on showing $\operatorname{Tr}_{A}[\rho_{AB}^{\alpha}]$ is trace class so $\sigma^{\star}_{B}$ is well-defined. The idea is to decompose the state into a finite sum of terms on the $B$ space and then bound the norm of those terms individually. Recall we may express the partial trace as (see e.g.~\cite{Wilde_2016}),
        \begin{align}
            \operatorname{Tr}_{A}\left[\rho^{\alpha}_{AB}\right] = \sum_{i \in \{1,...,\vert A \vert\}} \bra{i} \otimes I_{B} \rho^{\alpha}_{AB} \ket{i} \otimes I_{B} \eqqcolon \sum_{i} X_{i} \ , 
        \end{align}
        where $X_{i} \coloneq \bra{i} \otimes I_{B} \rho^{\alpha}_{AB} \ket{i} \otimes I_{B}$. Defining isometries $V_{i}:B \to A \otimes B$ by $V_{i}\ket{\psi} = \ket{i}_{A}\ket{\psi}_{B}$, we have $X_{i} = V_{i}^{\ast} \operatorname{Tr}_{A}[\rho_{AB}^{\alpha}]V_{i}$.

        Now, using Proposition \ref{prop:Holder's-inequality} with the choices $p = q$, $p_{1} = \infty$, $p_{2} = q$, and $p_{3} = \infty$, 
        \begin{align}
            \Vert X_{i} \Vert_{q} = \Vert V^{\ast}_{i}XV_{i} \Vert_{q} \leq \Vert V^{\ast}_{i} \Vert_{\infty} \Vert T \Vert_{q} \Vert V_{i} \Vert_{\infty} = \Vert X \Vert_{q} \ , 
        \end{align}
        where the final equality uses the infinity norm of an isometry is one. Using that the power function is monotone increasing, we conclude $\Vert X_{i} \Vert_{q}^{q} \leq \Vert X \Vert_{q}^{q} = 1$ for all $i$. Using this, when $\alpha > 1$ so that $q < 1$, we apply \cite[Theorem 2.8]{mccarthy_1967} to conclude $\Vert \operatorname{Tr}_{A}\left[\rho^{\alpha}_{AB}\right] \Vert_{q}^{q} \leq \sum_{i} \Vert X_{i} \Vert_{q}^{q} \leq \vert A \vert$,
        where the final inequality uses the bound we just established. When $\alpha < 1$, by the triangle inequality, monotonicity of the power function, and our bound on $\Vert X \Vert_{q}^{q}$, we have $\Vert \operatorname{Tr}_{A}[\rho_{AB}^{\alpha}] \Vert_{q}^{q}\leq \vert A \vert^{q}$. Therefore, for $\alpha > 0$, $\operatorname{Tr}_{A}\left[\rho^{\alpha}_{AB}\right] \in \cS_{q}(B)$. Finally,
        \begin{align}
            \Tr{\left( \operatorname{Tr}_{A}\left[\rho_{AB}^{\alpha}\right] \right)^{\frac{1}{\alpha}}} &= \Tr{\left( \operatorname{Tr}_{A}\left[\rho_{AB}^{\alpha}\right] \right)^{q}} \\
            \notag&= \Vert \operatorname{Tr}_{A}\left[\rho_{AB}^{\alpha}\right] \Vert_{q}^{q} < +\infty \ .
        \end{align}
        Thus, $\left( \operatorname{Tr}_{A}\left[\rho_{AB}^{\alpha}\right] \right)^{\frac{1}{\alpha}}$ is trace class as its trace is finite and it is positive semidefinite.
\end{proof}

\begin{remark}
    It is easy to see the proof breaks if we no longer had a finite sum on the $A$ system. More concerningly, if $A$ is not finite and $\alpha > 1$, then $\sigma^{\star}_{B}$ need not be finite as one can choose $\rho_{AB}$ pure with Schmidt coefficients defined by a probability distribution $(p_{n})_{n \in \mbb{N}}$ such that $\sum_{n} p_{n} = 1$ but $\sum_{n} p_{n}^{1/\alpha} = +\infty$. Similar constructions break the proof for $\alpha < 1$.
\end{remark}

\begin{proof}[Proof of Prop.~\ref{prop:Petz-Sandwiched-duality}]
    The proof is the same as \cite[Proposition 5.8]{Tomamichel_2016} as one can appeal to Proposition \ref{prop:non-var-form-of-petz-up} and the Schmidt decomposition argument still works in infinite dimensions.
\end{proof}

\subsection{On Data Processing and Strong Sub-Additivity}\label{app:inf-dim-DPI}

We use the following fact, often referred to as strong subadditivity for (optimized) Sandwiched \Renyi entropy.
\begin{proposition}\label{prop:Inf-strong-subadd}
    For separable Hilbert spaces $A$, $B$, and $C$ where $\vert A \vert < +\infty$ and $\alpha \in [1/2,1) \cup (1,\infty]$,
    \begin{align}
        \wt{H}^{\uparrow}_{\alpha}(A \vert BC) \leq \wt{H}^{\uparrow}_{\alpha}(A \vert B) \ .  
    \end{align}
\end{proposition}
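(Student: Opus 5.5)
The plan is to derive strong subadditivity from the data-processing inequality for the sandwiched \Renyi divergence on separable Hilbert spaces, applied to the partial trace over $C$. I take as available the infinite-dimensional DPI for $\renyiSandDiv_\alpha$ with $\alpha\in[1/2,1)\cup(1,\infty]$ under channels between separable Hilbert spaces, as established in \cite{Mosonyi_2023} for the $(\alpha,z)$-divergences with $z=\alpha$. Fix $\rho_{ABC}$. By definition,
\begin{equation*}
\renyiSandUp_\alpha(A|BC)_\rho=\sup_{\sigma_{BC}\in\dop{=}(BC)}-\renyiSandDiv_\alpha\rel{\rho_{ABC}}{I_A\otimes\sigma_{BC}}.
\end{equation*}
For any feasible $\sigma_{BC}$, I apply the partial trace $\mathrm{Tr}_C$, viewed as a channel $\cT(ABC)\to\cT(AB)$. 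It maps $\rho_{ABC}\mapsto\rho_{AB}$.

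The second argument $I_A\otimes\sigma_{BC}$ is not trace class when interpreted on $AB C$; however, since $|A|<+\infty$, it is a positive trace-class operator with trace $|A|$. Hence the partial trace sends it to $I_A\otimes\sigma_B$, where $\sigma_B=\pTr{C}{\sigma_{BC}}\in\dop{=}(B)$. This is where finiteness of $A$ is used: the DPI in \cite{Mosonyi_2023} is formulated for positive trace-class (or at least normal positive) second arguments. Finite $A$ keeps everything inside that setting, and scaling invariance handles the normalization, since $\renyiSandDiv_\alpha(\rho\|c\tau)=\renyiSandDiv_\alpha(\rho\|\tau)-\log c$.

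DPI then gives
\begin{equation*}
\renyiSandDiv_\alpha\rel{\rho_{AB}}{I_A\otimes\sigma_B}\leq\renyiSandDiv_\alpha\rel{\rho_{ABC}}{I_A\otimes\sigma_{BC}},
\end{equation*}
which yields $-\renyiSandDiv_\alpha\rel{\rho_{ABC}}{I_A\otimes\sigma_{BC}}\leq -\renyiSandDiv_\alpha\rel{\rho_{AB}}{I_A\otimes\sigma_B}\leq\renyiSandUp_\alpha(A|B)_\rho$. Taking the supremum over $\sigma_{BC}$ proves the claim. The infinite cases are consistent: whenever the right-hand divergence is $+\infty$ the inequality is trivial, and support or range conditions are inherited under the channel as part of the DPI statement. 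For $\alpha=\infty$ I would either invoke the max-divergence DPI directly, using the operator inequality $\rho_{ABC}\leq \lambda I_A\otimes\sigma_{BC}$, which the positive map $\mathrm{Tr}_C$ preserves, or take the monotone limit $\alpha\to\infty$.

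The main obstacle is verifying that the DPI from \cite{Mosonyi_2023} applies with an unnormalized, possibly non-faithful second argument and with the range-inclusion definition of $\wt{Q}_\alpha$ used in Definition~\ref{def:inf-dim-renyis}. If a direct citation is not available, my fallback is approximation. I would compress $B$ and $C$ by finite-rank projections $P_k\uparrow I$, apply the finite-dimensional SSA \cite[Corollary~5.4]{Tomamichel_2016} to the normalized compressions, and then pass to the limit. That limit step requires lower semicontinuity of $\renyiSandDiv_\alpha$ together with convergence of the truncated divergences, which holds for these monotone compressions by the martingale-type convergence results for $(\alpha,z)$-divergences in \cite{Mosonyi_2023}.
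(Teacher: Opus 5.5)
Your argument is correct and is essentially the paper's proof. Both apply data processing for $\wt{D}_\alpha$ under $\mathrm{Tr}_C$, use the finiteness of $A$ to ensure $I_A\otimes\sigma_{BC}$ is trace class, and then take the supremum over $\sigma_{BC}$; your fallback mirrors the paper's alternative route of extending finite-dimensional DPI via the finite-rank approximation results of \cite{Mosonyi_2023}. The difference is bibliographic: for $\alpha\in[1/2,1)$ the paper relies on the von Neumann-algebra DPI of \cite{Berta_2018} rather than on \cite{Mosonyi_2023}, whose approximation argument it invokes only for $\alpha\in(1,\infty]$.
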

This fact follows directly from the data processing inequality of the sandwiched \Renyi divergence as then for a density matrix $\sigma_{B}$ and a corresponding extension $\sigma_{BC}$,
\begin{align}
    -\wt{D}_{\alpha}(\rho_{AB} \Vert I_{A} \otimes \sigma_{B}) &= -\wt{D}_{\alpha}(\operatorname{Tr}_{C}[\rho_{ABC}] \Vert I_{A} \otimes \operatorname{Tr}_{C}[\sigma_{BC}]) \\
    \notag&\geq -\wt{D}_{\alpha}(\rho_{ABC} \Vert I_{A} \otimes \sigma_{BC}) \ ,
\end{align}
so one can use the definition of supremizing over the state to conclude the proof. The fact thus may be proven using the data processing inequality of sandwiched \Renyi divergence for von Neumann algebras \cite[Theorem 14]{Berta_2018}. Alternatively, for $\alpha \in (1,\infty]$, \cite[Proposition 3.40]{Mosonyi_2023} establishes the limiting behaviour of finite approximations of sandwiched \Renyi divergences when the arguments are trace-class, positive semidefinite operators. Thus, by letting $\vert A \vert < +\infty$ so that $I_{A} \otimes \sigma_{BC}$ is always trace class, one can extend the data processing for finite dimensions (see e.g.~\cite{Tomamichel_2016}) to infinite-dimensions using the same approximation argument as extending DPI of the quantum relative entropy from finite-dimensions to infinite-dimensions \cite[Step 2 of Proof of Theorem 1]{muller2017monotonicity}. 

\subsection{Chain Rule for Classical Side-Information}\label{app:inf-dim-classical-side-info}
To make things simpler to verify for a reader, we prove the chain rule for classical side-information for optimized Sandwiched \Renyi divergences with infinite-dimensional side-information without appealing to duality of the sandwiched entropy as is done in \cite{Tomamichel_2016}. Instead, we do a trace inequality argument.

\begin{proposition}\label{prop:cl-chain-rule}
    Let $A$, $B$, and $X$ be separable Hilbert spaces where $\vert X \vert$ is finite. Then, for quantum-classical state $\rho_{ABX} = \sum_{x} p_{X}(x) \rho_{AB}^{x} \otimes \dyad{x}_{X}$,
    \begin{align}
        \wt{H}^{\uparrow}_{\alpha}(A \vert BX)_{\rho} \geq \wt{H}^{\uparrow}_{\alpha}(A \vert B)_{\rho} -\log \vert X \vert \ .
    \end{align}
\end{proposition}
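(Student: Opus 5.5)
Our plan is to work directly with the optimisation over the conditioning state, not with duality. Fix any density operator $\sigma_B\in\dop{=}(B)$ and use the candidate $\sigma_{BX}\defvar \sigma_B\otimes \frac{I_X}{|X|}$, which is a normalized state on $BX$ because $|X|<\infty$. Then
\begin{equation*}
\wt{H}^{\uparrow}_{\alpha}(A|BX)_\rho \geq -\wt{D}_{\alpha}\!\left(\rho_{ABX}\middle\| I_A\otimes\sigma_B\otimes \tfrac{I_X}{|X|}\right) = -\wt{D}_{\alpha}\!\left(\rho_{ABX}\middle\| I_A\otimes\sigma_B\otimes I_X\right) - \log|X|,
\end{equation*}
using the scaling identity $\wt{D}_\alpha(P\|cQ)=\wt{D}_\alpha(P\|Q)-\log c$. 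This identity follows immediately from the norm definition $\wt{Q}_\alpha(P\|Q)=\|Q^{\frac{1-\alpha}{2\alpha}}P^{1/2}\|_{2\alpha}^{2\alpha}$ in \cref{def:inf-dim-renyis}, since the range condition is unchanged under positive scaling. Taking the supremum over $\sigma_B$ then reduces the claim to
\begin{equation*}
\wt{D}_{\alpha}(\rho_{ABX}\| I_A\otimes\sigma_B\otimes I_X)\leq \wt{D}_{\alpha}(\rho_{AB}\| I_A\otimes\sigma_B).
\end{equation*}

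The reference operator is block diagonal in $X$, so the sandwiched quantity splits over the blocks. Writing $\Gamma\defvar (I_A\otimes\sigma_B)^{\frac{1-\alpha}{2\alpha}}$, we get $\wt{Q}_\alpha(\rho_{ABX}\|\cdot)=\sum_x \Tr{(\Gamma\, p_X(x)\rho^x_{AB}\,\Gamma)^\alpha}$. The supports also split, so the range conditions hold blockwise. On the other side, $\Gamma\rho_{AB}\Gamma=\sum_x \Gamma\, p_X(x)\rho^x_{AB}\Gamma$ is a finite sum of positive operators $Y_x$. The reduced inequality is then
\begin{equation*}
\sum_x \Tr{Y_x^\alpha}\leq \Tr{\Bigl(\sum_x Y_x\Bigr)^\alpha}\qquad(\alpha>1),
\end{equation*}
which is the superadditivity of $t\mapsto t^\alpha$ under traces of sums of positive operators. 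In separable Hilbert spaces this is exactly Rotfel'd's trace inequality \cite[Theorem 2]{Rotfeld_1969} for the convex function $t^\alpha$ with $f(0)=0$, applied to positive compact operators. Multiplying by $\frac{1}{\alpha-1}>0$ and taking logarithms gives the inequality.

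The main obstacle is the infinite-dimensional bookkeeping. First, when the range condition fails for $\rho_{AB}$, the right-hand side is $+\infty$ and the inequality is trivial. When it holds, we must argue that each $Y_x$ is a well-defined positive compact operator on which Rotfel'd applies, that the range condition for each block is inherited (because $p_X(x)\rho^x_{AB}\leq\rho_{AB}$ implies a range inclusion for the square roots by Douglas' lemma), and that $\Gamma$ may be unbounded. To handle unboundedness, we would follow \cite{Mosonyi_2023} and write $Y_x=(\Gamma P_x^{1/2})(\Gamma P_x^{1/2})^\ast$ with $P_x=p_X(x)\rho_{AB}^x$. The operator $\Gamma P_x^{1/2}$ is a densely defined operator that extends boundedly under the range condition, so $\Tr{Y_x^\alpha}=\|\Gamma P_x^{1/2}\|_{2\alpha}^{2\alpha}$ is well-defined.

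Finally, the case $\alpha<1$ (if it is needed) reverses both signs: $t^\alpha$ is concave, so Rotfel'd gives subadditivity, and $\frac{1}{\alpha-1}<0$ restores the same conclusion.
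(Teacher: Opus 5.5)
Your proposal is correct and follows essentially the same route as the paper. You plug in the feasible point $\sigma_B\otimes I_X/\vert X\vert$, split the sandwiched quantity over the blocks of the classical register, compare $\sum_x \mathrm{Tr}[Y_x^\alpha]$ with $\mathrm{Tr}[(\sum_x Y_x)^\alpha]$ via Rotfel'd's trace inequality, and then optimise over $\sigma_B$. Pulling out $\log\vert X\vert$ through the scaling identity (rather than carrying the factor $\vert X\vert^{\alpha-1}$ as the paper does), and your extra bookkeeping on range conditions and the unbounded $\Gamma$, are only presentational refinements of the same argument.
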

\begin{proof}
    We define $\pi_{X} \coloneq \frac{1}{X}I_{X}$, which is well defined as $X$ is a finite-dimensional classical register. The basic proof idea is to get a bound by replacing a feasible point of the optimization for $\wt{H}^{\uparrow}_{\alpha}(A \vert BX)_{\rho}$ with a new feasible point $\tau_{B} \otimes \pi_{X}$ that only picks up a $\log|X|$ correction term. By optimizing then completes the proof.

    By re-writing the norm representation of the $\wt{Q}$ function in \eqref{eq:inf-dim-sandwiched-Q-function} as a trace and merging the negative in the definition of optimized we have
    \begin{align}
        &\exp((1-\alpha)\wt{H}^{\uparrow}_{\alpha}(A \vert BX)) \\
        \notag&= \operatorname{opt}_{\sigma_{XB} \in S_{=}(XB)} \wt{Q}(\rho_{AB} \Vert I_{A} \otimes \sigma_{BX}) \\ 
        \notag&= \operatorname{opt}_{\sigma_{XB} \in S_{=}(XB)}\Tr{\left(\sigma_{BX}^{\frac{1-\alpha}{2\alpha}}\rho_{ABX}\sigma_{BB}^{\frac{1-\alpha}{2\alpha}} \right)^{\alpha}} \ , 
    \end{align}
    where $\operatorname{opt}$ is $\sup$ for $\alpha < 1$ and $\inf$ for $\alpha > 1$, because we have multiplied the entropy by $(1-\alpha)$ whose negativity depends on the value of $\alpha$.
    
    \begin{align}
        &\wt{Q}(\rho_{ABX} \Vert I_{A} \otimes \tau_{B} \otimes \pi_{X}) \\
        \notag&=\Tr{\left( \left(\pi_{X} \otimes \tau_{B}\right)^{\frac{1-\alpha}{2\alpha}}\rho_{ABX} \left(\pi_{X} \otimes \tau_{B}\right)^{\frac{1-\alpha}{2\alpha}} \right)^{\alpha}} \\ 
        \notag&= \sum_{x} \left(\frac{1}{\vert X \vert}\right)^{1-\alpha}p_{X}(x)^{\alpha} \\
        \notag&\hspace{2cm} \cdot \Tr{\left( \left(\tau_{B}\right)^{\frac{1-\alpha}{2\alpha}}\rho_{AB}^{x} \left(\tau_{B}\right)^{\frac{1-\alpha}{2\alpha}} \right)^{\alpha}} \\
        \notag&= \vert X \vert^{\alpha-1} \sum_{x} \Tr{\left(p_{X}(x) \left(\tau_{B}\right)^{\frac{1-\alpha}{2\alpha}}\rho_{AB}^{x} \left(\tau_{B}\right)^{\frac{1-\alpha}{2\alpha}} \right)^{\alpha}} \ . \label{eq:cl-chain-rule-step-1}
    \end{align}
    For notational simplicity, define the positive-semidefinite operators $Y_{x} \coloneq p_{X}(x) \left(\tau_{B}\right)^{\frac{1-\alpha}{2\alpha}}\rho_{AB}^{x} \left(\tau_{B}\right)^{\frac{1-\alpha}{2\alpha}}$. Note that by linearity $\Tr{\left(\sum_{x} Y_{x}\right)^{\alpha}} = \Tr{\left(\left(\tau_{B}\right)^{\frac{1-\alpha}{2\alpha}}\rho_{AB}\left(\tau_{B}\right)^{\frac{1-\alpha}{2\alpha}}\right)^{\alpha}} = \wt{Q}(\rho_{AB} \Vert I_{A} \otimes \tau_{B})$. Observe that when $\alpha < 1$, $Y_{x}$ is always bounded and for $\alpha > 1$, $Y_{x}$ is always bounded for $\tau_{B}$ such that $\wt{Q}(\rho_{AB} \Vert I_{A} \otimes \tau_{B}) < +\infty$ as follows from \cite[Lemma 3.1]{Mosonyi_2023}.
    
    For $Y_{x}$ that are bounded, positive semidefinite operators and the function $f_{\alpha}(x)\coloneq x^{\alpha}$ satisfies $f_{\alpha}(0) = 0$ for all $\alpha \in (0,\infty)$ and is concave for $0 < \alpha < 1$ and convex for $1 < \alpha < \infty$, we have by \cite[Theorem 2]{Rotfeld_1969}, 
    \begin{align}
        \sum_{x} \Tr{Y_{x}^{\alpha}} \geq \Tr{\left(\sum_{x}Y_{x}\right)^{\alpha}} \quad \alpha \in (0,1) 
    \end{align}
    and the inequality swapped for $\alpha > 1$. Thus, for $\alpha \in (0,1)$
    \begin{equation}
    \begin{aligned}
        &\wt{Q}(\rho_{ABX} \Vert I_{A} \otimes \tau_{B} \otimes \pi_{X}) \\ 
        & \hspace{1cm} \geq \vert X \vert^{\alpha -1}  \wt{Q}(\rho_{AB} \Vert I_{A} \otimes \tau_{B}) \ , 
    \end{aligned}
    \end{equation}
    and the inequality switched for $\alpha > 1$. 
    
    For $\alpha < 1$, by supremizing over the choice of $\tau_{B}$, we obtain
    \begin{align}
        & \exp((1-\alpha)H^{\uparrow}_{\alpha}(A \vert BX)) \\
        \notag& \hspace{5mm} \geq \vert X \vert^{\alpha-1} \exp((1-\alpha)H^{\uparrow}_{\alpha}(A \vert B)_{\rho}) \ . 
    \end{align}
    Taking the logarithm and dividing by $1-\alpha > 0$, completes the proof. For $\alpha > 1$, by infimizing over $\tau_{B}$, we obtain
    \begin{align}
        & \exp((1-\alpha)H^{\uparrow}_{\alpha}(A \vert BX)) \\
        \notag& \hspace{5mm} \leq \vert X \vert^{\alpha-1} \exp((1-\alpha)H^{\uparrow}_{\alpha}(A \vert B)_{\rho}) \ . 
    \end{align}
    We remark that while $Y_{x}$ were only bounded if $\wt{Q}_{\alpha}(\rho_{AB} \Vert I_{A} \otimes \tau_{B}) < +\infty$, if the infimum were not finite, the bound holds trivially true, so the bound holds true for all states $\tau_{B}$. Taking the logarithm and dividing by $1-\alpha < 0$, which we note flips the inequality back appropriately, completes the proof.
\end{proof}

\section{Full Proof of the Infinite-Dimensional Marginal Constrained Entropy Accumulation Theorem}\label{App:Inf_MEAT_Proof}
    The marginal-constrained entropy accumulation theorem from Ref.~\cite{Arqand_2025} ultimately proves a lower bound on the $n$-round \Renyi entropy of the form
    \begin{equation}
        \renyiSandUp_\alpha(S_1^n | \CP_1^n E_n)_{\rho_{|\Omega}} \geq  n  h^\uparrow_{\alpha}
    - \frac{\alpha}{\alpha-1} \log\frac{1}{p_\Omega},
    \end{equation}
    where $h^\uparrow_{\alpha}$ is a constant we will define later. Most importantly, this was only shown for \emph{finite-dimensional} spaces. Therefore, so far, the MEAT is not applicable to any CV QKD protocols or other applications which require infinite dimensions.
    
    Hence, the goal of this section is to formally establish that this bound still holds for infinite-dimensional separable Hilbert spaces. 
    
    Similarly, we will extend the following bound for $f$-weighted entropies to infinite-dimensional separable Hilbert spaces, which is required to prove security for variable-length protocols
    \begin{equation}
        	H_\alpha^{\uparrow,f_\mathrm{full}}(S_1^n|\CP_1^nE_n)_\rho\geq\sum_j\min_{\cP_1^{j-1}}\kappa_{\cP_1^{j-1}}.
    \end{equation}
    Again, here $\kappa_{\cP_1^{j-1}}$ are constants we will define in more detail later, as this is simply an informal summary.

    As already mentioned in the main text, the proof of the $f$-weighted marginal-constrained entropy accumulation theorem relies on the uniform continuity of the $f$-weighted \Renyi entropy. Therefore, we begin by proving the uniform continuity of both the conditional sandwiched \Renyi entropy and the $f$-weighted \Renyi entropy in \cref{app:subsec_continuity} for infnite-dimensional side-information.
    
    Next, we introduce the projected MEAT channels in \cref{app:subsec_conv_to_finite}, which will allow us to apply the finite-dimensional $f$-weighted entropy accumulation theorem and certain parts of the finite-dimensional MEAT. 
    
    In the subsequent section, we establish several preliminary lemmata to prove the $f$-weighted marginal-constrained entropy accumulation theorem in \cref{app:Proof_of_f_weighted_MEAT}, utilizing the established uniform continuity and taking the limit of the truncations.
    
    As noted previously, this approach does not work for the standard marginal-constrained entropy accumulation theorem because the objective in $h^\uparrow_{\alpha}$ is not uniformly continuous nor are the sets involved in $h_{\alpha}^{\uparrow}$ compact. Hence, in \cref{app:subsec_prelim_inf_MEAT}, we introduce a perturbation function $V$ closely related to the objective function in $h_{\alpha}^{\uparrow}$. Most importantly, we prove that it is lower semicontinuous and that, by the Fenchel--Moreau theorem \cite[Theorem~13.37]{Bauschke_2017}, $V=V^{**}$, where $V^{**}$ is the convex biconjugate. This establishes the strong duality of this perturbation function. Crucially, when the perturbation is set to $0$, we recover $h_{\alpha}^{\uparrow}$ and the corresponding dual expression in terms of the $f$-weighted \Renyi entropy. Finally, in \cref{app:subsec_proof_inf_MEAT}, we combine these results to prove the infinite-dimensional MEAT.

\subsection{Continuity Bounds}\label{app:subsec_continuity}

\begin{theorem}[Continuity Bound for Sandwiched \Renyi Entropy with infinite dimensional Side-information]\label{thrm:Continuity_Renyi}
	Let $\epsilon \in [0,1]$ and let $\rho,\sigma \in \dop{=}(QE)$, satisfying
	\begin{equation}
		\frac{1}{2} \norm{\rho-\sigma}_1 \leq \epsilon.
	\end{equation}
	Furthermore, let $\dim(Q)<\infty$, but $E$ can be possibly infinite dimensional.
	Then, the sandwiched \Renyi entropy is uniformly continuous and we find for $\alpha \in (1,\infty)$:
	\begin{equation}
		\abs{ \renyiSandUp_{\alpha}(Q|E)_{\rho} - \renyiSandUp_{\alpha}(Q|E)_{\sigma} } \leq \Delta_{\alpha}(\epsilon),
	\end{equation}
	where $\Delta_{\alpha}(\epsilon)$ is defined as
	\begin{widetext}
	\begin{equation}
		\Delta_{\alpha}(\epsilon)\defvar\min\left\{
		\begin{aligned} &\log(1+\epsilon)+\frac{1}{\alpha-1}\log\!\left(1+\epsilon d^{\alpha-1}-\frac{\epsilon^\alpha}{(1+\epsilon)^{\alpha-1}}\right),\\ &\frac{\alpha}{\alpha-1}\log\!\left(1+\epsilon d^{\frac{\alpha-1}{\alpha}}\right),\\ &\log(1+\epsilon)+\frac{\alpha}{\alpha-1}\log\!\left(1+\epsilon d^{\frac{\alpha-1}{\alpha}}-\frac{\epsilon^{2-\frac{1}{\alpha}}}{(1+\epsilon)^{\frac{\alpha-1}{\alpha}}}\right)
		\end{aligned}
		\right\},
	\end{equation}
	where \(d=\dim(Q)\) if $Q$ is classical and $d=\dim(Q)^2$ otherwise.
	\end{widetext}
\end{theorem}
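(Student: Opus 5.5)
We plan to obtain \cref{thrm:Continuity_Renyi} by reducing to the finite-dimensional continuity bound of \cite[Corollary~5.1]{Bluhm_2026} (which is exactly the $\Delta_\alpha$ above, with $d_S^2$ replaced by $d_S$ for classical $Q$). We will use a truncation of the side-information together with a limiting argument. Fix an increasing sequence of finite-rank projections $P_k$ on $E$ converging strongly to $I_E$. We will also fix pinching or projection channels $\mathcal{P}_k(X)\defvar (I_Q\otimes P_k)X(I_Q\otimes P_k)+\Tr{(I_Q\otimes(I_E-P_k))X}\,\tau_Q\otimes\pure{\perp}$, which map onto the finite-dimensional space $Q\otimes(\mathrm{ran}P_k\oplus\Span\{\ket{\perp}\})$. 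Here $\tau_Q$ is a fixed state, and we could alternatively keep the discarded part classical-flagged. These maps are CPTP, so by the data-processing inequality for the trace distance the truncated states $\rho_k\defvar\mathcal{P}_k[\rho]$ and $\sigma_k\defvar\mathcal{P}_k[\sigma]$ still satisfy $\frac12\norm{\rho_k-\sigma_k}_1\le\epsilon$. The finite-dimensional bound therefore gives $\abs{\renyiSandUp_\alpha(Q|E)_{\rho_k}-\renyiSandUp_\alpha(Q|E)_{\sigma_k}}\le\Delta_\alpha(\epsilon)$ for every $k$, uniformly in $k$, because $\Delta_\alpha$ depends only on $\dim Q$.

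The second step is to show $\renyiSandUp_\alpha(Q|E)_{\rho_k}\to\renyiSandUp_\alpha(Q|E)_\rho$, and likewise for $\sigma$. Data processing (\cref{app:inf-dim-DPI}) gives one direction. Indeed, $\mathcal{P}_k$ acts only on $E$ (apart from the fixed replacement on $Q$), so we should design the channel so that the $Q$-replacement does not break the entropy monotonicity. The cleanest choice is to keep the $Q$ part of the discarded branch, i.e. map $X\mapsto (I\otimes P_k)X(I\otimes P_k)+\operatorname{Tr}_E[(I\otimes \bar P_k)X]\otimes\pure{\perp}$. This is a channel on $E$ alone, and hence $\renyiSandUp_\alpha(Q|E)_{\rho_k}\ge\renyiSandUp_\alpha(Q|E)_\rho$. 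For the other direction, we will use that $\rho_k\to\rho$ in trace norm when embedded back into $Q\otimes(E\oplus\perp)$, and we will invoke lower semicontinuity of $\wt{D}_\alpha$ in its first argument together with the martingale/limit property \cite[Proposition~3.40]{Mosonyi_2023} of sandwiched divergences under increasing finite-dimensional compressions. Concretely, we write $\renyiSandUp_\alpha(Q|E)_\rho=\sup_{\omega_E}-\wt{D}_\alpha(\rho\Vert I_Q\otimes\omega_E)$. For any $\omega_E$ we then compare with $\omega_{E,k}\defvar\mathcal{P}_k$-compressed versions, which yields $\limsup_k\renyiSandUp_\alpha(Q|E)_{\rho_k}\le\renyiSandUp_\alpha(Q|E)_\rho$. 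We obtain this from $\liminf_k\wt{D}_\alpha(\rho_k\Vert I\otimes\omega_{k})\ge\wt{D}_\alpha(\rho\Vert I\otimes\omega)$ for the near-optimal $\omega_k$, combined with the finiteness of $\dim Q$, which ensures $I_Q\otimes\omega$ is trace class. The step needing care is the interchange of the supremum over $\omega$ with the limit in $k$.

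We expect this limit interchange to be the main obstacle. The optimizers $\omega_k$ need not converge in trace norm, and $\wt{D}_\alpha$ is only lower semicontinuous. We would first try to avoid it by using the closed-form optimizer for $\alpha>1$ via duality (\cref{prop:Petz-Sandwiched-duality} applied to purifications, since $\dim Q<\infty$), which expresses $\renyiSandUp_\alpha(Q|E)$ as $-\renyiPetzDown_{1/\alpha}$ of the complementary system. That quantity is given by an explicit trace functional of $\rho_{QR}$ which is continuous under the truncation limit. If this fails, we would fall back on the monotone convergence of $\wt{Q}_\alpha$ under increasing projections, in the style of \cite{muller2017monotonicity}. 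Once convergence holds, passing $k\to\infty$ in the uniform finite-dimensional inequality yields the claim, with $d=\dim Q$ for classical $Q$ and $\dim(Q)^2$ otherwise, as in \cite{Bluhm_2026}.
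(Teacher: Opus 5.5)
Your skeleton matches the paper's. You use a truncation channel acting on $E$ alone, and your ``cleanest choice'' is exactly the paper's $\id_Q\otimes\mathcal{T}_m$. You then apply data processing for the trace distance, the uniform finite-dimensional bound of \cite{Bluhm_2026}, and pass to the limit $k\to\infty$. Your data-processing observation $\renyiSandUp_\alpha(Q|E_k)_{\rho_k}\geq\renyiSandUp_\alpha(Q|E)_\rho$ is also correct.

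The gap is the reverse inequality $\limsup_k\renyiSandUp_\alpha(Q|E_k)_{\rho_k}\leq\renyiSandUp_\alpha(Q|E)_\rho$. This is the entire content of the infinite-dimensional extension, and your proposal leaves it open. You correctly identify the sup--limit interchange as the obstacle: a supremum of upper semicontinuous functions need not be upper semicontinuous, so lower semicontinuity of $\wt{D}_\alpha$ at fixed $\omega$ gives nothing here. Neither of your remedies closes it, however.
\begin{itemize}
\item Your first remedy rests on a misremembered duality. \cref{prop:Petz-Sandwiched-duality} pairs $\renyiPetzUp_\alpha$ with $\renyiSandDown_{1/\alpha}$, and the closed-form optimizer exists for $\renyiPetzUp$, not for $\renyiSandUp$. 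The dual of $\renyiSandUp_\alpha$ is $\renyiSandUp_\beta$ with $\frac{1}{\alpha}+\frac{1}{\beta}=2$, which is again an optimized quantity, so no explicit continuous trace functional is available. You would also have to control purifications of truncated states that live on different spaces.
\item Your fallback, monotone convergence of $\wt{Q}_\alpha$ under increasing compressions, only handles a fixed second argument. By itself it does not perform the interchange.
\end{itemize}

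The paper closes this step by citing \cite[Proposition~III.39 and Lemma~II.5]{Mosonyi_2023}: finite-dimensional approximation of the sandwiched divergence together with a directed minimax lemma. It applies these to the normalized compressions $\hat\rho_k\defvar R_k\rho R_k/r_k$, with $R_k\defvar I_Q\otimes P_k$ and $r_k\defvar\mathrm{Tr}[R_k\rho]$, and obtains $\renyiSandUp_\alpha(Q|E)_\rho=\lim_k\bigl(\renyiSandUp_\alpha(Q|E_k)_{\hat\rho_k}-\frac{\alpha}{\alpha-1}\log r_k\bigr)$. It then passes from $\hat\rho_k$ to $\rho_k$ by applying the finite-dimensional bound once more, since $\frac12\norm{\rho_k-\hat\rho_k}_1=1-r_k\to0$.

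If you want a self-contained argument instead, your fallback can be completed as follows.
\begin{enumerate}
\item Take near-optimal $\omega_k$ for $R_k\rho R_k$.
\item For $j\leq k$, data processing of $\wt{Q}_\alpha$ ($\alpha>1$) under the trace-non-increasing compression $X\mapsto R_jXR_j$ gives $\wt{Q}_\alpha(R_j\rho R_j\Vert I_Q\otimes P_j\omega_kP_j)\leq\wt{Q}_\alpha(R_k\rho R_k\Vert I_Q\otimes\omega_k)$.
\item The operators $P_j\omega_kP_j$ lie in a compact subset of a finite-dimensional space. A diagonal subsequence therefore converges for every $j$ to a consistent family $P_j\tau P_j$, with $\tau\geq0$ and $\mathrm{Tr}[\tau]\leq1$.
\item Finite-dimensional lower semicontinuity, followed by the monotone compression limit, yields $\wt{Q}_\alpha(\rho\Vert I_Q\otimes\tau)\leq\liminf_k\wt{Q}_\alpha(R_k\rho R_k\Vert I_Q\otimes\omega_k)$.
\item Renormalizing $\tau$ to a state only decreases $\wt{Q}_\alpha$ for $\alpha>1$.
\end{enumerate}
This compactness-on-finite-compressions step is exactly the missing idea.
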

\begin{proof}
	The statement already holds for finite dimensional $E$ by \cite[Corollary~5.1]{Bluhm_2026} and therefore, we only need to show the extension to infinite dimensions. If $Q$ is classical, the same proof as in \cite{Bluhm_2026} gives the stated replacement $\dim(Q)^2 \mapsto \dim(Q)$, since $\renyiSandUp_\alpha(Q|E)_\omega\geq0$ for every state $\omega$ which is classical on $Q$.
	
	Let $P_m$ be finite-rank projections on $E$ satisfying $P_m\leq P_{m+1}$ and $P_m\xrightarrow{m\to\infty}I_E$ strongly, let $E_m\defvar\supp(P_m)\oplus\Span\{\ket{\perp}\}$, and define the channel
	\begin{equation}
		\mathcal{T}_m(X)\defvar P_mXP_m+\Tr{\left(I_E-P_m\right)X}\pure{\perp}.
	\end{equation}
	Define
	\begin{equation}
		\rho^{(m)}\defvar(\id_Q\otimes\mathcal{T}_m)(\rho),\qquad
		\sigma^{(m)}\defvar(\id_Q\otimes\mathcal{T}_m)(\sigma).
	\end{equation}
	By the data-processing inequality for the trace distance,
	\begin{equation}
		\frac{1}{2}\norm{\rho^{(m)}-\sigma^{(m)}}_1
		\leq\frac{1}{2}\norm{\rho-\sigma}_1
		\leq\epsilon.
	\end{equation}
	Since $E_m$ is finite-dimensional, \cite[Corollary~5.1]{Bluhm_2026} gives
	\begin{equation}\label{eq:finite_continuity_cutoffs}
		\abs{\renyiSandUp_\alpha(Q|E_m)_{\rho^{(m)}}-\renyiSandUp_\alpha(Q|E_m)_{\sigma^{(m)}}}
		\leq\Delta_\alpha(\epsilon).
	\end{equation}
	
	It remains only to show that, for each $\omega\in\{\rho,\sigma\}$,
	\begin{equation}\label{eq:entropy_cutoff_convergence}
		\lim_{m\to\infty}\renyiSandUp_\alpha(Q|E_m)_{\omega^{(m)}}
		=\renyiSandUp_\alpha(Q|E)_\omega.
	\end{equation}
	Therefore, let
	\begin{equation}
		R_m\defvar I_Q\otimes P_m,\; r_m\defvar\Tr{R_m\rho}, \; s_m\defvar\Tr{R_m\sigma},
	\end{equation}
	and for all sufficiently large $m$, define
	\begin{equation}
		\widehat{\rho}^{(m)}\defvar\frac{R_m\rho R_m}{r_m},\qquad
		\widehat{\sigma}^{(m)}\defvar\frac{R_m\sigma R_m}{s_m}.
	\end{equation}
	Here we regard these states as states on $QE_m$ with zero support on the discard subspace. Applying the finite-dimensional approximation of the sandwiched \Renyi divergence and the directed minimax result \cite[Proposition~III.39 and Lemma~II.5]{Mosonyi_2023} to the definition of the conditional \Renyi entropy gives
	\begin{align}
		\renyiSandUp_\alpha(Q|E)_\rho
		&=\lim_{m\to\infty}\left(\renyiSandUp_\alpha(Q|E_m)_{\widehat{\rho}^{(m)}}-\frac{\alpha}{\alpha-1}\log r_m\right), \\
		\renyiSandUp_\alpha(Q|E)_\sigma
		&=\lim_{m\to\infty}\left(\renyiSandUp_\alpha(Q|E_m)_{\widehat{\sigma}^{(m)}}-\frac{\alpha}{\alpha-1}\log s_m\right).
	\end{align}
	Since $P_m\to I_E$ strongly, we have $r_m \to 1$ and $s_m \to 1$ and thus,
	\begin{align}\label{eq:normalized_projected_entropy_limits}
		\lim_{m\to\infty}\renyiSandUp_\alpha(Q|E_m)_{\widehat{\rho}^{(m)}}
		&=\renyiSandUp_\alpha(Q|E)_\rho, \\
		\lim_{m\to\infty}\renyiSandUp_\alpha(Q|E_m)_{\widehat{\sigma}^{(m)}}
		&=\renyiSandUp_\alpha(Q|E)_\sigma.
	\end{align}
	By definition, $R_m\rho R_m=r_m\widehat{\rho}^{(m)}$, while the discard term in $\rho^{(m)}$ has trace $1-r_m$ and support orthogonal to $\widehat{\rho}^{(m)}$. Similarly, $R_m\sigma R_m=s_m\widehat{\sigma}^{(m)}$, while the discard term in $\sigma^{(m)}$ has trace $1-s_m$ and support orthogonal to $\widehat{\sigma}^{(m)}$. Hence,
	\begin{equation}
		\frac{1}{2}\norm{\rho^{(m)}-\widehat{\rho}^{(m)}}_1=1-r_m,\qquad
		\frac{1}{2}\norm{\sigma^{(m)}-\widehat{\sigma}^{(m)}}_1=1-s_m.
	\end{equation}
	Applying the finite-dimensional continuity bound gives
	\begin{align}
		\abs{\renyiSandUp_\alpha(Q|E_m)_{\rho^{(m)}}-\renyiSandUp_\alpha(Q|E_m)_{\widehat{\rho}^{(m)}}}
		&\leq\Delta_\alpha(1-r_m) \xrightarrow{m\to\infty} 0, \\
		\abs{\renyiSandUp_\alpha(Q|E_m)_{\sigma^{(m)}}-\renyiSandUp_\alpha(Q|E_m)_{\widehat{\sigma}^{(m)}}}
		&\leq\Delta_\alpha(1-s_m) \xrightarrow{m\to\infty} 0.
	\end{align}
	Together with \cref{eq:normalized_projected_entropy_limits}, this yields
	\begin{align}
		\lim_{m\to\infty}\renyiSandUp_\alpha(Q|E_m)_{\rho^{(m)}}
		&=\renyiSandUp_\alpha(Q|E)_\rho, \label{eq:rho-m-entropy-converge} \\
		\lim_{m\to\infty}\renyiSandUp_\alpha(Q|E_m)_{\sigma^{(m)}}
		&=\renyiSandUp_\alpha(Q|E)_\sigma. \label{eq:sigma-m-entropy-converge}
	\end{align}
    By adding and subtracting the entropies of $\rho^{(m)}$ and $\sigma^{(m)}$ and taking the triangle inequality,
    {\allowdisplaybreaks
    \begin{align}
        &\abs{\renyiSandUp_\alpha(Q|E)_\rho-\renyiSandUp_\alpha(Q|E)_\sigma} \\
       \notag &\leq \abs{\renyiSandUp_\alpha(Q|E)_\rho - \renyiSandUp_\alpha(Q|E_m)_{\rho^{(m)}}} \\ 
        \notag& \hspace{3mm} + \abs{\renyiSandUp_\alpha(Q|E_m)_{\sigma^{(m)}} - \renyiSandUp_\alpha(Q|E)_\sigma} \\ 
        \notag& \hspace{7mm} + \abs{\renyiSandUp_\alpha(Q|E_m)_{\sigma^{(m)}} - \renyiSandUp_\alpha(Q|E)_{\rho^{(m)}}} \ .
    \end{align}
    }
    Taking $m \to \infty$ and using \eqref{eq:finite_continuity_cutoffs}, \eqref{eq:rho-m-entropy-converge}, and \eqref{eq:sigma-m-entropy-converge}, we find
    \begin{equation}
		\abs{\renyiSandUp_\alpha(Q|E)_\rho-\renyiSandUp_\alpha(Q|E)_\sigma}
		\leq\Delta_\alpha(\epsilon),
	\end{equation}
	which proves the theorem.
	% Taking $m\to\infty$ in \cref{eq:finite_continuity_cutoffs} we therefore find
	% \begin{equation}
	% 	\abs{\renyiSandUp_\alpha(Q|E)_\rho-\renyiSandUp_\alpha(Q|E)_\sigma}
	% 	\leq\Delta_\alpha(\epsilon),
	% \end{equation}
	% which proves the theorem.
\end{proof}

\begin{theorem}[Continuity Bound for $f$-weighted \Renyi Entropies]\label{thrm:Continuity_f_weighted}
	Let $\epsilon \in [0,1]$ and let $\rho,\sigma \in \dop{=}(\CP QQ')$ be classical on $\CP$, satisfying
	\begin{equation}
		\frac{1}{2} \norm{\rho-\sigma}_1 \leq \epsilon.
	\end{equation}
	Furthermore, let $f: \alphCP_1^n \to \R$ be a tradeoff function and let $\dim\left(\CP\right)<\infty$ and $\dim\left(Q\right) <\infty$. Crucially, the register $Q'$ can be infinite-dimensional. Then, the $f$-weighted \Renyi entropy is uniformly continuous and we find for $\alpha \in (1,\infty)$:
	\begin{equation}
		\abs{ \frenyiSandUp_{\alpha}(Q|\CP Q')_{\rho} - \frenyiSandUp_{\alpha}(Q|\CP Q')_{\sigma} } \leq \Delta_{\alpha,f}(\epsilon),
	\end{equation}
	where $\Delta_{\alpha,f}(\epsilon)$ is defined as
	\begin{widetext}
		\begin{equation}\label{eq:correction_fweighted_cont}
			\Delta_{\alpha,f}(\epsilon) \defvar \min \begin{cases}
				\begin{aligned}
					&\log(1 + \epsilon) + \frac{1}{\alpha - 1}\log(1 + \epsilon \bar{d}^{\alpha - 1} - \frac{\epsilon^\alpha}{(1 + \epsilon)^{\alpha - 1}})
				\end{aligned},\\
				\frac{\alpha}{\alpha - 1}\log(1 + \epsilon \bar{d}^{\frac{\alpha - 1}{\alpha}}) ,\\
				\begin{aligned}
					&\log(1 + \epsilon) + \frac{\alpha}{\alpha - 1}\log(1 + \epsilon \bar{d}^{ \frac{\alpha - 1}{\alpha}} - \frac{\epsilon^{2 - \frac{1}{\alpha}}}{(1 + \epsilon)^{\frac{\alpha - 1}{\alpha}}}).
				\end{aligned}
			\end{cases}
		\end{equation}
		
	\end{widetext}
	The effective dimension $\bar{d}$ depends on whether $Q$ is classical or quantum and is given by
	\begin{equation}
		\bar{d} = \begin{cases}
			\left(\dim(Q)\left(\ceil{2^{\max{f}-\min{f}}} +1\right) \right)^2 , &\text{$Q$ quantum},\\
			\dim(Q)\left(\ceil{2^{\max{f}-\min{f}}} +1 \right), &\text{$Q$ classical}.
		\end{cases}
	\end{equation}
\end{theorem}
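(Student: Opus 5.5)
The plan is to reduce the statement to \cref{thrm:Continuity_Renyi} by absorbing the tradeoff function into an enlarged, still finite-dimensional, system. Three observations drive this.

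First, shifting the tradeoff function changes the $f$-weighted entropy only by a constant. Directly from \cref{def:inf-dim}, replacing $f$ by $f+k$ gives
\begin{equation}
\frenyiSandUp_\alpha \longmapsto \frenyiSandUp_\alpha - k .
\end{equation}
This constant cancels in the difference $\frenyiSandUp_\alpha(Q|\CP Q')_\rho-\frenyiSandUp_\alpha(Q|\CP Q')_\sigma$. We may therefore work with
\begin{equation}
g(\cP)\defvar \max f - f(\cP)\in[0,\max f-\min f].
\end{equation}

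Second, for a classical register $\CP$ the optimized sandwiched conditional entropy obeys
\begin{equation}
\renyiSandUp_\alpha(A|\CP B)_\omega=\frac{\alpha}{1-\alpha}\log\sum_{\cP}\omega(\cP)\,2^{\frac{1-\alpha}{\alpha}\renyiSandUp_\alpha(A|B)_{\omega_{|\cP}}} .
\end{equation}
The optimizer over $\sigma_{\CP B}$ can be taken block diagonal, and the infinite-dimensional validity follows as in \cref{prop:cl-chain-rule} and \cref{app:inf-dim-DPI}. Comparing with \cref{def:inf-dim}, $\frenyiSandUp_\alpha(Q|\CP Q')_\rho$ with weighting $-g$ equals $\renyiSandUp_\alpha(QD|\CP Q')$ of an augmented state, provided that for each $\cP$ we can append a system $D$ satisfying
\begin{equation}
\renyiSandUp_\alpha(QD|Q')_{\rho_{|\cP}\otimes\tau^{\cP}_D}=\renyiSandUp_\alpha(Q|Q')_{\rho_{|\cP}}+g(\cP).
\end{equation}

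Third, we choose $\tau^{\cP}_D$ as a classical distribution on an alphabet of size $m\defvar\ceil{2^{\max f-\min f}}+1$. Additivity of the sandwiched up-arrow entropy under tensoring with an uncorrelated classical state gives $\renyiSandUp_\alpha(QD|Q')=\renyiSandUp_\alpha(Q|Q')+H_\alpha(\tau^{\cP})$. Additivity holds because the optimizer can be taken of product form, and in infinite dimensions one can argue via the finite truncations of \cref{thrm:Continuity_Renyi}. The classical R\'enyi entropy $H_\alpha$ ranges continuously over $[0,\log m]$, which contains $[0,\max f-\min f]$. By the intermediate value theorem we therefore pick $\tau^{\cP}$ with $H_\alpha(\tau^{\cP})=g(\cP)$ exactly. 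The ceiling and the $+1$ give enough room and match the stated effective dimension.

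Next, define the map $\mathcal{A}(X)=\sum_{\cP}(\dyad{\cP}\otimes\id)X(\dyad{\cP}\otimes\id)\otimes\tau^{\cP}_D$. This is a channel, and it acts as the identity on the classical structure of $\CP$, so $\tfrac12\norm{\mathcal{A}(\rho)-\mathcal{A}(\sigma)}_1\le\tfrac12\norm{\rho-\sigma}_1\le\epsilon$. We then apply \cref{thrm:Continuity_Renyi} with unconditioned system $QD$, of dimension $\dim(Q)\,m$, and side information $E=\CP Q'$, which is separable and possibly infinite-dimensional. The resulting dimension parameter is $(\dim(Q)m)^2$ in general. When $Q$ is classical, $QD$ is classical and it reduces to $\dim(Q)m$. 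This is exactly $\bar d$. Combined with the identity
\begin{equation}
\frenyiSandUp_\alpha(Q|\CP Q')_\omega=\renyiSandUp_\alpha(QD|\CP Q')_{\mathcal{A}(\omega)}-\max f ,
\end{equation}
applied to both $\rho$ and $\sigma$, this yields the claimed bound $\Delta_{\alpha,f}(\epsilon)$.

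The main obstacle is justifying the two structural identities in the separable setting: the classical-conditioning formula and additivity under an uncorrelated classical $D$. Both are standard in finite dimensions. I would obtain them by passing through the truncations $\mathcal{T}_m$ on $Q'$ from the proof of \cref{thrm:Continuity_Renyi}. These act trivially on $\CP$ and $D$, so the finite-dimensional identities hold for every $m$, and the entropy convergence established there lets us take the limit $m\to\infty$.
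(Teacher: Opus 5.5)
Your proposal is correct and follows essentially the same route as the paper. You absorb the tradeoff function into an auxiliary classical register $D$ prepared conditionally on $\cP$ (the read-and-prepare construction of \cite[Lemma~4.2]{Arqand_2025}), so that $\frenyiSandUp_\alpha(Q|\CP Q')$ equals $\renyiSandUp_\alpha(QD|\CP Q')$ up to an additive constant, and then data processing of the trace distance together with \cref{thrm:Continuity_Renyi} for primary system $QD$ gives the stated $\bar{d}$.

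The differences are minor. You shift by exactly $\max f$ and pick $\tau^{\cP}$ on $\ceil{2^{\max f-\min f}}+1$ letters via the intermediate value theorem, which avoids the paper's auxiliary $\delta>0$ and its appeal to monotonicity of the bound in the dimension. You also spell out, via truncations on $Q'$, the separable-space validity of the classical-decomposition identity that the paper only asserts carries over. Additivity under the uncorrelated $D$ needs no such argument, since $D$ is not in the conditioning and $\wt{Q}_\alpha$ is multiplicative under tensor products.
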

\begin{proof}
	Let $\delta >0$ be arbitrary and set $M\defvar \max_{\cP}{f}(\cP) +\delta$. Then, the constant $M$ satisfies $M-f(\cP) >0$ for all $\cP$ in $\alphCP$.

    First, we invoke the construction from \cite[Lemma~4.2]{Arqand_2025}, which provides a read-and-prepare channel $\mathcal{D} \in \CPTP(\CP,\CP D)$ and register $D$ with $\dim(D)< \infty$, such that    
	\begin{align}\label{eq:proof_cont_bnd_f_weighted_plus_M}
		\frenyiSandUp_{\alpha}(Q|\CP Q')_\rho &= \renyiSandUp_{\alpha}(DQ|\CP Q')_{\rho} -M , \\
		\frenyiSandUp_{\alpha}(Q|\CP Q')_\sigma &= \renyiSandUp_{\alpha}(DQ|\CP Q')_{\sigma} -M.
	\end{align}
    While \cite[Lemma~4.2]{Arqand_2025} was formulated for finite-dimensional spaces, the identity remains valid even if the separable register $Q'$ is infinite-dimensional. The original proof carries over unchanged because it relies only on properties that still hold in infinite dimensions: the additivity of the sandwiched \Renyi divergence under tensor products with the finite-dimensional register $D$, and a decomposition over the finite classical register $\CP$.
	
	The proof of \cite[Lemma 4.2]{Arqand_2025} constructs the conditional states $\rho_{D|\cP}$ as a mixture between an arbitrary but fixed pure state $\ket{\psi}$ and the maximally mixed state, that is
	\begin{equation}
		\rho_{D|\cP} = \lambda_{\cP} \pure{\psi} + \left( 1- \lambda_{\cP} \right) \frac{I_D}{\dim(D)},
	\end{equation}
	where $\lambda_{\cP}$ is a parameter chosen such that $\renyiSandUp_{\alpha}(D)_{\rho_{D|\cP}} = M- f(\cP)$ for all $\cP$. Thus, as required to apply \cite[Lemma 4.2]{Arqand_2025}, the dimension of $D$ can be chosen as 
	\begin{equation}
		\dim(D) = \ceil{\max_{\cP} 2^{M-f(\cP)}} = \ceil{2^{M-\min{f}}}.
	\end{equation}

    Having established the properties of the extension to $D$, we now derive the new continuity bound. Using the relations in \cref{eq:proof_cont_bnd_f_weighted_plus_M} we find,
\begin{equation}
    	\begin{aligned}
		&\abs{ \frenyiSandUp_{\alpha}(Q|\CP Q')_{\rho} - \frenyiSandUp_{\alpha}(Q|\CP Q')_{\sigma} } \\
		&= \abs{M - \renyiSandUp_{\alpha}(DQ|\CP Q')_{\sigma} - M + \renyiSandUp_{\alpha}(DQ|\CP Q')_{\rho} } \\
		&= \abs{\renyiSandUp_{\alpha}(DQ|\CP Q')_{\sigma} - \renyiSandUp_{\alpha}(DQ|\CP Q')_{\rho}},
	\end{aligned}
\end{equation}
	since the constants $M$ cancel each other out. Furthermore, the extensions created by the read-and-prepare channel $\mathcal{D}$ satisfy due to the data processing inequality
\begin{equation}
    \frac{1}{2} \norm{ \rho_{DQ\CP Q'} - \sigma_{DQ \CP Q'} }_1 \leq \frac{1}{2} \norm{ \rho_{Q\CP Q'} - \sigma_{Q \CP Q'} }_1 \leq \epsilon
\end{equation}
	
	Therefore, one can apply the continuity bound for sandwiched conditional \Renyi entropies from \cref{thrm:Continuity_Renyi}. Since $D$ can be chosen classical, whenever $Q$ is classical the composite register $DQ$ is classical as well, and the corresponding tightening for classical registers applies, i.e. omitting the square from the dimension of the primary system.
	
	Moreover, the continuity bound is monotonically increasing in the effective dimension of the $DQ$ register, and for all sufficiently small $\delta>0$,
\begin{equation}
    	\begin{aligned}
		\dim(D) &= \ceil{2^{M-\min f}} = \ceil{2^{\max f + \delta -\min f}} \\
		&\leq \ceil{2^{\max f-\min f}}+1,
	\end{aligned}
\end{equation}
	and the theorem statement follows.
\end{proof}

\subsection{Conversion to Finite Dimensional Spaces}\label{app:subsec_conv_to_finite}
In this section, we introduce sequences of projected versions of the infinite-dimensional MEAT channels. One sequence still acts on the infinite-dimensional spaces and one sequence acts on finite-dimensional spaces. We show the latter sequence is equivalent to the former sequence up to an appropriate embedding while preserving the marginal constraint up to a known truncation (\cref{thrm:Finite_MEAT_Chains}). We show the former sequence converges to the infinite-dimensional MEAT channels (\cref{lem:Convergence_of_Projected_Channels}).

In a subsequent section, we apply the finite-dimensional version of the $f$-weighted marginal-constrained entropy accumulation theorem \cite[Theorem~4.1a]{Arqand_2025} to a chain of these projected channels and afterwards prove their convergence to the infinite-dimensional chain.

\subsubsection{Finite Dimensional Spaces}
Since finite-dimensional registers require no truncation, throughout this subsection we assume that all registers $A_j$ and $E_j$ are infinite-dimensional or are otherwise left unchanged.

\begin{definition}[Projections and discarded states]
On the systems $A_j$ for $j=0,\dots, n-1$ and $E_j$ for $j=0,\dots n$ let us define the projections
\begin{itemize}
    \item[1)] $\{P_{j}^{(d)}\}$ on $A_j$ with $\dim(\supp(P_{j}^{(d)})) = d$,
    \item[2)] $\{Q_{j}^{(d)}\}$ on $E_j$ with $\dim(\supp(Q_{j}^{(d)})) = d$,
\end{itemize}
satisfying
\begin{equation}
    P_{j}^{(d)} \leq P_{j}^{(d+1)}, \quad \text{and} \quad Q_{j}^{(d)} \leq Q_{j}^{(d+1)},
\end{equation}
for all $d\in \N$. Furthermore, we define the following pure discard states
\begin{itemize}
    \item[1)] $\pure{\perp_{j}}_{A_j} \in S_{=}(A_j)$ with $P_{j}^{(d)}\ket{\perp_j} = 0$.
    \item[2)] $\pure{\varnothing_{j}}_{E_j} \in S_{=}(E_j)$ with $Q_{j}^{(d)}\ket{\varnothing_{j}} = 0$.
\end{itemize}
Making use of these discard states we define the following extended projections
\begin{itemize}
    \item[1)] $\widebar{P}_{j}^{(d)} \defvar P_{j}^{(d)} + \pure{\perp_{j}}$,
    \item[2)] $\widebar{Q}_{j}^{(d)} \defvar Q_{j}^{(d)} + \pure{\varnothing_{j}}$,
\end{itemize}
on $A_j$ and $E_j$, respectively. These extended projections are additionally required to converge to the identity strongly, that is
\begin{align}
    \lim_{d \to \infty} \norm{\widebar{P}_{j}^{(d)} \ket{\psi} - \ket{\psi}}_2 = 0, \\ 
    \lim_{d \to \infty} \norm{\widebar{Q}_{j}^{(d)} \ket{\phi} - \ket{\phi}}_2 = 0,
\end{align}
for all $\ket{\psi}\in A_j$ and $\ket{\phi}\in E_j$, where $\norm{\cdot}_2$ denotes the Hilbert-space vector norm.
\end{definition}

\begin{definition}[Finite Dimensional Spaces]
For each infinite dimensional space $A_j$ and $E_j$ we define corresponding finite dimensional spaces as
\begin{itemize}
    \item[1)] $A_j^{(d)}$ of dimension $d+1$,
    \item[2)] $E_j^{(d)}$ of dimension $d+1$.
\end{itemize}
Additionally, in each space we identify a special unit vector $\ket{f}$ that will be mapped to the discard state under the embedding isometries we introduce in Definition~\ref{def:Isometries_Fin_Inf}.

Moreover, let $E$ denote a space into which $E_{j}$ can be embedded for all $j$. Similarly, let $\widebar{E}$ be a space into which $E_{j}^{(d)}$ can be embedded for all $j,d$. We further choose the latter space such that the family of embedded subspaces $\{E_j^{(d)}\}_{j,d}$ has the same embedding structure as the family
\begin{equation}
    \left\{ \supp\left(\widebar{Q}_j^{(d)}\right) \right\}_{j,d}
\end{equation}
inside $E$. In particular, all inclusion, intersection, and orthogonality relations between the spaces $E_j^{(d)}$ agree with those between the corresponding subspaces $\supp(\widebar{Q}_j^{(d)})$.
\end{definition}

\begin{definition}[Isometries between Finite and Infinite Spaces]\label{def:Isometries_Fin_Inf}
For all $j=0,\dots,n-1$ and dimensions $d\in \N$ we define the isometries $V_{j}^{(d)} : A_{j}^{(d)} \rightarrow A_j$ mapping between the finite and infinite $A$ spaces, which satisfy
\begin{align}
    (V_{j}^{(d)})^{\dagger} V_{j}^{(d)} &= \id_{A^{(d)}_j}, \\
    V_{j}^{(d)} (V_{j}^{(d)})^{\dagger} &= \widebar{P}_{j}^{(d)}, \\
    V_{j}^{(d)}\ket{f}_{A_{j}^{(d)}} &= \ket{\perp_{j}}_{A_{j}}.
\end{align}
Furthermore, for each $j=0,\dots,n$ and dimension $d \in \N$, we define the isometries $W_{j}^{(d)} : E_{j}^{(d)} \rightarrow E_j$ mapping between the finite and infinite $E$ registers, which in turn satisfy
\begin{align}
    (W_{j}^{(d)})^{\dagger} W_{j}^{(d)} &= \id_{E_j^{(d)}}, \\
    W_{j}^{(d)} (W_{j}^{(d)})^{\dagger} &= \widebar{Q}_{j}^{(d)}, \\
    W_{j}^{(d)}\ket{f}_{E_{j}^{(d)}} &= \ket{\varnothing_{j}}_{E_{j}}.
\end{align}
Whenever an isometry acts on a state containing additional registers, it is understood to be tensored with the identity operator on all remaining registers.
\end{definition}

\begin{lemma}[Trace Norm Convergence of the Trace-Preserving Truncation Map]\label{lem:truncation_convergence}
    Let $\sigma \in \dop{=}(Q)$ and let $\{P_d\}$ be a sequence of orthogonal projectors such that $P_d \to I$ in the strong operator topology as $d \to \infty$. Let $\tau \in \dop{=}(Q)$ be an arbitrary fixed quantum state, and define the trace-preserving map
    \begin{equation}
        \Pi_d(\rho) \defvar P_d \rho P_d + \Tr{\left(I - P_d \right) \rho} \tau.
    \end{equation}
    Then, the image of $\sigma$ under this map converges to $\sigma$ in trace norm, i.e.
    \begin{equation}
        \lim_{d \to \infty} \| \Pi_d(\sigma) - \sigma \|_1 = 0.
    \end{equation}
    Moreover, for any auxiliary register $R$ and any $\nu_{RQ}\in\dop{=}(RQ)$, it holds
    \begin{equation}
    	\lim_{d\to\infty}\norm{\left(\id_R\otimes\Pi_d\right)(\nu_{RQ})-\nu_{RQ}}_1=0.
    \end{equation}
\end{lemma}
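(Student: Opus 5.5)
The plan is to bound $\|\Pi_d(\sigma)-\sigma\|_1$ directly with the triangle inequality, splitting it into the compressed part and the discard part:
\begin{equation}
    \norm{\Pi_d(\sigma)-\sigma}_1 \leq \norm{P_d\sigma P_d-\sigma}_1 + \Tr{\left(I-P_d\right)\sigma}\norm{\tau}_1 .
\end{equation}
Since $\norm{\tau}_1=1$, the second term is just $\Tr{(I-P_d)\sigma}$. For this term we write $\sigma=\sum_k \lambda_k \pure{e_k}$ in its spectral decomposition, which is available because $\sigma$ is trace class and hence compact. This gives $\Tr{(I-P_d)\sigma}=\sum_k\lambda_k\norm{(I-P_d)e_k}^2$. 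Each summand tends to zero by strong convergence and is dominated by $\lambda_k$, and $\sum_k\lambda_k=1$. Dominated convergence over the summation index therefore yields $\Tr{(I-P_d)\sigma}\to0$.

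For the first term we avoid any appeal to compactness-of-operator arguments and instead use the normalized gentle measurement lemma (\cref{lem:Norm_gentle_measurement_lemma}), or equivalently the unnormalized estimate. Set $p_d=\Tr{P_d\sigma}\to1$. The triangle inequality gives
\begin{equation}
    \norm{P_d\sigma P_d-\sigma}_1\leq \norm{P_d\sigma P_d - \tfrac{P_d\sigma P_d}{p_d}}_1+\norm{\tfrac{P_d\sigma P_d}{p_d}-\sigma}_1 \leq (1-p_d)+2\sqrt{1-p_d}.
\end{equation}
The first piece equals $|1-1/p_d|\,p_d=1-p_d$, and the second comes from \cref{lem:Norm_gentle_measurement_lemma}. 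Once $p_d>0$, which holds for all large $d$, both pieces vanish as $p_d\to1$. This proves the first claim. Alternatively, one could expand $\sigma-P_d\sigma P_d=(I-P_d)\sigma+P_d\sigma(I-P_d)$ and use $\norm{(I-P_d)\sqrt{\sigma}}_2^2=\Tr{(I-P_d)\sigma}$ together with Hölder's inequality (\cref{prop:Holder's-inequality}). This gives the same $O(\sqrt{1-p_d})$ rate.

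For the second claim we note that $(\id_R\otimes\Pi_d)(\nu)=\tilde P_d\nu\tilde P_d+\pTr{Q}{(I_R\otimes(I-P_d))\nu}\otimes\tau$, where $\tilde P_d\defvar I_R\otimes P_d$. The argument then repeats verbatim with $\tilde P_d$ in place of $P_d$. The one point that needs checking is that $\tilde P_d\to I_{RQ}$ strongly. This holds on simple tensors and hence on their dense span, and it extends to all of $RQ$ by uniform boundedness $\norm{\tilde P_d}\leq1$ through an $\epsilon/3$ argument. We also need the discard term to have trace norm $\Tr{(I_R\otimes(I-P_d))\nu}$, which holds because it is a positive operator. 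With these in hand, the same dominated-convergence step gives $\Tr{(I-\tilde P_d)\nu}\to0$, and the gentle-measurement bound finishes the proof.

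The only mildly delicate step is justifying the convergence $\Tr{(I-P_d)\sigma}\to0$ in infinite dimensions, since strong convergence alone does not give norm convergence. The spectral decomposition combined with dominated convergence handles this cleanly, and the rest is routine.
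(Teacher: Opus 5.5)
Your proof is correct and follows the paper's skeleton: a triangle inequality that splits off the discard term $\Tr{(I-P_d)\sigma}$, followed by an $O(\sqrt{1-p_d})$ bound on the compressed term. The differences lie in the tools you use.

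\textbf{Compressed term.} For $\norm{P_d\sigma P_d-\sigma}_1$ the paper does not invoke the gentle measurement lemma. It adds and subtracts $P_d\sigma$ to get $\norm{P_d\sigma P_d-\sigma}_1\le 2\norm{(I-P_d)\sigma}_1$. It then applies H\"older with $p=q=2$ to $(I-P_d)\sqrt{\sigma}\cdot\sqrt{\sigma}$, which yields $2\sqrt{1-\Tr{P_d\sigma}}$. This is essentially the route you list as your alternative. Your primary route via \cref{lem:Norm_gentle_measurement_lemma} is also valid, and there is no circularity because that lemma is proved independently. However, it imports Uhlmann's theorem in infinite dimensions and Fuchs--van de Graaf, and the renormalization costs you an extra $1-p_d$. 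The H\"older argument is therefore the lighter choice here.

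\textbf{Discard term.} Your spectral-decomposition and dominated-convergence argument for $\Tr{(I-P_d)\sigma}\to 0$ makes explicit a step that the paper only asserts from strong convergence. This is a genuine gain in rigor.

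\textbf{Auxiliary register.} Here the paper takes a shorter path than you do. Since $\Tr{(I_R\otimes(I_Q-P_d))\nu_{RQ}}=\Tr{(I_Q-P_d)\nu_Q}$, everything reduces to the single-system statement applied to the marginal $\nu_Q$. Your density and uniform-boundedness argument for strong convergence of $I_R\otimes P_d$ is correct, but this reduction makes it unnecessary.
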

\begin{proof}
    By the triangle inequality, and noting that $\|\tau\|_1 = 1$ and $\Tr{P_d \sigma} \leq 1$, we bound the total error as
    \begin{equation}\label{eq:triangle_strong_tracenorm}
            \| \Pi_d(\sigma) - \sigma \|_1 \leq \| P_d \sigma P_d - \sigma \|_1 + 1 - \Tr{P_d \sigma}.
    \end{equation}
    
    We can bound the first term by adding and subtracting $P_d \sigma$ inside the norm and applying the triangle inequality, which gives
\begin{equation}
        \begin{aligned}
        &\| P_d \sigma P_d - \sigma \|_1  \\
        &= \| P_d \sigma P_d - P_d \sigma + P_d \sigma - \sigma \|_1 \\
        &\leq \| P_d (\sigma P_d - \sigma) \|_1 + \| (P_d - I) \sigma \|_1.
    \end{aligned}
\end{equation}
    For the first term, we apply Hölder's inequality, $\|AB\|_1 \leq \|A\|_\infty \|B\|_1$. Because $P_d$ is a projector, its operator norm is bounded by $\|P_d\|_\infty \leq 1$, yielding
    \begin{equation}
        \| P_d (\sigma P_d - \sigma) \|_1 \leq \| \sigma P_d - \sigma \|_1.
    \end{equation}
    Because the trace norm is invariant under the adjoint operation ($\| X^\dagger \|_1 = \| X \|_1$) and both $P_d$ and $\sigma$ are Hermitian, we have $\| \sigma P_d - \sigma \|_1 = \| P_d \sigma - \sigma \|_1 = \| (I - P_d) \sigma \|_1$. The second term is trivially $\| (P_d - I) \sigma \|_1 = \| (I - P_d) \sigma \|_1$. Combining these results yields
    \begin{equation}
        \| P_d \sigma P_d - \sigma \|_1 \leq 2 \| (I - P_d) \sigma \|_1.
    \end{equation}
    
    Next, we apply Hölder's inequality another time, but with $p=q=2$ and find
\begin{equation}\label{eq:HS_bound}
        \begin{aligned}
            2 \| (I - P_d) \sigma \|_1 &= 2\| (I - P_d) \sqrt{\sigma} \sqrt{\sigma} \|_1 \\
            &\leq2 \| (I - P_d) \sqrt{\sigma} \|_2 \, \| \sqrt{\sigma} \|_2 \\
            &= 2\sqrt{ \Tr{\sigma (I - P_d)} } \sqrt{ \Tr{\sigma} } \\
            &= 2\sqrt{ 1 - \Tr{P_d \sigma} }.
    \end{aligned}
\end{equation}
    
    Now, it remains to evaluate the limit of $\Tr{P_d \sigma}$. Since $P_d \xrightarrow{d \to \infty} I$ strongly, it holds
    \begin{equation}
        \lim_{d \to \infty} \Tr{P_d \sigma} =1.
    \end{equation}
    
    Substituting this limit into \cref{eq:HS_bound} shows that $2 \| (I - P_d) \sigma \|_1$ vanishes for $d\to \infty$. Consequently, both terms in \cref{eq:triangle_strong_tracenorm} go to zero.
    
    The same argument applies when $\Pi_d$ acts locally on $Q$ while an arbitrary auxiliary register $R$ is left unchanged. In particular, for every $\nu_{RQ}\in\dop{=}(RQ)$,
\begin{equation}
        \begin{aligned}
    	&\norm{\left(\id_R\otimes\Pi_d\right)(\nu_{RQ})-\nu_{RQ}}_1 \\
        \leq &\begin{aligned}[t]
            &2\sqrt{\Tr{\left(I_R\otimes(I_Q-P_d)\right)\nu_{RQ}}} \\
            &+ \Tr{\left(I_R\otimes(I_Q-P_d)\right)\nu_{RQ}}
        \end{aligned} \\
    	\leq &2\sqrt{\Tr{\left(I_Q-P_d\right)\nu_Q}} +\Tr{\left(I_Q-P_d\right)\nu_Q} \\
    	&\xrightarrow{d\to\infty} 0.
    \end{aligned}
\end{equation}
\end{proof}

\subsubsection{Projected MEAT Channels}
\begin{definition}[Projection Channels]
For each $j=1,\dots,n$, we define the \emph{input} projection channel $\ProjIn{j}{d} : S_{=}(A_{j-1}  E_{j-1}) \longrightarrow S_{=}(A_{j-1} E_{j-1})$ as
\begin{equation}
    \begin{aligned}
    \ProjIn{j}{d}(\rho) \defvar &\left(\widebar{P}_{j-1}^{(d)} \otimes \widebar{Q}_{j-1}^{(d)}\right) \rho \left( \widebar{P}_{j-1}^{(d)} \otimes \widebar{Q}_{j-1}^{(d)} \right) \\
    &+ \Tr{\left(I_{A_{j-1} E_{j-1}} - \widebar{P}_{j-1}^{(d)} \otimes \widebar{Q}_{j-1}^{(d)} \right) \rho} \\
    &\cdot \ketbra{\perp_{j-1} \varnothing_{j-1}}{\perp_{j-1} \varnothing_{j-1}}.
    \end{aligned}
\end{equation}
Furthermore, for each $j=1,\dots,n$, we define the \emph{output} projection channel $\ProjOut{j}{d} : S_{=}(E_j) \rightarrow S_{=}(E_j)$ as
\begin{equation}
    \ProjOut{j}{d}(\rho) := \widebar{Q}_{j}^{(d)} \rho \widebar{Q}_{j}^{(d)} + \Tr{\left( I_{E_{j}} - \widebar{Q}_{j}^{(d)}\right)\rho} \pure{\varnothing_{j}}.
\end{equation}

Finally, we define the \emph{initial} projection channel $\Pi_0^{(d)} : S_{=}(A_0^{n-1} E_0) \rightarrow S_{=}(A_0^{n-1} E_0)$ as a tensor product of projection channels on $A_0,\dots,A_{n-1}$ and $E_0$. Therefore, for every $j=0,\dots,n-1$, define the projection channel on $A_j$ as $\Pi_{A_j}^{(d)}: S_{=}(A_j) \rightarrow S_{=}(A_j)$ by
\begin{equation}\label{eq:proj-chan-on-Aj}
    \Pi_{A_j}^{(d)}(\rho) \defvar\widebar P_j^{(d)}\rho\widebar P_j^{(d)} 
    + \Tr{\left(I_{A_j}-\widebar P_j^{(d)}\right)\rho} \pure{\perp_j}.
\end{equation}
Similarly, define the projection channel on $E_0$ as $\Pi_{E_0}^{(d)}:S_{=}(E_0)\rightarrow S_{=}(E_0)$ by
\begin{equation}
    \Pi_{E_0}^{(d)}(\rho) \defvar \widebar Q_0^{(d)}\rho\widebar Q_0^{(d)}
    + \Tr{\left(I_{E_0}-\widebar Q_0^{(d)}\right)\rho}\pure{\varnothing_0}.
\end{equation}
Then, we define the initial projection channel $\Pi_0^{(d)} $ by
\begin{equation}\label{eq:proj-chan-on-Aj-Ej}
    \Pi_0^{(d)} \defvar \left(\bigotimes_{j=0}^{n-1}\Pi_{A_j}^{(d)}\right) \otimes\Pi_{E_0}^{(d)}.
\end{equation}
Whenever one of the channels defined above is applied to a state containing additional registers, it is understood to be tensored with the identity channel on all remaining registers.
\end{definition}

\begin{definition}[Projected MEAT Channels]
For each $j=1,\dots,n$ we define the finite-dimensional and infinite dimensional projected MEAT channels. The \emph{infinite dimensional projected} MEAT channel $\EATchannProjInf_j^{(k,l)} : S_{=}(A_{j-1}E_{j-1}) \longrightarrow S_{=}(S_j \CP_j E_j)$ is defined as
\begin{equation}
    \EATchannProjInf_j^{(k,l)}(\rho) \defvar \ProjOut{j}{k} \circ \EATchann_j \circ \ProjIn{j}{l}(\rho).
\end{equation}

Additionally, we define the \emph{finite-dimensional projected} MEAT channel $\EATchannProj_j^{(k,l)} : S_{=}(A_{j-1}^{(l)}E_{j-1}^{(l)}) \longrightarrow S_{=}(S_j\CP_jE_j^{(k)})$ by
\begin{equation}\label{eq:finite_proj_channel}
    \begin{aligned}
        \EATchannProj_j^{(k,l)}(\tau) \defvar {}& \left(I_{S_j\CP_j}\otimes(W_j^{(k)})^\dagger\right)\rho'\left(I_{S_j\CP_j}\otimes W_j^{(k)}\right) \\
        &+ \pTr{E_j}{\left(I_{S_j\CP_j}\otimes\left(I_{E_j}-\widebar Q_j^{(k)}\right)\right)\rho'}\otimes\pure{f}_{E_j^{(k)}},
    \end{aligned}
\end{equation}
where
\begin{equation}
    \rho' \defvar \EATchann_j\circ\ProjIn{j}{l}\left[\left(V_{j-1}^{(l)}\otimes W_{j-1}^{(l)}\right)\tau\left(V_{j-1}^{(l)}\otimes W_{j-1}^{(l)}\right)^\dagger\right].
\end{equation}
\end{definition}

\begin{remark}
    By the properties of the isometries in \cref{def:Isometries_Fin_Inf} the map in \cref{eq:finite_proj_channel} is trace preserving. Furthermore, it is completely positive by inspection.
\end{remark}

\subsubsection{Equivalences between Finite- and Infinite-Dimensional Channels}

\begin{lemma}[Single Round Equivalence]\label{Lem:Single_Round_Equiv}
Let $\tau \in S_{=}(A_{j-1}^{(l)}E_{j-1}^{(l)})$ be arbitrary. The output of the projected finite-dimensional MEAT channel satisfies:
\begin{equation}
\begin{aligned}
    &W_{j}^{(k)} \EATchannProj_j^{(k,l)}(\tau) (W_{j}^{(k)})^{\dagger} \\
    &= \ProjOut{j}{k} \circ \EATchann_j \left[ (V_{j-1}^{(l)} \otimes W_{j-1}^{(l)}) \tau (V_{j-1}^{(l)} \otimes W_{j-1}^{(l)})^{\dagger} \right]
    \end{aligned}
\end{equation}
\end{lemma}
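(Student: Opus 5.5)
The plan is to unfold the definition of $\EATchannProj_j^{(k,l)}$ in \cref{eq:finite_proj_channel}, conjugate both terms by $W_j^{(k)}$, and show that the result is exactly the definition of $\ProjOut{j}{k}$ applied to $\rho'$. Set
\[
X \defvar \left(V_{j-1}^{(l)}\otimes W_{j-1}^{(l)}\right)\tau\left(V_{j-1}^{(l)}\otimes W_{j-1}^{(l)}\right)^\dagger .
\]
The first step is to remove $\ProjIn{j}{l}$ from $\rho'=\EATchann_j\circ\ProjIn{j}{l}[X]$. By \cref{def:Isometries_Fin_Inf}, $V_{j-1}^{(l)}(V_{j-1}^{(l)})^\dagger=\widebar P_{j-1}^{(l)}$ and $W_{j-1}^{(l)}(W_{j-1}^{(l)})^\dagger=\widebar Q_{j-1}^{(l)}$. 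Hence $X$ is supported on $\supp(\widebar P_{j-1}^{(l)}\otimes\widebar Q_{j-1}^{(l)})$, so conjugating $X$ by this product projection leaves it unchanged. The discard weight $\Tr{(I-\widebar P\otimes\widebar Q)X}$ is then zero. It follows that $\ProjIn{j}{l}[X]=X$ and therefore $\rho'=\EATchann_j[X]$.

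The second step handles the first term of \cref{eq:finite_proj_channel}. Conjugating by $I_{S_j\CP_j}\otimes W_j^{(k)}$ gives
\[
(I\otimes W_j^{(k)}(W_j^{(k)})^\dagger)\,\rho'\,(I\otimes W_j^{(k)}(W_j^{(k)})^\dagger)=(I\otimes\widebar Q_j^{(k)})\,\rho'\,(I\otimes\widebar Q_j^{(k)}) .
\]
This is the projection part of $\ProjOut{j}{k}$ acting on $E_j$, tensored with the identity on $S_j\CP_j$.

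The third step handles the second term. Conjugating $\pTr{E_j}{(I\otimes(I-\widebar Q_j^{(k)}))\rho'}\otimes\pure{f}$ by $W_j^{(k)}$ uses $W_j^{(k)}\ket{f}=\ket{\varnothing_j}$ and yields $\pTr{E_j}{(I\otimes(I-\widebar Q_j^{(k)}))\rho'}\otimes\pure{\varnothing_j}$. This is the discard part of $\ProjOut{j}{k}\otimes\id_{S_j\CP_j}$, because for a channel acting only on $E_j$ the scalar trace $\Tr{(I-\widebar Q)\cdot}$ becomes the partial trace over $E_j$ when $S_j\CP_j$ is carried along. Adding the two terms gives $(\id_{S_j\CP_j}\otimes\ProjOut{j}{k})(\EATchann_j[X])$, which is the claimed identity.

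No step is genuinely hard. The only point needing care is the implicit tensoring convention stated after \cref{def:Isometries_Fin_Inf}. Concretely, one must check that the partial trace in \cref{eq:finite_proj_channel} matches the extension of $\ProjOut{j}{k}$'s trace term to $S_j\CP_j E_j$, and that $\ProjIn{j}{l}$ acts as the identity on inputs already lying in the range of the embedding isometries.
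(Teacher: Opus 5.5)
Your proposal is correct and follows the paper's proof essentially step by step. You first note that $\ProjIn{j}{l}$ acts trivially on the embedded input, since $V_{j-1}^{(l)}(V_{j-1}^{(l)})^\dagger=\widebar P_{j-1}^{(l)}$ and $W_{j-1}^{(l)}(W_{j-1}^{(l)})^\dagger=\widebar Q_{j-1}^{(l)}$; you then use $W_j^{(k)}(W_j^{(k)})^\dagger=\widebar Q_j^{(k)}$ and $W_j^{(k)}\ket{f}=\ket{\varnothing_j}$ to reassemble the two terms of $\ProjOut{j}{k}$, with the partial trace over $E_j$ as the discard term, exactly as the paper does.
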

\begin{proof}
Let
\begin{equation}
    \omega \defvar (V_{j-1}^{(l)}\otimes W_{j-1}^{(l)})\tau(V_{j-1}^{(l)}\otimes W_{j-1}^{(l)})^\dagger.
\end{equation}
Since $\omega$ is supported on $\supp(\widebar P_{j-1}^{(l)})\otimes\supp(\widebar Q_{j-1}^{(l)})$, we have $\ProjIn{j}{l}(\omega)=\omega$. Define
\begin{equation}
    \rho' \defvar \EATchann_j(\omega), \quad \eta_{S_j\CP_j} \defvar \pTr{E_j}{\left(I_{E_j}-\widebar Q_j^{(k)}\right)\rho'}.
\end{equation}
Using $\widebar Q_j^{(k)}=W_j^{(k)}(W_j^{(k)})^\dagger$ and $W_j^{(k)}\pure f(W_j^{(k)})^\dagger=\pure{\varnothing_j}$, we obtain
\begin{equation}
 \begin{aligned}
    &\ProjOut{j}{k}(\rho') \\
    &= \widebar Q_j^{(k)}\rho'\widebar Q_j^{(k)}+\eta_{S_j\CP_j}\otimes\pure{\varnothing_j} \\
    &= W_j^{(k)}\left((W_j^{(k)})^\dagger\rho'W_j^{(k)}+\eta_{S_j\CP_j}\otimes\pure f\right)(W_j^{(k)})^\dagger \\
    &= W_j^{(k)}\EATchannProj_j^{(k,l)}(\tau)(W_j^{(k)})^\dagger.
\end{aligned}   
\end{equation}

\end{proof}

\begin{lemma}[Embedding Channels]\label{Lem:Embedding_Channels}
For all $j=0,\dots,n-1$, define
\begin{equation}
    U_j \defvar (V_j^{(l)})^\dagger V_j^{(d)}\otimes(W_j^{(l)})^\dagger W_j^{(k)}
\end{equation}
and the map $\Xi_j:S_{=}(A_j^{(d)}E_j^{(k)})\rightarrow S_{=}(A_j^{(l)}E_j^{(l)})$ by
\begin{equation}
    \Xi_j(\tau)\defvar U_j\tau U_j^\dagger.
\end{equation}
Then, for all $l\geq d$ and $l\geq k$, the operator $U_j$ is an isometry and $\Xi_j$ is an isometric CPTP channel. Moreover, they satisfy
\begin{equation}
\begin{aligned}
    &\ProjIn{j+1}{l}\left[(V_j^{(d)} \otimes W_j^{(k)}) \tau (V_j^{(d)}\otimes W_j^{(k)})^\dagger\right] \\
    &= (V_j^{(l)} \otimes W_j^{(l)}) \Xi_j(\tau) (V_j^{(l)} \otimes W_j^{(l)})^\dagger.
\end{aligned}
\end{equation}
\end{lemma}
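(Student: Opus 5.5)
The proof is direct operator algebra built on the defining relations of the isometries in \cref{def:Isometries_Fin_Inf}. We assume, as the lemma's wording implicitly requires, that the nested projections together with the discard vectors are compatible. That is, $\widebar P_j^{(d)}\leq \widebar P_j^{(l)}$ and $\widebar Q_j^{(k)}\leq \widebar Q_j^{(l)}$ for $l\geq d,k$, which holds because $P_j^{(d)}\leq P_j^{(l)}$, $Q_j^{(k)}\leq Q_j^{(l)}$ and the discard vectors $\ket{\perp_j},\ket{\varnothing_j}$ are fixed and independent of $d$. The argument has three steps: isometry, channel property, and the intertwining identity.

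\textbf{Step 1: $U_j$ is an isometry.} Compute $U_j^\dagger U_j$ factorwise. On the $A$ factor,
\begin{equation}
    (V_j^{(d)})^\dagger V_j^{(l)}(V_j^{(l)})^\dagger V_j^{(d)} = (V_j^{(d)})^\dagger \widebar P_j^{(l)} V_j^{(d)}.
\end{equation}
The range of $V_j^{(d)}$ is $\supp(\widebar P_j^{(d)})\subseteq\supp(\widebar P_j^{(l)})$, so $\widebar P_j^{(l)}V_j^{(d)}=V_j^{(d)}$. The factor therefore reduces to $(V_j^{(d)})^\dagger V_j^{(d)}=\id_{A_j^{(d)}}$. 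The $E$ factor is handled identically with $W$ and $\widebar Q$. Hence $U_j^\dagger U_j=\id$, and $\Xi_j$ is an isometric conjugation, which is automatically CPTP.

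\textbf{Step 2: the intertwining identity.} Set $\omega\defvar(V_j^{(d)}\otimes W_j^{(k)})\tau(V_j^{(d)}\otimes W_j^{(k)})^\dagger$. This operator is supported on $\supp(\widebar P_j^{(d)})\otimes\supp(\widebar Q_j^{(k)})$, which lies inside $\supp(\widebar P_j^{(l)}\otimes\widebar Q_j^{(l)})$. Consequently the compression in $\ProjIn{j+1}{l}$ leaves $\omega$ unchanged, and the discard term carries weight $\Tr{(I-\widebar P_j^{(l)}\otimes\widebar Q_j^{(l)})\omega}=0$. It follows that $\ProjIn{j+1}{l}(\omega)=\omega$.

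For the right-hand side, note that
\begin{equation}
    (V_j^{(l)}\otimes W_j^{(l)})U_j = \widebar P_j^{(l)}V_j^{(d)}\otimes \widebar Q_j^{(l)}W_j^{(k)} = V_j^{(d)}\otimes W_j^{(k)},
\end{equation}
again using the range inclusions from Step 1. Substituting this into $(V_j^{(l)}\otimes W_j^{(l)})U_j\tau U_j^\dagger(V_j^{(l)}\otimes W_j^{(l)})^\dagger$ gives exactly $\omega$, which proves the claimed identity.

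\textbf{Main obstacle.} The only delicate point is justifying the range inclusions $\supp(\widebar P_j^{(d)})\subseteq\supp(\widebar P_j^{(l)})$ and $\supp(\widebar Q_j^{(k)})\subseteq\supp(\widebar Q_j^{(l)})$. These require two facts. First, the discard vectors must be common to all levels: each vector is fixed per $j$ and orthogonal to every $P_j^{(d)}$ or $Q_j^{(d)}$. Second, the finite-level projections must be monotone. Both are guaranteed by the projection definitions, which we will cite explicitly. A related subtlety is that $U_j$ is well defined only when $E_j^{(k)}$ and $E_j^{(l)}$ have been fixed via $W$ as subspaces of the same $E_j$. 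The composition $(W_j^{(l)})^\dagger W_j^{(k)}$ is meaningful precisely because both maps land in $E_j$, so no further identification is needed.
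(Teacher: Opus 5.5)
Your proposal is correct and follows essentially the same route as the paper: you derive $U_j^\dagger U_j=I$ factorwise from the range inclusions $\widebar P_j^{(l)}V_j^{(d)}=V_j^{(d)}$ and $\widebar Q_j^{(l)}W_j^{(k)}=W_j^{(k)}$, and the same identities show both that $\ProjIn{j+1}{l}$ leaves the embedded state unchanged with zero discard weight and that $(V_j^{(l)}\otimes W_j^{(l)})U_j=V_j^{(d)}\otimes W_j^{(k)}$. The only cosmetic difference is that you check separately that each side equals $\omega$, whereas the paper expands the projection channel and rewrites it directly into the right-hand side.
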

\begin{proof}
\begin{widetext}
Since $l\geq d$ and $l\geq k$, it holds
\begin{equation}
    \supp(\widebar P_j^{(d)}) \subseteq \supp(\widebar P_j^{(l)}), \quad \supp(\widebar Q_j^{(k)}) \subseteq \supp(\widebar Q_j^{(l)}).
\end{equation}
Therefore, $\widebar P_j^{(l)}V_j^{(d)}=V_j^{(d)}$ and $\widebar Q_j^{(l)}W_j^{(k)}=W_j^{(k)}$. Thus,
\begin{align}
    U_{j}^{\dagger} U_{j} &= \left((V_{j}^{(d)})^{\dagger} V_{j}^{(l)} \otimes (W_{j}^{(k)})^{\dagger} W_{j}^{(l)}\right)\left((V_{j}^{(l)})^{\dagger} V_{j}^{(d)} \otimes (W_{j}^{(l)})^{\dagger} W_{j}^{(k)}\right) \\
    \notag&= (V_{j}^{(d)})^{\dagger} V_{j}^{(l)} (V_{j}^{(l)})^{\dagger} V_{j}^{(d)} \otimes (W_{j}^{(k)})^{\dagger} W_{j}^{(l)} (W_{j}^{(l)})^{\dagger} W_{j}^{(k)} \\
    \notag&= (V_{j}^{(d)})^{\dagger} \widebar{P}_{j}^{(l)} V_{j}^{(d)} \otimes (W_{j}^{(k)})^{\dagger} \widebar{Q}_{j}^{(l)} W_{j}^{(k)} \\
    \notag&= (V_{j}^{(d)})^{\dagger} V_{j}^{(d)} \otimes (W_{j}^{(k)})^{\dagger} W_{j}^{(k)} \\
    \notag&= I
\end{align}
Hence, $U_j$ is an isometry and $\Xi_j$ is a CPTP map for each $j=0,\dots,n-1$.

Next, we show the claimed equivalence between the input projection channel and the embedding channel.
\begin{align}
    &\ProjIn{j+1}{l} \left[ (V_{j}^{(d)} \otimes W_{j}^{(k)}) \tau (V_{j}^{(d)} \otimes W_{j}^{(k)})^{\dagger} \right] \\
\notag&= (\widebar{P}_{j}^{(l)} \otimes \widebar{Q}_{j}^{(l)}) (V_{j}^{(d)} \otimes W_{j}^{(k)}) \tau (V_{j}^{(d)} \otimes W_{j}^{(k)})^{\dagger} (\widebar{P}_{j}^{(l)} \otimes \widebar{Q}_{j}^{(l)}) \\
    \notag&\quad + \Tr{(I - \widebar{P}_{j}^{(l)} \otimes \widebar{Q}_{j}^{(l)}) (V_{j}^{(d)} \otimes W_{j}^{(k)}) \tau (V_{j}^{(d)} \otimes W_{j}^{(k)})^{\dagger}} \ketbra{\perp_{j}, \varnothing_{j}}{\perp_{j}, \varnothing_{j}} \\
   \notag &= \widebar{P}_{j}^{(l)} V_{j}^{(d)} \otimes \widebar{Q}_{j}^{(l)} W_{j}^{(k)} \tau (V_{j}^{(d)} \otimes W_{j}^{(k)})^{\dagger} (\widebar{P}_{j}^{(l)} \otimes \widebar{Q}_{j}^{(l)}) \\
   \notag &\quad + \Tr{(V_{j}^{(d)} \otimes W_{j}^{(k)}) \tau (V_{j}^{(d)} \otimes W_{j}^{(k)})^{\dagger} - (V_{j}^{(d)} \otimes W_{j}^{(k)}) \tau (V_{j}^{(d)} \otimes W_{j}^{(k)})^{\dagger}} \ketbra{\perp_{j}, \varnothing_{j}}{\perp_{j}, \varnothing_{j}} \\
    \notag&= (V_{j}^{(l)} \otimes W_{j}^{(l)}) \left( (V_{j}^{(l)})^{\dagger} V_{j}^{(d)} \otimes (W_{j}^{(l)})^{\dagger} W_{j}^{(k)} \right) \tau \left( (V_{j}^{(l)})^{\dagger} V_{j}^{(d)} \otimes (W_{j}^{(l)})^{\dagger} W_{j}^{(k)} \right)^{\dagger} (V_{j}^{(l)} \otimes W_{j}^{(l)})^{\dagger} + 0 \\
    \notag&= (V_{j}^{(l)} \otimes W_{j}^{(l)}) U_{j} \tau U_{j}^{\dagger} (V_{j}^{(l)} \otimes W_{j}^{(l)})^{\dagger} \\
    \notag&= (V_{j}^{(l)} \otimes W_{j}^{(l)}) \Xi_{j}(\tau) (V_{j}^{(l)} \otimes W_{j}^{(l)})^{\dagger},
\end{align}
where for the second equality we used $\widebar{P}_{j}^{(l)} V_{j}^{(d)} = V_{j}^{(d)}$ and $\widebar{Q}_{j}^{(l)} W_{j}^{(k)} = W_{j}^{(k)}$, the third is expanding terms, and the fourth and fifth are our definitions of $U_{j}$ and $\Xi_{j}$.
\end{widetext}
\end{proof}

\begin{theorem}[Finite MEAT Chains]\label{thrm:Finite_MEAT_Chains}
Let $\omega_{A_0^{n-1}E_0} \in S_{=}(A_0^{n-1}E_0)$ be an arbitrary state satisfying the marginal $\omega_{A_0^{n-1}} = \bigotimes_{j=0}^{n-1} \sigma_{A_j}^{(j)}$. Furthermore, let $\tau \in S_{=}((A^{(d)})_{0}^{n-1}E_0^{(d)})$ be the corresponding finite-dimensional state defined by
\begin{widetext}
\begin{equation}
    \tau_{(A^{(d)})_{0}^{n-1}E_0^{(d)}} \defvar \left(\left(\bigotimes_{j=0}^{n-1}(V_j^{(d)})^\dagger\right)\otimes(W_0^{(d)})^\dagger\right)\Pi_0^{(d)}[\omega]\left(\left(\bigotimes_{j=0}^{n-1}V_j^{(d)}\right)\otimes W_0^{(d)}\right),
\end{equation}
which also satisfies a marginal constraint $\tau_{(A^{(d)})_{0}^{n-1}} = \bigotimes_{j=0}^{n-1} \xi_{A_j^{(d)}}^{(j)}$ where
\begin{equation}
    \xi_{A_j^{(d)}}^{(j)}\defvar (V_j^{(d)})^\dagger\sigma_{A_j}^{(j)}V_j^{(d)}+\Tr{\left(I_{A_j}-\widebar P_j^{(d)}\right)\sigma_{A_j}^{(j)}}\pure{f}_{A_j^{(d)}}.
\end{equation}
\end{widetext}
Set $k_0\defvar d$, and for each $j=1,\dots,n$, let $\Xi_{j-1}$ denote the embedding channel defined with parameters $(d,k_{j-1},l_j)$. Additionally, assume that $l_j\geq d$ and $l_j\geq k_{j-1}$ for all $j=1,\dots,n$.

Then, the chain of $n$ concatenated infinite dimensional projected MEAT channels (acting on the infinite-dimensional spaces) is equivalent to the chain of finite dimensional projected MEAT channels (acting purely on the finite-dimensional spaces) up to an isometry, i.e.,
\begin{widetext}
\begin{equation}
    \EATchannProjInf_n^{(k_n, l_n)} \circ \dots \circ \EATchannProjInf_1^{(k_1, l_1)} \circ \Pi_0^{(d)} [\omega_{A_0^{n-1}E_0}] = W_{n}^{(k_n)} \left( (\EATchannProj_n^{(k_n, l_n)} \circ \Xi_{n-1}) \circ \dots \circ (\EATchannProj_1^{(k_1, l_1)} \circ \Xi_{0}) [\tau_{(A^{(d)})_{0}^{n-1}E_0^{(d)}}] \right) \left(W_{n}^{(k_n)}\right)^{\dagger}.
\end{equation}
\end{widetext}
\end{theorem}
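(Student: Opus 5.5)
We argue by induction on the number of rounds. The induction hypothesis is an isometric equivalence after $j$ rounds, written in a form that keeps the untouched registers $A_j^{n-1}$ explicit. Let
\[
\Lambda_j \defvar \EATchannProjInf_j^{(k_j,l_j)}\circ\cdots\circ\EATchannProjInf_1^{(k_1,l_1)}\circ\Pi_0^{(d)}[\omega],
\qquad
\Gamma_j \defvar (\EATchannProj_j^{(k_j,l_j)}\circ\Xi_{j-1})\circ\cdots\circ(\EATchannProj_1^{(k_1,l_1)}\circ\Xi_0)[\tau].
\]
The hypothesis is
\[
\Lambda_j = \Bigl(\bigotimes_{i\ge j}V_i^{(d)}\otimes W_j^{(k_j)}\Bigr)\,\Gamma_j\,\Bigl(\bigotimes_{i\ge j}V_i^{(d)}\otimes W_j^{(k_j)}\Bigr)^{\dagger},
\]
with the identity acting on the already produced registers $S_1^j\CP_1^j$. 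For $j=n$ there are no remaining $A$ registers, and this is exactly the claim.

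\textbf{Base case $j=0$.} The state $\Pi_0^{(d)}[\omega]$ is supported on $\bigotimes_i\supp(\widebar P_i^{(d)})\otimes\supp(\widebar Q_0^{(d)})$. Since $VV^\dagger=\widebar P$ and $WW^\dagger=\widebar Q$, the definition of $\tau$ gives $\Pi_0^{(d)}[\omega]=(\bigotimes V_i^{(d)}\otimes W_0^{(d)})\tau(\cdots)^\dagger$, using $k_0=d$.

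We also verify the marginal claim for $\tau$. The channel $\Pi_0^{(d)}$ is a tensor product of local channels, so its $A$-marginal is $\bigotimes_i\Pi_{A_i}^{(d)}(\sigma^{(i)})$. Conjugating each factor by $(V_i^{(d)})^\dagger$ gives $\xi^{(i)}$, because $(V_i^{(d)})^\dagger\ket{\perp_i}=\ket f$. Here one checks that the $\ket{\perp_i}$ part inside $\widebar P$ is consistent, i.e.\ $\Pi_{A_i}^{(d)}$ and the $V^\dagger$ conjugation agree. This is a short computation.

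\textbf{Inductive step.} Apply $\EATchannProjInf_{j+1}^{(k_{j+1},l_{j+1})}=\ProjOut{j+1}{k_{j+1}}\circ\EATchann_{j+1}\circ\ProjIn{j+1}{l_{j+1}}$ to $\Lambda_j$. The input projection acts only on $A_jE_j$. We therefore use Lemma~\ref{Lem:Embedding_Channels} with $(d,k_j,l_{j+1})$, which is valid since $l_{j+1}\ge d$ and $l_{j+1}\ge k_j$. This rewrites $\ProjIn{j+1}{l_{j+1}}$ of the embedded state as $(V_j^{(l_{j+1})}\otimes W_j^{(l_{j+1})})\,\Xi_j(\cdot)\,(\cdot)^\dagger$.

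Two points need care here. First, Lemma~\ref{Lem:Embedding_Channels} is stated for states on $A_j^{(d)}E_j^{(k)}$ alone, so we extend it by tensoring with the identity on the other registers. Both the projection channel and the conjugation are local, so linearity carries over. Second, the remaining $V_i^{(d)}$ for $i>j$ commute through, since they act on other tensor factors.

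Next we apply $\ProjOut{j+1}{k_{j+1}}\circ\EATchann_{j+1}$. By Lemma~\ref{Lem:Single_Round_Equiv}, with the state $\Xi_j(\Gamma_j)$ in place of $\tau$ and extended by identity on the passive registers, this equals $W_{j+1}^{(k_{j+1})}\,\EATchannProj_{j+1}^{(k_{j+1},l_{j+1})}\bigl(\Xi_j(\Gamma_j)\bigr)\,(W_{j+1}^{(k_{j+1})})^\dagger$. Hence $\Lambda_{j+1}$ has the required form with $\Gamma_{j+1}$, which closes the induction.

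\textbf{Main obstacle.} We expect the main difficulty to be a bookkeeping issue rather than anything conceptual. Lemma~\ref{Lem:Single_Round_Equiv} applies $\ProjIn{j}{l}$ inside $\EATchann_j$ to a state that is already in its range, and we need to make sure the input projection is not applied twice. In the finite chain, $\EATchannProj$ internally includes $\ProjIn{}{}$. This is idempotent on states supported in $\supp(\widebar P^{(l)})\otimes\supp(\widebar Q^{(l)})$, so we will state explicitly that $\ProjIn{}{}\circ\ProjIn{}{}=\ProjIn{}{}$ on such inputs.

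We also need to check that the channels and isometries act as $\id$ on the classical outputs $S\CP$ and on all uninvolved registers. In infinite dimensions this needs the fact that tensoring a CPTP map with the identity on separable spaces is well defined. This holds because every map here is given explicitly by finitely many Kraus-type operations.
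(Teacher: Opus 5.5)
Your proposal is correct and follows the paper's proof. The paper likewise writes $\Pi_0^{(d)}[\omega]$ as the $V\otimes W$ embedding of $\tau$, using the support of $\Pi_0^{(d)}$ together with $VV^\dagger=\widebar P$ and $WW^\dagger=\widebar Q$, checks the marginal, and then in each round applies Lemma~\ref{Lem:Embedding_Channels} followed by Lemma~\ref{Lem:Single_Round_Equiv} while pulling the untouched $V_i^{(d)}$ outside, so your induction invariant just formalizes its ``iterate the same argument''. One correction to your closing remark: $\EATchann_j$ is an arbitrary channel, and the discard term of $\ProjIn{j}{l}$ already needs infinitely many Kraus operators, so the identity extension is well defined because every channel on separable spaces has a countable Kraus (equivalently Stinespring) representation, not because these maps have finitely many Kraus operators.
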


\begin{proof}
First, define the projected state $\rho := \Pi_0^{(d)}(\omega_{A_0^{n-1}E_0})$. Since the image of $\Pi_0^{(d)}$ is supported on $\left(\bigotimes_{j=0}^{n-1}\supp(\widebar P_j^{(d)})\right)\otimes\supp(\widebar Q_0^{(d)})$, the definition of $\tau$ implies
\begin{equation}\label{eq:proof_meat_chain_rho_tau}
    \rho = \left( \left(\bigotimes_{j=0}^{n-1} V_{j}^{(d)}\right) \otimes W_{0}^{(d)} \right) \tau \left( \left(\bigotimes_{j=0}^{n-1} V_{j}^{(d)}\right) \otimes W_{0}^{(d)} \right)^{\dagger} \ . 
\end{equation}
Here, we used $V_j^{(d)}(V_j^{(d)})^\dagger=\widebar P_j^{(d)}$ and $W_0^{(d)}(W_0^{(d)})^\dagger=\widebar Q_0^{(d)}$.  Additionally, $\tau$ satisfies the claimed marginal since
{\allowdisplaybreaks
    \begin{align}
    &\pTr{E_0^{(d)}}{\tau_{(A^{(d)})_0^{n-1}E_0^{(d)}}} \\
    \notag&= \left(\bigotimes_{j=0}^{n-1}(V_j^{(d)})^\dagger\right)\rho_{A_0^{n-1}}\left(\bigotimes_{j=0}^{n-1}V_j^{(d)}\right) \\
    \notag&= \bigotimes_{j=0}^{n-1}\left((V_j^{(d)})^\dagger\Pi_{A_j}^{(d)}(\sigma_{A_j}^{(j)})V_j^{(d)}\right) \\
    \notag&= \bigotimes_{j=0}^{n-1}\xi_{A_j^{(d)}}^{(j)}.
\end{align}
}
For the second equality, we used the definition of $\rho=\Pi_0^{(d)}(\omega)$, the trace-preserving property of $\Pi_{E_0}^{(d)}$, and $\omega_{A_0^{n-1}}=\bigotimes_{j=0}^{n-1}\sigma_{A_j}^{(j)}$, which imply $\rho_{A_0^{n-1}}=\bigotimes_{j=0}^{n-1}\Pi_{A_j}^{(d)}(\sigma_{A_j}^{(j)})$.

Now, we explicitly show the conversion for the first channel and the theorem statement follows by simply repeating the same steps.

\begin{widetext}
We find:
\begin{equation}
    \begin{aligned}
    &\EATchannProjInf_1^{(k_1, l_1)} \circ \Pi_0^{(d)}(\omega_{A_0^{n-1}E_0}) \\
    &= \ProjOut{1}{k_1} \circ \EATchann_1 \circ \ProjIn{1}{l_1} \circ \Pi_0^{(d)}(\omega) \\
    &= \ProjOut{1}{k_1} \circ \EATchann_1 \circ \ProjIn{1}{l_1}(\rho) \\
    &= \ProjOut{1}{k_1} \circ \EATchann_1 \circ \ProjIn{1}{l_1} \left[ \left( \left(\bigotimes_{j=0}^{n-1} V_{j}^{(d)}\right) \otimes W_{0}^{(d)} \right) \tau \left( \left(\bigotimes_{j=0}^{n-1} V_{j}^{(d)}\right) \otimes W_{0}^{(d)} \right)^{\dagger} \right] \\
    &= \left(\bigotimes_{j=1}^{n-1} V_{j}^{(d)}\right) \left\{ \ProjOut{1}{k_1} \circ \EATchann_1 \left[ (V_{0}^{(l_1)} \otimes W_{0}^{(l_1)}) \Xi_{0}(\tau) (V_{0}^{(l_1)} \otimes W_{0}^{(l_1)})^{\dagger} \right] \right\} \left(\bigotimes_{j=1}^{n-1} V_{j}^{(d)}\right)^{\dagger} \\
    &=\left(\bigotimes_{j=1}^{n-1} V_{j}^{(d)}\right) \left\{ W_{1}^{(k_1)} \left(\EATchannProj_1^{(k_1,l_1)}\circ\Xi_0\right)[\tau] (W_{1}^{(k_1)})^{\dagger} \right\} \left(\bigotimes_{j=1}^{n-1} V_{j}^{(d)}\right)^{\dagger} \\
    &= \left( \left(\bigotimes_{j=1}^{n-1} V_{j}^{(d)}\right) \otimes W_{1}^{(k_1)} \right) \left(\EATchannProj_1^{(k_1, l_1)} \circ \Xi_{0}\right)[\tau] \left( \left(\bigotimes_{j=1}^{n-1} (V_{j}^{(d)})^{\dagger}\right) \otimes (W_{1}^{(k_1)})^{\dagger} \right).
\end{aligned}
\end{equation}
\end{widetext}
The steps result from the following arguments. In the first and second step, we simply write out the definition of $\EATchannProjInf_1^{(k_1,l_1)}$ and insert the definition of $\rho$. In step three, we rewrite $\rho$ in terms of $\tau$ as stated in \cref{eq:proof_meat_chain_rho_tau}. Steps four and five follow by \cref{Lem:Embedding_Channels,Lem:Single_Round_Equiv}, respectively.

Iterating the same argument over the remaining $n-1$ rounds yields the theorem statement.
\end{proof}

From now on, for notational simplicity, for each fixed $d$, we set $k_0\defvar d$ and use the shorthand
\begin{equation}\label{eq:shorthand_iterated_lim}
	\lim_{\substack{\mbf{k}\to\infty\\\mbf{l}\to\infty}}
	\defvar
	\lim_{k_1\to\infty}\lim_{\substack{l_1\to\infty\\l_1\geq\max\{d,k_0\}}}\dots\lim_{k_n\to\infty}\lim_{\substack{l_n\to\infty\\l_n\geq\max\{d,k_{n-1}\}}},
\end{equation}
where the analogous restriction $l_j\geq\max\{d,k_{j-1}\}$ is imposed for every intermediate $j$. This restriction does not change the iterated limit because, for every fixed $d$ and $k_{j-1}$, it only removes finitely many initial values of $l_j$. Moreover, all limits of states below are understood with respect to the trace norm.

\begin{lemma}[Convergence of Projected Channels]\label{lem:Convergence_of_Projected_Channels}
Let $\alpha \in (1,\infty)$ and let $f:\alphCP \to \R$ be an arbitrary tradeoff function. For every $\omega_{A_0^{n-1}E_0}\in S_{=}(A_0^{n-1}E_0)$ and every event $\Omega$ on $\CP_1^n$ whose probability under the original output satisfies
\begin{equation}
    p_\Omega\defvar\Pr_{\left(\EATchann_n\circ\dots\circ\EATchann_1\right)[\omega]}[\Omega]\in(0,1],
\end{equation}
the following statements hold:
\begin{enumerate}
    \item The outputs of the infinite-dimensional projected channels converge to the output of the original channels, i.e.
    \begin{equation}\label{eq:Convergence_of_Projected_Channels}
        \begin{aligned}
        &\lim_{d \to \infty} \lim_{\substack{\mbf{k} \rightarrow \infty \\ \mbf{l}\rightarrow \infty}} \left( \EATchannProjInf_n^{(k_n, l_n)} \circ \dots \circ \EATchannProjInf_1^{(k_1, l_1)} \left(\Pi_0^{(d)}\left(\omega_{A_0^{n-1} E_0}\right)\right) \right)\\
        &=\EATchann_n \circ \dots \circ \EATchann_1 \left(\omega_{A_0^{n-1} E_0}\right).
        \end{aligned}
    \end{equation}
    \item The same limit as in \cref{eq:Convergence_of_Projected_Channels} holds conditioned on event $\Omega$ on $\CP_1^n$.
    \item The $f$-weighted sandwiched conditional \Renyi entropy of the outputs of the infinite-dimensional projected channels converges to the original entropy, i.e.
    \begin{equation}\label{eq:Convergence_of_Projected_Channels_f_entropies}
    \begin{aligned}
        &\lim_{d \to \infty} \lim_{\substack{\mbf{k} \rightarrow \infty \\ \mbf{l}\rightarrow \infty}} \frenyiSandUp_{\alpha}(S_1^n | \CP_1^n E_n)_{\EATchannProjInf_n^{(k_n, l_n)} \circ \dots \circ \EATchannProjInf_1^{(k_1, l_1)} \left(\Pi_0^{(d)} \left( \omega_{A_0^{n-1} E_0} \right) \right)} \\
        &=\frenyiSandUp_{\alpha}(S_1^n | \CP_1^n E_n)_{\EATchann_n \circ \dots \circ \EATchann_1 \left(\omega_{A_0^{n-1} E_0}\right)}. 
    \end{aligned}
    \end{equation}
    \item The sandwiched conditional \Renyi entropy conditioned on $\Omega$ converges to the corresponding entropy of the original output, i.e.\begin{equation}\label{eq:Convergence_of_Projected_Channels_entropies}
    \begin{aligned}
        &\lim_{d \to \infty} \lim_{\substack{\mbf{k} \rightarrow \infty \\ \mbf{l}\rightarrow \infty}} \renyiSandUp_{\alpha}(S_1^n | \CP_1^n E_n)_{\EATchannProjInf_n^{(k_n, l_n)} \circ \dots \circ \EATchannProjInf_1^{(k_1, l_1)} \left(\Pi_0^{(d)} \left( \omega_{A_0^{n-1} E_0} \right) \right)\big|_\Omega} \\
        &=\renyiSandUp_{\alpha}(S_1^n | \CP_1^n E_n)_{\EATchann_n \circ \dots \circ \EATchann_1 \left(\omega_{A_0^{n-1} E_0}\right) \big|_\Omega}. 
    \end{aligned}
    \end{equation}
\end{enumerate}
\end{lemma}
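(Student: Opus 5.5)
We prove the four statements in order; each one feeds into the next.

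\emph{Statement 1.} We peel off one limit at a time, starting from the innermost pair $(k_n,l_n)$. Fix $d$ and all $k_i,l_i$ with $i<n$. Write $\rho_{n-1}$ for the state produced by the first $n-1$ projected channels applied to $\Pi_0^{(d)}(\omega)$.
\begin{itemize}
\item By \cref{lem:truncation_convergence} with auxiliary register $R = S_1^{n-1}\CP_1^{n-1}A_{n-1}^{\dots}$, we get $\ProjIn{n}{l_n}(\rho_{n-1})\to\rho_{n-1}$ in trace norm as $l_n\to\infty$.
\item Since $\EATchann_n$ is a channel, the trace norm contracts under it (\cref{prop:PTNI-1-norm-contr}), so $\EATchann_n\circ\ProjIn{n}{l_n}(\rho_{n-1})\to\EATchann_n(\rho_{n-1})$.
\item Applying \cref{lem:truncation_convergence} again to $\ProjOut{n}{k_n}$, which does not depend on $l_n$, gives convergence as $k_n\to\infty$. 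The ordering of the two limits causes no difficulty because each $\Pi$-map is a trace-norm contraction. This yields $\lim_{k_n}\lim_{l_n}(\cdot)=\EATchann_n(\rho_{n-1})$.
\item For the remaining rounds, the outer channels $\EATchannProjInf_j$ with $j>i$ act contractively. Once the inner limits have been taken, their output is continuous in their input, so induction gives $\EATchann_n\circ\cdots\circ\EATchann_1(\Pi_0^{(d)}(\omega))$.
\item Finally, $\Pi_0^{(d)}$ is a tensor product of finitely many truncation maps. A telescoping triangle inequality, with \cref{lem:truncation_convergence} applied to each factor, gives $\Pi_0^{(d)}(\omega)\to\omega$. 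One more contraction step finishes statement 1.
\end{itemize}

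\emph{Statement 2.} Write $\rho^{\mathrm{proj}}$ for the projected output and $\rho$ for the original output. Conditioning on $\Omega$ is the map $\rho\mapsto P_\Omega\rho P_\Omega/\Tr{P_\Omega\rho}$, where $P_\Omega$ is the projector onto the classical event on $\CP_1^n$. Since $\CP_1^n$ is finite, $\Tr{P_\Omega\rho^{\mathrm{proj}}}\to p_\Omega>0$. Moreover,
\[
\norm{P_\Omega(\rho^{\mathrm{proj}}-\rho)P_\Omega}_1\le\norm{\rho^{\mathrm{proj}}-\rho}_1 .
\]
Hence the conditioned states converge, with an explicit bound of the form $\norm{\cdot}_1\le 2\norm{\rho^{\mathrm{proj}}-\rho}_1/p_\Omega$ for large enough indices.

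\emph{Statements 3 and 4.} These reduce to statements 1 and 2 through the uniform continuity bounds. For statement 3, \cref{thrm:Continuity_f_weighted} applies with $Q=S_1^n$ finite and $\CP=\CP_1^n$ finite. The side register $E_n$ is infinite-dimensional, which that theorem allows. The modulus $\Delta_{\alpha,f}(\epsilon)$ depends only on $\epsilon$, $\dim S_1^n$ and the range of $f$, not on the truncation indices, and $\Delta_{\alpha,f}(\epsilon)\to0$ as $\epsilon\to0$. So trace-norm convergence of the states passes to convergence of the $f$-weighted entropies through every iterated limit. For statement 4 we argue the same way, using \cref{thrm:Continuity_Renyi} on the conditioned states from statement 2.

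The main obstacle is the bookkeeping in statement 1: making sure every intermediate map is genuinely continuous in trace norm on the relevant joint system, including the untouched registers $A_j$ for $j\ge i$ and the classical outputs. The key fact to verify is that the projection channels act only on a single tensor factor, so that \cref{lem:truncation_convergence} with an auxiliary register applies at every stage. A further point to check is that statement 4 needs $p_\Omega>0$ only for the original output, which is why we restrict to sufficiently large indices.
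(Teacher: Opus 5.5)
Your proposal is correct and follows essentially the same route as the paper. Both use \cref{lem:truncation_convergence} for each projection channel, trace-norm contraction under the channels, a telescoping bound for $\Pi_0^{(d)}$, renormalisation by $p_\Omega>0$ for the conditioned states, and the uniform continuity bounds of \cref{thrm:Continuity_f_weighted,thrm:Continuity_Renyi}. The only differences are cosmetic: the paper bounds everything at once with a single telescoping sum of single-round errors rather than evaluating the iterated limits from the inside out, and your explicit estimate $2\norm{\rho^{\mathrm{proj}}-\rho}_1/p_\Omega$ spells out the conditioning step that the paper only asserts.
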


\begin{proof}
By \cref{thrm:Continuity_f_weighted,thrm:Continuity_Renyi}, the $f$-weighted \Renyi entropy and the conditional \Renyi entropy are uniformly continuous. Thus, we only need to show \cref{eq:Convergence_of_Projected_Channels} and its version conditioned on event $\Omega$.

We start with the version not conditioned on $\Omega$. For any fixed state $\rho\in\dop{=}(S_j\CP_jE_j)$, define
\begin{equation}
    \begin{aligned}
	&\epsilon_{j,k}^{\mathrm{out}}\defvar\Tr{\left(I_{E_j}-\widebar Q_j^{(k)}\right)\rho_{E_j}}, \\
	&\nu_{S_j\CP_j}^{(k)}\defvar\pTr{E_j}{\left(I_{E_j}-\widebar Q_j^{(k)}\right)\rho}.
\end{aligned}
\end{equation}
Using the definition of $\ProjOut{j}{k}$, the triangle inequality, and the estimate established in the proof of \cref{lem:truncation_convergence}, applied locally to $E_j$ with $S_j\CP_j$ as the auxiliary register, we obtain
\begin{equation}
    \begin{aligned}
	\norm{\rho-\ProjOut{j}{k}(\rho)}_1
	&\leq \begin{aligned}[t]
		&\norm{\rho-\widebar Q_j^{(k)}\rho\widebar Q_j^{(k)}}_1 \\
		&+ \norm{\nu_{S_j\CP_j}^{(k)}\otimes\pure{\varnothing_j}}_1 
	\end{aligned}\\
	&\leq2\sqrt{\epsilon_{j,k}^{\mathrm{out}}}+\epsilon_{j,k}^{\mathrm{out}}.
\end{aligned}
\end{equation}

Since $\widebar Q_j^{(k)}\to I_{E_j}$ strongly, $\epsilon_{j,k}^{\mathrm{out}}\to0$, and hence
\begin{equation}\label{eq:proof_lim_out}
	\lim_{k\to\infty}\norm{\rho-\ProjOut{j}{k}(\rho)}_1=0.
\end{equation}

Similarly, for any fixed state $\rho\in\dop{=}(A_{j-1}E_{j-1})$, define
\begin{equation}
	\epsilon_{j,l}^{\mathrm{in}}\defvar\Tr{\left(I_{A_{j-1}E_{j-1}}-\widebar P_{j-1}^{(l)}\otimes\widebar Q_{j-1}^{(l)}\right)\rho}.
\end{equation}
Then
\begin{equation}\label{eq:proof_lim_in}
	\norm{\rho-\ProjIn{j}{l}(\rho)}_1\leq2\sqrt{\epsilon_{j,l}^{\mathrm{in}}}+\epsilon_{j,l}^{\mathrm{in}}\xrightarrow{l\to\infty}0.
\end{equation}

For the initial projection, define

\begin{align}
	&\epsilon_{A_j,d}\defvar\Tr{\left(I_{A_j}-\widebar P_j^{(d)}\right)\omega_{A_j}}, \\ &\epsilon_{E_0,d}\defvar\Tr{\left(I_{E_0}-\widebar Q_0^{(d)}\right)\omega_{E_0}}.
\end{align}

By the tensor-product definition of $\Pi_0^{(d)}$, see \cref{eq:proj-chan-on-Aj-Ej}, we have the telescoping identity
\begin{align}
	\Pi_0^{(d)}(\omega) - \omega
	&=\left(\bigotimes_{i=0}^{n-1}\Pi_{A_i}^{(d)}\right) \left[ \Pi_{E_0}^{(d)}(\omega) - \omega \right] \\
	\notag&\quad+\sum_{j=0}^{n-1}\left(\bigotimes_{i=j+1}^{n-1}\Pi_{A_i}^{(d)}\right)\left[\Pi_{A_j}^{(d)}(\omega)-\omega\right],
\end{align}
where the empty tensor product is understood as the identity channel. Therefore, applying the triangle inequality and the data-processing inequality for the trace distance under CPTP maps gives 
\begin{equation}
	\norm{\Pi_0^{(d)}(\omega)-\omega}_1\leq\norm{\Pi_{E_0}^{(d)}(\omega)-\omega}_1+\sum_{j=0}^{n-1}\norm{\Pi_{A_j}^{(d)}(\omega)-\omega}_1.
\end{equation}
Applying \cref{lem:truncation_convergence} once more, yields
\begin{equation}\label{eq:proof_lim_init}
    \begin{aligned}
	\norm{\omega-\Pi_0^{(d)}(\omega)}_1
	&\leq \begin{aligned}[t]
		&\sum_{j=0}^{n-1}\left(2\sqrt{\epsilon_{A_j,d}}+\epsilon_{A_j,d}\right) \\
	 	&+ 2\sqrt{\epsilon_{E_0,d}} + \epsilon_{E_0,d}
	 \end{aligned} \\
	&\xrightarrow{d\to\infty}0.
\end{aligned}
\end{equation}

Then, for a single round $j$, the triangle inequality and the data processing inequality give
\begin{equation}
    \begin{aligned}
	&\norm{\EATchannProjInf_j^{(k,l)}(\rho)-\EATchann_j(\rho)}_1 \\
	&\leq\norm{\ProjOut{j}{k}\!\left(\EATchann_j(\ProjIn{j}{l}(\rho))\right)-\ProjOut{j}{k}\!\left(\EATchann_j(\rho)\right)}_1 \\
	&\quad+\norm{\ProjOut{j}{k}\!\left(\EATchann_j(\rho)\right)-\EATchann_j(\rho)}_1 \\
	&\leq\norm{\ProjIn{j}{l}(\rho)-\rho}_1+\norm{\ProjOut{j}{k}\!\left(\EATchann_j(\rho)\right)-\EATchann_j(\rho)}_1.
\end{aligned}
\end{equation}
Combining this bound with \eqref{eq:proof_lim_out} and \eqref{eq:proof_lim_in}, one obtains
\begin{equation}\label{eq:single_round_bound}
	\lim_{k\to\infty}\lim_{l\to\infty}\norm{\EATchannProjInf_j^{(k,l)}(\rho)-\EATchann_j(\rho)}_1=0.
\end{equation}

Next, we bound the total trace distance for the concatenated channels applied to the initial state $\omega \defvar \omega_{A_0^{n-1} E_0}$. Using a telescoping sum we find
\begin{widetext}
\begin{equation}
    \begin{aligned}
    & \norm{ \EATchannProjInf_n^{(k_n, l_n)} \circ \dots \circ \EATchannProjInf_1^{(k_1, l_1)} (\Pi_0^{(d)}(\omega)) - \EATchann_n \circ \dots \circ \EATchann_1 (\omega) }_1 \\
    &= \begin{aligned}[t]
    \Big\| &\EATchannProjInf_n^{(k_n, l_n)} \circ \dots \circ \EATchannProjInf_1^{(k_1, l_1)} (\Pi_0^{(d)}(\omega)) - \EATchann_n \circ \EATchannProjInf_{n-1}^{(k_{n-1}, l_{n-1})} \circ \dots \circ \EATchannProjInf_1^{(k_1, l_1)} (\Pi_0^{(d)}(\omega)) \\
    &+ \EATchann_n \circ \EATchannProjInf_{n-1}^{(k_{n-1}, l_{n-1})} \circ \dots \circ \EATchannProjInf_1^{(k_1, l_1)} (\Pi_0^{(d)}(\omega)) - \dots \\
    &+ \EATchann_n \circ \dots \circ \EATchann_2 \circ \EATchannProjInf_1^{(k_1, l_1)} (\Pi_0^{(d)}(\omega)) - \EATchann_n \circ \dots \circ \EATchann_1 (\Pi_0^{(d)}(\omega)) \\
    &+ \EATchann_n \circ \dots \circ \EATchann_1 (\Pi_0^{(d)}(\omega)) - \EATchann_n \circ \dots \circ \EATchann_1 (\omega) \Big\|_1
    \end{aligned} \\
    &\le \norm{ \EATchannProjInf_n^{(k_n, l_n)} \circ \dots \circ \EATchannProjInf_1^{(k_1, l_1)} (\Pi_0^{(d)}(\omega)) - \EATchann_n \circ \EATchannProjInf_{n-1}^{(k_{n-1}, l_{n-1})} \circ \dots \circ \EATchannProjInf_1^{(k_1, l_1)} (\Pi_0^{(d)}(\omega)) }_1 \\
    &\quad + \dots + \norm{ \EATchann_n \circ \dots \circ \EATchann_1 (\Pi_0^{(d)}(\omega)) - \EATchann_n \circ \dots \circ \EATchann_1 (\omega) }_1
\end{aligned}
\end{equation}

Because the subsequent original channels $\EATchann_j$ are CPTP, we apply the data processing inequality to remove them from each term and find
\begin{equation}
    \begin{aligned}
    &\norm{ \EATchannProjInf_n^{(k_n, l_n)} \circ \dots \circ \EATchannProjInf_1^{(k_1, l_1)} (\Pi_0^{(d)}(\omega)) - \EATchann_n \circ \dots \circ \EATchann_1 (\omega) }_1 \\
    &\le \norm{ \EATchannProjInf_n^{(k_n,l_n)}(\rho_{n-1}) - \EATchann_n(\rho_{n-1}) }_1 + \dots + \norm{ \EATchannProjInf_1^{(k_1,l_1)}(\rho_0) - \EATchann_1(\rho_0) }_1 + \norm{ \Pi_0^{(d)}(\omega) - \omega }_1,
\end{aligned}
\end{equation}
where we defined the intermediate states $\rho_0 \defvar \Pi_0^{(d)}(\omega)$ and $\rho_{j-1} \defvar \EATchannProjInf_{j-1}^{(k_{j-1}, l_{j-1})} \circ \dots \circ \EATchannProjInf_1^{(k_1, l_1)} ( \rho_0 )$.

Taking the sequential limits $\lim_{d \to \infty} \lim_{k_1 \to \infty} \dots \lim_{l_n \to \infty}$, we evaluate the sum from right to left. Due to Eqs.~\eqref{eq:proof_lim_out}, \eqref{eq:proof_lim_in}, and \eqref{eq:proof_lim_init}, the trace distance for each individual projection step vanishes and thus, the full trace distance goes to zero
\begin{equation}
    \lim_{d \to \infty} \lim_{\substack{\mbf{k} \rightarrow \infty \\ \mbf{l}\rightarrow \infty}}  \norm{
    \EATchannProjInf_n^{(k_n, l_n)} \circ \dots \circ \EATchannProjInf_1^{(k_1, l_1)} (\Pi_0^{(d)}(\omega))
    - \EATchann_n \circ \dots \circ \EATchann_1 (\omega) }_1
    = 0,
\end{equation}
which proves \cref{eq:Convergence_of_Projected_Channels}. 

Next, because the trace distance between the unconditional states vanishes and the output state conditioned on $\Omega$ has a fixed, strictly positive probability $p_\Omega \in (0,1]$, the trace distance between the states conditioned on $\Omega$ must equally vanish:
\begin{equation}
    \lim_{d \to \infty} \lim_{\substack{\mbf{k} \rightarrow \infty \\ \mbf{l}\rightarrow \infty}} \norm{ \left( \EATchannProjInf_n^{(k_n, l_n)} \circ \dots \circ \EATchannProjInf_1^{(k_1, l_1)} (\Pi_0^{(d)}(\omega)) \right)_{|\Omega} - \EATchann_n \circ \dots \circ \EATchann_1 (\omega)_{|\Omega} }_1 = 0,
\end{equation}
which proves the version of \cref{eq:Convergence_of_Projected_Channels} conditioned on event $\Omega$.
\end{widetext}
The statements regarding the entropies in \cref{eq:Convergence_of_Projected_Channels_f_entropies,eq:Convergence_of_Projected_Channels_entropies} then follow directly from \cref{thrm:Continuity_f_weighted,thrm:Continuity_Renyi}, respectively.
\end{proof}

\subsection{$f$-weighted entropy accumulation in infinite dimensions}
In this section, we prove that $f$-weighted entropy may accumulate with infinite-dimensional side-information. We begin by stating the theorem.
\begin{theorem}[Simplified infinite-dimensional $f$-weighted entropy accumulation theorem]\label{thrm:qes_eat_simp}
    Let $A_j$, $j=0,\dots, n-1$, and $E_j$, $j=0,\dots, n$ be registers with possibly \emph{infinite dimensional separable} Hilbert spaces. In contrast, assume that $S_j$ and $\CP_j$ for $j=1,\dots n$ are \emph{finite} registers.
    
	Furthermore, for each $j\in\{1,2,\cdots,n\}$, take a state $\sigma^{(j-1)}\in\dop{=}(A_{j-1})$, and a channel $\mathcal{M}_j\in\CPTP(A_{j-1}E_{j-1},S_j\CP_jE_j)$, such that $\CP_j$ are classical. Let $\rho$ be a state of the form $\rho_{S_1^n\CP_1^nE_n}=\EATchann_n\circ\cdots\circ\EATchann_1[\omega_{A_0^{n-1}E_0}]$ for some $\omega\in\dop{=}(A_0^{n-1}E_0)$, such that $\omega_{A_0^{n-1}}=\sigma_{A_0}^{(0)}\otimes\cdots\otimes\sigma_{A_{n-1}}^{(n-1)}$. For each $j$, suppose that for every value $\cP_1^{j-1}$, we have a tradeoff function $f_{|\cP_1^{j-1}}$ on registers $\CP_j$. Define the following tradeoff function on $\CP_1^n$:
	\begin{align}
		\label{eq:full qes_simp2}
		f_\mathrm{full}(\cP_1^n) \defvar \sum_{j=1}^n f_{|\cP_1^{j-1}}(\cP_j).
	\end{align}
	Then for any $\alpha\in (1,\infty)$ we have
	\begin{align}\label{eq:qes eat_simp}
		\begin{aligned}
			&H_\alpha^{\uparrow,f_\mathrm{full}}(S_1^n|\CP_1^nE_n)_\rho \geq \sum_{j=1}^n\min_{\cP_1^{j-1}} \kappa_{\cP_1^{j-1}} \\
			\text{where} \\
            &\kappa_{\cP_1^{j-1}} \defvar \inf_{\nu\in\Sigma_j} H^{\uparrow, f_{|\cP_1^{j-1}}}_{\alpha}(S_j| \CP_j E_j \widetilde{E})_{\nu},
		\end{aligned}
	\end{align}
	defining 
    \begin{equation}
        \begin{aligned}[t]
        \Sigma_j \defvar \Big\{
            &\left(\EATchann_j \otimes \id_{\widetilde{E}}\right)\left[\omega_{A_{j-1} E_{j-1}\widetilde{E}}\right] 
        \mid \\
        &\omega \in \dop{=}(A_{j-1}E_{j-1}\widetilde{E}), \; \omega_{A_{j-1}}=\sigma_{A_{j-1}}^{(j-1)} \Big\}.
        \end{aligned}
    \end{equation} 
    with $\widetilde{E}$ being a register of large enough dimension to serve as a purifying register for any of the $A_{j-1}E_{j-1}$ registers. We can assume $\widetilde{E}$ to be separable, because $n$ is finite and all Hilbert spaces under consideration are separable.
	
	Consequently, if we instead define the following ``normalized'' tradeoff function on $\CP_1^n$:
    \begin{equation}\label{eq:fullQESnorm_simp}
        \begin{aligned}
		&\hat{f}_\mathrm{full}( \cP_1^n) \defvar \sum_{j=1}^n \hat{f}_{|\cP_1^{j-1}}(\cP_j)
	\end{aligned}
    \end{equation}
	where
    \begin{equation}
    \hat{f}_{|\cP_1^{j-1}}(\cP_j) \defvar f_{|\cP_1^{j-1}}(\cP_j) + \kappa_{\cP_1^{j-1}},
    \end{equation}
	then
	\begin{align}\label{eq:chainQESnorm_simp}
		H^{\uparrow,\hat{f}_\mathrm{full}}_\alpha(S_1^n | \CP_1^n E_n)_\rho \geq 0.
	\end{align}
\end{theorem}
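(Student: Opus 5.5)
The plan is to reduce to the finite-dimensional $f$-weighted MEAT \cite[Theorem~4.1a]{Arqand_2025} via the projected channels of \cref{app:subsec_conv_to_finite}, and then take limits. We use the uniform continuity results \cref{thrm:Continuity_f_weighted} and \cref{lem:Convergence_of_Projected_Channels}.

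\textbf{Step 1: finite-dimensional bound at fixed truncation.} Fix $d$ and admissible parameters $\mbf{k},\mbf{l}$ with $l_j\geq\max\{d,k_{j-1}\}$. By \cref{thrm:Finite_MEAT_Chains}, the output of the infinite-dimensional projected chain $\EATchannProjInf_n^{(k_n,l_n)}\circ\dots\circ\EATchannProjInf_1^{(k_1,l_1)}\circ\Pi_0^{(d)}[\omega]$ equals, up to the isometry $W_n^{(k_n)}$ on $E_n$, the output of the purely finite-dimensional chain $(\EATchannProj_j^{(k_j,l_j)}\circ\Xi_{j-1})_j$. That chain acts on $\tau$, which has product marginal $\bigotimes_j\xi^{(j)}$.

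Since $f$-weighted entropies are invariant under isometries on the conditioning system, we may apply \cite[Theorem~4.1a]{Arqand_2025} to the finite chain. This gives
\[
H^{\uparrow,f_\mathrm{full}}_\alpha \geq \sum_j\min_{\cP_1^{j-1}}\kappa^{(d,\mbf{k},\mbf{l})}_{\cP_1^{j-1}},
\]
where each $\kappa^{(d,\mbf{k},\mbf{l})}$ is an infimum over the finite marginal-constrained set $\Sigma_j^{(d,\mbf{k},\mbf{l})}$. This set is built from the channel $\EATchannProj_j^{(k_j,l_j)}\circ\Xi_{j-1}$ and the marginal $\xi^{(j-1)}$.

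\textbf{Step 2: compare finite and infinite single-round sets.} We want $\kappa^{(d,\mbf{k},\mbf{l})}_{\cP_1^{j-1}}\geq\kappa_{\cP_1^{j-1}}-\delta_d$ with $\delta_d\to0$, uniformly in $\mbf{k},\mbf{l}$.

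Take any feasible finite input $\nu_{A^{(d)}_{j-1}E^{(k_{j-1})}_{j-1}\widetilde E}$ with $A$-marginal $\xi^{(j-1)}$. Embed it via $V^{(d)}\otimes W$ into the infinite space. Its $A_{j-1}$-marginal is then $\Pi_{A_{j-1}}^{(d)}(\sigma^{(j-1)})$ rather than $\sigma^{(j-1)}$.

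To fix this, construct a nearby feasible input for $\Sigma_j$. Write $\sigma=\Pi^{(d)}(\sigma)+X_d$, where $\norm{X_d}_1\leq 2\sqrt{\epsilon_{A,d}}+\epsilon_{A,d}$ by \cref{lem:truncation_convergence}. Use Uhlmann's theorem in infinite dimensions \cite{Hou_2012} together with Fuchs--van de Graaf to find a purification-extension with marginal exactly $\sigma^{(j-1)}$. This extension lies within trace distance $\sqrt{2\norm{X_d}_1}$ (or similar) of the embedded input; the extra purifying system can be absorbed into the separable $\widetilde E$.

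Next, the output projections $\ProjOut{j}{k}$ and input projections $\ProjIn{j}{l}$ are CPTP maps acting only on $E_j$ and $A_{j-1}E_{j-1}$. Data processing in the conditioning system then removes $\ProjOut{j}{k}$: it acts on $E_j$ only, and the $f$-weighted entropy obeys DPI on side information by \cref{prop:Inf-strong-subadd}, conditioned on each $\cP$. The map $\ProjIn{j}{l}$ is just a different input state, still with a perturbed marginal, and is handled by the same Uhlmann correction.

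Finally, \cref{thrm:Continuity_f_weighted}, whose bound depends only on $\dim S_j$ and $f$, gives $\delta_d=\Delta_{\alpha,f}(\epsilon'_d)\to0$. The min over $\cP_1^{j-1}$ is over a finite set, so the uniformity is preserved.

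\textbf{Step 3: limits.} Take $\lim_{d}\lim_{\mbf{k},\mbf{l}}$ of both sides. The left side converges to $H^{\uparrow,f_\mathrm{full}}_\alpha(S_1^n|\CP_1^nE_n)_\rho$ by \cref{lem:Convergence_of_Projected_Channels}(3). The right side is at least $\sum_j\min\kappa_{\cP_1^{j-1}}-n\delta_d\to\sum_j\min\kappa_{\cP_1^{j-1}}$.

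The normalized statement \eqref{eq:chainQESnorm_simp} follows as in the finite case. Shifting $f$ by constants depending on $\cP_1^{j-1}$ shifts $H^{\uparrow,f}$ exactly, by the definition of the $f$-weighted entropy, and this turns the bound into $\geq0$.

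The main obstacle is Step 2: controlling the marginal mismatch, i.e.\ showing that finitely truncated feasible states are close to genuinely feasible infinite-dimensional ones, uniformly over the infimum. If Uhlmann-based correction proves delicate, an alternative is to instead enlarge the infimum to approximately-marginal states and use continuity directly. Since $\kappa$ is defined with exact marginals, the Uhlmann construction seems necessary.
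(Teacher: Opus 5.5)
Your proposal is correct and follows essentially the paper's route: the finite-dimensional $f$-weighted MEAT applied to the isometrically equivalent finite chain, then lower-bounding the finite normalization constants by embedding into the infinite space, removing $\ProjOut{j}{k_j}$ by data processing, and correcting the marginal via Uhlmann/Fuchs--van de Graaf together with uniform continuity. The one difference is that you merge \cref{lem:kappa_inf-space_lower_bnd,lem:kappa_intermediate_limit_bnd,lem:kappa_marginal_limit_bnd} into a single uniform one-sided bound $\kappa^{(d,\mbf{k},\mbf{l})}_{\cP_1^{j-1}}\geq\kappa_{\cP_1^{j-1}}-\delta_d$. That bound suffices and avoids the paper's two-sided limit $\lim_{d}\hat{\kappa}_{\cP_1^{j-1}}=\kappa_{\cP_1^{j-1}}$.

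One small correction: $\ProjIn{j}{l_j}$ does not perturb the marginal at all. Because $l_j\geq\max\{d,k_{j-1}\}$, the embedded input is already supported on $\supp(\widebar{P}_{j-1}^{(l_j)}\otimes\widebar{Q}_{j-1}^{(l_j)})$, so the map acts as the identity (\cref{Lem:Embedding_Channels}). This is fortunate, since a genuinely nontrivial input truncation could not be absorbed uniformly by your Uhlmann step: its effect on the $A_{j-1}$-marginal depends on the uncontrolled $E_{j-1}$-weight outside the cutoff.
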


To prove this theorem, we will relate the previously defined finite-dimensional projected channels to the infinite-dimensional channels via \cref{lem:Convergence_of_Projected_Channels}. As a first step,  we show in the next subsection that the corresponding finite-dimensional normalization constants are, in the limit, lower bounded by the infinite-dimensional quantities $\kappa_{\cP_1^{j-1}}$.

\subsubsection{Preliminary Lemmata and Theorems}
In preparation for the proof let us define the following sets
\begin{align}
	&\begin{aligned}
    &\widebar{\Sigma}_j(k_{j-1},k_j,l_j,d) \\
    &\defvar \begin{aligned}[t]
        \Big\{ &\left(\left(\EATchannProj_j^{(k_j,l_j)}\circ\Xi_{j-1}\right)\otimes\id_R\right)[\tau] \mid \\
    	&\tau \in S_{=}(A_{j-1}^{(d)} E_{j-1}^{(k_{j-1})} R), \tau_{A_{j-1}^{(d)}} = \xi_{A_{j-1}^{(d)}}^{(j-1)} \Big\},
    	\end{aligned}
    \end{aligned} \label{eq:def_Sigma_bar} \\
    &\begin{aligned}
    &\widetilde{\Sigma}_j(k_j,l_j,d)\\
    &\defvar \begin{aligned}[t]
    	\Big\{ &\left(\EATchannProjInf_j^{(k_j,l_j)}\otimes\id_{\widetilde E}\right)[\omega]
    	\Big| \\
    	&\omega\in S_{=}(A_{j-1}E_{j-1}\widetilde E), \omega_{A_{j-1}}=\widebar{\sigma}_{A_{j-1}}^{(j-1)},\\
    	&\omega=
    	\left(\widebar P_{j-1}^{(l_j)}\otimes\widebar Q_{j-1}^{(l_j)}
    	\otimes I_{\widetilde E}\right)
    	\omega
    	\left(\widebar P_{j-1}^{(l_j)}\otimes\widebar Q_{j-1}^{(l_j)}
    	\otimes I_{\widetilde E}\right)
    	\Big\}.
    \end{aligned}
\end{aligned} \label{eq:def_Sigma_tilde}
\end{align}

In both cases $R$ ($\widetilde{E}$) is a register of large enough dimension to serve as a purifying register for any of the $A_{j-1}^{(d)}E_{j-1}^{(k_{j-1})}$ ($A_{j-1}E_{j-1}$) registers. Additionally, each $\xi_{A_{j-1}^{(d)}}^{(j-1)}$ satisfies
\begin{equation}
    V_{j-1}^{(d)} \xi_{A_{j-1}^{(d)}}^{(j-1)} V_{j-1}^{(d)\dagger} = \widebar{\sigma}_{A_{j-1}}^{(j-1)},
\end{equation}
where $\widebar{\sigma}_{A_{j-1}}^{(j-1)}$ are defined as
\begin{equation}
	\widebar{\sigma}_{A_{j-1}}^{(j-1)} = \pTr{E_0A_0^{j-2}A_j^{n-1}}{\Pi_0^{(d)}\left(\omega_{A_0^{n-1}E_0}\right)}.
\end{equation}
Finally, we note, that the sets $\widebar{\Sigma}_j(k_{j-1},k_j,l_j,d)$ are defined entirely on finite dimensional spaces whereas the sets $\widetilde{\Sigma}_j\left(k_j,l_j,d\right)$ are defined in the infinite dimensional spaces.

Using these sets, we define the following optimization problems
\begin{align}
    \widebar{\kappa}_{\cP_1^{j-1}} &\defvar \inf_{\nu\in\widebar{\Sigma}_j(k_{j-1},k_j,l_j,d)} H^{\uparrow, f_{|\cP_1^{j-1}}}_{\alpha}(S_j| \CP_j E_{j}^{(k_{j})} R)_{\nu}, \label{eq:def_kappa_bar} \\
    \widetilde{\kappa}_{\cP_1^{j-1}} &\defvar \inf_{\nu\in\wt{\Sigma}_j\left(k_j,l_j,d\right)} H^{\uparrow, f_{|\cP_1^{j-1}}}_{\alpha}(S_j| \CP_j E_j \widetilde{E})_{\nu}.
\end{align}
For notational simplicity, we leave the dependence of $\widebar{\kappa}_{\cP_1^{j-1}}$ on $(k_{j-1},k_j,l_j,d)$ and that of $\widetilde{\kappa}_{\cP_1^{j-1}}$ on $(k_j,l_j,d)$ implicit.

\begin{lemma}[Infinite-Space Lower Bound]\label{lem:kappa_inf-space_lower_bnd}
	Let $d,k_{j-1},k_j,l_j$ satisfy $l_j\geq d$ and $l_j\geq k_{j-1}$ and $k_0 =d$. Then for all $ j =1,\dots,n$, it holds
	\begin{equation}
		\widebar{\kappa}_{\cP_1^{j-1}} \geq \widetilde{\kappa}_{\cP_1^{j-1}}.
	\end{equation}
\end{lemma}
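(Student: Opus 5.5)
The plan is to show that every feasible point of the finite-dimensional problem defining $\widebar{\kappa}_{\cP_1^{j-1}}$ can be mapped to a feasible point of the problem defining $\widetilde{\kappa}_{\cP_1^{j-1}}$ with the same objective value. Taking infima then gives the claim. Concretely, I will show that for every $\nu\in\widebar{\Sigma}_j(k_{j-1},k_j,l_j,d)$ there is a $\widetilde\nu\in\widetilde{\Sigma}_j(k_j,l_j,d)$ with
\begin{equation*}
H^{\uparrow, f_{|\cP_1^{j-1}}}_{\alpha}(S_j| \CP_j E_j \widetilde{E})_{\widetilde\nu}=H^{\uparrow, f_{|\cP_1^{j-1}}}_{\alpha}(S_j| \CP_j E_{j}^{(k_{j})} R)_{\nu}.
\end{equation*}

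\textbf{Step 1: lift the input.} Fix $\tau\in S_{=}(A_{j-1}^{(d)}E_{j-1}^{(k_{j-1})}R)$ with $\tau_{A_{j-1}^{(d)}}=\xi^{(j-1)}_{A_{j-1}^{(d)}}$. Set $\tau'\defvar(\Xi_{j-1}\otimes\id_R)(\tau)$, which lives on $A_{j-1}^{(l_j)}E_{j-1}^{(l_j)}R$. Then define
\begin{equation*}
\omega\defvar (V_{j-1}^{(l_j)}\otimes W_{j-1}^{(l_j)}\otimes J)\,\tau'\,(\cdots)^\dagger,
\end{equation*}
where $J:R\to\widetilde E$ is an isometric embedding. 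Such a $J$ exists because $R$ is finite-dimensional and $\widetilde E$ is separable and large enough.

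\textbf{Step 2: check feasibility of $\omega$.} The projection condition holds automatically, since $V^{(l_j)}_{j-1}(V^{(l_j)}_{j-1})^\dagger=\widebar P^{(l_j)}_{j-1}$ and $W^{(l_j)}_{j-1}(W^{(l_j)}_{j-1})^\dagger=\widebar Q^{(l_j)}_{j-1}$. For the marginal condition, note that $U_{j-1}$ factorizes, so the $A$-marginal of $\omega$ equals
\begin{equation*}
V^{(l_j)}_{j-1}(V^{(l_j)}_{j-1})^\dagger V^{(d)}_{j-1}\,\xi\, (V^{(d)}_{j-1})^\dagger V^{(l_j)}_{j-1}(V^{(l_j)}_{j-1})^\dagger=\widebar P^{(l_j)}_{j-1}\,\widebar\sigma\,\widebar P^{(l_j)}_{j-1}=\widebar\sigma_{A_{j-1}}^{(j-1)}.
\end{equation*}
The last equality uses $\widebar P^{(l_j)}_{j-1}V^{(d)}_{j-1}=V^{(d)}_{j-1}$ (because $l_j\ge d$) together with the defining relation $V^{(d)}\xi V^{(d)\dagger}=\widebar\sigma$.

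\textbf{Step 3: relate the outputs.} By \cref{Lem:Single_Round_Equiv}, applied to $\tau'$ with $R$ carried along (the isometries and channels act trivially on it),
\begin{equation*}
(\EATchannProjInf_j^{(k_j,l_j)}\otimes\id)[\omega]=(W_j^{(k_j)}\otimes J)\big((\EATchannProj^{(k_j,l_j)}_j\circ\Xi_{j-1})\otimes\id_R\big)[\tau]\,(W_j^{(k_j)}\otimes J)^\dagger .
\end{equation*}
Here $\ProjIn{j}{l_j}$ acts as the identity on $\omega$, because $\omega$ is already supported on the range of the projections. So $\widetilde\nu$ is the image of $\nu$ under a local isometry on the conditioning systems $E_j^{(k_j)}R$, and the classical registers $S_j\CP_j$ are left untouched.

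\textbf{Step 4: equality of the objectives.} The $f$-weighted entropy is a function of $\rho(\cP)$ and of the conditional entropies $\renyiSandUp_\alpha(S_j|E\cdot)_{\nu_{|\cP}}$. The distribution on $\CP_j$ is unchanged by the local isometry. Each conditional entropy is invariant under isometries on the conditioning system: the data-processing inequality (\cref{app:inf-dim-DPI}) gives one direction, and applying it to the channel that undoes the isometry on its range gives the other.

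I expect the main obstacle to be Step 4. The optimizer $\sigma$ in the definition of $\renyiSandUp_\alpha$ ranges over states on the larger space $E_j\widetilde E$, not only on the image of the isometry. Showing these extra states do not increase the supremum requires the reverse data-processing argument: compress any such $\sigma$ onto the range of the isometry via a pinching channel, and observe that the entropy cannot go down.
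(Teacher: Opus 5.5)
Your proposal is correct and follows the paper's proof essentially step for step. Your lifted state coincides with the paper's $\omega=(V_{j-1}^{(d)}\otimes W_{j-1}^{(k_{j-1})}\otimes V_R)\,\tau\,(\cdots)^\dagger$, since $(V_{j-1}^{(l_j)}\otimes W_{j-1}^{(l_j)})U_{j-1}=V_{j-1}^{(d)}\otimes W_{j-1}^{(k_{j-1})}$, and the output identity comes from \cref{Lem:Embedding_Channels,Lem:Single_Round_Equiv} exactly as you have it. The only difference is in Step 4: the paper simply cites isometric invariance of the $f$-weighted entropy from \cite[Lemma~4.4]{Arqand_2025}, whereas your two-sided data-processing argument (the isometry in one direction, a trace-preserving compression such as $X\mapsto V^\dagger XV+\Tr{(I-VV^\dagger)X}\tau_0$ in the other) gives a self-contained justification that also covers the infinite-dimensional conditioning system.
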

\begin{proof}
    We simply show that any feasible point of $\widebar{\Sigma}_{j}$ can be converted to a state in $\widetilde{\Sigma}_{j}$ via an isometry on part of the registered being conditioned upon. This is enough to show minimizing the conditional entropy over $\wt{\Sigma}_{j}$ can only result in a lower value than minimizing over $\widebar{\Sigma}_{j}$.

	Let $\nu \in \widebar{\Sigma}_j $ be arbitrary. Then, there exists $\tau \in S_{=}(A_{j-1}^{(d)} E_{j-1}^{(k_{j-1})} R)$ with $\tau_{A_{j-1}^{(d)}} = \xi_{A_{j-1}^{(d)}}^{(j-1)}$ such that  
	\begin{equation}
		\nu = \left(\left(\EATchannProj_j^{(k_j,l_j)}\circ\Xi_{j-1}\right)\otimes\id_R\right)[\tau].
	\end{equation}
	Since $R$ is a purifying register for the finite-dimensional registers $\{A_{j-1}^{(d)}E_{j-1}^{(k_{j-1})}\}_{j=1,\dots, n}$ and $\widetilde{E}$ is a purifying register for the infinite-dimensional registers $\{A_{j-1}E_{j-1}\}_{j=1,\dots, n}$, it holds $\dim(R) \leq \dim(\widetilde{E})$. Therefore, there exists an isometry $V_R:R\rightarrow\widetilde E$. 
	
	Using this isometry $V_R$, define the state $\omega \in \dop{=}(A_{j-1}E_{j-1}\widetilde{E})$ as
	\begin{equation}
		\omega \defvar \left(V_{j-1}^{(d)}\otimes W_{j-1}^{(k_{j-1})} \otimes V_R\right) \tau  \left(V_{j-1}^{(d)}\otimes W_{j-1}^{(k_{j-1})} \otimes V_R \right)^\dagger.
	\end{equation}
	Its marginal on $A_{j-1}$ satisfies
    \begin{equation}
	\begin{aligned}
		\omega_{A_{j-1}}
		&=V_{j-1}^{(d)}\tau_{A_{j-1}^{(d)}}V_{j-1}^{(d)\dagger} \\
		&=V_{j-1}^{(d)}\xi_{A_{j-1}^{(d)}}^{(j-1)}V_{j-1}^{(d)\dagger} \\
		&=\widebar{\sigma}_{A_{j-1}}^{(j-1)}.
	\end{aligned}       
    \end{equation}
	Moreover, since $l_j\geq d$ and $l_j\geq k_{j-1}$, we have
	\begin{equation}
    \begin{aligned}
		&\widebar P_{j-1}^{(l_j)}V_{j-1}^{(d)}=V_{j-1}^{(d)}, \\
		&\widebar Q_{j-1}^{(l_j)}W_{j-1}^{(k_{j-1})}=W_{j-1}^{(k_{j-1})},
	\end{aligned}
    \end{equation}
	and thus,
	\begin{equation}
		\omega = \left(\widebar P_{j-1}^{(l_j)}\otimes\widebar Q_{j-1}^{(l_j)}\otimes I_{\widetilde E}\right)
		\omega \left(\widebar P_{j-1}^{(l_j)}\otimes\widebar Q_{j-1}^{(l_j)}\otimes I_{\widetilde E}\right).
	\end{equation}
	
	Therefore, the state $\widetilde{\nu}$ defined as
	\begin{equation}
		\widetilde{\nu} \defvar \left(\EATchannProjInf_j^{(k_j,l_j)}\otimes\id_{\widetilde E}\right)[\omega],
	\end{equation}
	is an element of $\widetilde{\Sigma}_j$. Starting from the definition of $\widetilde{\nu}$, a simple calculation applying \cref{Lem:Embedding_Channels,Lem:Single_Round_Equiv} shows that the two output states $\nu$ and $\widetilde{\nu}$ satisfy
	\begin{equation}
		\widetilde{\nu} = \left(W_j^{(k_j)}\otimes V_R\right) \nu \left(W_j^{(k_j)}\otimes V_R\right)^\dagger.
	\end{equation}
	Therefore, by the isometric invariance of the $f$-weighted \Renyi entropy \cite[Lemma~4.4]{Arqand_2025},
	\begin{equation}
		H^{\uparrow,f_{|\cP_1^{j-1}}}_{\alpha}(S_j|\CP_jE_j^{(k_j)}R)_{\nu} = H^{\uparrow,f_{|\cP_1^{j-1}}}_{\alpha}(S_j|\CP_jE_j\widetilde E)_{\widetilde{\nu}}.
	\end{equation}
	Hence, for every feasible point in $\widebar{\Sigma}_j$ we can construct a feasible point in $\widetilde{\Sigma}_j$ with the same objective value. Taking the infimum yields
	\begin{equation}
		\widebar{\kappa}_{\cP_1^{j-1}} \geq \widetilde{\kappa}_{\cP_1^{j-1}},
	\end{equation}
	which completes the proof.
\end{proof}

As the previous lemma shows we cannot use $\widebar{\Sigma}_{j}$, we need a different approach. To this end, we define another optimization problem as
\begin{equation}
    \hat{\kappa}_{\cP_1^{j-1}} \defvar \inf_{\nu\in\hat{\Sigma}_j\left(d\right)} H^{\uparrow, f_{|\cP_1^{j-1}}}_{\alpha}(S_j| \CP_j E_j \wt{E})_{\nu},
\end{equation}
where the sets $\hat{\Sigma}_j\left(d\right)$ are defined as
\begin{equation}
    \begin{aligned}[t]
    \hat{\Sigma}_j\left(d\right) \defvar \Big\{ &\left( \EATchann_j \otimes \id_{\wt{E}}\right)\left[\omega\right] \mid \\ & \omega \in S_{=}(A_{j-1} E_{j-1} \wt{E}), \omega_{A_{j-1}} = \widebar{\sigma}_{A_{j-1}}^{(j-1)} \Big\}.
    \end{aligned}
\end{equation}
The marginal states $\widebar{\sigma}_{A_{j-1}}^{(j-1)}$ are again defined as
\begin{equation}
	\widebar{\sigma}_{A_{j-1}}^{(j-1)} = \pTr{E_0A_0^{j-2}A_j^{n-1}}{\Pi_0^{(d)}\left(\omega_{A_0^{n-1}E_0}\right)}.
\end{equation}

\begin{lemma}[Intermediate Limit Bound]\label{lem:kappa_intermediate_limit_bnd}
	For every $d,k_j,l_j$ satisfying $l_j\geq d$, it holds
	\begin{equation}
	    \widetilde{\kappa}_{\cP_1^{j-1}} \geq \hat{\kappa}_{\cP_1^{j-1}}.
	\end{equation}
\end{lemma}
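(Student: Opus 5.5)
The plan is a direct feasibility argument. Every point of $\widetilde{\Sigma}_j(k_j,l_j,d)$ should turn out to be the image of a point of $\hat{\Sigma}_j(d)$ under a channel that acts only on the conditioning register $E_j$. A data-processing argument for the $f$-weighted \Renyi entropy then shows that the objective can only increase, and taking infima gives $\widetilde{\kappa}_{\cP_1^{j-1}} \geq \hat{\kappa}_{\cP_1^{j-1}}$.

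First, fix an arbitrary $\nu\in\widetilde{\Sigma}_j(k_j,l_j,d)$, so that $\nu=(\EATchannProjInf_j^{(k_j,l_j)}\otimes\id_{\widetilde E})[\omega]$ for some $\omega$ with $\omega_{A_{j-1}}=\widebar{\sigma}^{(j-1)}_{A_{j-1}}$. This $\omega$ is invariant under conjugation by $\widebar P_{j-1}^{(l_j)}\otimes\widebar Q_{j-1}^{(l_j)}\otimes I_{\widetilde E}$. By that invariance the discard term of $\ProjIn{j}{l_j}$ has zero weight, so $\ProjIn{j}{l_j}(\omega)=\omega$. Hence
\begin{equation*}
\nu=(\ProjOut{j}{k_j}\otimes\id_{S_j\CP_j\widetilde E})\big[\mu\big],\qquad \mu\defvar(\EATchann_j\otimes\id_{\widetilde E})[\omega].
\end{equation*}
Since $\omega$ has exactly the marginal $\widebar{\sigma}^{(j-1)}_{A_{j-1}}$ required in $\hat{\Sigma}_j(d)$, we get $\mu\in\hat{\Sigma}_j(d)$. (The hypothesis $l_j\geq d$ only ensures that $\widetilde{\Sigma}_j$ is a well-defined nonempty set compatible with $\widebar{\sigma}$, whose support lies in $\supp(\widebar P^{(d)}_{j-1})\subseteq\supp(\widebar P^{(l_j)}_{j-1})$.)

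Second, compare the objectives at $\nu$ and $\mu$.
\begin{itemize}
\item The channel $\ProjOut{j}{k_j}$ acts only on $E_j$ and leaves the classical register $\CP_j$ untouched. Therefore the distribution on $\CP_j$ is the same for $\nu$ and $\mu$, and $\nu_{|\cP}=(\ProjOut{j}{k_j}\otimes\id)[\mu_{|\cP}]$ for every $\cP$ in the support.
\item By the infinite-dimensional data-processing inequality for the sandwiched divergence (\cref{app:inf-dim-DPI}, applicable since $S_j$ is finite-dimensional), optimizing over $\sigma_{E_j\widetilde E}$ gives $\renyiSandUp_\alpha(S_j|E_j\widetilde E)_{\nu_{|\cP}}\geq\renyiSandUp_\alpha(S_j|E_j\widetilde E)_{\mu_{|\cP}}$ for each $\cP$.
\item For $\alpha>1$ the map $x\mapsto\frac{\alpha}{1-\alpha}\log\big(\sum_{\cP}p(\cP)2^{\frac{1-\alpha}{\alpha}(-f(\cP)+x_{\cP})}\big)$ is nondecreasing in each $x_{\cP}$. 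This holds because the exponent coefficient and the prefactor are both negative. Consequently $H^{\uparrow,f_{|\cP_1^{j-1}}}_\alpha(S_j|\CP_jE_j\widetilde E)_\nu\geq H^{\uparrow,f_{|\cP_1^{j-1}}}_\alpha(S_j|\CP_jE_j\widetilde E)_\mu\geq\hat{\kappa}_{\cP_1^{j-1}}$.
\end{itemize}
Taking the infimum over $\nu$ then yields the lemma.

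The only delicate step is justifying the data-processing inequality for the conditional entropy under a channel on the infinite-dimensional conditioning system. This includes the cases in which some conditional entropy equals $-\infty$ or some support condition fails; in those cases the inequality should still hold in the extended reals. The plan is to invoke the von Neumann-algebra DPI cited in \cref{app:inf-dim-DPI} \cite{Berta_2018}, applied with $I_{S_j}\otimes\sigma$ for each candidate $\sigma$, and then take suprema. Because $S_j$ is finite-dimensional, $I_{S_j}\otimes\sigma$ is trace class, which is what that argument requires.
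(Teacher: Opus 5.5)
Your proposal is correct and is essentially the paper's proof: the support condition forces $\ProjIn{j}{l_j}(\omega)=\omega$, so $\widetilde{\Sigma}_j(k_j,l_j,d)\subseteq\ProjOut{j}{k_j}\bigl(\hat{\Sigma}_j(d)\bigr)$, and data processing of the $f$-weighted entropy under $\ProjOut{j}{k_j}$ then gives $\widetilde{\kappa}_{\cP_1^{j-1}}\geq\hat{\kappa}_{\cP_1^{j-1}}$. The only difference is that you derive the data-processing step by hand (unchanged distribution on $\CP_j$, DPI for each $\renyiSandUp_\alpha(S_j|E_j\widetilde E)_{\nu_{|\cP}}$, and monotonicity of the weighted log-sum-exp for $\alpha>1$), whereas the paper cites \cite[Lemma~4.4]{Arqand_2025} together with \cref{app:inf-dim-DPI}; also, your concern about $-\infty$ values does not arise, since $\dim(S_j)<\infty$ bounds these entropies below by $-\log\dim(S_j)$.
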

\begin{proof}
    First, note that for all $k_j,l_j$ satisfying $l_j\geq d$,
    \begin{align}
    &\begin{aligned}
    	&\widetilde{\Sigma}_j(k_j,l_j,d)\\
    	&= \begin{aligned}[t]
    		\Big\{ &\left(\EATchannProjInf_j^{(k_j,l_j)}\otimes\id_{\widetilde E}\right)[\omega]
    		\Big| \\
    		&\omega\in S_{=}(A_{j-1}E_{j-1}\widetilde E), \omega_{A_{j-1}}=\widebar{\sigma}_{A_{j-1}}^{(j-1)},\\
    		&\omega=
    		\left(\widebar P_{j-1}^{(l_j)}\otimes\widebar Q_{j-1}^{(l_j)}
    		\otimes I_{\widetilde E}\right)
    		\omega
    		\left(\widebar P_{j-1}^{(l_j)}\otimes\widebar Q_{j-1}^{(l_j)}
    		\otimes I_{\widetilde E}\right)
    		\Big\}
    	\end{aligned}
    \end{aligned} \\
    &\subseteq \begin{aligned}[t]
        \bigg\{ &\nu \in \mathrm{S}_=(S_j \CP_j E_j \wt{E}) \;\bigg|\; \exists \omega \in \mathrm{S}_=(A_{j-1}E_{j-1}\wt{E}), \\
        &\left(\left(\ProjOut{j}{k_j} \circ \EATchann_j \right) \otimes \id_{\wt{E}} \right)\left[\omega\right] = \nu, \omega_{A_{j-1}} = \widebar{\sigma}_{A_{j-1}}^{(j-1)} \bigg\}
    \end{aligned}
    \end{align}
    where the inclusion follows because for every input state $\omega$ occurring in the definition of
    $\widetilde{\Sigma}_j(k_j,l_j,d)$, the support condition implies
    \begin{equation}
    	\left(\ProjIn{j}{l_j}\otimes\id_{\widetilde E}\right)[\omega]=\omega.
    \end{equation}
    Hence, we find
    \begin{equation}
        \widetilde{\Sigma}_j(k_j,l_j,d) \subseteq \ProjOut{j}{k_j}\left( \hat{\Sigma}_j\left(d\right) \right),
    \end{equation}
    and therefore
    \begin{align}
        \widetilde{\kappa}_{\cP_1^{j-1}} &= \inf_{\nu\in\widetilde{\Sigma}_j\left(k_j,l_j,d\right)} H^{\uparrow, f_{|\cP_1^{j-1}}}_{\alpha}(S_j| \CP_j E_{j} \wt{E})_{\nu} \\
        &\geq \inf_{\nu\in \ProjOut{j}{k_j}\left( \hat{\Sigma}_j\left(d\right) \right)} H^{\uparrow, f_{|\cP_1^{j-1}}}_{\alpha}(S_j| \CP_j E_j \wt{E})_{\nu} \\
        &= \inf_{\nu\in \hat{\Sigma}_j\left(d\right)} H^{\uparrow, f_{|\cP_1^{j-1}}}_{\alpha}(S_j| \CP_j E_j \wt{E})_{\ProjOut{j}{k_j}\left( \nu \right)}.
    \end{align}
    Then, by the data-processing inequality of the $f$-weighted \Renyi entropy \cite[Lemma~4.4]{Arqand_2025}, which by \cref{app:inf-dim-DPI} also holds for infinite dimensional side-information as $\vert S_{j} \vert < +\infty$, we find
    \begin{equation}
    \begin{aligned}
    	&\inf_{\nu\in \hat{\Sigma}_j\left(d\right)} H^{\uparrow, f_{|\cP_1^{j-1}}}_{\alpha}(S_j| \CP_j E_j \wt{E})_{\ProjOut{j}{k_j}\left( \nu \right)} \\
    	&\geq \inf_{\nu\in \hat{\Sigma}_j\left(d\right)} H^{\uparrow, f_{|\cP_1^{j-1}}}_{\alpha}(S_j| \CP_j E_j \wt{E})_{\nu} =  \hat{\kappa}_{\cP_1^{j-1}}.
    \end{aligned}
    \end{equation}
    Thus, we have in summary
    \begin{equation}
    	\widetilde{\kappa}_{\cP_1^{j-1}} \geq \hat{\kappa}_{\cP_1^{j-1}},
    \end{equation}
    which concludes the proof.
\end{proof}

Let us define one more optimization problem as
\begin{equation}
    \kappa_{\cP_1^{j-1}} \defvar \inf_{\nu\in\Sigma_j} H^{\uparrow, f_{|\cP_1^{j-1}}}_{\alpha}(S_j| \CP_j E_j \wt{E})_{\nu},
\end{equation}
where the sets $\Sigma_j$ are defined as
\begin{equation}
    \begin{aligned}[t]
    \Sigma_j \defvar \Big\{ &\left( \EATchann_j \otimes \id_{\wt{E}}\right)\left[\omega\right] \mid \\ & \omega \in S_{=}(A_{j-1} E_{j-1} \wt{E}), \omega_{A_{j-1}} = \sigma_{A_{j-1}}^{(j-1)} \Big\}.
    \end{aligned}
\end{equation}

\begin{lemma}[Marginal Limit Bound]\label{lem:kappa_marginal_limit_bnd}
    The limit as $d \rightarrow \infty$ of the optimization problem $\hat{\kappa}_{\cP_1^{j-1}}$ is equal to the optimization problem $\kappa_{\cP_1^{j-1}}$, i.e. it holds:
	\begin{equation}
		\lim_{d\to\infty}\hat{\kappa}_{\cP_1^{j-1}}=\kappa_{\cP_1^{j-1}}.
	\end{equation}
\end{lemma}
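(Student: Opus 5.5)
The plan is to show the two inequalities $\limsup_{d}\hat{\kappa}_{\cP_1^{j-1}}\leq\kappa_{\cP_1^{j-1}}$ and $\liminf_{d}\hat{\kappa}_{\cP_1^{j-1}}\geq\kappa_{\cP_1^{j-1}}$ separately. Both rest on three ingredients already available. First, the marginals converge: by \cref{lem:truncation_convergence}, $\norm{\widebar{\sigma}^{(j-1)}_{A_{j-1}}-\sigma^{(j-1)}_{A_{j-1}}}_1\to0$, where $\widebar{\sigma}^{(j-1)}_{A_{j-1}}=\Pi_{A_{j-1}}^{(d)}(\sigma^{(j-1)}_{A_{j-1}})$ because $\Pi_{E_0}^{(d)}$ and the other $\Pi_{A_i}^{(d)}$ are trace preserving. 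Second, the $f$-weighted entropy is uniformly continuous with a modulus $\Delta_{\alpha,f}(\epsilon)$ that does not depend on $E_j\widetilde E$ (\cref{thrm:Continuity_f_weighted}). Third, the trace distance contracts under $\EATchann_j\otimes\id_{\widetilde E}$.

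The key step is to transport feasible points between $\Sigma_j$ and $\hat{\Sigma}_j(d)$ at a cost that vanishes with $d$. We will rely on Uhlmann's theorem in infinite dimensions \cite{Hou_2012}. Given a feasible input $\omega$ with $\omega_{A_{j-1}}=\sigma^{(j-1)}$, we purify it on $A_{j-1}E_{j-1}\widetilde E\,\widetilde E'$; since $\widetilde E$ is separable and may be taken large enough, $\widetilde E'$ can be absorbed into $\widetilde E$. We then choose a purification of $\widebar{\sigma}^{(j-1)}$ whose fidelity with the purification of $\sigma^{(j-1)}$ equals $F(\sigma^{(j-1)},\widebar{\sigma}^{(j-1)})$. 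Uhlmann's theorem supplies an isometry on the purifying system that produces a state $\omega'$ with $\omega'_{A_{j-1}}=\widebar{\sigma}^{(j-1)}$ and
\[
\tfrac12\norm{\omega-\omega'}_1\leq\sqrt{1-F(\sigma^{(j-1)},\widebar{\sigma}^{(j-1)})^2}=:\epsilon_d\to0,
\]
where the inequality is Fuchs--van de Graaf. Applying $\EATchann_j\otimes\id$ and then \cref{thrm:Continuity_f_weighted} gives, for every $\nu\in\Sigma_j$, a $\nu'\in\hat{\Sigma}_j(d)$ whose objective value differs by at most $\Delta_{\alpha,f}(\epsilon_d)$. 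Hence $\hat{\kappa}\leq\kappa+\Delta_{\alpha,f}(\epsilon_d)$. The identical argument with the roles of $\sigma^{(j-1)}$ and $\widebar{\sigma}^{(j-1)}$ swapped gives $\kappa\leq\hat{\kappa}+\Delta_{\alpha,f}(\epsilon_d)$. Since $\Delta_{\alpha,f}(\epsilon)\to0$ as $\epsilon\to0$ (the effective dimension $\bar d$ is fixed by $S_j$ and $f_{|\cP_1^{j-1}}$), letting $d\to\infty$ proves the lemma.

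I expect the main obstacle to be making the purification and isometry step rigorous with infinite-dimensional, non-pure inputs while keeping the output inside the sets as defined. There are two points to check. The purifying register must be large enough to hold both the original $\widetilde E$ and the Uhlmann purification; I would handle this by embedding into a separable $\widetilde E$ of infinite dimension and using isometric invariance of the $f$-weighted entropy \cite[Lemma~4.4]{Arqand_2025}, together with \cref{prop:Inf-strong-subadd} if a register has to be traced out. The second point is that the infima may be $-\infty$ or the objective may be unbounded. Since $S_j$ is finite and classical on $\CP_j$, the objective is bounded below by $\min f$-type constants, so both infima are finite and the two-sided estimate goes through. If the Uhlmann construction proves awkward, I would fall back on working directly with $\omega'=(\id\otimes\Pi_{A_{j-1}}^{(d)})(\omega)$. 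This state has exactly the marginal $\widebar{\sigma}^{(j-1)}$ and satisfies the trace-distance bound by \cref{lem:truncation_convergence}, although its convergence is not uniform in $\omega$ and would therefore need an extra argument.
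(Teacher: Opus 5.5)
Your proposal is correct and follows essentially the same route as the paper. Both arguments obtain a vanishing marginal distance from the truncation. Both then move feasible inputs between the constraints $\omega_{A_{j-1}}=\sigma^{(j-1)}_{A_{j-1}}$ and $\omega_{A_{j-1}}=\widebar{\sigma}^{(j-1)}_{A_{j-1}}$ using Uhlmann's theorem plus Fuchs--van de Graaf, push the bound through $\EATchann_j\otimes\id_{\widetilde E}$ by data processing, apply the uniform continuity bound of \cref{thrm:Continuity_f_weighted} in both directions, and let $d\to\infty$.

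The differences are only cosmetic. The paper works with $\delta$-approximate minimizers and states the distance as $\sqrt{2\epsilon_d-\epsilon_d^2}$. You use $\sqrt{1-F^2}$, which holds uniformly over feasible points, so you can pass to the infimum directly. Your fallback via $(\id\otimes\Pi^{(d)}_{A_{j-1}})(\omega)$ is therefore not needed.
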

\begin{proof}
	Since $\Pi_0^{(d)}(\omega) \to \omega$ in trace norm, taking the marginal on $A_{j-1}$ gives
	\begin{equation}
		\epsilon_d \defvar \frac{1}{2} \norm{\widebar{\sigma}_{A_{j-1}}^{(j-1)} - \sigma_{A_{j-1}}^{(j-1)}}_1 \xrightarrow{d\to\infty} 0.
	\end{equation}
    For any arbitrary $\delta > 0$, let $\hat{\omega}^{(d)} \in S_{=}(A_{j-1} E_{j-1} \wt{E})$ with marginal $\hat{\omega}^{(d)}_{A_{j-1}} = \widebar{\sigma}_{A_{j-1}}^{(j-1)}$ be a state whose image $\hat{\nu}^{(d)} = (\EATchann_j \otimes \id_{\wt{E}})[\hat{\omega}^{(d)}]$ achieves the infimum of $\hat{\kappa}_{\cP_1^{j-1}}$ up to $\delta$, that is
    \begin{equation}
        H^{\uparrow, f_{|\cP_1^{j-1}}}_{\alpha}(S_j| \CP_j E_j \wt{E})_{\hat{\nu}^{(d)}} \leq \hat{\kappa}_{\cP_1^{j-1}} + \delta.
    \end{equation}

    By Uhlmann's theorem, see \cite{Hou_2012} for infinite dimensions, and the Fuchs--van de Graaf inequalities, there exists a joint state $\omega^{(d)} \in S_{=}(A_{j-1} E_{j-1} \wt{E})$ that satisfies the exact marginal constraint $\omega^{(d)}_{A_{j-1}} = \sigma_{A_{j-1}}^{(j-1)}$ such that its distance to $\hat{\omega}^{(d)}$ is bounded by the function $g(\epsilon_d) =\sqrt{2\epsilon_d - \epsilon_d^2}$ which vanishes as $\epsilon_d \to 0$. That is,
    \begin{equation}
        \frac{1}{2} \norm{\omega^{(d)}_{A_{j-1} E_{j-1} \wt{E}} - \hat{\omega}^{(d)}_{A_{j-1} E_{j-1} \wt{E}} }_1 \leq g(\epsilon_d).
    \end{equation}
    %Maybe add more steps later

    Applying the channel, we define $\nu^{(d)} \defvar (\EATchann_j \otimes \id_{\wt{E}})[\omega^{(d)}]$. By the data-processing inequality of the trace distance, the images, $\hat{\nu}^{(d)}$ and $\nu^{(d)}$ are also close
    \begin{equation}
        \frac{1}{2} \norm{ \nu^{(d)} - \hat{\nu}^{(d)} }_1 \leq \frac{1}{2} \norm{\omega^{(d)} - \hat{\omega}^{(d)} }_1 \leq g(\epsilon_d).
    \end{equation}

    Crucially, because $\omega^{(d)}_{A_{j-1}} = \sigma_{A_{j-1}}^{(j-1)}$, the state $\nu^{(d)}$ is a valid element of the set $\Sigma_j$. By definition, it therefore provides an upper bound for the target optimization problem $\kappa_{\cP_1^{j-1}}$, that is
    \begin{equation}\label{eq:kappa_bound_nud}
        \kappa_{\cP_1^{j-1}} \leq H^{\uparrow, f_{|\cP_1^{j-1}}}_{\alpha}(S_j| \CP_j E_j \wt{E})_{\nu^{(d)}}.
    \end{equation}

    By \cref{thrm:Continuity_f_weighted}, the $f$-weighted conditional \Renyi entropy is uniformly continuous. Hence, the difference of the $f$-weighted \Renyi entropies of $\hat{\nu}^{(d)}$ and $\nu^{(d)}$ is bounded as
    \begin{equation}
    \begin{split}
        \Big| &H^{\uparrow, f_{|\cP_1^{j-1}}}_{\alpha}(S_j| \CP_j E_j \wt{E})_{\nu^{(d)}} \\
        &- H^{\uparrow, f_{|\cP_1^{j-1}}}_{\alpha}(S_j| \CP_j E_j \wt{E})_{\hat{\nu}^{(d)}} \Big| \leq \Delta_{\alpha,f_{|\cP_1^{j-1}}}(g(\epsilon_d)),
        \end{split}
    \end{equation}
    where $\Delta_{\alpha,f_{|\cP_1^{j-1}}}$ is defined as \cref{eq:correction_fweighted_cont} of \cref{thrm:Continuity_f_weighted} and $\lim_{d \to \infty} \Delta_{\alpha,f_{|\cP_1^{j-1}}}(g(\epsilon_d)) = 0$ since $g(\epsilon_d) \to 0$ for $d\to \infty$.

    Using this continuity bound alongside \cref{eq:kappa_bound_nud}, we find
    \begin{equation}
    \begin{aligned}
        \kappa_{\cP_1^{j-1}} &\leq H^{\uparrow, f_{|\cP_1^{j-1}}}_{\alpha}(S_j| \CP_j E_j \wt{E})_{\nu^{(d)}} \\
        &\leq H^{\uparrow, f_{|\cP_1^{j-1}}}_{\alpha}(S_j| \CP_j E_j \wt{E})_{\hat{\nu}^{(d)}} + \Delta_{\alpha,f_{|\cP_1^{j-1}}}(g(\epsilon_d)) \\
        &\leq \hat{\kappa}_{\cP_1^{j-1}} + \delta + \Delta_{\alpha,f_{|\cP_1^{j-1}}}(g(\epsilon_d)).
    \end{aligned}       
    \end{equation}

    Since $\delta > 0$ was chosen arbitrarily, it follows that
    \begin{equation}
    	\kappa_{\cP_1^{j-1}} \leq \hat{\kappa}_{\cP_1^{j-1}} + \Delta_{\alpha,f_{|\cP_1^{j-1}}}(g(\epsilon_d)).
    \end{equation}
    Repeating the same argument with the roles of the two marginal constraints swapped, now starting from a state whose image achieves $\kappa_{\cP_1^{j-1}}$ up to $\delta$, yields the reverse inequality
    \begin{equation}
    	\kappa_{\cP_1^{j-1}} \geq \hat{\kappa}_{\cP_1^{j-1}} - \Delta_{\alpha,f_{|\cP_1^{j-1}}}(g(\epsilon_d)),
    \end{equation}
    and thus,
    \begin{equation}
    	\abs{ \kappa_{\cP_1^{j-1}} - \hat{\kappa}_{\cP_1^{j-1}} } \leq \Delta_{\alpha,f_{|\cP_1^{j-1}}}(g(\epsilon_d)).
    \end{equation}
    
    Taking the limit $d \to \infty$ on both sides causes the correction term $\Delta_{\alpha,f_{|\cP_1^{j-1}}}(g(\epsilon_d))$ to vanish, yielding the final statement:
    \begin{equation}
    	\lim_{d\to\infty}\hat{\kappa}_{\cP_1^{j-1}}=\kappa_{\cP_1^{j-1}}.
    \end{equation}
\end{proof}

\subsubsection{Proof of \cref{thrm:qes_eat_simp}}\label{app:Proof_of_f_weighted_MEAT}
\begin{proof}[Proof of \cref{thrm:qes_eat_simp}]
After establishing the previous lemmata, the proof follows by combining \cref{eq:Convergence_of_Projected_Channels_f_entropies,thrm:Finite_MEAT_Chains}, applying the finite-dimensional $f$-weighted entropy accumulation theorem, and then using \cref{lem:kappa_inf-space_lower_bnd,lem:kappa_intermediate_limit_bnd,lem:kappa_marginal_limit_bnd}.

Throughout this proof, the cutoff limits are understood in the sense of \cref{eq:shorthand_iterated_lim}; in particular, we set $k_0=d$ and restrict the cutoff parameters to $l_j\geq d$ and $l_j\geq k_{j-1}$ for every $j=1,\dots,n$. Therefore, the assumptions of \cref{thrm:Finite_MEAT_Chains,lem:kappa_inf-space_lower_bnd,lem:kappa_intermediate_limit_bnd} are satisfied throughout the proof.

\begin{widetext}
By \cref{eq:Convergence_of_Projected_Channels_f_entropies,thrm:Finite_MEAT_Chains}, the conditional \Renyi entropy of the final state $\rho$ created by the infinite-dimensional channels acting on the infinite-dimensional state $\omega_{A_0^{n-1}E_0}$ is equal to
\begin{equation}
\begin{aligned}
    H_\alpha^{\uparrow,f_\mathrm{full}}(S_1^n|\CP_1^nE_n)_\rho 
    &= \lim_{d \to \infty} \lim_{\substack{\mbf{k} \rightarrow \infty \\ \mbf{l}\rightarrow \infty}} H_\alpha^{\uparrow,f_\mathrm{full}}(S_1^n | \CP_1^n E_n)_{W_n^{(k_n)} \left( (\EATchannProj_n^{(k_n, l_n)} \circ \Xi_{n-1}) \circ \dots \circ (\EATchannProj_1^{(k_1, l_1)} \circ \Xi_{0}) [\tau_{(A^{(d)})_{0}^{n-1}E_0^{(d)}}] \right) \left(W_n^{(k_n)}\right)^{\dagger}} \\
    &=\lim_{d \to \infty} \lim_{\substack{\mbf{k} \rightarrow \infty \\ \mbf{l}\rightarrow \infty}} H_\alpha^{\uparrow,f_\mathrm{full}}(S_1^n | \CP_1^n E_n^{(k_n)})_{\left( (\EATchannProj_n^{(k_n, l_n)} \circ \Xi_{n-1}) \circ \dots \circ (\EATchannProj_1^{(k_1, l_1)} \circ \Xi_{0}) \left[\tau_{\left(A^{(d)}\right)_{0}^{n-1}E_0^{(d)}}\right] \right) },
\end{aligned}
\end{equation}
where for the second equality we used the isometric invariance of the $f$-weighted \Renyi entropy \cite[Lemma~4.4]{Arqand_2025}. The state $\tau_{\left(A^{(d)}\right)_{0}^{n-1}E_0^{(d)}}$ now satisfies the following $d$-dependent marginal,
\begin{equation}
    \tau_{\left(A^{(d)}\right)_{0}^{n-1}} = \bigotimes_{j=1}^n \xi_{A_{j-1}^{(d)}}^{(j-1)},
\end{equation}
where each $\xi_{A_{j-1}^{(d)}}^{(j-1)}$ satisfies $V_{j-1}^{(d)} \xi_{A_{j-1}^{(d)}}^{(j-1)} V_{j-1}^{(d)\dagger} = \widebar{\sigma}_{A_{j-1}}^{(j-1)},$ and
$\widebar{\sigma}_{A_{j-1}}^{(j-1)} = \pTr{E_0A_0^{j-2} A_j^{n-1}}{\Pi_0^{(d)}\left(\omega_{A_0^{n-1}E_0}\right)}.$

Crucially, $H_\alpha^{\uparrow,f_\mathrm{full}}(S_1^n | \CP_1^n E_n^{(k_n)})$ involves only finite-dimensional states and registers for each $d, \mbf{k}$ and  $\mbf{l}$. Therefore, we can apply the finite-dimensional marginal $f$-weighted entropy accumulation theorem \cite[Theorem~4.1a]{Arqand_2025}, which yields

\begin{align}
     H_\alpha^{\uparrow,f_\mathrm{full}}(S_1^n | \CP_1^n E_n^{(k_n)})_{\left( (\EATchannProj_n^{(k_n, l_n)} \circ \Xi_{n-1}) \circ \dots \circ (\EATchannProj_1^{(k_1, l_1)} \circ \Xi_{0}) \left[\tau_{\left(A^{(d)}\right)_{0}^{n-1}E_0^{(d)}}\right] \right) }
     \geq \sum_{j=1}^n \min_{\cP_1^{j-1}} \widebar{\kappa}_{\cP_1^{j-1}},
\end{align}
where $\widebar{\kappa}_{\cP_1^{j-1}}$ is defined as in \cref{eq:def_kappa_bar}
\begin{equation}
    \widebar{\kappa}_{\cP_1^{j-1}} = \inf_{\nu\in\widebar{\Sigma}_j(k_{j-1},k_j,l_j,d)} H^{\uparrow, f_{|\cP_1^{j-1}}}_{\alpha}(S_j| \CP_j E_{j}^{(k_{j})} R)_{\nu},
\end{equation}
and $\widebar{\Sigma}_j(k_{j-1},k_j,l_j,d)$ is defined as in \cref{eq:def_Sigma_bar}
\begin{align}
\widebar{\Sigma}_j(k_{j-1},k_j,l_j,d) =
    \Big\{ \left( \left( \EATchannProj_j^{(k_j, l_j)} \circ \Xi_{j-1} \right) \otimes \id_R \right) [\tau] \mid \tau \in S_{=}(A_{j-1}^{(d)} E_{j-1}^{(k_{j-1})} R), \tau_{A_{j-1}^{(d)}} = \xi_{A_{j-1}^{(d)}}^{(j-1)} \Big\}.
\end{align}
For every fixed admissible choice of $d,\mbf{k},\mbf{l}$ and every $\cP_1^{j-1}$, \cref{lem:kappa_inf-space_lower_bnd,lem:kappa_intermediate_limit_bnd} give $\widebar{\kappa}_{\cP_1^{j-1}}
	\geq\widetilde{\kappa}_{\cP_1^{j-1}}
	\geq\hat{\kappa}_{\cP_1^{j-1}}.$
    
Thus, the finite-dimensional entropy-accumulation bound implies, for every such fixed choice of cutoff parameters,
\begin{equation}
	H_\alpha^{\uparrow,f_\mathrm{full}}(S_1^n | \CP_1^n E_n^{(k_n)})_{\left( (\EATchannProj_n^{(k_n, l_n)} \circ \Xi_{n-1}) \circ \dots \circ (\EATchannProj_1^{(k_1, l_1)} \circ \Xi_{0}) \left[\tau_{\left(A^{(d)}\right)_{0}^{n-1}E_0^{(d)}}\right] \right) } 
	\geq \sum_{j=1}^n\min_{\cP_1^{j-1}}\hat{\kappa}_{\cP_1^{j-1}}.
\end{equation}
\end{widetext}
Taking the cutoff limits $\mbf{k},\mbf{l} \to \infty$ yields
\begin{equation}
	H_\alpha^{\uparrow,f_\mathrm{full}}(S_1^n|\CP_1^nE_n)_\rho
	\geq\lim_{d\to\infty}\sum_{j=1}^n
	\min_{\cP_1^{j-1}}\hat{\kappa}_{\cP_1^{j-1}}.
\end{equation}
By \cref{lem:kappa_marginal_limit_bnd}, for each $\cP_{1}^{j-1}$ and $j \in \{1,...,n\}$, the limit over $d$ exists. Since both the sum and each minimum range over finite sets, we can interchange them with the limit. Thus, we find
\begin{equation}
\begin{aligned}
	H_\alpha^{\uparrow,f_\mathrm{full}}(S_1^n|\CP_1^nE_n)_\rho
	&\geq \sum_{j=1}^n \min_{\cP_1^{j-1}} \lim_{d\to\infty}\hat{\kappa}_{\cP_1^{j-1}} \\
	&= \sum_{j=1}^n \min_{\cP_1^{j-1}}\kappa_{\cP_1^{j-1}},
\end{aligned}    
\end{equation}
which proves the first statement of the theorem in \cref{eq:qes eat_simp}.

Finally, by the normalization property of the $f$-weighted \Renyi entropy \cite[Lemma~4.9]{Arqand_2025}, for every $\cP_1^{j-1}$ we have
\begin{equation}
	\inf_{\nu\in\Sigma_j}H_{\alpha}^{\uparrow,\hat{f}_{|\cP_1^{j-1}}}(S_j|\CP_jE_j\widetilde E)_\nu
	=\kappa_{\cP_1^{j-1}}-\kappa_{\cP_1^{j-1}}=0.
\end{equation}
Applying \cref{eq:qes eat_simp}, which we just proved to the tradeoff functions $\hat{f}_{|\cP_1^{j-1}}$ therefore yields
\begin{equation}
	H_{\alpha}^{\uparrow,\hat{f}_\mathrm{full}}(S_1^n|\CP_1^nE_n)_\rho\geq0,
\end{equation}
which proves the statement regarding the ``normalized'' tradeoff function in \cref{eq:chainQESnorm_simp}.
\end{proof}

\subsection{Infinite-Dimensional MEAT Proof}
We now state the main technical theorem, which we prove in this subsection.

\begin{theorem}[Marginal-constrained entropy accumulation theorem in Infinite Dimensions]\label{thrm:MEAT_inf}
Let $A_j$, $j=0,\dots, n-1$, and $E_j$, $j=0,\dots, n$ be registers with possibly \emph{infinite dimensional separable} Hilbert spaces. In contrast, assume that $S_j$ and $\CP_j$ for $j=1,\dots, n$ are registers with underlying \emph{finite}-dimensional Hilbert spaces.

For each $j\in\{1,2,\dots,n\}$, take a state $\sigma^{(j-1)}\in\dop{=}(A_{j-1})$ and a channel $\mathcal{M}_j\in\CPTP(A_{j-1}E_{j-1},S_j\CP_jE_j)$, where each $\CP_j$ is classical and all the $\CP_j$ are isomorphic to a single register $\CP$ with alphabet $\alphCP$.

Let $\rho$ be a state of the form $\rho_{S_1^n\CP_1^nE_n}=\mathcal{M}_n\circ\cdots\circ\mathcal{M}_1[\omega_{A_0^{n-1}E_0}]$ for some $\omega\in\dop{=}(A_0^{n-1}E_0)$, such that $\omega_{A_0^{n-1}}=\sigma_{A_0}^{(0)}\otimes\cdots\otimes\sigma_{A_{n-1}}^{(n-1)}$. Furthermore, suppose that $\rho = p_\Omega \rho_{|\Omega} + (1-p_\Omega) \rho_{|\overline{\Omega}}$ for some $p_\Omega \in (0,1]$, normalized states $\rho_{|\Omega},\rho_{|\overline{\Omega}}$, and an event $\Omega$ defined only on the classical register $\CP_1^n$.

Let $S_\Omega$ be a closed and convex set of probability distributions on the alphabet $\alphCP$, such that for all $\cP_1^n$ with nonzero probability in $\rho_{|\Omega}$, the frequency distribution $\freq_{\cP_1^n}$ lies in $S_\Omega$. 
Then, for any $\alpha\in(1,\infty)$, we have:
\begin{align}\label{eq:MEAT_inf}
\renyiSandUp_\alpha(S_1^n | \CP_1^n E_n)_{\rho_{|\Omega}} \geq  n  h^\uparrow_{\alpha}
- \frac{\alpha}{\alpha-1} \log\frac{1}{p_\Omega}, 
\end{align}
where (recalling $\bsym{\nu}_{\CP}$ denotes the distribution on $\CP$ induced by any state $\nu_{\CP}$)
\begin{equation}
\quad h^\uparrow_{\alpha} = 
\begin{aligned}[t]
\inf_{\mbf{q} \in S_\Omega} &\inf_{\nu\in\Sigma_{S\CP E\widetilde{E}}} \Bigg( \frac{\alpha}{{\alpha}-1}D\left(\mbf{q} \middle\Vert \bsym{\nu}_{\CP}\right) \\
&+\sum_{\cP\in\supp(\bsym{\nu}_{\CP})}q(\cP)\renyiSandUp_{\alpha}(S|E\widetilde{E})_{\nu_{|\cP}}  \Bigg),
\end{aligned}
\end{equation}
with $\Sigma_{S\CP E \widetilde{E}}$ being the marginal-constrained convex range (see \cite[Definition~4.2]{Arqand_2025}) of the channels $\EATchann_j \otimes \id_{\widetilde{E}}$ and states $\sigma^{(j-1)}$, where $\widetilde{E}$ is a register of large enough dimension to serve as a purifying register for any of the $A_{j-1}E_{j-1}$ registers. Furthermore, $S$ and $E$ are common registers into which all $S_j$ and $E_j$, respectively, can be isometrically embedded.
\end{theorem}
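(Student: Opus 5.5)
The plan is to reduce \cref{thrm:MEAT_inf} to the already established infinite-dimensional $f$-weighted accumulation result, \cref{thrm:qes_eat_simp}, by establishing a strong-duality formula for $h^\uparrow_\alpha$ in infinite dimensions. This mirrors the finite-dimensional route of \cite[Theorem~4.1]{Arqand_2025}.

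\emph{Step~1: reduction to a dual bound.} First I fix an arbitrary tradeoff function $f:\alphCP\to\R$ that does not depend on the past $\cP_1^{j-1}$. I apply \cref{thrm:qes_eat_simp} with $f_{|\cP_1^{j-1}}=f$, which gives
\[
H_\alpha^{\uparrow,f_{\mathrm{full}}}(S_1^n|\CP_1^nE_n)_\rho\geq n\,\kappa_f ,
\]
where $\kappa_f$ is the infimum over the marginal-constrained convex range $\Sigma_{S\CP E\widetilde E}$. The $j$-dependence is absorbed by passing to the convex hull, and the convexity of $2^{\frac{1-\alpha}{\alpha}H^{\uparrow,f}_\alpha}$ in the state makes the infimum over the hull no smaller than the minimum over the individual $j$. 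Next I condition on $\Omega$, exactly as in \cite{Arqand_2025}. Since $f_{\mathrm{full}}(\cP_1^n)=n\sum_{\cP}\freq_{\cP_1^n}(\cP)f(\cP)$ and $\freq_{\cP_1^n}\in S_\Omega$ on the support of $\rho_{|\Omega}$, restricting the sum in the definition of the $f$-weighted entropy to $\Omega$ and bounding the weight $2^{\frac{\alpha-1}{\alpha}f_{\mathrm{full}}}$ yields
\[
\renyiSandUp_\alpha(S_1^n|\CP_1^nE_n)_{\rho_{|\Omega}}\geq n\Big(\kappa_f+\inf_{\mbf q\in S_\Omega}\mbf q\cdot \mbf f\Big)-\tfrac{\alpha}{\alpha-1}\log\tfrac1{p_\Omega}.
\]
This step uses only finite sums over the finite register $\CP_1^n$ together with the classical chain rule \cref{prop:cl-chain-rule}, so it carries over to separable $E_n$ verbatim. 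Taking the supremum over $f$, it then remains to show that $\sup_f\big(\kappa_f+\inf_{\mbf q\in S_\Omega}\mbf q\cdot\mbf f\big)\geq h^\uparrow_\alpha$.

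\emph{Step~2: strong duality via Fenchel--Moreau.} I define a perturbation function $V:\R^{|\alphCP|}\to\R\cup\{\pm\infty\}$ by
\[
V(\mbf y)\defvar \inf_{\mbf q\in S_\Omega+\mbf y}\ \inf_{\nu\in\Sigma}\Big(\tfrac{\alpha}{\alpha-1}D(\mbf q\|\bsym\nu_{\CP})+\textstyle\sum_{\cP}q(\cP)\renyiSandUp_\alpha(S|E\widetilde E)_{\nu_{|\cP}}\Big),
\]
with $\mbf q$ ranging over nonnegative vectors, possibly unnormalized, so that shifts are admissible. By construction $V(0)=h^\uparrow_\alpha$. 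I will show that $V$ is jointly convex. This follows from the joint convexity of $D(\mbf q\|\bsym\nu)$ together with the convexity of $\Sigma$, provided the objective is rewritten as $\tfrac{\alpha}{\alpha-1}\sum_{\cP}q(\cP)\log\frac{q(\cP)}{p(\cP)\,2^{\frac{\alpha-1}{\alpha}H_{|\cP}}}$ and convexity of $\nu\mapsto\sum_{\cP}p(\cP)2^{\frac{1-\alpha}{\alpha}H(\nu_{|\cP})}$ is used. Computing the conjugate $V^{*}(\mbf f)$ by exchanging the two infima reduces it, through the Donsker--Varadhan variational formula over the finite alphabet $\alphCP$, to $-\kappa_f-\inf_{\mbf q\in S_\Omega}\mbf q\cdot\mbf f$, up to sign conventions. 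Hence $V^{**}(0)=\sup_f\big(\kappa_f+\inf_{S_\Omega}\mbf q\cdot\mbf f\big)$. Once $V$ is shown to be proper, convex and lower semicontinuous, Fenchel--Moreau \cite[Theorem~13.37]{Bauschke_2017} gives $V=V^{**}$, which closes Step~1.

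\emph{Main obstacle: lower semicontinuity of $V$ at $0$.} I expect proving that $V$ is lower semicontinuous, and excluding the degenerate case $V\equiv-\infty$, to be the main difficulty, because in infinite dimensions $\Sigma$ is not compact. The point I would exploit is that $\mbf y$ lives in the finite-dimensional space $\R^{|\alphCP|}$ and $\mbf q$ ranges over a compact set. I first use $\renyiSandUp_\alpha\geq0$ for classical $S$ and $D\geq0$ to see that $V\geq0$ on its domain, so $V$ is proper. I then take $\mbf y_m\to0$ with near-optimal pairs $(\mbf q_m,\nu_m)$. Passing to a subsequence gives $\mbf q_m\to\mbf q\in S_\Omega$, using that $S_\Omega$ is closed, and $\bsym\nu_{m,\CP}\to\mbf p$. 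Only finite-dimensional data enter the lower semicontinuity of $D(\cdot\|\cdot)$, and $\bsym\nu_{\CP}$ is a point of a compact simplex. For the entropy term, if $q(\cP)>0$ while $p(\cP)\to0$, the divergence blows up, which handles that term. Otherwise, the conditional states $\nu_{m|\cP}$ need not converge in trace norm. In that case I would instead compare with $\nu_m$ itself evaluated at $\mbf q$, using the bound
\[
\Big|\textstyle\sum_{\cP}(q_m-q)(\cP)\,\renyiSandUp_\alpha\Big|\leq\|\mbf q_m-\mbf q\|_1\log d_S ,
\]
together with the joint continuity of $D$ in $\mbf q$ at fixed $\bsym\nu$ with full enough support, after mixing in a small amount of a fixed full-support element of $\Sigma$ and controlling the change in entropy via \cref{thrm:Continuity_Renyi}. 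This argument avoids any weak-$^*$ compactness and establishes $\liminf_m V(\mbf y_m)\geq V(0)$, which completes the proof.
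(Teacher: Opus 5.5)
Your Step~1 is sound and more modular than the paper's route. You apply \cref{thrm:qes_eat_simp} with a round-independent $f$ (only $\Sigma_j\subseteq\Sigma$ is needed) and then condition on $\Omega$ directly in infinite dimensions. The paper instead re-runs the cutoff machinery for the conditioned entropy: the finite-dimensional steps of \cite{Arqand_2025} on projected channels, then \cref{lem:Convergence_of_Projected_Channels}, $\bar p_\Omega\to p_\Omega$ and \cref{lem:kappa_marginal_limit_bnd}. The tool your conditioning step actually uses is not \cref{prop:cl-chain-rule}. It is the decomposition of $\renyiSandUp_\alpha$ over the classical conditioning register $\CP_1^n$, and only its ``$\geq$'' direction, which follows by restricting to $\sigma_{\CP_1^nE_n}$ block-diagonal in $\CP_1^n$. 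Your lower-semicontinuity idea (keep the near-optimal $\nu_m$ and evaluate it at the limit point $\mbf q$) is exactly \cref{lem:LSC_perturbation_func}, and \cref{lem:one-sided-relative-entropy-continuity} shows that no mixing with a full-support element is required. Two minor fixes. First, $S$ is not assumed classical, so use $\renyiSandUp_\alpha\geq-\log d_S$ rather than $\geq 0$. Second, unnormalized $\mbf q$ breaks $D\geq0$ and turns the conjugate into an exponential rather than a log-sum-exp; keep $\mbf q$ normalized and set $V=+\infty$ off $\{\mbf z:\sum_{\cP}z(\cP)=0\}$, as the paper does.

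The genuine gap is the convexity of $V$. Without it Fenchel--Moreau does not apply, you only have $V^{**}(0)\leq V(0)$, and the dual could lie strictly below $h^\uparrow_\alpha$.

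\begin{itemize}
\item \emph{The objective is not jointly convex in $(\mbf q,\nu)$.} For $|\alphCP|=1$ it is $\nu\mapsto\renyiSandUp_\alpha(S|E\widetilde E)_\nu$. The equal mixture of $\pure{0}_S\otimes\tau_{E\widetilde E}$ and $\pure{1}_S\otimes\tau_{E\widetilde E}$ raises it from $0$ to $1$.
\item \emph{Your rewriting does not rescue this.} The correct weight is $r_{\cP}(\nu)\defvar\bsym{\nu}_{\CP}(\cP)\,2^{\frac{1-\alpha}{\alpha}\renyiSandUp_\alpha(S|E\widetilde E)_{\nu_{|\cP}}}$ (your exponent has the wrong sign). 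With it, the objective equals $\frac{\alpha}{\alpha-1}\sum_{\cP}q(\cP)\log\frac{q(\cP)}{r_{\cP}(\nu)}$. This is nonincreasing in each $r_{\cP}$, so composing with the jointly convex map $(\mbf q,\mbf r)\mapsto\sum_{\cP}q(\cP)\log\frac{q(\cP)}{r(\cP)}$ would need $r_{\cP}$ to be \emph{concave} under mixing. What holds, and what you cite, is convexity.
\end{itemize}

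The missing idea is to mix with a classical flag. Define $\bar\nu\defvar\lambda\,\nu_1\otimes\pure{1}_F+(1-\lambda)\,\nu_2\otimes\pure{2}_F$.

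\begin{itemize}
\item $\bar\nu$ still lies in $\Sigma$ after embedding $F$ isometrically into the infinite-dimensional $\widetilde E$, because every element of the convex range is generated by extensions on $\widetilde E$.
\item The decomposition over the classical flag gives $r_{\cP}(\bar\nu)=\lambda r_{\cP}(\nu_1)+(1-\lambda)r_{\cP}(\nu_2)$ exactly.
\item The log-sum inequality then yields convexity of $V$.
\end{itemize}

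The paper reaches the same conclusion differently. It reparametrizes $\Sigma$ through input states $\omega\in\Theta_{AJE}$, the $J$-controlled channel and a purifying function, on which $(\bsym\lambda,\omega)\mapsto G^*_{\alpha,\nu^\omega}(\bsym\lambda)$ is jointly convex.
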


At a technical level, to establish this theorem, our aim is to prove the strong duality of a function with its Lagrange dual. The function and its Lagrange dual follows from \cite{Arqand_2025}. What is new is the proof of strong duality without relying upon the compactness of a set $\Theta$ that will be defined later. To prove this strong duality, we will use the Fenchel-Moreau theorem, which relies upon the lower semi-continuity of a relevant perturbation function. Thus, the section is structured by proving some preliminary lemmata, including the lower semi-continuity of this aforementioned perturbation function, then proving the strong duality in \cref{thrm:convduality}, and finally proving the infinite-dimensional MEAT.

\subsubsection{Preliminary Lemmata}\label{app:subsec_prelim_inf_MEAT}
First, for reference, we define the Legendre--Fenchel convex conjugate as in \cite{Boyd_2004}
\begin{definition}[Convex conjugate]\label{def:conjugate}
	For a function $F:D\to\mathbb{R}\cup\{-\infty,+\infty\}$ where $D$ is a convex subset of $\mathbb{R}^k$, its \emph{convex conjugate} is the function $F^*:\mathbb{R}^k\to\mathbb{R}\cup\{-\infty,+\infty\}$ given by
	\begin{align}
		F^*(\mbf{y}) \defvar \sup_{\mbf{x}\in D} \left(\mbf{x}\cdot\mbf{y} - F(\mbf{x})\right).
	\end{align}
\end{definition}

Now, in preparation for the lower-semicontinuity result we prove the following limiting property of the relative entropy. 

\begin{lemma}[One-sided continuity of classical relative entropy]
\label{lem:one-sided-relative-entropy-continuity}
Let $\CP$ be a finite classical register with alphabet $\alphCP$,
$\lambda\in\mathbb{P}_{\CP}$ and let $\{\lambda_n\}_{n\in\N}, \{p_n\}_{n\in \N} \subseteq \mathbb{P}_{\CP}$ be sequences that satisfy
\begin{equation}
    \lambda_n \xrightarrow{n\to \infty} \lambda,
\end{equation}
and
\begin{equation}
    \limsup_{n\to\infty}D(\lambda_n\Vert p_n)<\infty.
\end{equation}
Then
\begin{equation}
    \limsup_{n\to\infty} \left[ 
        D(\lambda\Vert p_n) - D(\lambda_n\Vert p_n)
    \right] \leq 0.
\end{equation}
\end{lemma}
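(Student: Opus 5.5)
Since $\CP$ has finite alphabet, I will write both divergences as finite sums and compare them term by term:
\begin{equation*}
D(\lambda\Vert p_n)-D(\lambda_n\Vert p_n)=\sum_{\cP}\Big[\lambda(\cP)\log\lambda(\cP)-\lambda_n(\cP)\log\lambda_n(\cP)\Big]+\sum_{\cP}\big(\lambda_n(\cP)-\lambda(\cP)\big)\log p_n(\cP).
\end{equation*}
This identity is valid whenever both divergences are finite. The first step is therefore a support argument. Since $\limsup_n D(\lambda_n\Vert p_n)<\infty$, I will drop finitely many indices so that $D(\lambda_n\Vert p_n)\le C<\infty$ for all $n$. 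Then $\supp(\lambda_n)\subseteq\supp(p_n)$. I still need to ensure $\supp(\lambda)\subseteq\supp(p_n)$, so that the left-hand side is finite. I will argue by contradiction: if $\lambda(\cP)>0$ and $p_n(\cP)=0$ along a subsequence, then $\lambda_n(\cP)\to\lambda(\cP)>0$ forces $\lambda_n(\cP)>0$, which contradicts the support inclusion for large $n$. So eventually everything is finite.

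The first sum tends to $0$, because $x\log x$ is continuous on $[0,1]$ and $\lambda_n\to\lambda$. The second sum is the main obstacle, since $\log p_n(\cP)$ may diverge to $-\infty$. I will split it by sign. Terms with $\lambda_n(\cP)\ge\lambda(\cP)$ contribute $(\lambda_n-\lambda)\log p_n\le 0$, since $\log p_n\le0$. For terms with $\lambda_n(\cP)<\lambda(\cP)$, I need $(\lambda(\cP)-\lambda_n(\cP))\,|\log p_n(\cP)|\to0$. These terms have $\lambda(\cP)>0$, so for large $n$ we also have $\lambda_n(\cP)\ge\lambda(\cP)/2$.

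For such $\cP$, I will bound $|\log p_n(\cP)|$ uniformly using the bounded divergence. From $D(\lambda_n\Vert p_n)\le C$, the log-sum inequality applied to the partition $\{\cP\}$ versus its complement gives
\begin{equation*}
\lambda_n(\cP)\log\frac{\lambda_n(\cP)}{p_n(\cP)}+(1-\lambda_n(\cP))\log\frac{1-\lambda_n(\cP)}{1-p_n(\cP)}\le C.
\end{equation*}
The second term is bounded below by $(1-\lambda_n)\log(1-\lambda_n)\ge-1/e$. Hence
\begin{equation*}
-\log p_n(\cP)\le\frac{C+1/e}{\lambda_n(\cP)}-\log\lambda_n(\cP)\le\frac{2(C+1/e)}{\lambda(\cP)}+\log\frac{2}{\lambda(\cP)},
\end{equation*}
which is a constant independent of $n$. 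Multiplying by $\lambda(\cP)-\lambda_n(\cP)\to0$, each such term vanishes. Summing over the finitely many $\cP$ and taking the $\limsup$ then gives the claim.
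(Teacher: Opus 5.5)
Your proof is correct and is essentially the paper's argument. It uses the same decomposition into an entropy difference plus the cross term $\sum_{\cP}(\lambda_n(\cP)-\lambda(\cP))\log p_n(\cP)$, the same nonpositivity of the terms off $\supp(\lambda)$ (your sign split $\lambda_n\ge\lambda$ is slightly finer), and the same key step of bounding $p_n(\cP)$ away from zero on $\supp(\lambda)$ by reducing $D(\lambda_n\Vert p_n)$ to the binary divergence of $\{\cP\}$ versus its complement; the paper does this via data processing and a contradiction along a subsequence, whereas you use the log-sum inequality and get an explicit bound. One cosmetic slip: with the paper's base-$2$ logarithm the correct lower bound is $x\log x\ge -1/(e\ln 2)$ rather than $-1/e$, which only changes your constant.
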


\begin{proof}
The assumption $\limsup_{n\to\infty}D(\lambda_n\Vert p_n)<\infty$ implies that $D(\lambda_n\Vert p_n)<\infty$ for all sufficiently large $n$. Moreover, since $\lambda_n(\cP)\to\lambda(\cP)>0$ for every $\cP\in\supp(\lambda)$, this implies that $p_n(\cP)>0$ on $\supp(\lambda)$ for all sufficiently large $n$, and hence $D(\lambda\Vert p_n)<\infty$. Since discarding finitely many initial terms does not affect the limit superior, we restrict to such $n$.
Hence, we rewrite the relative entropies as
\begin{equation}
\begin{aligned}
    &D(\lambda\Vert p_n) - D(\lambda_n\Vert p_n) \\
    = &H(\lambda_n) - H(\lambda) + \sum_{\cP} \left(\lambda_n(\cP) -\lambda(\cP) \right) \log(p_n(\cP)).
\end{aligned}
\end{equation}
The Shannon entropy is continuous and therefore the first part of the expression $H(\lambda_n) - H(\lambda)$ satisfies
\begin{equation}
    \limsup_n \left( H(\lambda_n) - H(\lambda) \right) = 0.
\end{equation}
For the second part, consider an arbitrary $\cP \in \supp(\lambda)$. We show that, for every $\cP\in \supp(\lambda)$, the sequence $\{p_n(\cP)\}_n$ is bounded away from zero. Since $\lambda_n(\cP)\to\lambda(\cP)>0$, there exist $\delta_{\cP}>0$ and $N_{\cP} \in \N$ such that
\begin{equation}
    \lambda_n(\cP)\geq\delta_{\cP}
\end{equation}
for every $n\geq N_{\cP}$. Next, consider the stochastic map $T_{\cP}:\CP\to\{0,1\}$ that records whether the outcome is $\cP$, so that
\begin{equation}
    T_{\cP}(r) = \left(r(\cP),1-r(\cP)\right)
\end{equation}
for every $r\in\mathbb{P}_{\CP}$. Then, by the data-processing inequality for classical relative entropy,
\begin{equation}
\begin{aligned}
    D(\lambda_n\Vert p_n) &\geq D(T_{\cP}(\lambda_n) \Vert T_{\cP}(p_n)) \\
    &= d\bigl(\lambda_n(\cP)\Vert p_n(\cP)\bigr),
\end{aligned}
\end{equation}
where
\begin{equation}
    d(x\Vert y) \defvar x\log\frac{x}{y} + (1-x)\log\frac{1-x}{1-y}
\end{equation}
denotes binary relative entropy.

If $p_n(\cP)$ were not bounded away from zero, there would be a
subsequence along which $p_n(\cP)\to0$. Along this subsequence,
$\lambda_n(\cP)\geq\delta_{\cP}$, and hence
\begin{equation}
    d\bigl(\lambda_n(\cP)\Vert p_n(\cP)\bigr)
    \longrightarrow+\infty.
\end{equation}
This would contradict
\begin{equation}
    \limsup_{n\to\infty}D(\lambda_n\Vert p_n)<\infty.
\end{equation}
Consequently, for every $\cP\in \supp(\lambda)$, there exist $\eta_{\cP}>0$ and
$N_{\cP}\in\mathbb{N}$ such that
\begin{equation}
    p_n(\cP)\geq\eta_{\cP}
\end{equation}
for all $n\geq N_{\cP}$. Therefore, since $\lambda_n \to \lambda$ it follows
\begin{equation}
    \limsup_{n\to \infty} \left(\lambda_n(\cP) -\lambda(\cP) \right) \log(p_n(\cP)) = 0,
\end{equation}
and the sum over $\cP \in \supp(\lambda)$ goes to zero.

Finally, consider $\cP \notin \supp(\lambda)$. Here, $\lambda(\cP) = 0$ and the second term becomes
\begin{equation}
    \sum_{\cP \notin \supp(\lambda)} \lambda_n(\cP) \log p_n(\cP).
\end{equation}
Because $\lambda_n(\cP) \geq 0$ and $p_n(\cP) \leq 1$ and thus $\log(p_n(\cP)) \leq 0$, every term in this sum is non-positive. Hence, it follows for all $\cP \notin \supp(\lambda)$,
\begin{equation}
    \limsup_{n\to \infty} \left(\lambda_n(\cP) -\lambda(\cP) \right) \log(p_n(\cP)) \leq 0.
\end{equation}
Combining both cases of the second term with the first term we therefore reach the lemma statement.
\end{proof}

For $\alpha \in (1,2)$ and $\rho \in S_{=}(Q\CP Q')$, let us define the following pair of functions on $\R^{|\alphCP|}$
\begin{align}\label{eq:GandGstar_f}
    \begin{aligned}
        &G_{\alpha,\rho}({\mbf{f}}) \coloneqq - H^{\uparrow ,f}_\alpha(Q|\CP Q')_{\rho}, \\
        &G_{\alpha,\rho}^*(\bsym{\lambda}) \coloneqq 
        \begin{cases}
            \begin{aligned}
                &\frac{\alpha}{\alpha-1}D\left(\bsym{\lambda} \middle\Vert \bsym{\rho}_{\CP}\right) \\ &+\sum\limits_{\cP\in\supp(\bsym{\rho}_{\CP})}\lambda(\cP)\renyiSandUp_{\alpha}(Q|Q')_{\rho_{|\cP}}
            \end{aligned} &
            \bsym{\lambda} \in \mathbb{P}_{\CP} \; ,\\
            +\infty  & \text{else,}
        \end{cases}
    \end{aligned}
\end{align}
which by \cite[Lemma~4.11]{Arqand_2025} are convex conjugates. The calculation in the proof of \cite[Lemma~4.11]{Arqand_2025} remains valid when $Q'$ is infinite-dimensional and separable, since the decomposition with respect to the finite classical register $\CP$ contains only finitely many terms and the register $Q$ remains finite-dimensional.

Furthermore, we define a perturbation function based on $G_{\alpha,\rho}$, and then prove it is lower semicontinuous. This will be critical for our aforementioned application of the Fenchel-Moreau theorem.

\begin{definition}[Perturbation function for $G_{\alpha,\rho}$]\label{def:perturbation_func}
    Let $G_{\alpha,\rho}$ be defined as in \cref{eq:GandGstar_f} and let $\mathcal{R} \subseteq \dop{=}(QQ'\CP)$ with finite-dimensional spaces $Q$ and $\CP$ be given as
    \begin{equation}
        \mathcal{R} \subseteq \left\{\rho \in \dop{=}(QQ'\CP) \mid \rho \text{ is classical on } \CP \right\}.
    \end{equation}
    The space $Q'$ may be infinite-dimensional. Furthermore, define
    \begin{equation}
        \mathcal{Z} \defvar \left\{\mbf{z} \in \R^{|\alphCP|} \mid \sum_{\cP} z(\cP) =0 \right\},
    \end{equation}
    and let $S_{\Omega} \subseteq \mathbb{P}_{\CP}$ be compact.
    
    Then, we define the perturbation function $V: \R^{|\alphCP|} \to (-\infty,+\infty]$ associated with the family $\{G_{\alpha,\rho}\}_{\rho\in \mathcal{R}}$ as
    \begin{equation}\label{eq:perturbation_func_V}
        V(\mbf{z}) \defvar \begin{cases}
            \inf_{\substack{\mbf{q} \in S_{\Omega}, \rho \in \mathcal{R} \\ \lambda \in \mathbb{P}_{\CP}, \\ \mbf{q} - \bsym{\lambda} = \mbf{z}}} G_{\alpha,\rho}^*(\bsym{\lambda}) & \mbf{z} \in \mathcal{Z}, \\
            +\infty & \text{else}.
        \end{cases}
    \end{equation}
\end{definition}

\begin{lemma}[Lower semicontinuity of Perturbation Function]\label{lem:LSC_perturbation_func}
    The perturbation function $V$ as defined in \cref{def:perturbation_func} is lower semicontinuous, that is for any sequence $\{\mbf{z}_n\}_{n\in \N} \subseteq \R^{|\alphCP|}$ with $\mbf{z}_n \xrightarrow{n \to \infty} \mbf{z}$ it holds
    \begin{equation}
        V(\mbf{z}) \leq \liminf_{n \to \infty} V(\mbf{z}_n).
    \end{equation}
\end{lemma}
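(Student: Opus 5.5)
The plan is a direct sequential argument that uses compactness only on the finite-dimensional classical variables $\mbf{q}$ and $\bsym{\lambda}$, and never on $\rho\in\mathcal{R}$. Fix $\mbf{z}_n\to\mbf{z}$.

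\emph{Trivial cases and set-up.} If $\liminf_n V(\mbf{z}_n)=+\infty$ there is nothing to show. Otherwise, pass to a subsequence realising the $\liminf$ with every $V(\mbf{z}_n)<\infty$. Then $\mbf{z}_n\in\mathcal{Z}$ for all $n$, and since $\mathcal{Z}$ is closed, $\mbf{z}\in\mathcal{Z}$. By definition of the infimum in \cref{eq:perturbation_func_V}, choose feasible triples $(\mbf{q}_n,\rho_n,\bsym{\lambda}_n)$ with $\mbf{q}_n-\bsym{\lambda}_n=\mbf{z}_n$ and
\[
G^*_{\alpha,\rho_n}(\bsym{\lambda}_n)\le V(\mbf{z}_n)+\tfrac1n .
\]

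\emph{Compactness on the classical variables.} Since $S_\Omega$ and $\mathbb{P}_{\CP}$ are compact subsets of $\R^{|\alphCP|}$, pass to a further subsequence with $\mbf{q}_n\to\mbf{q}\in S_\Omega$. Then $\bsym{\lambda}_n=\mbf{q}_n-\mbf{z}_n\to\bsym{\lambda}\defvar\mbf{q}-\mbf{z}$, and $\bsym{\lambda}\in\mathbb{P}_{\CP}$ because $\mathbb{P}_{\CP}$ is closed. The key observation is that for every $n$ the triple $(\mbf{q},\rho_n,\bsym{\lambda})$ is feasible for $V(\mbf{z})$. Hence
\[
V(\mbf{z})\le G^*_{\alpha,\rho_n}(\bsym{\lambda}) \quad\text{for all } n,
\]
and it suffices to show
\[
\limsup_n\bigl[G^*_{\alpha,\rho_n}(\bsym{\lambda})-G^*_{\alpha,\rho_n}(\bsym{\lambda}_n)\bigr]\le 0 .
\]
This difference splits into two pieces:
\begin{itemize}
\item the relative-entropy piece $\tfrac{\alpha}{\alpha-1}\bigl[D(\bsym{\lambda}\Vert\bsym{\rho}_{n,\CP})-D(\bsym{\lambda}_n\Vert\bsym{\rho}_{n,\CP})\bigr]$;
\item the entropy piece $\sum_{\cP}(\lambda(\cP)-\lambda_n(\cP))\,\renyiSandUp_\alpha(Q|Q')_{\rho_{n|\cP}}$.
\end{itemize}

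\emph{The entropy piece.} Because $Q$ is finite-dimensional, $\lvert\renyiSandUp_\alpha(Q|Q')\rvert\le\log\dim Q$ uniformly over states, including the infinite-dimensional $Q'$ case via the data-processing and dimension bounds of \cref{app:inf-dim-DPI}. Since $\bsym{\lambda}_n\to\bsym{\lambda}$ and the alphabet is finite, the entropy piece tends to $0$.

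\emph{The relative-entropy piece.} Here I will apply \cref{lem:one-sided-relative-entropy-continuity} with $p_n=\bsym{\rho}_{n,\CP}$. Its hypothesis $\limsup_n D(\bsym{\lambda}_n\Vert\bsym{\rho}_{n,\CP})<\infty$ follows from two facts:
\begin{itemize}
\item $G^*_{\alpha,\rho_n}(\bsym{\lambda}_n)\le V(\mbf{z}_n)+1/n$ is bounded above along the subsequence, since the $\liminf$ is finite;
\item the entropy term in $G^*$ is bounded below by $-\log\dim Q$.
\end{itemize}
Together these bound $D(\bsym{\lambda}_n\Vert\bsym{\rho}_{n,\CP})$ above, and the lemma then gives $\limsup_n$ of the relative-entropy piece $\le 0$.

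\emph{Support bookkeeping and conclusion.} The sum in $G^*$ runs only over $\supp(\bsym{\rho}_{n,\CP})$. The finiteness of $D(\bsym{\lambda}\Vert\bsym{\rho}_{n,\CP})$ for large $n$, established inside the proof of \cref{lem:one-sided-relative-entropy-continuity}, ensures $\supp\bsym{\lambda}\subseteq\supp\bsym{\rho}_{n,\CP}$, so the two expressions are compared over the same index set. Combining the two pieces,
\[
V(\mbf{z})\le\liminf_n G^*_{\alpha,\rho_n}(\bsym{\lambda}_n)\le\liminf_n V(\mbf{z}_n).
\]

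I expect the main obstacle to be exactly this support and finiteness bookkeeping. A near-optimal $\rho_n$ may put vanishing weight on outcomes charged by $\bsym{\lambda}$, so naive continuity of $D$ in its second argument fails. This is why the argument needs the one-sided lemma together with the uniform lower bound on the entropy term, rather than any compactness or continuity in $\rho$.
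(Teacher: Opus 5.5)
Your proposal is correct and is essentially the paper's own proof. The shared ingredients are: a $\liminf$-realising subsequence with $\tfrac{1}{n}$-approximate optimisers; compactness taken only from $S_\Omega$, so that $(\mbf{q},\rho_n,\bsym{\lambda})$ is feasible for $V(\mbf{z})$ for every $n$; and a split into a relative-entropy piece handled by \cref{lem:one-sided-relative-entropy-continuity} and an entropy piece controlled by the uniform bound $\lvert\renyiSandUp_\alpha(Q|Q')\rvert\le\log\dim Q$. The only difference is cosmetic: the paper sets $h_m(\cP)=0$ for $\cP\notin\supp(\bsym{\rho}^m_{\CP})$ instead of your explicit support bookkeeping, and, like you, it asserts the two-sided $\log\dim Q$ bound rather than deriving it (that bound is not actually contained in \cref{app:inf-dim-DPI}, which you cite for it).
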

\begin{proof}
Let $\{\mbf{z}_n\}_{n\in\N} \subseteq\R^{|\alphCP|}$ be a sequence satisfying
\begin{equation}
    \mbf{z}_n \xrightarrow{n\to\infty} \mbf{z},
\end{equation}
and set
\begin{equation}
    v \defvar \liminf_{n\to\infty}V(\mbf{z}_n).
\end{equation}
If $v=+\infty$, there is nothing to prove, and therefore, we assume $v < \infty $. Moreover, it holds $v>-\infty$, since
\begin{equation}
    G_{\alpha,\rho}^*(\bsym{\lambda}) \geq -\log\dim(Q) > - \infty,
\end{equation}
for all $\rho \in \mathcal{R}$ since $Q$ is finite-dimensional.

Next, choose a subsequence $\{\mbf{z}_{n_k}\}_k$ of $\{\mbf{z}_n\}_n$ achieving the limit inferior, that is
\begin{equation}
    \lim_{k \to \infty} V(\mbf{z}_{n_k})= v.
\end{equation}
Since $v\in\R$, the subsequence can be chosen such that $V(\mbf{z}_{n_k})<+\infty$ for all $k\in\N$. By the definition of $V$, this implies that $\mbf{z}_{n_k}\in\mathcal{Z}$ for all $k\in\N$. Furthermore, since $\mbf{z}_n$ converges to $\mbf{z}$, so does the subsequence ${\mbf{z}_{n_k}}_k$ and the closedness of $\mathcal{Z}$ therefore
implies that $\mbf{z}\in\mathcal{Z}$.

Now, for every $k\in\N$, choose approximate optimizers $\mbf{q}_k\in S_{\Omega}$, $\rho^k\in\mathcal{R}$, and $\bsym{\lambda}_k\in\mathbb{P}_{\CP}$ of 
\begin{equation}
    \inf_{\substack{q \in S_{\Omega}, \rho \in \mathcal{R} \\ \lambda \in \mathbb{P}_{\CP}, \\ \mbf{q} - \bsym{\lambda} = \mbf{z}_{n_k} }} G_{\alpha,\rho}^*(\bsym{\lambda}),
\end{equation}
such that
\begin{equation}
    \mbf{q}_k-\bsym{\lambda}_k = \mbf{z}_{n_k}
\end{equation}
and
\begin{equation}
    G_{\alpha,\rho^k}^*(\bsym{\lambda}_k) \leq V(\mbf{z}_{n_k})+\frac{1}{k},
\end{equation}
which can be chosen this way by the definition of an infimum. 

Since $S_{\Omega}$ is compact, there exists a convergent subsequence
$\{\mbf{q}_{k_j}\}_{j\in\N}$ with limit $\mbf{q}\in S_{\Omega}$, that is
\begin{equation}
    \mbf{q}_{k_j} \xrightarrow{j\to\infty} \mbf{q}\in S_{\Omega}.
\end{equation}
Since $\mbf{z}_n$ converges to $\mbf{z}$, also $\mbf{z}_{n_{k_j}}$ converges to $\mbf{z}$. Consequently,
\begin{equation}
\bsym{\lambda}_{k_j} = \mbf{q}_{k_j}-\mbf{z}_{n_{k_j}} \xrightarrow{j\to\infty} \mbf{q}-\mbf{z}.
\end{equation}
If we define $\bsym{\lambda} \defvar \mbf{q} -\mbf{z}$, we thus find
\begin{equation}
    \bsym{\lambda}_{k_j} \xrightarrow{j \to \infty} \bsym{\lambda},
\end{equation}
and since $\mathbb{P}_{\CP}$ is closed, it follows that $\bsym{\lambda}\in\mathbb{P}_{\CP}$. Furthermore, by the definition of $\bsym{\lambda}$, it trivially holds
\begin{equation}
    \mbf{q}-\bsym{\lambda}=\mbf{z}.
\end{equation}

For ease of notation, we relabel the extracted sequences
\begin{equation}
\begin{split}
    &\{\mbf{z}_{n_{k_j}}\}_{j\in\N},\quad \{\mbf{q}_{k_j}\}_{j\in\N},\\
    &\{\rho^{k_j}\}_{j\in\N} \quad \text{and} \quad \{\bsym{\lambda}_{k_j}\}_{j\in\N}
\end{split}
\end{equation}
as
\begin{equation}
\begin{split}
    &\{\mbf{z}_m\}_{m\in\N}, \quad \{\mbf{q}_m\}_{m\in\N}, \\
    &\{\rho^m\}_{m\in\N}, \quad \text{and} \quad \{\bsym{\lambda}_m\}_{m\in\N},
\end{split}
\end{equation}
respectively. Furthermore, we define
\begin{equation}
    \delta_m \defvar \frac{1}{k_m}.
\end{equation}
Then,
\begin{equation}
    \begin{split}
    V(\mbf{z}_m) &\xrightarrow{m\to\infty} v, \quad \mbf{z}_m \xrightarrow{m\to\infty} \mbf{z}, \\
    \mbf{q}_m &\xrightarrow{m\to\infty} \mbf{q}, \quad \bsym{\lambda}_m \xrightarrow{m\to\infty} \bsym{\lambda}.
    \end{split}
\end{equation}
Moreover, it holds
\begin{equation}
    \mbf{q}_m-\bsym{\lambda}_m=\mbf{z}_m
\end{equation}
for every $m\in\N$, and
\begin{equation}
    \mbf{q}-\bsym{\lambda}=\mbf{z}.
\end{equation}
Then, the approximate optimality condition becomes
\begin{equation}
    G_{\alpha,\rho^m}^*(\bsym{\lambda}_m) \leq V(\mbf{z}_m)+\delta_m,
\end{equation}
where
\begin{equation}
    \delta_m \xrightarrow{m\to\infty} 0.
\end{equation}

After this relabelling, we proceed with the proof. For every $m \in \N$, define
\begin{equation}
    \mbf{p}_m \defvar \bsym{\rho}^{m}_{\CP},
\end{equation}
and for each $\cP\in\supp(\mbf{p}_m)$, define
\begin{equation}
    h_m(\cP) \defvar \renyiSandUp_{\alpha} (Q|Q')_{\rho^m_{|\cP}},
\end{equation}
and set $h_m(\cP)=0$ for $\cP\notin\supp(\mbf{p}_m)$. This does not change the value of $G_{\alpha,\rho}^*(\bsym{\lambda_{m}})$ when the relative entropy term is finite.

The approximate optimality condition implies that $ G_{\alpha,\rho^m}^*(\bsym{\lambda}_m)$ is bounded from above.
Additionally, since $Q$ is finite-dimensional, for every $\cP\in\alphCP$ it holds
\begin{equation}
	-\log\dim(Q)\leq h_m(\cP)\leq\log\dim(Q).
\end{equation}
Therefore, we conclude
\begin{equation}
    \sum_{\cP\in\alphCP} \lambda_m(\cP)h_m(\cP) \geq -\log\dim(Q),
\end{equation}
from which it follows that
\begin{equation}
    \limsup_{m\to\infty} D(\bsym{\lambda}_m\Vert\mbf{p}_m) < \infty,
\end{equation}
as otherwise $ G_{\alpha,\rho^m}^*(\bsym{\lambda}_m)$ would not be bounded above by definition of the functional.

Therefore, all conditions of \cref{lem:one-sided-relative-entropy-continuity} are satisfied, and we find 
\begin{equation}\label{eq:limsup_rel_ent_diff}
    \limsup_{m\to\infty} \big[ D(\bsym{\lambda}\Vert\mbf{p}_m) - D(\bsym{\lambda}_m \Vert \mbf{p}_m) \big] \leq 0.
\end{equation}
In particular, $D(\bsym{\lambda}\Vert\mbf{p}_m)$ is finite for all sufficiently large $m$.

Moreover,
\begin{equation}
 \left| \sum_{\cP\in\alphCP} \bigl( \lambda(\cP)-\lambda_m(\cP) \bigr)h_m(\cP) \right| 
    \leq \log\dim(Q) \left\| \bsym{\lambda}-\bsym{\lambda}_m \right\|_1,
\end{equation}
which converges to $0$, because $\lambda_m \to \lambda$.

Combining this with the preceding bound in \cref{eq:limsup_rel_ent_diff}, it holds
\begin{equation}
\begin{split}
    \limsup_{m\to\infty} \left[ G_{\alpha,\rho^m}^*(\bsym{\lambda}) - G_{\alpha,\rho^m}^*(\bsym{\lambda}_m) \right] \leq 0.
\end{split}
\end{equation}

Now, let $\varepsilon>0$ be arbitrary. By the definition of the limit superior, there exists an $M_1 \in \N$ such that, for every $m \geq M_1$, it holds
\begin{equation}
\begin{split}
    \left[ G_{\alpha,\rho^m}^*(\bsym{\lambda}) - G_{\alpha,\rho^m}^*(\bsym{\lambda}_m) \right] \leq \frac{\varepsilon}{3},
\end{split}
\end{equation}
and thus, it holds equivalently
\begin{equation}
    G_{\alpha,\rho^m}^*(\bsym{\lambda}) \leq G_{\alpha,\rho^m}^*(\bsym{\lambda}_m) + \frac{\varepsilon}{3},
\end{equation}
for all $m \geq M_1$.

Furthermore, by the convergence $\delta_m \xrightarrow{m\to\infty} 0$ and the convergence $V(\mbf{z}_m)\xrightarrow{m\to\infty}v$, there exists an $M_2\in\N$ such that, for every $m\geq M_2$,
\begin{equation}
\begin{split}
\delta_m\leq\frac{\varepsilon}{3} \quad \text{and} \quad V(\mbf{z}_m)\leq v+\frac{\varepsilon}{3}.
\end{split}
\end{equation}
We note that, by construction, the triple $(\mbf{q},\rho^m,\bsym{\lambda})$ is feasible for $V(\mbf{z})$ and therefore, for every $m\geq \max(M_1,M_2)$,
\begin{equation}
\begin{split}
V(\mbf{z}) &\leq G_{\alpha,\rho^m}^*(\bsym{\lambda}) \leq G_{\alpha,\rho^m}^*(\bsym{\lambda}_m) +\frac{\varepsilon}{3}\\
&\leq V(\mbf{z}_m)+\delta_m +\frac{\varepsilon}{3} \leq v+\varepsilon.
\end{split}
\end{equation}

Since $\varepsilon$ was arbitrary we let $\varepsilon\downarrow0$, which gives
\begin{equation}
    V(\mbf{z}) \leq v = \liminf_{n\to\infty}V(\mbf{z}_n).
\end{equation}
Hence, $V$ is lower semicontinuous.
\end{proof}

In preparation of the final theorem in this section and the next lemma we require the following definition of the purifying function.
\begin{definition}[Purifying function (adapted from {\cite[Defn.~2.10]{Arqand_2025}})]\label{def:purify}
	For registers $Q,Q'$ with $\dim(Q) \leq \dim(Q')$, a \emph{purifying function for $Q$ onto $Q'$} is a function $\pf: \dop{\leq}(Q) \to \dop{\leq}(QQ')$ such that for any state $\rho_Q$, the state $\pf(\rho_Q)$ is a purification of $\rho_Q$ onto the register $Q'$, i.e.~a (possibly subnormalized) rank-$1$ operator such that $\pTr{Q'}{\pf(\rho_Q)} = \rho_Q$.
\end{definition}

\begin{theorem}\label{thrm:convduality}
Let $\EATchann\in\CPTP(Q,SE\CP)$ where the output is always classical on $\CP$, let $\widetilde{E}$ be a register such that $\dim(Q) \leq \dim(\widetilde{E})$, and let $\pf$ be a purifying function for $Q$ onto $\widetilde{E}$ (Definition~\ref{def:purify}). 
Furthermore, let $\dim(S) < \infty$ and $\dim(\CP)<\infty$, however, $Q$ and $E$ may be chosen infinite-dimensional.

Let $\mathbb{P}_{\CP}$ denote the set of probability distributions on $\CP$, and consider any $S_\Omega \subseteq \mathbb{P}_{\CP}$ and any $\Theta \subseteq \dop{=}(Q)$. For any $\omega\in\dop{=}(Q)$, define a corresponding state $\nu^\omega \defvar \EATchann\left[ \pf\left(\omega\right)\right]$. Then, for $\alpha \in (1,\infty)$
the optimization 
\begin{align}\label{eq:dualf}
\sup_{\mbf{f} \in \R^{|\alphCP|}} \inf_{\mbf{q} \in S_\Omega}
\inf_{\omega\in\Theta}
\left(\frenyiSandUp_{\alpha}(S | \CP E\widetilde{E})_{\nu^\omega} + \mbf{f}\cdot\mbf{q} \right),
\end{align}
is the Lagrange dual problem of the following constrained optimization:
\begin{align}\label{eq:primalomega}
\begin{aligned}[t]
\inf_{\mbf{q} \in S_\Omega} \inf_{\bsym{\lambda} \in \mathbb{P}_{\CP}} &\inf_{\omega\in\Theta}
\begin{aligned}[t]
    \Biggl( &\frac{\alpha}{{\alpha}-1}D\left(\bsym{\lambda} \middle\Vert \bsym{\nu}^\omega_{\CP}\right) \\
    &+ \sum_{\cP\in\supp(\bsym{\nu}^\omega_{\CP})}\lambda(\cP)\renyiSandUp_{\alpha}(S|E\widetilde{E})_{\nu^\omega_{|\cP}} \Biggr)
\end{aligned} \\
&\text{s.t.}\quad \mbf{q}-\bsym{\lambda} = \mbf{0}
\end{aligned},
\end{align}
in which the objective function is jointly convex in $\bsym{\lambda}$ and $\omega$ (and $\mbf{q}$, trivially).

Furthermore, if $S_\Omega$ is nonempty, compact and convex, $\Theta$ is convex, and the primal admits a finite feasible point, i.e. there exist $\mbf{q}\in S_{\Omega}$ and $\omega \in \Theta$ such that
\begin{equation}
    D\left(\mbf{q} \middle\Vert \bsym{\nu}^\omega_{\CP}\right) < +\infty,
\end{equation}
then the primal and dual values are equal.
\end{theorem}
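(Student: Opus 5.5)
The proof splits into three steps: compute the Lagrange dual, establish joint convexity of the primal objective, and prove strong duality via the Fenchel--Moreau theorem applied to the perturbation function $V$ of \cref{def:perturbation_func}.

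\textbf{Identifying the dual.} Dualize only the equality constraint $\mbf{q}-\bsym{\lambda}=\mbf{0}$, with a multiplier $\mbf{f}\in\R^{|\alphCP|}$. The Lagrangian is
\begin{equation*}
G^*_{\alpha,\nu^\omega}(\bsym{\lambda})+\mbf{f}\cdot(\mbf{q}-\bsym{\lambda}).
\end{equation*}
Taking the infimum over $\bsym{\lambda}\in\mathbb{P}_{\CP}$ gives $-\sup_{\bsym{\lambda}}\bigl(\mbf{f}\cdot\bsym{\lambda}-G^*_{\alpha,\nu^\omega}(\bsym{\lambda})\bigr)=-G^{**}_{\alpha,\nu^\omega}(\mbf{f})$. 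By the conjugate relation in \cref{eq:GandGstar_f}, which holds with infinite-dimensional $E\widetilde E$ as remarked there, this equals $-G_{\alpha,\nu^\omega}(\mbf{f})=\frenyiSandUp_\alpha(S|\CP E\widetilde E)_{\nu^\omega}$. Collecting the remaining infima over $\mbf{q}$ and $\omega$ reproduces \eqref{eq:dualf}.

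\textbf{Joint convexity.} I would follow \cite{Arqand_2025}. The objective equals $\frac{\alpha}{\alpha-1}\widetilde D_\alpha$-type expressions of the pure extension $\pf(\omega)$ through $\EATchann$, and its convexity rests on two facts. First, the map $\omega\mapsto\nu^\omega$ is, up to isometries on $\widetilde E$, affine in the sense needed there. Second, the function $\exp\bigl(\tfrac{\alpha-1}{\alpha}(\cdot)\bigr)$ of the conditional entropy combines with the relative entropy term via the variational form of $G^*$ as a supremum over $\mbf{f}$ of affine-in-$\bsym\lambda$ functions. Concretely, $G^*_{\alpha,\nu^\omega}(\bsym\lambda)=\sup_{\mbf f}\bigl(\mbf f\cdot\bsym\lambda-G_{\alpha,\nu^\omega}(\mbf f)\bigr)$. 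It therefore suffices to show that $\omega\mapsto G_{\alpha,\nu^\omega}(\mbf f)$ is concave for each fixed $\mbf f$, i.e.\ that $\frenyiSandUp_\alpha$ of $\nu^\omega$ is concave in $\omega$. For this I would use the read-and-prepare reduction of \cite[Lemma~4.2]{Arqand_2025} (valid for separable $Q'$, as argued in \cref{thrm:Continuity_f_weighted}) to rewrite the $f$-weighted entropy as $\renyiSandUp_\alpha(DS|\CP E\widetilde E)-M$. Concavity then follows from the finite-dimensional argument via the flag construction, data processing (\cref{app:inf-dim-DPI}), and purification equivalence (Uhlmann in infinite dimensions \cite{Hou_2012}).

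\textbf{Strong duality (main obstacle).} Set $\mathcal R=\{\nu^\omega:\omega\in\Theta\}$ and consider $V$ from \cref{def:perturbation_func}. Joint convexity and the convexity of $S_\Omega$ and $\Theta$ make $V$ convex. The finite feasible point gives $V(\mbf 0)<\infty$, and the lower bound $-\log\dim S$ gives $V>-\infty$, so $V$ is proper. \cref{lem:LSC_perturbation_func} makes $V$ lower semicontinuous; this is exactly where the compactness of $S_\Omega$ is used, replacing the compactness of $\Theta$ that is unavailable in infinite dimensions. Fenchel--Moreau \cite[Theorem~13.37]{Bauschke_2017} then yields $V(\mbf 0)=V^{**}(\mbf 0)=\sup_{\mbf f}(-V^*(\mbf f))$. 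It remains to compute
\begin{equation*}
-V^*(\mbf f)=\inf_{\mbf z}\bigl(V(\mbf z)-\mbf f\cdot\mbf z\bigr)=\inf_{\mbf q,\omega,\bsym\lambda}\bigl(G^*_{\alpha,\nu^\omega}(\bsym\lambda)-\mbf f\cdot\mbf q+\mbf f\cdot\bsym\lambda\bigr),
\end{equation*}
and, after the $\bsym\lambda$-infimum as in the first step (with $\mbf f\mapsto-\mbf f$), to recognize it as the inner expression of \eqref{eq:dualf}. Since $V(\mbf 0)$ is precisely the primal \eqref{eq:primalomega}, the two values coincide.

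The delicate points are the sign conventions and the convexity of $V$ on the affine subspace $\mathcal Z$. Beyond these, the step most likely to need care is verifying the joint convexity in $\omega$ with infinite-dimensional $E\widetilde E$.
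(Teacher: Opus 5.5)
The dual identification and the strong-duality argument match the paper's proof. Both use:
\begin{itemize}
\item the same perturbation function $V$ with $\mathcal R=\{\nu^\omega\}_{\omega\in\Theta}$;
\item lower semicontinuity from \cref{lem:LSC_perturbation_func}, where compactness of $S_\Omega$ stands in for compactness of $\Theta$;
\item properness from $V\geq-\log\dim(S)$ plus the finite feasible point;
\item Fenchel--Moreau, with $V^{**}(\mbf 0)$ evaluated via $G_{\alpha,\nu^\omega}=G^{**}_{\alpha,\nu^\omega}$. This identity itself needs Fenchel--Moreau applied to the finite, convex log-sum-exp $G_{\alpha,\nu^\omega}$, which the paper states explicitly.
\end{itemize}

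The real difference is the joint-convexity step. The paper imports \cite[Lemma~5.4]{Arqand_2025b} and asserts that its proof carries over to separable spaces. You instead write $G^*_{\alpha,\nu^\omega}(\bsym\lambda)=\sup_{\mbf f}\bigl(\mbf f\cdot\bsym\lambda+\frenyiSandUp_\alpha(S|\CP E\widetilde E)_{\nu^\omega}\bigr)$, which reduces the question to a statement about $\omega$ alone for each fixed $\mbf f$. This route is cleaner and more self-contained, but as written it has a sign error.

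A supremum of the functions $\mbf f\cdot\bsym\lambda-G_{\alpha,\nu^\omega}(\mbf f)$ is jointly convex when $\omega\mapsto G_{\alpha,\nu^\omega}(\mbf f)$ is concave. That means $\omega\mapsto\frenyiSandUp_\alpha(S|\CP E\widetilde E)_{\nu^\omega}$ must be \emph{convex}, not concave. Concavity of the entropy, which you propose to prove, is false. Take trivial $\CP$ and $E$, $\mbf f=\mbf 0$, and let $\EATchann$ measure a qubit $Q$ in the computational basis into $S$. Then $\pure{+}$ and $\pure{-}$ each give entropy $1$. Their equal mixture $I/2$ gives $0$, because its purification makes $S$ perfectly correlated with $\widetilde E$.

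The ingredients you list prove exactly the convexity you need. Let $\omega=\sum_ip_i\omega_i$ and choose unit vectors $\ket{\psi_i}$ with $\pf(\omega_i)=\pure{\psi_i}$.
\begin{itemize}
\item The vector $\sum_i\sqrt{p_i}\,\ket{\psi_i}\otimes\ket{i}_F$ purifies $\omega$ onto $\widetilde EF$. By isometric invariance, applying $\EATchann$ to it gives the same entropy as $\nu^\omega$.
\item Dephasing $F$ acts only on the conditioning side, so it cannot decrease the $f$-weighted entropy.
\item For the resulting classical flag, Jensen's inequality (using $\frac{1-\alpha}{\alpha}<0$) gives
\[
\frenyiSandUp_\alpha(S|\CP E\widetilde EF)=\frac{\alpha}{1-\alpha}\log\sum_ip_i2^{\frac{1-\alpha}{\alpha}\frenyiSandUp_\alpha(S|\CP E\widetilde E)_{\nu^{\omega_i}}}\leq\sum_ip_i\,\frenyiSandUp_\alpha(S|\CP E\widetilde E)_{\nu^{\omega_i}} .
\]
\end{itemize}

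The read-and-prepare reduction is unnecessary here. Your remark that $\omega\mapsto\nu^\omega$ is ``affine'' should be replaced by this flagged-purification argument, since purification is not affine.

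With the sign fixed, your route derives joint convexity of $(\bsym\lambda,\omega)\mapsto G^*_{\alpha,\nu^\omega}(\bsym\lambda)$ directly from data processing and the classical-flag formula in separable spaces. That makes explicit what the paper delegates to \cite{Arqand_2025b}.
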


\begin{remark}
    As in \cite[Lemma~4.12]{Arqand_2025}, the preceding theorem establishes equality of the primal and dual values but does not guarantee attainment of the dual value. Indeed, the application of the Fenchel-Moreau theorem does not imply that the supremum in the dual problem is attained. Unlike the Clark-Duffin argument used
    in \cite[Lemma~4.12]{Arqand_2025}, our argument does not require compactness of $\Theta$. Such compactness cannot in general be inferred from boundedness in infinite dimensions, since the set of density operators is not compact in the trace-norm topology.
\end{remark}

\begin{proof}
    The fact that both optimization problems are Lagrange dual to each other follows from \cite[Lemma~4.12]{Arqand_2025}. Hence, the main difficulty of the proof of this theorem is the strong duality in the case of a convex set $S_{\Omega}$ and feasible points.

    To prove the strong duality, we consider the perturbation function $V$ associated with the family $\{G_{\alpha,\nu^{\omega}}\}_{\omega \in \Theta}$, corresponding to the choice
    \begin{equation}
        \mathcal{R} = \{\nu^{\omega} \mid \omega \in \Theta, \nu^{\omega} = \EATchann\left[\pf\left(\omega\right)\right] \}.
    \end{equation}
    We will apply the Fenchel-Moreau theorem \cite[Theorem~13.37]{Bauschke_2017}, which states that under certain conditions a function $f$ is equal to its biconjugate, $f^{**}$. As we will see below, the primal problem is equal to $V(\mbf{0})$, while the dual problem is equal to $V^{**}(\mbf{0})$.

    Hence, we calculate $V^{**}$ for states $\nu^{\omega}$. Since $\mbf{q}\in S_{\Omega}\subseteq\mathbb{P}_{\CP}$ and $\bsym{\lambda}\in\mathbb{P}_{\CP}$, it holds
    \begin{equation}
        \sum_{\cP\in\alphCP}\bigl(q(\cP)-\lambda(\cP)\bigr)=0.
    \end{equation}
    Consequently, the constraint $\mbf{q}-\bsym{\lambda}=\mbf{z}$ has no feasible points whenever $\mbf{z}\notin\mathcal{Z}$. Using the convention that the infimum over the empty set is $+\infty$, the infimum in the following calculation agrees with the definition of $V$ on all of $\R^{|\alphCP|}$. By the definition of the Fenchel conjugate and inserting the definition of the perturbation function from \cref{def:perturbation_func}, we find
    \begin{equation}
    \begin{aligned}
        V^*(\mbf{u}) 
        &= \sup_{\mbf{z} \in \R^{|\alphCP|}} \left\{ \langle\mbf{u}, \mbf{z}\rangle - V(\mbf{z}) \right\} \\
        &= \sup_{\mbf{z} \in \R^{|\alphCP|}} \left\{ \langle\mbf{u}, \mbf{z}\rangle - \inf_{\substack{\mbf{q} \in S_{\Omega}, \omega \in \Theta \\ \bsym{\lambda} \in \mathbb{P}_{\CP}, \\ \mbf{q} - \bsym{\lambda} = \mbf{z}}} G_{\alpha,\nu^{\omega}}^*(\bsym{\lambda}) \right\} \\
        &= \sup_{\substack{\mbf{z} \in \R^{|\alphCP|}, \mbf{q} \in S_{\Omega}, \\ \omega \in \Theta, \bsym{\lambda} \in \mathbb{P}_{\CP}, \\ \mbf{q} - \bsym{\lambda} = \mbf{z}}} \left\{ \langle\mbf{u}, \mbf{z}\rangle - G_{\alpha,\nu^{\omega}}^*(\bsym{\lambda}) \right\}.
    \end{aligned}    
    \end{equation}
    Next, by substituting $\mbf{z} = \mbf{q} - \bsym{\lambda}$ into the objective function, we can evaluate the supremum over $\mbf{z}$ directly and find
    \begin{equation}
    \begin{aligned}
        V^*(\mbf{u})
        &= \sup_{\substack{\mbf{q} \in S_{\Omega}, \\ \omega \in \Theta, \bsym{\lambda} \in \mathbb{P}_{\CP}}}
        \left\{ \langle\mbf{u}, \mbf{q} - \bsym{\lambda}\rangle - G_{\alpha,\nu^{\omega}}^*(\bsym{\lambda}) \right\} \\
        &= \sup_{\mbf{q} \in S_{\Omega}} \langle\mbf{u}, \mbf{q}\rangle 
        + \sup_{\substack{\omega \in \Theta, \\ \bsym{\lambda} \in \mathbb{P}_{\CP}}} 
        \left\{ \langle\mbf{u}, -\bsym{\lambda}\rangle - G_{\alpha,\nu^{\omega}}^*(\bsym{\lambda}) 
        \right\} \\
        &= \sup_{\mbf{q} \in S_{\Omega} } \langle\mbf{u}, \mbf{q}\rangle + 
        \sup_{\omega \in \Theta} \left( 
            \sup_{\bsym{\lambda} \in \mathbb{P}_{\CP}} 
            \left\{ 
                \langle-\mbf{u}, \bsym{\lambda}\rangle - G_{\alpha,\nu^{\omega}}^*(\bsym{\lambda}) 
            \right\} 
        \right)
    \end{aligned}      
    \end{equation}
    Now, we notice that the inner supremum over $\bsym{\lambda}$ is equal to $G_{\alpha,\nu^\omega}^{**}(-\mbf{u})$, since $G_{\alpha,\nu^\omega}^*(\bsym{\lambda})=+\infty$ for every $\bsym{\lambda}\notin\mathbb{P}_{\CP}$. By \cite[Proof~of~Lemma~4.11]{Arqand_2025}, $G_{\alpha,\nu^\omega}$ can be written as a log-sum-exponential and is therefore finite-valued, convex, and continuous on $\R^{|\alphCP|}$. In particular, it is proper and lower semicontinuous. The Fenchel-Moreau theorem therefore gives
    \begin{equation}\label{eq:proof_G=G**}
        G_{\alpha,\nu^\omega}=G_{\alpha,\nu^\omega}^{**}
    \end{equation}
    for every $\omega\in\Theta$. Thus, we can simplify the previous expression for $V^*$ and find
    \begin{equation}
        V^*(\mbf{u}) 
        = \sup_{\mbf{q} \in S_{\Omega} } \langle\mbf{u}, \mbf{q}\rangle + \sup_{\omega \in \Theta} G_{\alpha,\nu^{\omega}}(-\mbf{u}).
    \end{equation}
    Next, we calculate $V^{**}$. Again by the definition of the Fenchel conjugate we find
    \begin{equation}
    \begin{aligned}
        V^{**}(\mbf{y}) 
        &= \sup_{\mbf{u} \in \R^{|\alphCP|}}  \left\{  \langle\mbf{y}, \mbf{u}\rangle - V^*(\mbf{u}) \right\}  \\
        &= \sup_{\mbf{u} \in \R^{|\alphCP|}} 
        \left\{ 
            \langle\mbf{y}, \mbf{u}\rangle 
            - 
            \sup_{\substack{\mbf{q} \in S_{\Omega}, \\ \omega \in \Theta } }
            \left( 
                \langle\mbf{u}, \mbf{q}\rangle + G_{\alpha,\nu^{\omega}}(-\mbf{u}) 
            \right) 
        \right\} \\
        &= \sup_{\mbf{u}\in \R^{|\alphCP|}} \inf_{\substack{\mbf{q} \in S_{\Omega}, \\ \omega \in \Theta }}
        \left\{ 
            \langle\mbf{y}, \mbf{u}\rangle - \langle\mbf{u}, \mbf{q}\rangle - G_{\alpha,\nu^{\omega}}(-\mbf{u}) 
        \right\} \\
        &= \sup_{\mbf{u}\in \R^{|\alphCP|}} \inf_{\substack{\mbf{q} \in S_{\Omega},\\ \omega \in \Theta }} 
        \left\{ 
            \langle\mbf{u}, \mbf{y}-\mbf{q}\rangle - G_{\alpha,\nu^{\omega}}(-\mbf{u}) 
        \right\},
    \end{aligned}
\end{equation}
    where in step three, we applied $-\sup(x) = \inf(-x)$ and in step four, we simply redistributed the inner products. Substituting $-\mbf{f} \coloneqq \mbf{u}$, the final expression for $V^{**}$ becomes
    \begin{align}\label{eq:proof_V**}
        V^{**}(\mbf{y}) &= 
        \sup_{\mbf{f}\in \R^{|\alphCP|}} \inf_{\substack{\mbf{q} \in S_{\Omega},\\ \omega \in \Theta }} 
        \left\{ 
            \langle\mbf{f}, \mbf{q}-\mbf{y}\rangle - G_{\alpha,\nu^{\omega}}(\mbf{f}) 
        \right\}.
    \end{align}

    With the formulation of $V^{**}$ at hand, we can insert $G_{\alpha,\nu^{\omega}}$, which by \cref{eq:GandGstar_f} is defined as
    \begin{equation}
        G_{\alpha,\nu^{\omega}}(\mbf{f}) = -\frenyiSandUp_{\alpha}(S|\CP E\widetilde{E})_{\nu^{\omega}},
    \end{equation}
    and find the equivalent formulation of $V^{**}$
    \begin{equation}
        V^{**}(\mbf{y}) = 
        \sup_{\mbf{f}\in \R^{|\alphCP|}} \inf_{\substack{\mbf{q} \in S_{\Omega},\\ \omega \in \Theta }}
        \left\{ 
            \langle\mbf{f}, \mbf{q}-\mbf{y}\rangle + \frenyiSandUp_{\alpha}(S|\CP E\widetilde{E})_{\nu^{\omega}}
        \right\}.
    \end{equation}
    
    Then, we observe the following
    \begin{equation}
        V^{**}(\mbf{0}) = 
        \sup_{\mbf{f}\in \R^{|\alphCP|}} \inf_{\substack{\mbf{q} \in S_{\Omega},\\ \omega \in \Theta }} 
        \left\{ 
            \frenyiSandUp_{\alpha}(S|\CP E\widetilde{E})_{\nu^{\omega}} + \langle\mbf{f}, \mbf{q}\rangle
        \right\},
    \end{equation}
    which is the dual problem as defined in \cref{eq:dualf}. On the other hand, after inserting the definition of $G_{\alpha,\nu^{\omega}}^*$ as in \cref{eq:GandGstar_f}, $V(\mbf{0})$ is given by
    \begin{align}
        \begin{aligned}[t]
        V(\mbf{0}) = &\begin{aligned}[t]&\inf_{\substack{\mbf{q} \in S_\Omega, \omega\in\Theta, \\ \bsym{\lambda} \in \mathbb{P}_{\CP}}}
            \Biggl( \frac{\alpha}{{\alpha}-1}D\left(\bsym{\lambda} \middle\Vert \bsym{\nu}^\omega_{\CP}\right) \\
            &\qquad + \sum_{\cP\in\supp(\bsym{\nu}^\omega_{\CP})}\lambda(\cP)\renyiSandUp_{\alpha}(S|E\widetilde{E})_{\nu^\omega_{|\cP}} \Biggr).
        \end{aligned} \\
        &\text{s.t.}\quad \mbf{q}-\bsym{\lambda} = \mbf{0}
        \end{aligned}
    \end{align}
    Finally, to show strong duality we will apply the Fenchel-Moreau theorem to the perturbation function $V$. In order to do so, we need to show that $V$ is proper, lower semicontinuous and convex.

    By \cref{lem:LSC_perturbation_func}, $V$ is lower semicontinuous. Furthermore, by \cite[Lemma~5.4]{Arqand_2025b}, the map $(\bsym{\lambda},\omega)\mapsto G_{\alpha,\nu^\omega}^*(\bsym{\lambda})$ is jointly convex. The proof of this joint convexity statement carries over to infinite-dimensional separable Hilbert spaces, since purifications of the same state are related by an isometry and the data-processing inequality for the $f$-weighted entropy also holds for possibly infinite-dimensional conditioning registers. Additionally, since $S_{\Omega}$, $\Theta$, and $\mathbb{P}_{\CP}$ are convex sets, and the constraint $\mbf{q}-\bsym{\lambda}=\mbf{z}$ is affine, $V$ is obtained by minimizing a jointly convex function with respect to a subset of variables over a convex set. Therefore, $V$ is convex by \cite[Section~3.2.5]{Boyd_2004}.

    Lastly, since the register $S$ is finite-dimensional, for every $\omega\in\Theta$ and $\cP\in\supp(\bsym{\nu}^\omega_{\CP})$, the conditional sandwiched \Renyi entropy is finite and satisfies
    \begin{equation}
        \renyiSandUp_{\alpha}(S|E\widetilde{E})_{\nu^\omega_{|\cP}}\geq-\log\dim(S)>-\infty.
    \end{equation}
    Together with the non-negativity of the relative entropy, this implies
    \begin{align*}
        &G_{\alpha,\nu^\omega}^*(\bsym{\lambda})\geq-\log\dim(S) \quad \text{and} \\
        &V(\mbf{z})\geq -\log\dim(S),
    \end{align*}
    and hence, $V$ never takes the value $-\infty$. By assumption, there exist $\mbf{q}\in S_{\Omega}$ and $\omega\in\Theta$ such that $D(\mbf{q}\Vert\bsym{\nu}^\omega_{\CP})<+\infty$. Choosing $\bsym{\lambda}=\mbf{q}$ gives a feasible point for $V(\mbf{0})$, and the finiteness of the conditional sandwiched \Renyi entropies implies
    \begin{equation}
    V(\mbf{0})\leq G_{\alpha,\nu^\omega}^*(\mbf{q})<+\infty.
    \end{equation}
    Therefore, $V$ is proper and all conditions of the Fenchel-Moreau theorem \cite[Theorem~13.37]{Bauschke_2017} are satisfied. It follows
    \begin{equation}
        V = V^{**}.
    \end{equation}
    In particular, we find $V(\mbf{0}) = V^{**}(\mbf{0})$, which implies the theorem statement.
\end{proof}

\begin{lemma}[Existence of a finite feasible point]\label{lem:finite_feasible_point}
Consider the setting of \cref{thrm:MEAT_inf}. Let $A$ be a common register into which all $A_{j-1}$ can be isometrically embedded, and let $J$ be a classical register with alphabet $\{1,\dots,n\}$. Define the set
\begin{equation}
    \Theta_{AJE} \defvar \left\{ \omega\in\dop{=}(AEJ) \middle| \ \begin{array}{l}
    \omega \text{ is classical on } J, \\
    \omega_{AE|j} \in \dop{=}(A_{j-1}E_{j-1}), \\
    \omega_{A|j} = \sigma_{A_{j-1}}^{(j-1)}
    \text{ for every } j
    \end{array}
    \right\}.
\end{equation}

Furthermore, let $ \EATchann\in\CPTP(AEJ,S\CP E)$ be the $J$-controlled channel whose restriction to the outcome $J=j$ implements $\EATchann_j$ for every $j\in\{1,\dots,n\}$. Here, each $\EATchann_j$ is understood as an arbitrary CPTP extension to the common input register $AE$ that agrees with the original channel on the embedded subspace $A_{j-1}E_{j-1}$. 

Finally, enlarging $\widetilde{E}$ if necessary, let $\pf$ be a purifying function for $AEJ$ onto $\widetilde{E}$ and, for every $\omega\in\Theta_{AJE}$, define
\begin{equation}
    \nu^\omega \defvar \EATchann\left[\pf(\omega)\right].
\end{equation}
Then, there exist $\mbf{q}\in S_\Omega$ and $\omega^\star\in\Theta_{AJE}$ such that
\begin{equation}
    \supp(\mbf{q})\subseteq\supp\left(\bsym{\nu}_{\CP}^{\omega^\star}\right).
\end{equation}
Consequently,
\begin{equation}
    D\left(\mbf{q}\middle\Vert\bsym{\nu}_{\CP}^{\omega^\star}\right)<+\infty.
\end{equation}
\end{lemma}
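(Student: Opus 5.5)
The natural source of a feasible pair is the actual protocol state $\rho$ of \cref{thrm:MEAT_inf}. Since $p_\Omega>0$, there exists a string $\cP_1^n$ with $\Pr_{\rho}[\cP_1^n]>0$ that lies in $\Omega$. By hypothesis, its frequency distribution $\mbf{q}\defvar\freq_{\cP_1^n}$ belongs to $S_\Omega$. I will take this $\mbf{q}$ and build $\omega^\star$ as the uniform mixture over rounds of the true round-$j$ input states:
\begin{equation}
    \omega^\star \defvar \frac{1}{n}\sum_{j=1}^n \rho^{(j-1)}_{A_{j-1}E_{j-1}}\otimes\pure{j}_J,
\end{equation}
where $\rho^{(j-1)}_{A_{j-1}E_{j-1}}$ is the marginal on $A_{j-1}E_{j-1}$ of $\mathcal{M}_{j-1}\circ\cdots\circ\mathcal{M}_1[\omega_{A_0^{n-1}E_0}]$. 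The marginal constraint $\omega_{A_0^{n-1}}=\bigotimes_j\sigma^{(j-1)}_{A_{j-1}}$ survives the earlier channels, because those channels do not act on $A_{j-1}$. Hence $\omega^\star_{A|j}=\sigma^{(j-1)}_{A_{j-1}}$, and $\omega^\star\in\Theta_{AJE}$.

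Next I compute the distribution on $\CP$. Let $\pf(\omega^\star)$ be the purification. Tracing out everything except $\CP$ leaves a distribution that does not depend on the choice of purification, because $\EATchann$ acts only on $AEJ$ and the purifying register is traced out. The $J$-controlled structure then gives
\begin{equation}
    \bsym{\nu}^{\omega^\star}_{\CP}=\frac{1}{n}\sum_{j=1}^n \bsym{p}_j,
\end{equation}
where $\bsym{p}_j$ is the marginal distribution of $\CP_j$ under $\rho$. This holds because $\mathcal{M}_j$ applied to $\rho^{(j-1)}_{A_{j-1}E_{j-1}}$ yields exactly the round-$j$ marginal of $\rho$ on $\CP_j$; the other registers are untouched by $\mathcal{M}_j$, so the marginal depends only on the reduced input state. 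The extension of $\EATchann_j$ to the common register $AE$ plays no role, since the inputs are supported on the embedded subspaces.

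Now take any symbol $\cP$ with $q(\cP)>0$. Then $\cP=\cP_j$ for some $j$ in the chosen string, and since $\Pr_\rho[\cP_1^n]>0$, also $\bsym{p}_j(\cP)\geq\Pr_\rho[\cP_1^n]>0$. Therefore $\bsym{\nu}^{\omega^\star}_{\CP}(\cP)\geq \bsym{p}_j(\cP)/n>0$. This establishes the support inclusion $\supp(\mbf{q})\subseteq\supp(\bsym{\nu}^{\omega^\star}_{\CP})$, and finiteness of the classical relative entropy on the finite alphabet $\alphCP$ follows immediately.

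The main technical point is the identity for $\bsym{\nu}^{\omega^\star}_{\CP}$. I expect it to be routine, but it needs the embeddings into the common registers $A$ and $E$ to be chosen consistently with how $\EATchann$ is defined, so that the embedded round-$j$ state is fed exactly into $\mathcal{M}_j$. Two smaller points also need care: checking that $\Pr_\rho[\cP_1^n]>0$ implies $\Pr_{\rho_{|\Omega}}[\cP_1^n]>0$, so that the hypothesis on $S_\Omega$ applies, and noting that a string of positive probability must exist because $\CP_1^n$ is a finite alphabet.
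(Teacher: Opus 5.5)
Your proof is correct and follows the paper's argument. You use the same $\omega^\star$ built from the true round-$j$ inputs, the same identity $\bsym{\nu}^{\omega^\star}_{\CP}=\frac{1}{n}\sum_j \bsym{p}_j$, and the same support comparison. The only difference is the choice of $\mbf{q}$. You take the frequency of a single string $\cP_1^n\in\Omega$ with $\Pr_\rho[\cP_1^n]>0$; then $\Pr_{\rho_{|\Omega}}[\cP_1^n]=\Pr_\rho[\cP_1^n]/p_\Omega>0$, which settles the point you flagged. The paper instead takes the average $q(\cP)=\frac{1}{n}\sum_j\Pr_{\rho_{|\Omega}}[\CP_j=\cP]$ and invokes convexity of $S_\Omega$, whereas your variant does not need that convexity.
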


\begin{proof}
Define $\mbf{q}\in\mathbb{P}_{\CP}$ by
\begin{equation}\label{eq:proof_def_feasible_q}
    \begin{split}
    q(\cP) &\defvar \sum_{\cP_1^n} \Pr_{\rho_{|\Omega}}[\cP_1^n] \freq_{\cP_1^n}(\cP) \\
    &= \frac{1}{n}\sum_{j=1}^n \Pr_{\rho_{|\Omega}}[\CP_j=\cP].
    \end{split}
\end{equation}
Since every frequency distribution appearing in this convex combination belongs to $S_\Omega$ and $S_\Omega$ is convex, it follows that $\mbf{q}\in S_\Omega$.

For every $j\in\{1,\dots,n\}$, define the input state of the channel $\EATchann_j$ by
\begin{equation}
    \eta^{(j-1)}_{A_{j-1}E_{j-1}} \defvar \pTr{\mathrm{all}\setminus A_{j-1}E_{j-1}}{\left(\EATchann_{j-1}\circ\cdots\circ\EATchann_1\right)\left[\omega_{A_0^{n-1}E_0}\right]},
\end{equation}
where $\mathrm{all}\setminus A_{j-1}E_{j-1}$ denotes the registers $S_1^{j-1}\CP_1^{j-1}A_j^{n-1}$, and the composition and any empty register products are interpreted trivially. Furthermore, since the preceding channels act trivially on $A_{j-1}$ and are trace preserving, it holds
\begin{equation}
\eta^{(j-1)}_{A_{j-1}} = \omega_{A_{j-1}} = \sigma_{A_{j-1}}^{(j-1)}.
\end{equation}

After applying the isometric embeddings into the common registers $A$ and $E$, define
\begin{equation}
    \omega^\star_{AEJ}
    \defvar
    \frac{1}{n}\sum_{j=1}^n
    \eta^{(j-1)}_{A_{j-1}E_{j-1}}
    \otimes
    \ket{j}\!\bra{j}_J.
\end{equation}
The state $\omega^\star$ is classical on $J$, for every $j\in\{1,\dots,n\}$ it satisfies
\begin{equation}
\begin{gathered}
    \omega^\star_{AE|j} = \eta^{(j-1)}_{A_{j-1}E_{j-1}} \in \dop{=}(A_{j-1}E_{j-1}), \\
    \omega^\star_{A|j} = \eta^{(j-1)}_{A_{j-1}} = \sigma_{A_{j-1}}^{(j-1)},
\end{gathered}
\end{equation}
and therefore
\begin{equation}
    \omega^\star\in\Theta_{AJE}. 
\end{equation}

Since $\pf(\omega^\star)$ has marginal $\omega^\star$ on $AEJ$, applying the controlled channel and tracing out all registers except $\CP$ gives
\begin{equation}
    \begin{split}
        \nu_{\CP}^{\omega^\star}
        &= \pTr{S E\widetilde{E} J}{\EATchann\left[\pf(\omega^\star)\right]} \\
        &= \pTr{S E J}{\EATchann\left[\omega^\star\right]} \\
        &= \frac{1}{n} \sum_{j=1}^n \pTr{S_jE_j}{\EATchann_j\left[\eta^{(j-1)}_{A_{j-1}E_{j-1}}\right]} \\
        &= \frac{1}{n} \sum_{j=1}^n \rho_{\CP_j}.
    \end{split}
\end{equation}

Hence, for every $\cP\in\alphCP$, it holds
\begin{equation}
    \bsym{\nu}_{\CP}^{\omega^\star}(\cP) = \frac{1}{n} \sum_{j=1}^n \Pr_{\rho}[\CP_j=\cP].
\end{equation}

We have thus now built our $\nu^{\omega^{\star}}$ and $\mbf{q}$. It remains to show $\supp(\mbf{q})\subseteq\supp\left(\bsym{\nu}_{\CP}^{\omega^\star}\right)$. To this end, we consider each $\cP \in \alphCP$. If $q(\cP) = 0$, then it is not relevant for showing $\supp(\mbf{q})\subseteq\supp\left(\bsym{\nu}_{\CP}^{\omega^\star}\right)$. 

Now, fix an arbitrary $\cP\in\alphCP$. We distinguish two cases. First, suppose that $q(\cP)=0$, which imposes no condition on $\bsym{\nu}_{\CP}^{\omega^\star}(\cP)$. It remains to consider the case $q(\cP)>0$. In this case, by the definition of $\mbf{q}$ in \cref{eq:proof_def_feasible_q}, there exists $j_0\in\{1,\dots,n\}$ such that
\begin{equation}
    \Pr_{\rho_{|\Omega}}[\CP_{j_0}=\cP] > 0.
\end{equation}
Then, using $p_\Omega>0$, we obtain
\begin{equation}
    \begin{split}
        \bsym{\nu}_{\CP}^{\omega^\star}(\cP)
        &\geq \frac{1}{n} \Pr_{\rho}[\CP_{j_0}=\cP] \\
        &\geq \frac{1}{n} \Pr_{\rho}[\CP_{j_0}=\cP \wedge \Omega] \\
        &= \frac{p_\Omega}{n} \Pr_{\rho_{|\Omega}}[\CP_{j_0}=\cP] > 0.
    \end{split}
\end{equation}
Since $\cP\in\alphCP$ was arbitrary, every $\cP$ satisfying $q(\cP)>0$ also satisfies $\bsym{\nu}_{\CP}^{\omega^\star}(\cP)>0$ and therefore,
\begin{equation}
    \supp(\mbf{q}) \subseteq \supp\left(\bsym{\nu}_{\CP}^{\omega^\star}\right).
\end{equation}
The alphabet $\alphCP$ is finite, and thus the support inclusion implies
\begin{equation}
    D\left(\mbf{q} \middle\Vert \bsym{\nu}_{\CP}^{\omega^\star}\right) < +\infty,
\end{equation}
which concludes the proof.
\end{proof}

\subsubsection{Proof of \cref{thrm:MEAT_inf}}\label{app:subsec_proof_inf_MEAT}
\begin{proof}[Proof of \cref{thrm:MEAT_inf}]
    The proof proceeds in two high-level steps. In the first step we will make use of the finite-dimensional projected states and channels to relate the conditional \Renyi entropy to the $f$-weighted \Renyi entropy just as in the finite-dimensional proof in \cite{Arqand_2025}. In the second step, we reformulate the problem and apply \cref{thrm:convduality} which establishes strong Lagrange duality even in infinite dimensions where compactness of the underlying sets is not guaranteed.
    
    As in the definition of the shorthand for the cutoff limits in \cref{eq:shorthand_iterated_lim}, we set $k_0=d$ and let all cutoff parameters occurring below satisfy $l_j\geq d$ and $l_j\geq k_{j-1}$ for every $j=1,\dots,n$.  This restriction is again valid because, for every fixed $d$ and $k_{j-1}$, it only removes finitely many initial values of $l_j$ and does not change the limit. Therefore, the assumptions of \cref{thrm:Finite_MEAT_Chains,lem:kappa_inf-space_lower_bnd,lem:kappa_intermediate_limit_bnd} are satisfied throughout the proof.	
    
	Since removing outcomes of zero probability does not change either $p_\Omega$ or $\rho_{|\Omega}$, we may assume without loss of generality that
	\begin{equation}
		\Omega\subseteq\supp\left(\bsym{\rho}_{\CP_1^n}\right).
	\end{equation}
	Consequently, every $\cP_1^n\in\Omega$ satisfies
	\begin{equation}
		\freq_{\cP_1^n}\in S_\Omega.
	\end{equation}
	Thus, for every projected output state $\rho'$ for which $\Pr_{\rho'}[\Omega]>0$, it follows
    \begin{equation}
	\begin{aligned}
		\Pr_{\rho'_{|\Omega}}[\cP_1^n]>0 \; &\Longrightarrow\; \cP_1^n\in\Omega
		\;\Longrightarrow\; \Pr_{\rho_{|\Omega}}[\cP_1^n]>0 \\
		&\Longrightarrow\; \freq_{\cP_1^n}\in S_\Omega.
	\end{aligned}        
    \end{equation}
\begin{widetext}
	Starting with the proof, by \cref{eq:Convergence_of_Projected_Channels_entropies,thrm:Finite_MEAT_Chains}, the conditional \Renyi entropy of the final state $\rho$ created by the infinite-dimensional channels acting on the infinite-dimensional state $\omega_{A_0^{n-1}E_0}$ is equal to
\begin{equation}
\begin{aligned}
    \renyiSandUp_\alpha(S_1^n | \CP_1^n E_n)_{\rho_{|\Omega}} 
    &=\lim_{d \to \infty} \lim_{\substack{\mbf{k} \rightarrow \infty \\ \mbf{l}\rightarrow \infty}} \renyiSandUp_\alpha(S_1^n | \CP_1^n E_n)_{\EATchannProjInf_n^{(k_n, l_n)} \circ \cdots \circ \EATchannProjInf_1^{(k_1, l_1)} \left( \Pi_0^{(d)}(\omega_{A_0^{n-1}E_0}) \right)\big|_\Omega} \\ 
    &=\lim_{d \to \infty} \lim_{\substack{\mbf{k} \rightarrow \infty \\ \mbf{l}\rightarrow \infty}} \renyiSandUp_\alpha(S_1^n | \CP_1^n E_n^{(k_n)})_{\EATchannProj_n^{(k_n, l_n)} \circ \Xi_{n-1} \circ \cdots \circ \EATchannProj_1^{(k_1, l_1)} \circ \Xi_0 (\tau)_{|\Omega}},
\end{aligned}
\end{equation}
where for the second equality we used the isometric invariance of the conditional \Renyi entropy.
\end{widetext}
The state $\tau_{\left(A^{(d)}\right)_{0}^{n-1}E_0^{(d)}}$ now satisfies the following $d$-dependent marginal,
\begin{equation}
    \tau_{\left(A^{(d)}\right)_{0}^{n-1}} = \bigotimes_{j=1}^n \xi_{A_{j-1}^{(d)}}^{(j-1)},
\end{equation}
where each $\xi_{A_{j-1}^{(d)}}^{(j-1)}$ satisfies
\begin{equation}
    V_{j-1}^{(d)} \xi_{A_{j-1}^{(d)}}^{(j-1)} V_{j-1}^{(d)\dagger} = \widebar{\sigma}_{A_{j-1}}^{(j-1)},
\end{equation}
and
\begin{equation}
    \widebar{\sigma}_{A_{j-1}}^{(j-1)} = \pTr{E_0A_{0}^{j-2}A_j^{n-1} }{\Pi_0^{(d)}\left(\omega_{A_0^{n-1}E_0}\right)}.
\end{equation}

Crucially, $\renyiSandUp_\alpha(S_1^n | \CP_1^n E_n^{(k_n)})$ involves only finite-dimensional states and registers for each $d, \mbf{k}$ and  $\mbf{l}$. Therefore, we can apply all steps until Eq. (125) of the proof of the original marginal constrained entropy accumulation theorem \cite[Proof of Theorem 4.2b]{Arqand_2025} to find
\begin{widetext}
\begin{equation}
\begin{aligned}
    &\renyiSandUp_\alpha(S_1^n | \CP_1^n E_n^{(k_n)})_{\EATchannProj_n^{(k_n, l_n)} \circ \Xi_{n-1} \circ \cdots \circ \EATchannProj_1^{(k_1, l_1)} \circ \Xi_0 (\tau)_{|\Omega}} \\
    &\geq \left(\sum_{j=1}^n  \inf_{\nu\in\widebar{\Sigma}_j(k_{j-1},k_j,l_j,d)} H^{\uparrow,f}_{\alpha} (S_j|\CP_jE_j^{(k_j)}R)_\nu \right) + n \inf_{\mbf{q} \in S_\Omega} \mbf{f}\cdot\mbf{q} - \frac{\alpha}{\alpha-1} \log\frac{1}{\bar{p}_\Omega},
\end{aligned}
\end{equation}
where $\widebar{\Sigma}_j(k_{j-1},k_j,l_j,d)$ is defined as in \cref{eq:def_Sigma_bar}
\begin{align}
\widebar{\Sigma}_j(k_{j-1},k_j,l_j,d) \defvar
    \Big\{ \left( \left(\EATchannProj_j^{(k_j, l_j)} \circ \Xi_{j-1} \right) \otimes \id_R \right) [\tau] \mid \tau \in S_{=}(A_{j-1}^{(d)} E_{j-1}^{(k_{j-1})} R), \tau_{A_{j-1}^{(d)}} = \xi_{A_{j-1}^{(d)}}^{(j-1)} \Big\}.
\end{align}
\end{widetext}
Furthermore, we define
\begin{equation}
	\bar{p}_{\Omega} \defvar \Pr_{\left(\left(\EATchannProj_n^{(k_n,l_n)}\circ\Xi_{n-1}\right)\circ\cdots\circ\left(\EATchannProj_1^{(k_1,l_1)}\circ\Xi_0\right)\right)[\tau]}[\Omega], 
\end{equation}
as the probability of the event $\Omega$ in the output of the projected channel sequence. Due to \cref{lem:Convergence_of_Projected_Channels}, we have
\begin{equation}
	\lim_{d\to\infty}\lim_{\substack{\mbf{k}\to\infty\\\mbf{l}\to\infty}}\bar{p}_{\Omega}= p_\Omega.
\end{equation}
Since $p_\Omega>0$, we may, successively in the order of the iterated limits, restrict each cutoff parameter to sufficiently large values for which $\bar{p}_\Omega>0$. This does not change any of the limits.

For every fixed admissible choice of $d,\mbf{k},\mbf{l}$ satisfying $l_j\geq d$ and $l_j\geq k_{j-1}$ for every $j=1,\dots,n$, and with $f_{|\cP_1^{j-1}}=f$, \cref{lem:kappa_inf-space_lower_bnd,lem:kappa_intermediate_limit_bnd} give

\begin{equation}
\begin{aligned}
	&\inf_{\nu\in\widebar{\Sigma}_j(k_{j-1},k_j,l_j,d)}
	\frenyiSandUp_{\alpha}(S_j|\CP_jE_j^{(k_j)}R)_\nu \\
	&\geq \inf_{\nu\in\hat{\Sigma}_j(d)} \frenyiSandUp_{\alpha}(S_j|\CP_jE_j\widetilde E)_\nu.
\end{aligned}
\end{equation}
\begin{widetext}
Thus, the finite-dimensional bound implies
\begin{equation}
\begin{aligned}
	&\renyiSandUp_\alpha(S_1^n|\CP_1^nE_n^{(k_n)})_{\EATchannProj_n^{(k_n,l_n)}\circ\Xi_{n-1}\circ\cdots\circ\EATchannProj_1^{(k_1,l_1)}\circ\Xi_0(\tau)_{|\Omega}} \geq
	\sum_{j=1}^n\inf_{\nu\in\hat{\Sigma}_j(d)}
	H^{\uparrow,f}_{\alpha}(S_j|\CP_jE_j\widetilde E)_\nu
	+n\inf_{\mbf{q}\in S_\Omega}\mbf{f}\cdot\mbf{q}
	-\frac{\alpha}{\alpha-1}\log\frac{1}{\bar p_\Omega}.
\end{aligned}
\end{equation}
Then, taking the cutoff limits $d,\mbf{k},\mbf{l} \to \infty$, using the convergence of $\bar{p}_\Omega$, and applying \cref{lem:kappa_marginal_limit_bnd}, we obtain

\begin{align}
	\renyiSandUp_\alpha(S_1^n|\CP_1^nE_n)_{\rho_{|\Omega}}
	&\geq \sum_{j=1}^n\inf_{\nu\in\Sigma_j} H^{\uparrow,f}_{\alpha}(S_j|\CP_jE_j\widetilde E)_\nu + n\inf_{\mbf{q}\in S_\Omega}\mbf{f}\cdot\mbf{q}
	-\frac{\alpha}{\alpha-1}\log\frac{1}{p_\Omega} \ , 
\end{align}
where note that all that changed was replacing the set $\hat{\Sigma}_j(d)$ with $\Sigma_j$ and $\bar{p}_\Omega$ with $p_\Omega$.

Finally, we can apply the last step in the proof \cite[Theorem 4.2b]{Arqand_2025} to reach the second line of Eq.~(125) which relates the optimization to the marginal convex range. This gives us our first intermediate result
\begin{align}
\renyiSandUp_\alpha(S_1^n | \CP_1^n E_n)_{\EATchann_n \circ \cdots \circ \EATchann_1 (\omega_{A_0^{n-1}E_0})_{|\Omega}}
&\geq n \inf_{\mbf{q} \in S_\Omega}
\inf_{\nu\in\Sigma_{S\CP E\widetilde{E}}}
\left(H^{\uparrow, f}_{\alpha}(S | \CP E\widetilde{E})_{\nu} + \mbf{f}\cdot\mbf{q} \right)
- \frac{\alpha}{\alpha-1} \log\frac{1}{p_\Omega}.
\end{align}

Now, let $\Theta_{AJE}$, $\EATchann$, $\pf$, and the family $\{\nu^\omega\}_{\omega\in\Theta_{AJE}}$ be defined as in \cref{lem:finite_feasible_point}. By the same construction as in \cite[Proof of Theorem~4.2b]{Arqand_2025}, the marginal-constrained convex range can be reparametrized in terms of $\EATchann$ and $\Theta_{AJE}$. Hence,
\begin{equation}
    \inf_{\mbf{q} \in S_\Omega}
\inf_{\nu\in\Sigma_{S\CP E\widetilde{E}}}
\left(H^{\uparrow, f}_{\alpha}(S | \CP E\widetilde{E})_{\nu} + \mbf{f}\cdot\mbf{q} \right) = \inf_{\mbf{q} \in S_\Omega}
\inf_{\substack{\omega \in \dop{=}(AEJ\widetilde{E}) \\ \text{s.t. } \omega_{AEJ} \in \Theta_{AJE}}}
\left(H^{\uparrow, f}_{\alpha}(S | \CP E\widetilde{E})_{\EATchann\left[\omega\right]} + \mbf{f}\cdot\mbf{q} \right).
\end{equation}
By the same purification argument as in \cite[Proof of Theorem~4.2b]{Arqand_2025}, using \cite[Lemma~4.10]{Arqand_2025}, restricting the optimization to purifications does not change the value of the infimum. The same argument applies to separable Hilbert spaces because every extension can be obtained from a purification by applying a channel to the purifying register. By the data-processing inequality for the $f$-weighted \Renyi entropy, this channel cannot decrease the entropy, while every purification is itself a feasible extension. Thus,
\begin{equation}
    \inf_{\mbf{q} \in S_\Omega}
\inf_{\nu\in\Sigma_{S\CP E\widetilde{E}}}
\left(H^{\uparrow, f}_{\alpha}(S | \CP E\widetilde{E})_{\nu} + \mbf{f}\cdot\mbf{q} \right) = \inf_{\mbf{q} \in S_\Omega}
\inf_{\substack{\omega_{AEJ} \in \Theta_{AJE}}}
\left(H^{\uparrow, f}_{\alpha}(S | \CP E\widetilde{E})_{\nu^{\omega}} + \mbf{f}\cdot\mbf{q} \right),
\end{equation}
where $\nu^\omega \defvar \EATchann\left[\pf(\omega)\right]$ is the state defined in \cref{lem:finite_feasible_point}.
\end{widetext}

This statement holds for any tradeoff function $\mbf{f}$ and thus, we can take the supremum over it, resulting in exactly the statement of \cref{thrm:convduality} with $\Theta=\Theta_{AJE}$, if all conditions are satisfied.

Therefore, we now check that the conditions of \cref{thrm:convduality} are indeed satisfied. Since $\alphCP$ is finite, the set of probability distributions $\mathbb{P}_{\CP}$ is compact. Hence, $S_\Omega$ is compact because it is closed in $\mathbb{P}_{\CP}$. Moreover, $S_\Omega$ is nonempty because $p_\Omega>0$ and every frequency distribution occurring with nonzero probability in $\rho_{|\Omega}$ belongs to $S_\Omega$. The set $\Theta_{AJE}$ is convex by construction, and \cref{lem:finite_feasible_point} shows that the primal problem admits a finite feasible point.

Hence, all conditions of \cref{thrm:convduality} are satisfied, and the theorem statement follows after imposing the constraint $ \mbf{q} - \bsym{\lambda} = \mbf{0}$ and reverting the reparametrization from $\left(\EATchann,\Theta_{AJE}\right)$ back to $\Sigma_{S\CP E\widetilde{E}}$.
\end{proof}

\section{Remarks on the security argument of Ref.~\cite{Navarro_2026a}}\label{apdx:Remarks_Paper}
In this Appendix, we briefly comment on existing issues in Ref. \cite{Navarro_2026a} as well as on how these issues relate and are resolved in this work.

\subsection{Setting $\mbf{q} = \widetilde{\mathcal{M}}^K\omega_{AB}$}
\textbf{Issue:} In both finite and infinite dimensions, the infimum in
\begin{equation}
    \inf_{\mbf{q}} \inf_{\omega} \frac{\alpha}{\alpha-1} D\left( \mbf{q} || \omega_c \right) + q(\perp) \renyiSandUp_{\alpha}(Z|E)_{\omega |\perp}
\end{equation}
is not generally achieved for $\mbf{q} = \widetilde{\mathcal{M}}^K(\omega_{AB})_C$. Instead, setting $\mbf{q} = \widetilde{\mathcal{M}}^K(\omega_{AB})$ removes feasible $(\mbf{q},\omega_{AB})$ pairs that might achieve a smaller objective, thereby increasing the infimum (i.e., yielding an upper bound on the key rate). To make this step sound, one must either keep the KL term, or compensate conservatively, which might be loose. For the QPSK protocol, one can even construct explicit counterexamples. 

\textbf{Relation to our work:} 
In our work, we do not need this step because we estimate the weight $W$ differently. Notably, we combine weight estimation with the key-rate calculation, which completely avoids this issue for fixed-length protocols. Moreover, to the best of our knowledge, this is the only viable solution allowing for an extension to variable-length key rates.

\subsection{$\widetilde{\mathcal{M}}^K$ is not linear}
\textbf{Issue:} The channel defined in Eq. (35) of \cite{Navarro_2026a} is not linear, not a channel, and thus it does not satisfy the data processing inequality. Consequently, Eq. (70) bounding the trace distance of $\widetilde{\mathcal{M}}^K$ applied to the full and the projected state, is invalid. This bound, however, is essential for the dimension reduction argument to hold.

\textbf{Relation to our work:} 
Our work avoids this by using a normalised state with a flag space. We extend the channel $\mathcal{M}^{\textrm{QKD}}$ to $\tilde{\mathcal{M}}^{\textrm{QKD}}$ such that it discards the flag space. Then, we proceed with our argument by using strong subadditivity, concluding that the \Renyi entropy of $\tilde{\mathcal{M}}^{\textrm{QKD}}$ applied to the tagged state lower bounds the original expression.

\subsection{Infinite-dimensions treated incorrectly}
We split this issue into three separate points:

\begin{enumerate}
    \item[a)]  \textbf{Eve's system cannot be assumed to be finite-dimensional.}\\
\textbf{Issue:}
 Bob's heterodyne POVM inherently lives in an infinite-dimensional Hilbert space. Assuming Eve has all power allowed by quantum mechanics, we need to consider the situation where Eve holds a purification of the state $\omega_{AB}$ that leaves Alice's lab, which is infinite-dimensional. Consequently, the purifying register must have infinite dimensions as well.

\textbf{Relation to our work: }
We never assume Eve to be finite-dimensional.

\item[b)] \textbf{ The proof of the $h_{\alpha}^{\uparrow}$ bound on the \Renyi entropy $\renyiSandUp_{\alpha}$ is incorrect.}\\
\textbf{Issue:} The goal of the specific version of MEAT applied in Ref.~\cite{Navarro_2026a} is to show
\begin{equation}
    \renyiSandUp_{\alpha}(S_1^n | \hat{C}_1^n E_n)_{\mathcal{M}^{\otimes n}(\rho) |\Omega} \geq h_{\alpha}^{\uparrow} - \frac{\alpha}{\alpha-1} \log\left(\frac{1}{p_{\Omega}}\right),
\end{equation}
where $\Omega$ is an event on $\hat{C}_1^n$, $\rho_{A_1^nB_1^nE}$ satisfies the marginal $\rho_{A_0^{n-1}} = \bigotimes_{j=1}^{n} \sigma_A$ and $h_{\alpha}^{\uparrow}$ is an optimization problem involving only single-round quantities. Following the argument in Appendix B of Ref.~\cite{Navarro_2026a} to obtain an infinite dimensional generalization, we find
\begin{equation}
\begin{aligned}
    &\renyiSandUp_{\alpha}(S_1^n | \hat{C}_1^n E_n)_{\mathcal{M}^{\otimes n} | \Omega} \\
    & \geq \lim_{d \rightarrow \infty} \left( \renyiSandUp_{\alpha}(S_1^n | \hat{C}_1^n E_n)_{\mathcal{M}(\rho)_d^{\otimes n} | \Omega }  - \Delta_{\alpha}(\epsilon_d) \right)\\
    &= \lim_{d \rightarrow \infty} \renyiSandUp_{\alpha}(S_1^n | \hat{C}_1^n E_n)_{\mathcal{M}(\rho)_d^{\otimes n} | \Omega }
\end{aligned}
\end{equation}
since $\epsilon_d \rightarrow 0$.

Now, applying the MEAT naively as described in Appendix. B of Ref.~\cite{Navarro_2026} to the \Renyi entropy would give
\begin{align}
\begin{aligned}
    &\lim_{d \rightarrow \infty} \renyiSandUp_{\alpha}(S_1^n | \hat{C}_1^n E_n)_{\mathcal{M}_d^{\otimes n}|\Omega } \\ \geq &\lim_{d\rightarrow \infty} \left( h_{\alpha}^{\uparrow} - \log\left(\frac{1}{\tilde{p}_{\Omega}(d)}\right). \right),
    \end{aligned}
\end{align}
where $h_{\alpha}^{\uparrow}(d)$ is now a dimension dependent version of $h_{\alpha}^{\uparrow}$ from the original version of the MEAT given by
\begin{align}
    h_{\alpha}^{\uparrow}(d) = \begin{aligned}[t]
    \inf_{\mbf{q} \in S_{\Omega}} \inf_{\omega \in \Sigma_d} &\frac{\alpha}{\alpha-1} D\left( \mbf{q} \mid \omega_{\CP} \right) \\ 
    &+ \sum_{\hat{c}} q(\hat{c}) \renyiSandUp_{\alpha}(S|E\tilde{E})_{\omega_{|\hat{c}}},
    \end{aligned}
\end{align}
where $\Sigma_d = \{\mathcal{M}_d(\rho) ~|~ \rho \in S_{=}(AE\tilde{E}), \rho_A = \sigma_A\}$.

Crucially, the function in the objective is not uniformly continuous. Thus, the limit and the infimum cannot be swapped without further justification, that is
\begin{equation}
    h_{\alpha}^{\uparrow}(d=\infty) \neq \begin{aligned}[t]
    \inf_{\mbf{q} \in S_{\Omega}} \inf_{\omega \in \Sigma_{\infty}} &\frac{\alpha}{\alpha-1} D\left( \mbf{q} \mid\mid \omega_{\CP} \right) \\ 
    &+ \sum_{\hat{c}} q(\hat{c}) \renyiSandUp_{\alpha}(S|E\tilde{E})_{\omega_{|\hat{c}}},
    \end{aligned}
\end{equation}
Yet, the quantity with the infimum and limit swapped containing the infinite dimensional set $\Sigma_{\infty}$ is the basis of the full security proof in the paper.

\textbf{Relation to our work: }
We circumvent this issue by explicitly proving an infinite-dimensional marginal-constrained entropy accumulation theorem (iMEAT). Crucially, our proof never tries to make this argument, because, to the best of our knowledge, it cannot be shown in general.

\item[c)] Appeal to lift in Ref [51, Appendix A] (reference numbers as in \cite{Navarro_2026a})\\
\textbf{Issue:}
The authors argue that one could alternatively use the argument of Ref. [51, Appendix A] (Ref.~\cite{Tupkary2026} of our work), to lift the security from finite dimensions to infinite dimensions. However, this effectively forces on to use the key lengths obtained via $\liminf_{d\to \infty} h_{\alpha}^{\uparrow}(d)$ instead of $h_{\alpha}^{\uparrow}(d=\infty)$ unless Eve's dimension can be assumed to be finite. Hence, one recovers a similar issue as outlined in b).

Here, we note that we made a slightly informal conversion from $\epsilon$-parameters to key lengths, as Ref. [51, Appendix A] argument is stated in terms of security parameters.

\textbf{Relation to our work:}
Again, we circumvent this issue by proving an infinite dimensional generalisation of the MEAT.

\end{enumerate}
In summary, two main issues remain. First, the dimension reduction as presented does not provide a correct bound on the \Renyi entropy. Second, the sketched generalisation to an infinite-dimensional MEAT is unjustified. We resolve both in our work and, at the same time, provide greater scope.

\end{document}